\numberwithin{equation}{section}
\numberwithin{figure}{section}
\numberwithin{table}{section}
\newtheorem{theorem}{Theorem}[section]
\theoremstyle{plain}
\newtheorem{lemma}[theorem]{Lemma}
\newtheorem{corollary}[theorem]{Corollary}
\newtheorem{example}[theorem]{Example}
\newtheorem{proposition}[theorem]{Proposition}
\newtheorem{observation}[theorem]{Observation}
\newtheorem{hypothesis}[theorem]{Hypothesis}
\newtheorem{definition}[theorem]{Definition}
\theoremstyle{definition}
\newtheorem{assumption}[theorem]{Assumption}
\newtheorem{construction}[theorem]{Construction}
\newtheorem{problem}[theorem]{Problem}
\newtheorem{rrule}[theorem]{Reduction Rule}
\theoremstyle{remark}
\newtheorem{remark}[theorem]{Remark}
\newtheoremstyle{iostuff}%
{0pt}%
{0pt}%
{\hangindent=\parindent}%
{}%
{\itshape}%
{:}%
{.5em}%
{}%
\theoremstyle{iostuff}
\newtheorem*{probinstance}{Instance}
\newtheorem*{probtask}{Question}
\crefname{construction}{Construction}{Constructions}
\crefname{paragraph}{Section}{Sections}
\crefname{rrule}{Reduction~Rule}{Reduction~Rules}
\Crefname{rrule}{RR}{RRs}
\crefname{observation}{Observation}{Observations}
\Crefname{observation}{Obs.}{Obs.}
\crefname{lemma}{Lemma}{Lemmas}
\Crefname{lemma}{Lem.}{Lem.}
\crefname{theorem}{Theorem}{Theorems}
\Crefname{theorem}{Thm.}{Thm.}
\crefname{proposition}{Proposition}{Propositions}
\Crefname{proposition}{Prop.}{Props.}
\crefname{remark}{Remark}{Remarks}
\Crefname{remark}{Rem.}{Rem.}
\crefname{prop}{Property}{Properties}
\crefname{corollary}{Corollary}{Corollaries}
\Crefname{corollary}{Cor.}{Cors.}
\crefname{line}{line}{lines}
\crefname{section}{Section}{Sections}
\Crefname{section}{Sec.}{Secs.}
\newcommand{\ssfree}{\hyp subgraph\hyp free}
\newcommand{\no}{no}
\newcommand{\rmc}{\text{rmc}}
\newcommand{\glue}{\text{glue}}
\newcommand{\shift}{\text{shift}}
\newcommand{\proj}{\text{proj}}
\newcommand{\join}{\text{join}}
\newcommand{\mincup}{%
  \mathbin{\ooalign{$\cup$\cr\hss\raisebox{0.5ex}{\scriptsize $\downarrow$}\hss}}}%
\newcommand{\bigmincup}{%
\mathop{\ooalign{$\displaystyle\bigcup$\cr\hidewidth{\Large$\downarrow$}\hidewidth\cr}}}%
\newcommand{\mcost}{\text{minK}}
\newcommand{\opt}{\text{Opt}}
\newcommand{\yes}{yes}
\newcommand{\Ne}{N}
\newcommand{\nei}{\Lambda}
\newcommand{\cost}{K}
\newcommand{\Dz}{\ensuremath{D_\text{z}}}
\newcommand{\De}{\ensuremath{D_\text{e}}}
\newcommand{\Di}{\ensuremath{D_\text{i}}}
\newcommand{\iz}{{i_\text{z}}}
\newcommand{\ie}{{i_\text{e}}}
\newcommand{\ii}{{i_\text{i}}}
\newcommand{\N}{\mathbb{N}}
\newcommand{\Q}{\mathbb{Q}}
\newcommand{\Z}{\mathbb{Z}}
\newcommand{\I}{I}
\newcommand{\calR}{\mathcal{R}}
\newcommand{\calQ}{\mathcal{Q}}
\newcommand{\calT}{\mathcal{T}}
\newcommand{\WK}[1]{\ensuremath{\text{WK}[#1]}}
\newcommand{\poly}{\ensuremath{\text{poly}}}
\newcommand{\NP}{\ensuremath{\text{NP}}}
\newcommand{\coNP}{\ensuremath{\text{coNP}}}
\newcommand{\unlessPK}{\ensuremath{\coNP\subseteq \NP/\poly}}
\newcommand{\tw}{\ensuremath{\text{tw}}}
\newcommand{\vc}{{\ensuremath{\text{vc}}}}
\newcommand{\prtn}{\ensuremath{p}}
\newcommand{\pts}{\ensuremath{\mathcal{P}}}
\newcommand{\feas}{\mathcal E}
\newcommand{\crn}{{\ensuremath{\text{cr}}}}
\newcommand{\fes}{\ensuremath{\text{fes}}}
\newcommand{\fvs}{\ensuremath{\text{fvs}}}
\newcommand{\ncc}[1]{\tilde{#1}}
\newcommand{\norm}[2]{\left\lVert#1\right\rVert_{#2}}
\DeclareMathOperator{\sign}{sign}
\DeclareMathOperator{\parent}{parent}
\newcommand{\decprob}[3]{
  \begin{problem}[\textsc{#1}]\leavevmode
    \begin{probinstance}
      #2
    \end{probinstance}
    \begin{probtask}
      #3
    \end{probtask}
  \end{problem}
}
\newcommand{\sspTsc}{\textsc{Short Secluded Path}}
\newcommand{\sspAcr}{\textsc{SSP}}
\newcommand{\WsspTsc}{\textsc{Vertex\hyp Weighted \sspTsc{}}}
\newcommand{\WsspAcr}{\textsc{VW-\sspAcr{}}}
\newcommand{\mcclique}{\textsc{Multicolored Clique}}
\newcommand{\bbT}{\mathbb{T}}
\newcommand{\LD}{($\Leftarrow$)}
\newcommand{\RD}{($\Rightarrow$)}
\newcommand{\simple}{simple}
\newcommand{\Ef}{F}
\newcommand{\thetitle}{Parameterized algorithms and data reduction for the short secluded $s$-$t$-path problem}
\date{}
\title{\boldmath\thetitle{}%
  \thanks{%
    A preliminary version of this work
    appeared
    in the Proceedings of the
    18th Workshop on Algorithmic Approaches for Transportation Modeling, Optimization, and Systems (ATMOS 2018),
    23--24 August, 2018, Helsinki, Finland
    \citep{BFT18}.
    This version contains full proof details,
    new kernelization results
    with respect to the feedback vertex number as
    parameter,
    and the algorithm
    for graphs of bounded treewidth
    has been generalized to a more general problem variant
    and accelerated.
  }
}
\author{René van Bevern\\
  Department of Mechanics and Mathematics,
  Novosibirsk State University, Novosibirsk, Russian Federation, \texttt{rvb@nsu.ru}\\
  \and
  Till Fluschnik\\
  Algorithmics and Computational Complexity, Faculty IV, TU Berlin, Berlin, Germany
  \texttt{till.fluschnik@tu-berlin.de}\\
  \and
  Oxana Yu.\ Tsidulko\\
  Sobolev Institute of Mathematics,
  Siberian Branch of the Russian Academy of Sciences,
  Novosibirsk, Russian Federation, \texttt{tsidulko@math.nsc.ru}\\
  Department of Mechanics and Mathematics,
  Novosibirsk State University, Novosibirsk, Russian Federation
}
\begin{document}
\maketitle
\begin{abstract}
  \looseness=-1
  Given a graph~\(G=(V,E)\),
  two vertices~\(s,t\in V\),
  and two integers~\(k,\ell\),
  the \textsc{Short Secluded Path} problem
  is to find a simple \(s\)-\(t\)-path
  with at most \(k\)~vertices
  and \(\ell\)~neighbors.
  We study the parameterized complexity
  of the problem with respect to four structural graph parameters:
  the vertex cover number, treewidth,
  feedback vertex number, and
  feedback edge number.
  In particular,
  we
  completely settle the question of the existence
  of problem kernels with size polynomial in
  these parameters and their combinations with~$k$ and~$\ell$.
  We also
  obtain a \(2^{O(\tw)}\cdot \ell^2\cdot n\)-time
  algorithm for graphs of treewidth~$\tw$,
  which yields subexponential\hyp time algorithms
  in several graph classes.
\end{abstract}

\paragraph{Keywords:} NP-hard problem · fixed-parameter tractability · problem kernelization · shortest path · kernelization lower bounds · treewidth · subexponential time

\section{Introduction}
Finding shortest paths
is a fundamental problem in route planning
and
has extensively been studied
with respect to efficient algorithms,
including data reduction and preprocessing
\citep{BDG+16}. 
In this work,
we study the following NP\hyp hard~\cite{LF18} variant
of finding shortest \(s\)-\(t\)-paths.

\decprob{\sspTsc~(\sspAcr)}
{An undirected, simple graph~$G=(V,E)$ with two distinct vertices~$s,t\in V$, and two integers~$k\geq2$ and~$\ell\geq0$.}
{Is there an \(s\)-\(t\)-path~$P$ in~$G$ such
  that $|V(P)|\leq k$ and $|N(V(P))|\leq \ell$?}

\noindent
The problem can be understood
as finding short and safe routes for a convoy
through a transportation network:
each neighbor of the convoy's travel path
requires additional precaution.
Thus,
we seek to minimize not only the length
of the convoy's travel path,
but also its number of neighbors.
In our work,
we study the
parameterized complexity of the
above basic, unweighted variant,
as well as weighted variants of the problem.
In particular,
given the effect that preprocessing and data reduction
had to fundamental routing problems
like finding shortest paths \citep{BDG+16},
we study the possibilities
of polynomial\hyp time data reduction
with \emph{provable performance guarantees}
for \sspAcr{}.

\paragraph{Fixed-parameter algorithms.}
Fixed\hyp parameter algorithms
have recently been applied
to numerous NP-hard routing problems
\citep{GWY17,GJW16,GMY13,GJS17,BKS17,SBNW12,SBNW11,BNSW14,DMNW13,BHKK07,GP16}.
In particular,
they led to subexponential\hyp time algorithms
for fundamental NP-hard routing problems
in planar graphs \citep{KM14}
and to algorithms  that work efficiently
on real\hyp world data \citep{BKS17}.

The main idea of fixed\hyp parameter algorithms
is to accept the exponential running time
seemingly inherent to solving NP-hard problems,
yet to restrict the combinatorial explosion
to a parameter of the problem,
which can be small in applications.
We call a problem \emph{fixed\hyp parameter tractable}
if it can be solved in \(f(k)\cdot n^{O(1)}\)~time
on inputs of length~\(n\)
and some function~\(f\)
depending only on some parameter~\(k\).
In contrast
to an algorithm
that merely runs in polynomial time
for fixed~\(k\), say, in \(O(n^k)\)~time,
which is intractable even for small values of~\(k\),
fixed\hyp parameter algorithms
can solve NP\hyp hard problems
quickly if \(k\)~is small.

\paragraph{Provably effective polynomial\hyp time data reduction.}
Parameterized complexity theory also
provides a framework for
data reduction with performance guarantees---\emph{problem kernelization}
\citep{DF13,FG06,Nie06,CFK+15}.

Kernelization allows
for provably effective polynomial\hyp time data reduction.
Note that a result of the form
``our polynomial\hyp time data reduction algorithm reduces
the input size by at least one bit, preserving optimality
of solutions''
is impossible for NP\hyp hard problems unless P\({}={}\)NP.
In contrast,
a kernelization algorithm
reduces a problem instance
into an equivalent one (the \emph{problem kernel})
whose size depends only (ideally polynomially)
on some problem parameter.
Problem kernelization has been successfully applied
to obtain effective polynomial\hyp time data
reduction algorithms
for many \NP-hard problems
\citep{GN07,Kra14}
and also led to techniques for proving
the limits
of polynomial\hyp time data reduction
\citep{BDFH09,MRS11,BJK14}.

\newcommand{\smtab}[1]{{\scriptsize (#1)}}
\renewcommand{\arraystretch}{1.25}
\begin{table*}[t]
  \centering
  \caption
  {
    Overview of our results.
    Herein, $n$, $\tw$, $\vc$, $\fes$, $\fvs$, $\crn$, and~$\Delta$
    denote the number of vertices,
    treewidth,
    vertex cover number,
    feedback edge number,
    feedback vertex number,
    the crossing number,
    and maximum degree of the input graph,
    respectively. %
  }
  \begin{tabular}{rp{0.5\textwidth}p{0.38\textwidth}}
    \toprule
    par.&positive results&negative results\\
    \midrule    
    \vc
             &size $\vc^{O(r)}$\hyp kernel in~$K_{r,r}$\ssfree{} graphs \smtab{\cref{thm:kernelkrr}}
                              &
                                No polynomial kernel
                                and WK[1]-hard w.\,r.\,t.\ \(\vc\) \smtab{\cref{thm:wk1hard}}\\
    \fes
             &size $\poly(\fes)$-kernel \smtab{\cref{thm:bikernelfes}}\\
    \fvs
             & $O(\fvs\cdot(k+\ell)^2)$-vertex kernel \smtab{\cref{thm:ssppk-fvskell}}&
                                                                                        No kernel with size~$\poly(\fvs+\ell)$ \smtab{\cref{thm:sspNoPKfvsell}}\\
    \tw
             & $2^{O(\tw)}\cdot\ell^2\cdot n$-time algorithm \smtab{\cref{thm:twsingexp}}
                              &
                                No kernel
                                with size \(\poly(\tw+k+\ell)\)
    even in planar graphs with
                                                     const.\ \(\Delta\) \smtab{\cref{thm:nopktwell}}\\
    \bottomrule
  \end{tabular}

  \label{tab:results}
\end{table*}	

\begin{figure}
  \centering
  \begin{tikzpicture}

    \usetikzlibrary{arrows,patterns,calc}

    \def\xr{1}
    \def\yr{1}

    \tikzstyle{ppnode}=[rounded corners, thick, minimum width=140*\xr, minimum height=15*\xr, draw];
    \tikzstyle{pnode}=[rounded corners, thick, minimum width=80*\xr, minimum height=15*\xr, draw];
    \tikzstyle{nopk}=[fill=lightgray] %
    \tikzstyle{pk}=[fill=white] %
    \newcommand{\hasssize}{\footnotesize}
    \newcommand{\parabox}[8]{
      \node (#1-4) at (#2*\xr,#3*\yr)[ppnode,pk,label={[xshift=-90*\xr]0:{${}+k+\ell$~\hasssize#8}}]{};
      \node (#1-3) at ($(#1-4.west)+(-0.6*\xr,0)$)[ppnode,pk,label={[xshift=-70*\xr]0:{${}+\ell$~\hasssize#7}}]{};
      \node (#1-2) at (#1-3.west)[ppnode,pk,label={[xshift=-72*\xr]0:{${}+k$~\hasssize#6}}]{};
      \node (#1-1) at ($(#1-2.west)+(+1*\xr,0)$)[pnode,pk]{#4~\hasssize#5};
    }
    \parabox{fes}{7}{1}{$\fes$}{(\Cref{thm:bikernelfes})}{}{}{};
    \renewcommand{\parabox}[8]{
      \node (#1-4) at (#2*\xr,#3*\yr)[ppnode,pk,label={[xshift=-90*\xr]0:{${}+k+\ell$~\hasssize#8}}]{};
      \node (#1-3) at ($(#1-4.west)+(-0.6*\xr,0)$)[ppnode,pk,label={[xshift=-70*\xr]0:{${}+\ell$~\hasssize#7}}]{};
      \node (#1-2) at (#1-3.west)[ppnode,nopk,label={[xshift=-72*\xr]0:{${}+k$~\hasssize#6}}]{};
      \node (#1-1) at ($(#1-2.west)+(+1*\xr,0)$)[pnode,nopk]{#4~\hasssize#5};
    }
    \parabox{vc}{7}{0}{$\vc$}{(\Cref{thm:wk1hard})}{(\Cref{rem:vckr})}{(\Cref{rem:vckr})}{};
    \renewcommand{\parabox}[8]{
      \node (#1-4) at (#2*\xr,#3*\yr)[ppnode,pk,label={[xshift=-90*\xr]0:{${}+k+\ell$~\hasssize#8}}]{};
      \node (#1-3) at ($(#1-4.west)+(-0.6*\xr,0)$)[ppnode,nopk,label={[xshift=-70*\xr]0:{${}+\ell$~\hasssize#7}}]{};
      \node (#1-2) at (#1-3.west)[ppnode,nopk,label={[xshift=-72*\xr]0:{${}+k$~\hasssize#6}}]{};
      \node (#1-1) at ($(#1-2.west)+(+1*\xr,0)$)[pnode,nopk]{#4~\hasssize#5};
    }
    \parabox{fvs}{7}{-1}{$\fvs$}{}{}{(\Cref{thm:sspNoPKfvsell})}{(\Cref{thm:ssppk-fvskell})};
    \renewcommand{\parabox}[8]{
      \node (#1-4) at (#2*\xr,#3*\yr)[ppnode,nopk,label={[xshift=-90*\xr]0:{${}+k+\ell$~\hasssize#8}}]{};
      \node (#1-3) at ($(#1-4.west)+(-0.6*\xr,0)$)[ppnode,nopk,label={[xshift=-70*\xr]0:{${}+\ell$~\hasssize#7}}]{};
      \node (#1-2) at (#1-3.west)[ppnode,nopk,label={[xshift=-72*\xr]0:{${}+k$~\hasssize#6}}]{};
      \node (#1-1) at ($(#1-2.west)+(+1*\xr,0)$)[pnode,nopk]{#4~\hasssize#5};
    }
    \parabox{tw}{7}{-2}{$\tw$}{}{}{}{(\Cref{thm:nopktwell})};

    \draw[<-] (fes-1.south west) to [out=-135,in=135](fvs-1.north west);
    \draw[<-] (vc-1.south) -- (fvs-1.north);
    \draw[<-] (fvs-1.south) -- (tw-1.north);

  \end{tikzpicture}
  \caption{Overview on the existence of polynomial kernelization.
    Gray: no polynomial\hyp size kernel unless \unlessPK.
    White: polynomial\hyp size kernel exists.
    An arrow from parameter~$p$ to~$p'$
    means that the value of~$p$
    can be upper\hyp bounded
    by a polynomial in~$p'$ \citep{FJR13}.
    Thus,
    hardness results for~$p'$ also hold for~$p$
    and polynomial\hyp size kernels for $p$ also hold for~$p'$.}
  \label{fig:Hasse}
\end{figure}

\subsection{Our contributions}
We study the parameterized complexity of
\sspAcr{} (and a weighted variant)
with respect to four structural graph parameters:
the vertex cover number~$\vc$, the treewidth~$\tw$,
feedback vertex number~$\fvs$ and
feedback edge number~$\fes$.
Herein,
$\vc$~is interesting since lower bounds for it
are very strong:
$\vc$ bounds from above most
other known graph parameters \citep{FJR13}.
The other extreme is $\tw$,
which is $O(\sqrt{n})$
in many graph classes \citep{DH08}
and allows one to obtain subexponential\hyp time algorithms
in these.
Our results are summarized in \cref{tab:results} and \cref{fig:Hasse}.  The latter shows that
we completely settle the question of the existence
of problem kernels of size polynomial
with respect to $\vc$, $\tw$, $\fvs$, $\fes$, $k$ and~$\ell$
and all of their combinations.

In \cref{sec:apgs},
we show that SSP
has no problem kernel of size polynomial in \(\vc\)
unless \unlessPK.
In fact,
we even show that SSP is WK[1]-hard parameterized by~$\vc$;
WK[1]-hard problems are conjectured
to not even have polynomial\hyp size Turing kernels \citep{HKS+15b}.
We prove that SSP does have problem kernels
of size polynomial in~$\vc$ in $K_{r,r}$\ssfree{} graphs
for any constant~$r$.

In \cref{sec:tlgs},
we prove that (even the weighted version of) \sspAcr{}
is solvable in \(2^{O(\tw)}\cdot\ell^2\cdot n\)~time
in graphs of treewidth~\(\tw\).
This also gives subexponential $2^{O(\sqrt{n})}$-time
algorithms for many graph classes,
in particular for planar graphs.
Moreover,
we prove that \sspAcr{} is not solvable in
$2^{o(\sqrt{n})}$-time in planar graphs
unless the Exponential Time Hypothesis
fails.
We also prove that
there is no problem kernel
with size polynomial in~\(\tw+k+\ell\) unless \unlessPK{}.

Finally, in
\cref{sec:fes},
we show problem kernels
with \(O(\fes)\)~vertices
or
$O(\fvs\cdot(k+\ell)^2)$~vertices,
where \(\fes\)~is the feedback edge number
and \(\fvs\)~is the feedback vertex number of the input graph.
We also prove that,
unless \unlessPK{},
the latter kernel cannot be improved
to be of size polynomial in~\(\fvs+\ell\) or~\(\fvs+k\).

\subsection{Related work}

\looseness=-1
\citet{LF18} first defined \sspAcr{}
and analyzed its parameterized complexity
with respect to the parameters~\(k\) and~\(\ell\).
In contrast to their work,
we study problem parameters
that describe the structure of the input graphs.

\citet{CJPP17} introduced the similar
\textsc{Secluded Path} problem,
that,
given an undirected graph~$G=(V,E)$
with two designated vertices~$s,t\in V$, vertex-weights~$w:V\to\N$,
and two integers~$k,C\in\N$,
asks whether there is an $s$-$t$-path~$P$
such that the size of the
\emph{closed} neighborhood~$|N_G[V(P)]|\leq k$ and the weight of the closed neighborhood~$w(N_G[V(P)])\leq C$. 
\citet{FGKK17} studied the
parameterized complexity of the problem.
In particular,
they prove that \textsc{Secluded Path}
admits problem kernels
with size polynomial in~\(k\)
and the
feedback vertex number combined.
On the negative side,
they prove that \textsc{Secluded Path}
does not admit problem kernels
with size polynomial in the vertex cover number~\(\vc\).
Our negative results
on kernelization for \sspAcr{}
even show
WK[1]-hardness.

Van Bevern et al.~\citep{BFM+18} studied several classical graph optimization problems in both the ``secluded'' (small closed neighborhood) and the ``small secluded'' (small set with small open neighborhood) variants.
Amongst others, they prove that while finding a secluded~$s$-$t$ separator with small closed neighborhood remains solvable in polynomial time, finding a small secluded~$s$-$t$ separator is \NP-complete.

\citet{GHLM17} studied the
``small secluded'' scenario
for finding connected induced subgraphs
with given properties.
They prove that if
the requested property is
characterized through
finitely many forbidden induced subgraphs,
then the problem is fixed\hyp parameter tractable
when parameterized by the size~$\ell$ of the open neighborhood.
Their result
obviously does not generalize
to \sspAcr{},
since \sspAcr{} is NP-hard even for~\(\ell=0\)
\citep{LF18}.

\section{Preliminaries}

We use basic notation from graph theory~\cite{Die10} and parameterized algorithmics~\cite{DF13,FG06,Nie06,CFK+15}.
By $\mathbb N$ we denote the positive integers.
By $A\uplus B$,
we denote the union of two
sets~$A$ and~$B$
when we emphasize that \(A\) and~\(B\) are disjoint.
We denote by~$\log$ the logarithm with base 2.

\subsection{Graph theory}
\looseness=-1
We study simple, finite, undirected graphs~\(G=(V,E)\).
We denote by $V(G):=V$ the set of \emph{vertices of~$G$}
and by $E(G):=E$ the set of \emph{edges of~\(G\)}.
We denote \(n:=|V|\) and \(m:=|E|\).
For any subset~$U\subseteq V$ of vertices, we denote by~$N_G(U)=\{w\in V\setminus U\mid \exists v\in U:\{v,w\}\in E \}$ the \emph{open neighborhood} of~$U$ in~$G$.
When the graph~\(G\) is clear from the context,
we drop the subscript~\(G\).
A~set~$U\subseteq V$ of vertices is a \emph{vertex cover}
if every edge in~\(E\) has an endpoint in~\(U\).
The size of a minimum vertex cover
is called \emph{vertex cover number~\(\vc(G)\)} of~\(G\).
A set~\(F\subseteq E\) of edges is a \emph{feedback edge set}
if the graph~\((G,E\setminus F)\) is a forest.
The minimum size of a feedback edge set
in a connected graph is \(m-n+1\)
and is called
the \emph{feedback edge number \(\fes(G)\)} of~\(G\).
A set~\(V'\subseteq V\) of edges is a \emph{feedback vertex set}
if the graph~\(G-V':=(V\setminus V',\{e\in E(G)\mid e\cap V'=\emptyset\})\) is a forest.
The minimum size of a feedback vertex set is called
the \emph{feedback vertex number \(\fvs(G)\)} of~\(G\).
The \emph{crossing number~\(\crn(G)\)} of~\(G\)
is the minimum number of crossings
in any drawing of~\(G\) in the two-dimensional plane
(where only two edges are allowed to cross in each point). We say that graph~$G$ is $K_{r,r}$\ssfree{} if it does not contain a complete bipartite graph with parts of size $r$ as a subgraph.
A path~$P=(V,E)$ is a graph with vertex set~$V=\{x_0,x_1,\ldots,x_p\}$ and edge set~$E=\{\{x_i,x_{i+1}\}\mid 0\leq i<p\}$.
We say that~$P$ is an~$x_0$-$x_p$-path of length~$p$.
We also refer to~$x_0,x_p$ as the \emph{end points} of~$P$, and to all vertices~$V\setminus\{x_0,x_p\}$ as the \emph{inner} vertices of~$P$. 

\subsection{Fixed-parameter tractability and problem kernels}

Let~$\Sigma$ be a finite alphabet.
A \emph{parameterized problem}~$L$ is a subset~$L\subseteq \Sigma^*\times \N$.
An instance~$(x,k)\in \Sigma^*\times \N$ is a~\emph{\yes-instance} for~$L$ if and only if~$(x,k)\in L$.
We call \(x\)~the \emph{input} and \(k\)~the \emph{parameter}.

\begin{definition}[fixed-parameter tractability, FPT]
  A parameterized problem~\(L\subseteq\Sigma^*\times\N\)
  is \emph{fixed\hyp parameter tractable}
  if there is an algorithm
  deciding~\((x,k)\in L\)
  in time \(f(k)\cdot|x|^{O(1)}\) (we call such an algorithm a \emph{fixed\hyp parameter algorithm}).
  The complexity class~\emph{FPT} consists of all fixed\hyp parameter tractable
  problems.
\end{definition}

\begin{definition}[kernelization]
 \label{def:compression}
 Let $L\subseteq \Sigma^*\times \N$ be a parameterized problem.
 A~\emph{kernelization}
 is an algorithm
 that maps any instance~$(x,k)\in\Sigma^*\times \N$
 to an instance~$(x',k')\in\Sigma^*\times \N$
 in $\poly(|x|+k)$ time
 such that
 \begin{enumerate}[(i)]
  \item $(x,k)\in L \iff (x',k')\in L'$, and
  \item $|x'|+k'\leq f(k)$ for some computable function~\(f\).
  \end{enumerate}
  We call \((x',k')\) the \emph{problem kernel}
  and \(f\) its \emph{size}.
\end{definition}

\noindent
A generalization of problem kernels
are \emph{Turing kernels},
where one is allowed to generate multiple reduced
instances instead of a single one.

\begin{definition}[Turing kernelization]
  Let $L\subseteq \Sigma^*\times \N$ be a
  parameterized problem.
 A~\emph{Turing kernelization} for~$L$
 is an algorithm~$A$ that decides~$(x,k)\in L$
 in polynomial time
 given access to an oracle
 that answers $(x',k')\in L$ in constant time
 for any \((x',k')\in\Sigma^*\times \N\) with
 $|x'|+k\leq f(k)$,
 where $f$~is an arbitrary function
 called the \emph{size} of the Turing kernel.
\end{definition}

\subsection{WK[1]-hardness}
To obtain evidence for the nonexistence even of
Turing kernels of polynomial size,
we employ the recently
introduced concept of WK[1]-hardness~\citep{HKS+15b}.
Parameterized problems
that are WK[1]-hard
do not have problem kernels of polynomial size
unless \unlessPK{} (which would imply
a collapse of the polynomial\hyp time hierarchy),
and are conjectured not to have
Turing kernels of polynomial size either.
We prove WK[1]-hardness of a problem~$L$
by reducing a WK[1]-hard problem to~$L$ using
the following type of reduction.

\begin{definition}[polynomial parameter transformation]
  A \emph{polynomial parameter transformation (PPT)}
  of a parameterized problem~$L\subseteq \Sigma^*\times \N$
  into a parameterized problem~$L'\subseteq \Sigma^*\times\N$
  is an algorithm that
  maps any instance~$(x,k)$
  to an instance~$(x',k')$
  \begin{enumerate}[(i)]
    
  \item in $\poly(|x|+k)$~time such that
  \item $(x,k)\in L\iff (x',k')\in L'$ and
  \item $k'\in \poly(k)$.
 \end{enumerate}
\end{definition}

\subsection{Basic observations}
We may assume our input graph
to be connected due to the following
obviously correct
and linear\hyp time executable
data reduction rule.

\begin{rrule}
 \label{rrule:onecomponent}
 If $G$~has more than one connected component, then delete all but the component containing both~$s$ and~$t$ or return \no{} if such a component does not exist.
\end{rrule}

\section{Parameterizing by the vertex cover number}
\label{sec:apgs}

In this section,
we study the parameterized complexity of \sspAcr{}
with respect to the vertex cover number of the input graph.
The vertex cover number
bounds from above most other known graph parameters \citep{FJR13}
and is therefore a rather large parameter.
It thus comes at no surprise
that \sspAcr{} is fixed\hyp parameter tractable
parameterized by the vertex cover number:
this follows from the fact
that \sspAcr{} is fixed\hyp parameter tractable
parameterized by the treewidth,
which we show in \cref{sec:tw}.

However,
despite the vertex cover number being
one of the largest known graph parameters,
in \cref{sec:vcnokern}, we show that
\sspAcr{} is WK[1]-hard with respect to
the vertex cover number.
In contrast,
in \cref{sec:plankern},
we show that \sspAcr{} does have
a problem kernel with size
polynomial in the vertex cover number
in $K_{r,r}$\ssfree{} graphs for any constant~$r$,
a graph class that comprises, for example,
many road networks.

\subsection{Limits of data reduction}
\label{sec:plannokern}
\label{sec:vcnokern}

In this section,
we show lower bounds on kernel sizes
of \sspAcr{} parameterized by the vertex cover number.
Both of the following results
come at some surprise:
finding a standard shortest \(s\)-\(t\)-path
is easy,
whereas finding
a short secluded path
in general graphs is so hard that
not even preprocessing helps.

\begin{theorem}
  \label{thm:wk1hard}
  Even in bipartite graphs, \sspTsc{}
  is \WK{1}-hard when parameterized by~\(\vc\),
  where
  \(\vc\)~is the vertex cover number
  of the input graph.
\end{theorem}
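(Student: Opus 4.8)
The plan is to give a polynomial parameter transformation from a known $\WK{1}$-hard problem to \sspAcr{}, where the target instances have vertex cover number bounded polynomially in the source parameter. The natural source problem is \mcclique{} (\textsc{Multicolored Clique}) parameterized by $k\log n$, or equivalently the ``exact cover''-flavored variant that is the canonical $\WK{1}$-complete problem under PPTs: recall from \citet{HKS+15b} that $\WK{1}$ has complete problems where the parameter is the number of ``blocks'' times $\log n$. A clean choice is to reduce from a colored/multidimensional variant of \textsc{Clique} or \textsc{Subset Sum}-like problem in which the parameter is $\Theta(k\log n)$; I would pick whichever admits the simplest encoding below.

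The core idea is to build a ``selector gadget'' forcing the $s$-$t$-path to make $k$ binary choices (encoding $k$ indices into $[n]$, hence $k\log n$ bits), and then use the neighborhood budget $\ell$ as a global constraint that checks consistency of the encoded object. Concretely: for each of the $k$ index-blocks create a constant-size ``diamond'' chain of $\log n$ consecutive choice-gadgets, so that any $s$-$t$-path through block $i$ spells out a binary string encoding some $c_i\in[n]$. These chosen bits get routed past a collection of ``verifier'' vertices — one verifier vertex per constraint of the source instance — such that a verifier is a neighbor of the path iff the encoded tuple violates the corresponding constraint; set $\ell$ just below the number of verifiers so that the path is allowed no violations. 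The key arithmetic is that the whole construction has a small vertex cover: the long part of the construction should be a set of paths/edges hanging off a constant-per-block core, and the verifier vertices together with the block cores form a vertex cover of size $O(k\cdot(\log n + \text{\#constraints per block}))$, which is $\poly(k\log n)$ as required for a PPT. Since \sspAcr{} asks for a path with $\le k'$ vertices and $\le \ell$ neighbors, I set $k'$ to the exact length of a ``canonical'' traversal (no slack), which prevents the path from taking shortcuts to dodge verifier adjacencies. Bipartiteness is obtained for free if every gadget is built from even cycles / 2-colorable pieces; if a naive construction introduces an odd cycle, subdivide one edge of each offending gadget, which costs only a constant factor in $k'$ and in the vertex cover.

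The correctness argument splits in the usual two directions. For ($\Rightarrow$): a solution to the source instance (a choice of $k$ consistent indices) is translated into the canonical $s$-$t$-path of length exactly $k'$; one checks that exactly the intended verifier vertices are neighbors, and consistency guarantees none of the ``bad'' verifiers is hit, so $|N(V(P))|\le\ell$. For ($\Leftarrow$): given any valid short secluded path $P$, the length bound $k'$ forces $P$ to pass through every block-core in order and to traverse each choice-gadget in one of the two canonical ways (any deviation either increases the length past $k'$ or strictly increases the neighbor count), hence $P$ decodes to a well-defined tuple $(c_1,\dots,c_k)$; the bound $|N(V(P))|\le\ell$ then forces every verifier to be non-adjacent, i.e.\ the tuple satisfies all constraints, giving a source solution. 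Finally, verify $\vc$ of the built graph is $\poly(k\log n)$ and the construction runs in polynomial time, completing the PPT.

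The main obstacle I anticipate is the ``no slack'' rigidity: I must design the choice-gadgets so that the two canonical traversals have \emph{equal} length and so that any third routing is strictly worse in \emph{both} length and neighbor-count simultaneously — otherwise the path could trade extra length for fewer neighbors, or vice versa, and break the encoding. Getting this right typically requires padding the two ``branches'' of each diamond with the same number of subdivision vertices and carefully placing the verifier attachments so they touch only the branch vertices, not the core; balancing these while keeping the vertex cover small (verifiers and cores only) is the delicate part of the construction. A secondary technical point is ensuring the $k$ blocks are connected in a forced linear order with no possibility of skipping a block — handled by making the inter-block connectors bridges so the path must cross each one.
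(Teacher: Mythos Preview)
Your high-level plan---reduce from \mcclique{} parameterized by $k\log n$, use binary encodings, and keep the vertex cover small---matches the paper's. However, there is a real gap in how you propose to verify the clique condition. You write ``one verifier vertex per constraint of the source instance'' and put the verifiers into the vertex cover. For \mcclique{} the constraints are the edge/non-edge relations among the $k$ selected vertices; there are $\Theta(|E|)$ of them (or $\Theta(n^2)$ if you encode non-edges), which is not polynomial in $k\log n$. A vertex cover containing all verifiers is therefore too large, and the PPT breaks exactly at the parameter bound.

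The paper's construction inverts the roles you assign. The path does not encode $k$ vertex-indices; it encodes $\binom{k}{2}$ \emph{edge} choices: for each color pair $\{i,j\}$ the path passes through exactly one vertex $v_e$ from the set $E_{ij}=\{v_e : e\in E\text{ between }V_i\text{ and }V_j\}$. All $|E|$ vertices $v_e$ sit on the \emph{independent-set} side of the bipartition; the vertex cover consists only of $s$, $t$, the $\binom{k}{2}-1$ connector vertices $w_h$, and $k$ binary gadgets $B_1,\dots,B_k$ of size $2\log n$ each---total size $O(k^2+k\log n)$. Consistency (that all selected edges touching color class $i$ share a common endpoint in $V_i$) is enforced by the neighborhood budget: each $v_e$ with $e$ incident to $V_i$ is adjacent to exactly $\log n$ vertices of $B_i$ encoding its $V_i$-endpoint, so two selected edges agree on their $V_i$-endpoint iff they have identical neighborhoods in $B_i$; the budget $\ell$ allows exactly $\log n$ neighbors in each $B_i$, forcing agreement. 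Pendant degree-one vertices on each $B_i$-vertex prevent the path from entering the gadgets.

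So the ``no-slack rigidity'' of choice gadgets that you flag as the main obstacle is secondary; the essential missing idea is that the \emph{numerous} objects (the edges of the source graph) must live on the independent-set side, with a small number of binary-gadget vertices on the cover side doing the checking via the neighborhood count.
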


\begin{remark}
  \label{rem:vckr}
  The hardness results of \cref{thm:wk1hard} also hold
  with respect to the parameter~$\vc+k$:
  a vertex cover contains
  at least $\lfloor k/2\rfloor$~vertices
  of a path with $k$~vertices.
  Thus,
  a polynomial parameter transformation
  of \sspAcr{} parameterized by~$\vc$
  to \sspAcr{} parameterized by~$\vc+k$
  can safely reduce~$k$ so that~$k\leq 2\vc+1$.
  However,
  there is a problem kernel with size polynomial in~$\vc+\ell$:
  we will show in \cref{sssec:fvskell} that
  \sspAcr{} allows for a problem
  kernel with size polynomial in~$\fvs+k+\ell\leq\vc+k+\ell\in O(\vc+\ell)$.
\end{remark}

\noindent
To prove \cref{thm:wk1hard},
we use a polynomial parameter transformation
of the following problem
parameterized by \(k\log n\) \citep{HKS+15b}
into \sspAcr{} parameterized by~$\vc$.

\decprob{\mcclique}%
{\label{prob:mcclique}
  A \(k\)-partite \(n\)-vertex graph~\(G=(V_1,V_2,\dots,V_k,E)\)
  with pairwise non\hyp intersecting independent sets~$V_i$.}%
{Does \(G\)~contain a clique of size~\(k\)?}

\noindent
Our polynomial parameter transformation
of \mcclique{}
into \sspAcr{}
uses the following gadget.

\begin{definition}[\(z\)-binary gadget]
  A \emph{$z$-binary gadget}
  for some power~\(z\) of two
  is a set~\(B=\{u_1,u_2,\dots,u_{2\log z}\}\)
  of vertices.
  We say
  that a vertex~\(v\) is \emph{$p$-connected to~$B$}
  for some~\(p\in\{0,\dots,z-1\}\)
  if \(v\)~is adjacent
  to~\(u_q\in B\) if and only if
  there is a ``1'' in position~\(q\)
  of the string
  that consists of
  the binary encoding of~\(p\)
  followed by its complement.
\end{definition}

\begin{example}
  \looseness=-1
  The binary encoding of~\(5\) followed by its complement
  is \(101010\).
  Thus,
  a vertex~\(v\) is \(5\)-connected
  to an \(8\)-binary gadget~\(\{u_1,\dots,u_6\}\)
  if and only if \(v\)~is adjacent
  to exactly~\(u_1,u_3\), and \(u_5\).
  Also observe that,
  if a vertex~\(v\)
  is \(q\)-connected
  to a \(z\)-binary gadget~\(B\),
  then \(v\)~is adjacent
  to exactly half of the vertices of~\(B\),
  that is,
  to \(\log z\)~vertices of~\(B\).
\end{example}

\noindent
The following reduction from \mcclique{}
to \sspAcr{} is illustrated in \cref{fig:ppt}.

 \begin{figure*}
  \centering
   \begin{tikzpicture}
	\usetikzlibrary{decorations.pathreplacing,calc}
	\tikzstyle{pnode}=[fill,circle,scale=1/4]
	\tikzstyle{lnode}=[fill,circle,scale=1/5]
	\tikzstyle{ppnode}=[fill=white,circle,draw,scale=1/3]
      \def\xr{1.1}
      \def\yr{0.6}

    \node (s) at (1,0)[fill=white,circle,draw,scale=1/2,label=180:{$s$}]{};
	\node (x12) at  (4*\xr,0*\yr)[fill=white,draw, circle, scale=1/2,label=90:{$w_1$}]{};
	\node (x13) at  (7*\xr,0*\yr)[fill=white,draw, circle, scale=1/2,label=90:{$w_2$}]{};
	\node (xkk) at  (8*\xr,0*\yr)[fill=white,draw, circle, scale=1/2,label={[label distance=6pt]-90:$w_{{k\choose 2}-1}$}]{};
	\node (t) at  (11*\xr,0*\yr)[fill=white,draw, circle, scale=1/2,label=0:{$t$}]{};

	\newcommand{\edgesets}[6]{
		
		\draw[rounded corners] (#3,#4) rectangle (#3+1*\xr,#4+4*\yr) node[above,xshift=-15pt]{$E_{#1,#2}$};
		\foreach \x in {1,2,3,4,10,11,12,13}{
    			\node (x#1#2x\x) at  (#3+0.5*\xr,\x*0.3*\yr+#4-0.15*\yr)[pnode]{};
    		}
		\node at (#3+0.5*\xr,0.1*\yr)[]{$\vdots$};
		 \foreach \x in {1,2,3,4,10,11,12,13}{
 		   \draw (x#1#2x\x) to (#5);
    		   \draw (x#1#2x\x) to (#6);
    		}
		 \node (txt) at (#3-0.25*\xr,0.1*\yr)[]{$\vdots$};
	     \node (txt) at (#3+1.25*\xr,0.1*\yr)[]{$\vdots$};
	}
    
	\edgesets{1}{2}{2*\xr}{-2*\yr}{s}{x12};
	\edgesets{1}{3}{5*\xr}{-2*\yr}{x12}{x13};
	 \node (txt) at  (7.5*\xr,0*\yr)[]{$\cdots$};
	\edgesets{k-1}{k}{9*\xr}{-2*\yr}{xkk}{t};

	\newcommand{\bstar}[3]{
	
		\def\noleaves{6};
		\def\distleaves{0.2};
	
		\node (#1) at (#2,#3)[ppnode,fill=white]{};
 		 \foreach \j in {1,...,\noleaves}{
			   \node (a\j) at ($(#1)+ (165+\j*180/\noleaves:\distleaves cm)$)[lnode]{};
				\draw (#1) -- (a\j);
 		 }
	}

	\def\sh{0.9}
	\def\bsh{-1} 
    \draw[rounded corners] (\sh -0.1*\xr,-4*\yr) rectangle (\sh+10.4*\xr,-5.25*\yr);
    
	\draw[rounded corners] (\sh +0*\xr,-4.25*\yr) rectangle node[below,yshift=-9pt]{$B_1$} (\sh+2*\xr,-5*\yr);
    \draw[rounded corners] (\sh + 2.5*\xr,-4.25*\yr) rectangle node[below,yshift=-9pt]{$B_2$} (\sh+4.5*\xr,-5*\yr) ;

    \draw[rounded corners] (\sh+5*\xr,-4.25*\yr) rectangle node[below,yshift=-9pt]{$B_3$} (\sh+7*\xr,-5*\yr) ;
    
    \draw[rounded corners] (\bsh + 10*\xr,-4.25*\yr) rectangle node[below,yshift=-9pt]{$B_k$} (\bsh + 12*\xr,-5*\yr);

	\bstar{v11}{\sh+0.25*\xr}{-4.5*\yr};
    \bstar{v12}{\sh+0.75*\xr}{-4.5*\yr};
    \node at (\sh+1.25*\xr,-4.5*\yr){$\cdots$};
    \bstar{v13}{\sh+1.75*\xr}{-4.5*\yr};

    \bstar{v21}{\sh+2.75*\xr}{-4.5*\yr};
    \node at (\sh+3.25*\xr,-4.5*\yr){$\cdots$};
    \bstar{v22}{\sh+3.75*\xr}{-4.5*\yr};
    \bstar{v23}{\sh+4.25*\xr}{-4.5*\yr};

    \bstar{v31}{\sh+5.25*\xr}{-4.5*\yr};    
    \bstar{v32}{\sh+5.75*\xr}{-4.5*\yr};    
    \node at (\sh+6.25*\xr,-4.5*\yr){$\cdots$};
    \bstar{v33}{\sh+6.75*\xr}{-4.5*\yr};    
    
    \node (txt) at  ( 0.55 +8*\xr,-4.5*\yr)[]{$\cdots$};

    \bstar{vk1}{\bsh + 10.25*\xr}{-4.5*\yr};
    \bstar{vk2}{\bsh + 10.75*\xr}{-4.5*\yr};    
    \node at (\bsh + 11.25*\xr,-4.5*\yr){$\cdots$};
    \bstar{vk3}{\bsh + 11.75*\xr}{-4.5*\yr};    

    \draw (v12) -- (x12x1) --  (v11);
    \draw (v22) -- (x12x1) --  (v21);
    \draw (v11) -- (x13x2) --  (v12);
    \draw (x13x2) to (v32);
    \draw (x13x2) to (v33);

    \draw [decorate,decoration={brace,amplitude=5pt},xshift=-4pt,yshift=0pt] (\sh+2*\xr,-6*\yr) -- (\sh+0.25*\xr,-6*\yr) node [black,midway,below,xshift=3pt,yshift=-3pt] {$2\log{|V_1|}$};
    \end{tikzpicture}
    \caption{Illustration of the polynomial parameter transformation. White vertices indicate the vertices in the vertex cover.} 
    \label{fig:ppt}
  \end{figure*}

\begin{construction}
 \label{constr:vcnopk}
  Let \(G=(V_1,V_2,\dots,V_k,E)\)~be an instance
  of \mcclique{}
  with $n$~vertices.
  Without loss of generality,
  assume that
  \(V_i=\{v_i^1,v_i^2,\dots,v_i^{\ncc{n}}\}\)
  for each~$i\in \{1,\dots,k\}$,
  where \(\ncc{n}\)~is some power of two
  (we can guarantee this by adding
  isolated vertices to~\(G\)).
  We construct an equivalent instance~\((G',s,t,k',\ell')\)
  of \sspAcr{},
  where
  \begin{align*}
   k'&:=2\cdot {k\choose 2}+1,&
   \ell'&:=|E|-{k\choose 2}+k \log\ncc{n},
 \end{align*}
 and the graph~\(G'=(V',E')\) is as follows.
 The vertex set~\(V'\)
 consists of
 vertices~\(s\), \(t\),
 a vertex~\(v_e\) for each edge~$e\in E$,
 vertices~\(w_h\)
 for \(h\in\{1,\dots, {k\choose 2}-1\}\),
 and
 mutually disjoint
 $\ncc{n}$-binary vertex gadgets~$B_1,\ldots,B_k$,
 each vertex in which has \(\ell'+1\)~neighbors
 of degree one.
 We denote
 \begin{align*}
   E^*&:=\{v_e\in V'\mid e\in E\},
   &
   B&:=B_1\uplus B_2\uplus\dots\uplus B_k,\\
   E_{ij}&:=\{v_{\{x,y\}} \in E^*\mid x\in V_i,y\in V_j\},\text{\qquad and}
           &
   W&:=\{w_h\mid 1\leq h\leq \tbinom k2-1\}.
 \end{align*}
 The edges of~\(G'\) are as follows.
 For each edge~\(e=\{v_i^p,v_j^q\}\in E\),
 vertex~\(v_e\in E_{ij}\) of~\(G'\)
 is $p$-connected to~$B_i$
 and $q$-connected to~$B_j$.
 Vertex~$s\in V'$ is adjacent
 to all vertices in~$E_{1,2}$
 and
 vertex~$t\in V'$ is adjacent
 to all vertices in~$E_{k-1,k}$.
 Finally,
 to describe the edges
 incident to vertices in~\(W\),
 consider
 the lexicographic ordering of the pairs 
 \(\{(i,j)\mid 1\leq i<j\leq k\}\).
 Then,
 vertex~\(w_h\in W\)
 is adjacent to all
 vertices in~\(E_{ij}\)
 and to all
 vertices in~\(E_{i'j'}\),
 where \((i,j)\) is the \(h\)-th pair
 in the ordering and \((i',j')\)~is the \((h+1)\)-st.
 This finishes the construction.
\end{construction}

\noindent
On our way proving \cref{thm:wk1hard},
we aim to prove that \cref{constr:vcnopk}
is a 
polynomial parameter transformation,
that is,
a polynomial\hyp time many\hyp one reduction
that
generates %
graphs
with sufficiently small vertex covers.

\begin{lemma}
  \label[lemma]{lem:ppt}
  \cref{constr:vcnopk} is a
  polynomial parameter transformation
  of \mcclique{}
  parameterized by \(k\log n\)
  into \sspAcr{}
  parameterized by \(\vc\).
\end{lemma}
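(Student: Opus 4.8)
The plan is to verify the three defining properties of a polynomial parameter transformation (PPT): polynomial running time, equivalence of instances, and polynomial parameter bound. The running time is immediate: the graph~$G'$ has $O(|E| + k\log\ncc n + k\log\ncc n\cdot(\ell'+1))$ vertices and a comparable number of edges, all of which can be listed in polynomial time since $\ncc n \le 2n$. For the parameter bound, I claim that $\{s,t\}\cup W\cup B$ is a vertex cover of~$G'$ of size $2 + (\tbinom k2 - 1) + 2k\log\ncc n \in O(k\log n)$: every edge of~$G'$ either is incident to~$s$ or~$t$, or joins some $v_e\in E^*$ to a gadget vertex in~$B$, or joins some $v_e$ to some~$w_h$, or joins a gadget vertex to one of its degree-one private neighbours --- and in every case at least one endpoint lies in $\{s,t\}\cup W\cup B$. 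Since the edge-vertices~$E^*$ and the private degree-one vertices form an independent set covering the rest, this is indeed a vertex cover, so $\vc(G') \in O(k\log\ncc n) \subseteq \poly(k\log n)$, as required.

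The substantive part is the correctness of the reduction, i.e.\ that $G$ has a multicolored clique of size~$k$ if and only if $(G',s,t,k',\ell')$ is a yes-instance of \sspAcr{}. For the forward direction~\RD, given a clique $\{v_1^{p_1},\dots,v_k^{p_k}\}$ with one vertex per colour class, I would take the path that, for the $h$-th pair $(i,j)$ in the lexicographic order, passes through the edge-vertex $v_{\{v_i^{p_i},v_j^{p_j}\}}\in E_{ij}$, using $s$, the vertices $w_1,\dots,w_{\tbinom k2-1}$, and $t$ as connectors, in the natural order $s, v_{(1)}, w_1, v_{(2)}, w_2,\dots,w_{\tbinom k2 -1}, v_{(\binom k2)}, t$. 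This path has exactly $k' = 2\tbinom k2 + 1$ vertices. For the neighbour count: the $\tbinom k2$ edge-vertices on the path are exactly those corresponding to the $\tbinom k2$ edges of the clique, so the $|E| - \tbinom k2$ remaining edge-vertices are neighbours; each $w_h$ not already counted contributes nothing new; and the only other neighbours are gadget vertices --- by the $p$-connection property, the edge-vertex for $\{v_i^{p_i},v_j^{p_j}\}$ is $p_i$-connected to~$B_i$ and $p_j$-connected to~$B_j$, so as $(i,j)$ ranges over all pairs, the path vertices touch exactly the $\log\ncc n$ vertices of~$B_i$ determined by the string for~$p_i$, for each~$i$. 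Hence the gadgets contribute exactly $k\log\ncc n$ neighbours, giving $|N(V(P))| = |E| - \tbinom k2 + k\log\ncc n = \ell'$.

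For the harder reverse direction~\LD, suppose $(G',s,t,k',\ell')$ is a yes-instance and let $P$ be a witnessing path. First I would argue structurally that $P$ must have the ``canonical'' shape: the degree-one private neighbours of the gadget vertices cannot lie on~$P$ (they are not~$s$ or~$t$ and have degree~$1$), and since each gadget vertex has $\ell'+1$ private neighbours of degree one, no gadget vertex can lie on~$P$ either --- otherwise $|N(V(P))| \ge \ell'+1$. So all inner vertices of~$P$ lie in $E^*\cup W$, and since $E^*$ and $W$ are each independent sets and $s,t$ are only adjacent to $E_{1,2}$ resp.\ $E_{k-1,k}$, the path alternates $s, (\text{edge-vertex}), w?, (\text{edge-vertex}), \dots, t$; counting with $|V(P)|\le k'=2\tbinom k2+1$ forces it to use all of $W$ and exactly $\tbinom k2$ edge-vertices, one from each $E_{ij}$ in lexicographic order (here I would use that $w_h$ is adjacent only to $E_{ij}\cup E_{i'j'}$ for consecutive pairs, and $s$ only to $E_{1,2}$, so the pairs must be traversed in order). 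Let $e_{ij}$ be the chosen edge-vertex in $E_{ij}$. The bound $|N(V(P))|\le\ell' = |E|-\tbinom k2 + k\log\ncc n$ together with the $|E|-\tbinom k2$ forced non-path edge-vertices leaves a budget of only $k\log\ncc n$ for neighbours in~$B$; but each $e_{ij}$ is $p$-connected to~$B_i$ for some~$p$, hence adjacent to $\log\ncc n$ vertices of~$B_i$, and a simple counting/consistency argument (any two distinct values $p\ne p'$ that are $p$- and $p'$-connected disagree in at least one ``$1$''-position because of the binary-encoding-plus-complement trick, so distinct choices for the same $B_i$ strictly increase the neighbourhood) forces all $k$ of the gadgets $B_1,\dots,B_k$ to be ``hit'' consistently: for each~$i$ there is a single $p_i$ with every $e_{ij}$ and $e_{ji}$ on~$P$ being $p_i$-connected to~$B_i$. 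Then $\{v_1^{p_1},\dots,v_k^{p_k}\}$ is the desired clique, since $e_{ij}$ being the edge-vertex for an edge with endpoints in $V_i$ and $V_j$ that is $p_i$-connected to~$B_i$ and $p_j$-connected to~$B_j$ means precisely that $\{v_i^{p_i},v_j^{p_j}\}\in E$.

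I expect the main obstacle to be the $\Leftarrow$ direction, specifically the rigidity argument: pinning down that a single vertex per colour class is consistently selected across all $\tbinom k2$ pairs. The key technical device making this work is the binary-encoding-followed-by-its-complement trick, which guarantees that for $p\ne p'$ the sets of gadget vertices to which a $p$-connected and a $p'$-connected vertex attach are \emph{incomparable} (neither contains the other), so any inconsistency in the selection strictly inflates $|N(V(P))\cap B|$ beyond the tight budget $k\log\ncc n$. Making this counting precise --- and carefully handling the interplay between the three "parts" of the neighbour budget (unused edge-vertices, the $w_h$'s, and the gadget vertices) so that the bounds are simultaneously tight --- is where the real work lies; the rest is bookkeeping.
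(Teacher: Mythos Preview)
Your proposal is correct and follows essentially the same approach as the paper: the same vertex cover $\{s,t\}\cup W\cup B$ (your count $2+(\tbinom k2-1)+2k\log\ncc n$ is in fact the right one; the paper's ``$+2$'' appears to be an off-by-one), the same clique-to-path construction for~\RD, and the same structural-plus-budget argument for~\LD. The only cosmetic difference is that the paper phrases the consistency step as ``$|N(V_P)\cap B_i|=\log\ncc n$ exactly, hence all edge-vertices touching~$B_i$ share the same $\log\ncc n$ neighbours, hence the same~$p$,'' whereas you phrase it via incomparability of the neighbour sets for $p\ne p'$; these are two sides of the same coin.
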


\begin{proof}
  Let \(\I':=(G',s,t,k',\ell')\)~be the
  \sspAcr{} instance
  created by \cref{constr:vcnopk} %
  from a \mcclique{} instance~\(G=(V_1,V_2,\dots,V_k,E)\).
  Instance $\I'$~can obviously be computed in polynomial time.
  We show that $\vc\in\poly(k\log n)$.
  The vertex set of~$G'$ partitions into two independent sets
  \begin{align*}
    X=\{s,t\}\cup W\cup B&&\text{and}&& Y=N(B)\cup E^*.
  \end{align*}
  Hence,
  $X$~is a vertex cover of~$G'$.
  Its size is  \(2k\log n+\binom{k}{2}+2\).
  It remains to show that
  $G$~is a \yes-instance
  if and only if
  $\I'$~is.
 
  \RD{}
  Let~$C$
  be the edge set of a clique of size~$k$ in~$G$.
  For each \(1\leq i<j\leq k\),
  \(C\)~contains exactly one edge~\(e\)
  between~\(V_i\) and~\(V_j\).
  Thus,
  \(E_C:=\{v_e\in E^*\mid e\in C\}\)
  is a set of \(\binom k2\)~vertices---exactly
  one vertex of~\(E_{ij}\)
  for each \(1\leq i<j\leq k\).
  Thus,
  by \cref{constr:vcnopk},
  \(G'\)~contains
  an \(s\)-\(t\)-path~\(P=(V_P,E_P)\)
  with \(|V_P|\leq k'\):
  its inner vertices
  are~\(E_C\cup W\),
  alternating between the sets~$E_C$ and~$W$.
  To show that
  \((G',s,t,k',\ell')\)~is a yes\hyp instance,
  it remains to show \(|N(V_P)|\leq\ell'\).
  
  Since \(P\)~contains all vertices of~\(W\),
  one has \(N(V_P)\subseteq B\cup (E^*\setminus E_C)\),
  where
  \(|E^*\setminus E_C|= |E|-\binom k2\).
  To show \(|N(V_P)|\leq\ell'\),
  it remains to show that
  \(|N(V_P)\cap B|\leq k\log \ncc{n}\).
  To this end,
  we show that \(|N(V_P)\cap B_i|\leq\log\ncc{n}\)
  for each \(i\in\{1,\dots,k\}\).

  The vertices in~\(W\cup\{s,t\}\)
  have no neighbors in~\(B\).
  Thus,
  let \(i\in\{1,\dots,k\}\) be fixed and
  consider arbitrary vertices~\(v_{e_1},v_{e_2}\in E_C\)
  such that~\(N(v_{e_1})\cap B_i\ne\emptyset\) and~\(N(v_{e_2})\cap B_i\ne\emptyset\)
  (possibly, \(e_1=e_2\)).
  Then,
  \(e_1=\{v_i^p,v_j^q\}\)
  and
  \(e_2=\{v_{i}^{p'},v_{j'}^{q'}\}\).
  Since \(C\)~is a clique,
  \(e_1\) and \(e_2\)
  are incident to the same vertex of~\(V_i\).
  Thus,
  we have
  \(p=p'\).
  Both \(v_{e_1}\) and \(v_{e_2}\)
  are therefore \(p\)-connected to~\(B_i\)
  and hence have
  the same $\log \ncc{n}$~neighbors in~$B_i$.
  It follows that
  $N(V_P)\leq \ell'$
  and,
  consequently,
  that $\I'$~is a \yes-instance.
 
  \LD{}
  Let \(P=(V_P,E_P)\)~be an $s$-$t$-path in~\(G'\) with
  \(|V_P|\leq k'\) and \(|N(V_P)|\leq\ell'\).
  The path~\(P\)
  does not contain any vertex of~\(B\),
  since each of them has \(\ell'+1\)~neighbors
  of degree one.
  Thus,
  the inner vertices of~\(P\)
  alternate between
  vertices in~\(W\) and in~\(E^*\)
  and we get
  \(N(V_P)= (E^*\setminus V_P)\cup(N(V_P)\cap B)\).
  Since \(P\)~contains
  one vertex of~\(E_{ij}\)
  for each \(1\leq i<j\leq k\),
  we know
  \(|E^*\setminus V_P|= |E|-\binom k2\).
  Thus,
  since \(|N(V_P)|\leq\ell'\),
  we have \(|N(V_P)\cap B|\leq k\log\ncc{n}\).
  We exploit this
  to show that
  the set~\(C:=\{e\in E\mid v_e\in V_P\cap E^*\}\)
  is the edge set of a clique in~\(G\).
  To this end,
  it is enough to show that,
  for each \(i\in\{1,\dots,k\}\),
  any two edges~\(e_1,e_2\in C\)
  with \(e_1\cap V_i\ne\emptyset\)
  and \(e_2\cap V_i\ne\emptyset\)
  have the same endpoint in~\(V_i\):
  then \(C\)~is a set of \(\binom k2\)~edges
  on \(k\)~vertices
  and thus forms a \(k\)-clique.

  For each \(1\leq i<j\leq k\),
  \(P\)~contains exactly
  one vertex~\(v\in E_{ij}\),
  which has exactly $\log\ncc{n}$~neighbors
  in each of~\(B_i\) and~\(B_j\).
  Thus,
  from \(|N(V_P)\cap B|\leq k\log\ncc n\)
  follows
  \(|N(V_P)\cap B_i|=\log\ncc{n}\)
  for each~\(i\in\{1,\dots,k\}\).
  It follows that,
  if two vertices~\(v_{e_1}\)
  and~\(v_{e_2}\) on~\(P\)
  both have neighbors in~\(B_i\),
  then both are \(p\)-connected to~\(B_i\)
  for some~\(p\),
  which means that
  the edges~\(e_1\) and \(e_2\) of~\(G\)
  share endpoint~\(v_i^p\).
  We conclude that
  \(C\)~is the edge set of a clique of size~$k$ in~\(G\).
  Hence,
  \(G\)~is a \yes-instance.
\end{proof}

\noindent
To prove \cref{thm:wk1hard},
it is now a matter of
putting together \cref{lem:ppt}
and the fact that \mcclique{}
parameterized by~\(k\log n\)
is WK[1]-complete.

\begin{proof}[Proof of \cref{thm:wk1hard}]
  By \cref{lem:ppt},
  \cref{constr:vcnopk}
  is a polynomial parameter transformation
  from \mcclique{} parameterized by~\(k\log n\)
  to \sspAcr{} parameterized by~\(\vc\).

  \mcclique{} parameterized by~$k\log n$
  is known to be WK[1]-complete~\citep{HKS+15b}
  and hence,
  does not admit a polynomial\hyp size
  problem kernel unless~$\unlessPK{}$.
  From the
  polynomial parameter transformation
  in \cref{constr:vcnopk},
  it thus follows that
  \sspAcr{} is WK[1]-hard parameterized by~\(\vc\)
  and does not admit a polynomial\hyp size
  problem kernel unless~$\unlessPK{}$,
  either.
\end{proof}

\subsection{Polynomial-size kernels in $K_{r,r}$\ssfree{} graphs}
\label{sec:plankern}
\label{sec:planar-kern}

In this section,
we show that the hardness result
of the previous section
does not transfer to
$K_{r,r}$\ssfree graphs for constant~$r$,
which can be assumed to comprise,
for example,
road networks.

\begin{theorem}
  \label{thm:kernelkrr}
  For each constant~\(r\in\N\),
  \sspTsc{}
  in \(K_{r,r}\)\ssfree{} graphs
  admits a problem kernel
  with size polynomial
  in the vertex cover number
  of the input graph.
\end{theorem}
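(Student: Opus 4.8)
The plan is to turn the vertex cover~$\vc$ of the input graph~$G$ into a polynomial bound on the total instance size by exploiting the $K_{r,r}$\ssfree{} property to control how vertices outside the vertex cover can attach to it. Let~$U$ be a minimum vertex cover of~$G$, so $|U| = \vc$, and let $I := V \setminus U$ be the complementary independent set. The core combinatorial observation I would establish first is that, in a $K_{r,r}$\ssfree{} graph, the number of vertices in~$I$ whose neighborhood (a subset of~$U$) has size at least~$r$ is bounded: if more than $(r-1)\binom{\vc}{r}$ such vertices existed, then by pigeonhole some $r$-subset of~$U$ would lie in the neighborhood of at least~$r$ of them, yielding a~$K_{r,r}$. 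Hence all but $O(\vc^{r})$ vertices of~$I$ have degree at most~$r-1$, i.e.\ their neighborhood is one of the $\sum_{i \le r-1}\binom{\vc}{i} \in \vc^{O(r)}$ possible small subsets of~$U$.

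Next I would design reduction rules that bound, for each possible "low-degree type'' $S \subseteq U$ with $|S| \le r-1$, the number of degree-$\le(r-1)$ vertices in~$I$ with exactly neighborhood~$S$ that we need to keep. The intuition is that an $s$-$t$-path visits at most two vertices of any fixed neighborhood-type (it enters via one neighbor and leaves via another, or uses the vertex as an endpoint), so for each type~$S$ it suffices to retain a bounded number of ``representative'' vertices of that type --- roughly a constant many, after also accounting for the special roles of~$s$ and~$t$ and for the fact that unused twin vertices only contribute to the neighborhood count~$\ell$. Care must be taken with how many twins we keep versus delete: deleting a twin may decrease $|N(V(P))|$ for paths not using it, so the rule must either keep enough twins so that "at least one is always available but the count is faithful'' or adjust~$\ell$ accordingly; I would phrase this as a rule that keeps, say, $\min(\text{current count}, 3)$ vertices of each type and argue equivalence by a path-surgery/exchange argument. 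After these rules, $|I|$ is bounded by $O(\vc^{r})$ (high-degree vertices) plus $\vc^{O(r)} \cdot O(1)$ (one bounded bucket per low-degree type), and together with $|U| = \vc$ this gives $|V(G')| \in \vc^{O(r)}$; since $r$ is constant, this is polynomial in~$\vc$, and the edge count is then automatically $O(|V(G')|^2) = \vc^{O(r)}$ as well.

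Finally I would verify the two kernelization requirements: the rules are clearly applicable in polynomial time (computing an approximate vertex cover of size $O(\vc)$ suffices --- a factor-$2$ approximation keeps everything polynomial in~$\vc$ --- and then bucketing vertices of~$I$ by neighborhood type and truncating each bucket is polynomial), and each rule preserves yes/no-ness of the $\sspAcr{}$ instance, which I would check by a direct argument: given a solution path~$P$ in~$G$, reroute it through the retained representatives of each type it uses, and conversely a path in the reduced graph~$G'$ is already a path in~$G$. The parameter~$k$ is untouched and $\ell$ is either untouched or decreased in a controlled way, so the new parameter is still bounded by a function of~$\vc$. The main obstacle I anticipate is the bookkeeping around the neighbor count~$\ell$ when deleting near-twins: unlike in problems where we only care about the path's vertex set, here deleting a vertex of~$I$ both removes a potential path vertex and removes a potential \emph{neighbor}, so the exchange argument must simultaneously track $|V(P)| \le k$ and $|N(V(P))| \le \ell$, and getting the truncation threshold and the (possible) adjustment of~$\ell$ exactly right --- so that no spurious solutions are created and no real ones destroyed --- is the delicate part of the proof.
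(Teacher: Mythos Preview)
Your combinatorial backbone matches the paper's: both use the fact that in a $K_{r,r}$\ssfree{} graph at most $(r-1)\binom{\vc}{r}$ vertices of the independent set have degree~$\ge r$, and that low-degree vertices fall into $\vc^{O(r)}$ neighborhood types, so truncating each twin class leaves $O(\vc^r)$ vertices. That part is fine and is exactly \cref{lem:instreduce} in the paper.

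The genuine gap is the step you yourself flag as ``the delicate part'': the bookkeeping for~$\ell$ when you delete twins. Neither of the two fixes you propose can work. Adjusting~$\ell$ globally fails because the change in $|N(V(P))|$ caused by deleting twins depends on which twin classes the path~$P$ actually touches; two paths may touch disjoint collections of twin classes, so no single shift of~$\ell$ is correct for both. Keeping ``enough twins so that the count is faithful'' fails because ``enough'' is $\Theta(\ell)$ per class, and $\ell$ is \emph{not} bounded by any function of~$\vc$ (in fact \cref{thm:wk1hard} shows there is no kernel in $\vc+k$, precisely because large~$\ell$ cannot be tamed directly). So a direct unweighted truncation rule with a constant bucket size cannot preserve equivalence.

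The paper resolves this by a different mechanism altogether. It passes to the auxiliary weighted problem \WsspAcr{}: for each truncated twin class it keeps $r$ representatives, marks one of them with $\kappa(v)=k+1$ (forcing it off every path) and $\lambda(v)=|U|-r+1$ (so that, as a neighbor, it faithfully stands in for all deleted twins). This yields an instance with $O(\vc^r)$ vertices but with weights that may be as large as~$n$. The key extra idea is then to apply the Frank--Tardos theorem (\cref{thm:FrankTardos}) to compress these weights to $2^{O(\vc^{3r})}$, hence encoding length $\vc^{O(r)}$, and finally to use the NP-completeness of \sspAcr{} on $K_{r,r}$\ssfree{} graphs (\cref{prop:k22-hardness}) to transform the bounded-size \WsspAcr{} instance back into an \sspAcr{} instance of polynomial size. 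Your plan contains neither the weighted detour nor the weight-compression step, and without them the kernel size cannot be bounded in~$\vc$ alone.
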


\noindent
Since the theorem would follow trivially
if \sspAcr{} was polynomial\hyp time solvable
in $K_{r,r}$\ssfree{} graphs,
before proving \cref{thm:kernelkrr},
we first show the following:

\begin{proposition}
  \label{prop:k22-hardness}
  \sspAcr{} in $K_{r,r}$\ssfree{} graphs
  is
  polynomial\hyp time solvable for~$r=1$
  and NP\hyp hard
  for each constant~$r\geq 2$
  even in graphs with maximum degree four
  and $\ell=0$.
\end{proposition}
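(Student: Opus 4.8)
The plan is to treat the two cases separately. For $r=1$, a graph is $K_{1,1}$\ssfree{} exactly when it has no edges (as $K_{1,1}$ is a single edge); since $s\neq t$, it then has no $s$-$t$-path, so the instance is a trivial no-instance and \sspAcr{} is solvable in constant time. For $r\geq2$, I would first narrow the goal. Because $K_{2,2}=C_4$ is a subgraph of $K_{r,r}$ for every $r\geq2$, a graph without a $C_4$ subgraph is $K_{r,r}$\ssfree{} for \emph{all} $r\geq2$ simultaneously, so it suffices to prove NP-hardness on $C_4$\ssfree{} graphs. Moreover, by \cref{rrule:onecomponent} the input may be assumed connected, whence $|N(V(P))|\leq0$ forces $V(P)=V$; taking $k:=n$, the instance $(G,s,t,n,0)$ is then a yes-instance of \sspAcr{} iff $G$ has a Hamiltonian $s$-$t$-path. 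Thus it remains to show that deciding the existence of a Hamiltonian $s$-$t$-path is NP-hard on graphs of maximum degree four with no $C_4$ subgraph.

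I would reduce from \textsc{Hamiltonian Cycle} on cubic graphs (which is NP-complete), and I may assume the input $G$ to be $2$-connected, since a cubic graph that is not $2$-connected is non-Hamiltonian. The first gadget step replaces each vertex $v$ with neighbours $a,b,c$ by a triangle $T_v=\{v_a,v_b,v_c\}$ and joins $v_x$ by a ``bridge'' edge to the triangle vertex that replaces the corresponding endpoint of the edge $vx$; call the cubic result $G'$. Since every triangle vertex is incident to exactly one bridge edge, a cycle of $G'$ meets each triangle it visits in a single run, and that run covers all three triangle vertices and uses exactly two of its bridge edges. This gives, on one hand, that $G$ has a Hamiltonian cycle iff $G'$ does, and, on the other hand, that every cycle of $G'$ is either a triangle (length three) or has length at least six --- a shorter cycle would use at most two bridge edges running between the same pair of triangles, which would be parallel edges in the simple graph $G$ --- so $G'$ has no $C_4$ subgraph.

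The second gadget step turns the Hamiltonian-cycle question into a Hamiltonian-$s$-$t$-path question. Pick any triangle $T_w$ of $G'$, delete it, and keep its three former bridge-neighbours $p_1,p_2,p_3$; these are distinct and lie in three pairwise disjoint triangles. Add two new degree-two vertices $s$ and $t$, with $s$ adjacent to $p_1,p_2$ and $t$ adjacent to $p_2,p_3$; call the result $H$. Because $s$ and $t$ are degree-two endpoints that each commit to a single incident edge, a Hamiltonian $s$-$t$-path of $H$ must run out into $G'-T_w$ and come back to the other terminal, so it induces a Hamiltonian $p_i$-$p_j$-path of $G'-T_w$ for some $i\neq j$, which reassembles through $T_w$ into a Hamiltonian cycle of $G'$; conversely, each of the three ways a Hamiltonian cycle of $G'$ can traverse $T_w$ produces such a Hamiltonian $s$-$t$-path of $H$. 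One verifies that $H$ is connected ($2$-connectedness of $G$ makes $G'-T_w$ connected), has maximum degree four (only $p_2$ reaches degree four), and has no $C_4$ subgraph (the new edges are incident to $s$ or $t$ and close no $C_4$, since $p_1,p_2,p_3$ lie in pairwise disjoint triangles and $s,t$ are non-adjacent). Then $(H,s,t,|V(H)|,0)$ is the desired instance, completing the reduction.

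The main obstacle is the second step: converting Hamiltonian cycle to Hamiltonian $s$-$t$-path while keeping the graph $C_4$-free. The textbook reduction attaches $s$ and $t$ to a common neighbourhood and therefore creates four-cycles; I avoid this by exploiting the rigidity that the triangle expansion buys --- a Hamiltonian cycle must use exactly two of the three bridge edges at any triangle --- which lets $s$ and $t$ jointly ``select'' two of the three ports of the deleted triangle through a small three-edge gadget that introduces no short cycle and raises the maximum degree only to four.
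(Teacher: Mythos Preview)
Your proof is correct. The $r=1$ case and the reduction of the $\ell=0$ question to Hamiltonian $s$-$t$-path match the paper. For $r\geq2$, however, you take a genuinely different route. The paper reduces from \textsc{Hamiltonian Cycle} on hexagonal grid graphs, which are already $C_4$\ssfree{} and of maximum degree three, and attaches a single seven\hyp vertex gadget at a chosen vertex~$x$ and two of its neighbours~$y,z$ to convert the cycle question into an $s$-$t$-path question; the $C_4$\ssfree ness and the degree bound then follow almost immediately from the source class. You instead start from the textbook source \textsc{Hamiltonian Cycle} on cubic graphs, manufacture $C_4$\ssfree ness via the triangle expansion, and then convert cycle to path by deleting one triangle and letting $s$ and $t$ act as degree\hyp two ``port selectors'' on its three former bridge neighbours. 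The paper's approach is shorter because the hard work ($C_4$\ssfree ness) is imported from the source problem, while yours is more self\hyp contained, relying only on a classical hardness result and a well\hyp known gadget; your port\hyp selection step is also a clean alternative to the standard cycle\hyp to\hyp path reductions, which would reintroduce four\hyp cycles. One incidental difference: the paper's construction, being built on a planar source, happens to stay planar, whereas triangle expansion of a non\hyp planar cubic graph need not---but planarity is not claimed in the proposition, so this does not affect correctness.
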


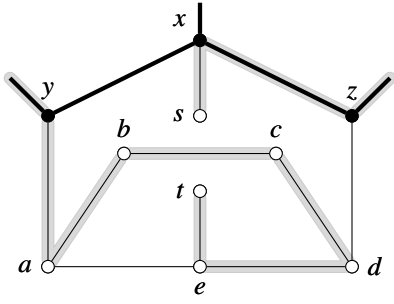
\begin{figure}
  \centering
  \begin{tikzpicture}

    \usetikzlibrary{calc,fadings}
    \def\xr{1}
    \def\yr{1}

    \tikzstyle{xnode}=[circle,fill,scale=1/2,draw];
    \tikzstyle{anode}=[circle,fill=white,scale=1/2,draw];
    \tikzstyle{xedge}=[-];
    \tikzstyle{xxedge}=[ultra thick,-];
    \tikzstyle{pedge}=[line cap=round, double distance = 10\pgflinewidth, double=gray!30!white,gray!40!white];

    \newcommand{\rrnodes}{
    \node (s) at (0,-1*\yr)[anode,label=180:{$s$}]{};
    \node (x) at (0,0)[xnode,label=135:{$x$}]{};
    \node (y) at (-2*\xr,-1*\yr)[xnode,label={$y$}]{};
    \node (z) at (2*\xr,-1*\yr)[xnode,label={$z$}]{};
    \node (a) at (-2*\xr,-3)[anode,label=180:{$a$}]{};
    \node (b) at (-1*\xr,-1.5)[anode,label={$b$}]{};
    \node (c) at (1*\xr,-1.5)[anode,label={$c$}]{};
    \node (d) at (2*\xr,-3)[anode,label=0:{$d$}]{};
    \node (e) at (0*\xr,-3)[anode,label=-90:{$e$}]{};
    \node (t) at (0*\xr,-2)[anode,label=180:{$t$}]{};
    }

    \rrnodes{}
    \draw[pedge] (s) to (x)  to (z) to ($(z)+(0.5*\xr,0.5*\yr)$);
    \draw[pedge]  ($(y)+(-0.5*\xr,0.5*\yr)$) to (y) to (a)  to (b) to (c) to (d) to (e) to (t);
    \rrnodes{}

    \draw[xedge] (s) to (x);
    \draw[xxedge] (y) to (x);
    \draw[xxedge] (z) to (x);
    \draw[xedge] (y) to (a);
    \draw[xedge] (b) to (a);
    \draw[xedge] (b) to (c);
    \draw[xedge] (d) to (c);
    \draw[xedge] (d) to (e);
    \draw[xedge] (d) to (z);
    \draw[xedge] (a) to (e);
    \draw[xedge] (t) to (e);
    \draw[xxedge] (x) to ($(x)+(0,0.5*\yr)$);
    \draw[xxedge] (y) to ($(y)+(-0.5*\xr,0.5*\yr)$);
    \draw[xxedge] (z) to ($(z)+(0.5*\xr,0.5*\yr)$);
  \end{tikzpicture}
  \caption{Construction for the proof of \cref{prop:k22-hardness}. 
  White-colored vertices and thin edges are added to the graph.
  A Hamiltonian cycle is sketched by gray thick lines.}
  \label{fig:k22-hardness}
\end{figure}

\begin{proof}
  A $K_{1,1}$\ssfree{} graph has no edges,
  thus \sspAcr{} in such graphs is trivial.  We now prove that \sspAcr{} is NP\hyp hard in $K_{r,r}$\ssfree{} graphs
  of maximum degree four
  for $r=2$,
  which implies NP\hyp hardness for any $r\geq 2$.
  To this end,
  we present a polynomial\hyp time many\hyp one
  reduction from the NP\hyp complete problem \textsc{Hamiltonian Cycle}  
  in \emph{hexagonal grid graphs} \citep{AFI+09},
  which are subgraphs of a hexagonal grid.
  Note that such a graph
  has maximum degree three,
  does not contain cycles of length four
  as a subgraph, and is therefore $K_{2,2}$\ssfree{}.

  Let $G$~be a hexagonal grid graph and $x$~be an arbitrary vertex of~$G$
  with at least two neighbors~$y$ and~$z$
  (if there is no such vertex,
  then we conclude in polynomial\hyp time that $G$~is
  a no\hyp instance).
  We obtain a graph~$G'$
  by adding vertices~$s,t,a,b,c,d,e$ as shown in \cref{fig:k22-hardness}
  and return an instance~$(G',s,t,k,\ell)$ with $k=n+7$ and~$\ell=0$.
  Observe that~$G'$ does not contain
  a~$K_{2,2}$\hyp subgraph (or, equivalently, a cycle of length four),
  since $G$~does not contain them:
  all cycles that are not in~$G$
  contain both~$a$ and~$d$,
  and the shortest cycle containing them consists of five vertices.

  We now prove that $G$ admits a Hamiltonian cycle if and only if $(G',s,t,k,\ell)$ is a \yes-instance of \sspAcr{}.
  
  $(\Rightarrow)$ Assume that~$G$ has a Hamiltonian cycle~$H$.
  Then at least one of~$y$ and~$z$ is adjacent to~$x$ in~$H$.
  By symmetry,
  assume it is~$y$.
  Then an $s$-$t$-path
  on $n+7$~vertices and
  empty open neighborhood in~$G'$
  starts at~$s$,
  visits~$x$,
  follows~$H$ starting with the neighbor of~$x$ on~$H$ that is not~$y$
  until arriving at~$y$,  and finally ends in~$a,b,c,d,e,t$
  (illustrated by the dashed line in \cref{fig:k22-hardness}).

  $(\Leftarrow)$ Assume that~$G'$ has a simple $s$-$t$-path~$P$
  with at most~$n+7$ vertices and empty open neighborhood.
  Then $P$~contains \emph{all} of the $n+7$~vertices of~$G'$.
  Note that the only entry and exit
  points of the set of vertices~$\{a,b,c,d,e,t\}$
  are~$y$ and~$z$ and,  therefore,
  when~$P$ enters~$\{a,b,c,d,e,t\}$,
  it cannot leave this set of vertices anymore,
  since otherwise it will be impossible for $P$ to reach~$t$.
  Thus, path~$P$ starts with~$s,x$
  and ends with the vertices~$\{a,b,c,d,e,t\}$ (in some order),
  which, modulo symmetry, have been entered via~$y$.
  Thus, removing the vertices~$\{s,a,b,c,d,e,t\}$ from~$P$
  and adding an edge~$\{x,y\}$
  yields a Hamiltonian cycle for~$G$.
\end{proof}

\noindent
Note that \cref{thm:kernelkrr} is trivial for~$r=1$:
we can simply solve the problem in polynomial time (see \cref{prop:k22-hardness})
and return a constant\hyp size equivalent instance.

For $r\geq 2$,
the proof of \cref{thm:kernelkrr}
consists of three steps.
First,
in linear time,
we transform an
$n$-vertex instance of \sspAcr{}
into an equivalent instance
of an auxiliary vertex\hyp weighted version of \sspAcr{}
with $O(\vc^r)$ vertices.
Second,
using a theorem of \citet{FT87},
in polynomial time,
we reduce the vertex weights
to $2^{O(\vc^{3r})}$
so that the length of their
encoding is \(O(\vc^{3r})\).
Finally,
since \sspAcr{} is NP-complete
in~$K_{r,r}$\ssfree{} graphs for~$r\geq 2$
by \cref{prop:k22-hardness},
we can, in polynomial time,
reduce
the shrunk instance
back to an instance
of the unweighted
\sspAcr{} in \(K_{r,r}\)\ssfree{} graphs.
Due to the polynomial running time
of the reduction,
there is
at most a polynomial blow-up
of the instance size.

Our auxiliary variant
of \sspAcr{}
allows each vertex to have three weights:
weight~\(\kappa(v)\)
counts towards the length of the path,
the weights~\(\lambda(v)\) and~\(\eta(v)\)
count towards the number of neighbors
(in fact,
we will not use~\(\eta(v)\) yet,
but to derive other results later).

\decprob{\WsspTsc~(\WsspAcr)}
{%
  \label[problem]{prob:wssp}%
  An undirected, simple graph~$G=(V,E)$ with two distinct vertices~$s,t\in V$, two integers~$k\geq2$ and~$\ell\geq0$, and vertex weights~$\kappa:V\to\N$,~$\lambda:V\to\N\cup\{0\}$, and~$\eta:V\to\N\cup\{0\}$.}
{Is there a simple \(s\)-\(t\)-path~$P$ with
  $\sum_{v\in V(P)}\kappa(v)\leq k$ and $\sum_{v\in V(P)} \eta(v)+\sum_{v\in N(V(P))} \lambda(v)\leq \ell$ in~\(G\)?}

\noindent
Note that an instance of \sspAcr{}
can be considered
to be an instance of \WsspAcr{}
with unit weight functions~\(\kappa\) and~\(\lambda\)
and the zero weight function~$\eta$.
Our data reduction will be based
on removing \emph{twins}.
\begin{definition}[twins]
  Two vertices~\(u\) and~\(v\)
  are called \emph{(false) twins}
  if \(N(u)=N(v)\).
\end{definition}

\noindent
As the first step
towards proving \cref{thm:kernelkrr},
we will show
that the following data reduction rule,
when applied to a \(K_{r,r}\)\ssfree{} instance
of \sspAcr{} for constant~\(r\),
leaves us with
an instance of \WsspAcr{}
with \(O(\vc^r)\)~vertices.

\begin{rrule}
 \label{rr:reducenbs}
 Let \((G,s,t,k,\ell,\kappa,\lambda,\eta)\)
 be an \WsspAcr{} instance with unit weights~\(\kappa\)
 and~\(\lambda\), and zero weights~\(\eta\),
 where \(G=(V,E)\)~is a \(K_{r,r}\)\ssfree{} graph.
 
 For each maximal set~\(U\subseteq V\setminus\{s,t\}\)
 of twins such that \(|U|>r\),
 delete $|U|-r$ vertices of~$U$ from~\(G\),
 and, for an arbitrary remaining vertex~\(v\in U\),
 set $\lambda(v):=|U|-r+1$ and $\kappa(v):=k+1$.
\end{rrule}

\begin{lemma}
  \label[lemma]{obs:reducenbslintime}
  \cref{rr:reducenbs} is correct
  and can be applied in linear time.
\end{lemma}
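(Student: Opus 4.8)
The statement has two parts: correctness of \cref{rr:reducenbs}, and that it is applicable in linear time. I would treat them separately, spending most of the effort on correctness.

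\emph{Correctness.} The plan is to show that the reduced instance is a yes-instance of \WsspAcr{} if and only if the original instance is a yes-instance of \sspAcr{}. Let $U\subseteq V\setminus\{s,t\}$ be a maximal set of twins with $|U|>r$, let $U'\subseteq U$ be the set of $|U|-r$ deleted vertices, and let $v\in U\setminus U'$ be the vertex whose weights are rewritten. The first observation is that, since $G$ is $K_{r,r}$-free and $U$ is a set of twins with common neighborhood $N:=N(u)$ for $u\in U$, we must have $|N|\le r-1$ (otherwise $U\cup N$ would contain a $K_{r,r}$-subgraph, as $|U|>r\ge r$ and $|N|\ge r$). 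In particular every vertex of $U$ has degree at most $r-1$, hence no vertex of $U$ lies in the interior of a simple path as a ``pass-through'' vertex unless... — actually the key point is simpler: because each vertex of $U$ has $\kappa$-cost set to $k+1$ in the reduced instance, no optimal solution path of the reduced instance uses $v$ (the path has total $\kappa$-weight $\le k$ and every vertex has $\kappa\ge 1$, so in particular any single vertex of $\kappa$-weight $k+1$ is forbidden). Symmetrically, in the \emph{original} instance, I would argue that an optimal path $P$ uses \emph{at most one} vertex of $U$: since all vertices of $U$ have the same neighborhood $N$ with $|N|\le r-1$, if $P$ used two of them, say $a,b\in U$, both as inner vertices, then both neighbors of $a$ on $P$ and both neighbors of $b$ on $P$ lie in $N$, forcing $|N|\ge 2$ and, more carefully, one can reroute/shorten; and one of $s,t$ cannot be in $U$ by assumption. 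The cleanest formulation: \emph{there is an optimal solution to the original instance that uses no vertex of $U$ at all.} Indeed, if $P$ uses some $a\in U$, then $a$ is an inner vertex with both neighbors in $N$; since $|U|>r\ge 2$ there is $a'\in U\setminus V(P)$, but swapping $a$ for $a'$ changes nothing about $|V(P)|$ or $|N(V(P))|$ since they are twins — so WLOG we may assume $P$ avoids the deleted vertices; but then we still might use $v$. Here the role of $\lambda(v):=|U|-r+1$ enters: the $|U|-r$ deleted vertices of $U$, together with $v$ if $v\notin V(P)$, contribute exactly $|U|-r+1$ to $|N(V(P))|$ in the original instance precisely when $V(P)\cap U=\emptyset$ but $V(P)\cap N\neq\emptyset$ (each vertex of $U$ is then a neighbor of the path), which is matched by the term $\lambda(v)\cdot[v\in N(V(P))]=|U|-r+1$ in the reduced instance. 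Conversely if $V(P)\cap N=\emptyset$, no vertex of $U$ is a neighbor of $P$ and $\lambda(v)$ contributes $0$ on both sides. So I would do the two directions by: given an original solution, first normalize it to avoid all of $U$ (using the twin-swap argument and $|U|>r$), then observe it is a valid reduced solution with the same objective value; and conversely, a reduced solution avoids $v$ (cost $k+1$), hence avoids all of the reduced copy of $U$, and lifts verbatim to the original graph with the same path, where the number of neighbors is the size of $N(V(P))$ which equals (reduced neighbors not in $U$) $+$ ($|U|$ if $V(P)$ touches $N$, else $0$) $= $ (reduced objective) by the bookkeeping above.

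\emph{Linear time.} For the running time I would argue: computing, for each vertex, its neighborhood as a sorted list or via a canonical hash/fingerprint lets us partition $V\setminus\{s,t\}$ into twin classes in $O(n+m)$ time by radix-sorting vertices by their (sorted) adjacency lists — the total length of all adjacency lists is $2m$, so this is linear; alternatively, since $G$ is $K_{r,r}$-free with $\vc(G)$ small one could bound $m$, but that is not needed. Once the twin classes are known, for each class with $|U|>r$ we pick a representative $v$, delete the other $|U|-r$ vertices and their incident edges (again charged to the adjacency-list length), and rewrite the two weights of $v$ in $O(1)$. Summing over all classes gives $O(n+m)$ total.

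\emph{Main obstacle.} The part that needs the most care is the \emph{forward} direction of correctness: showing that an arbitrary optimal solution $P$ of the original instance can be transformed, without increasing either $|V(P)|$ or $|N(V(P))|$, into one that is entirely disjoint from $U$, and then that its neighbor-count is faithfully reproduced by the single weighted vertex $v$ with $\lambda(v)=|U|-r+1$. The twin-swap is easy when $P$ uses exactly one vertex of $U$; one must also rule out or handle $P$ using two or more vertices of $U$ simultaneously (which is where $|N|\le r-1$, forced by $K_{r,r}$-freeness together with $|U|>r$, does the work — two inner vertices of $P$ in $U$ would need four path-edges into $N$, and a short case analysis on $|N|\in\{1,\dots,r-1\}$ shows the path can be shortcut or is impossible). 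Making this rerouting argument airtight, and then matching the objective values exactly, is the crux; everything else is bookkeeping.
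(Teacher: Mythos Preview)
Your correctness argument has a real gap. You aim for the claim that an optimal path can avoid $U$ entirely, and in the backward direction you write ``a reduced solution avoids $v$ (cost $k+1$), hence avoids all of the reduced copy of $U$.'' Both are false. For $r=3$, take $N(U)=\{x,y\}$ with $|U|=4$, make $s$ adjacent only to $x$ and $t$ only to $y$, and omit the edge $\{x,y\}$; this graph is $K_{3,3}$-free, yet every $s$-$t$-path has the form $s,x,u,y,t$ for some $u\in U$. For $r=4$ with $N(U)=\{x,y,z\}$, the segment $x,u_1,y,u_2,z$ uses two twins and cannot be shortcut unless edges inside $N(U)$ happen to exist. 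So ``shortcut or impossible'' fails, and your $\lambda$-bookkeeping, written only under the hypothesis $V(P)\cap U=\emptyset$, does not cover the cases that actually arise.

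The fix (and the paper's approach) is to prove the weaker bound $|V(P)\cap U|\le |N(U)|-1\le r-2$: every $u\in V(P)\cap U$ has both path-neighbors in $N(U)$, and along $P$ these alternate, forcing at least one more $N(U)$-vertex than $U$-vertex on $P$. The rule keeps $r$ twins and blocks only one via $\kappa(v)=k+1$, leaving $r-1\ge r-2$ usable twins; a twin-swap then places $V(P)\cap U$ inside the unblocked kept vertices (it does \emph{not} make it empty). The $\lambda$-accounting now compares $|U\setminus V(P)|$ with $\sum_{u\in U'\setminus V(P)}\lambda'(u)$ for the kept set $U'$: since $v\in U'\setminus V(P)$ carries $\lambda'(v)=|U|-r+1$, the $|U|-r$ deleted twins are exactly compensated, regardless of how many kept twins lie on $P$. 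Your linear-time argument is fine; the paper simply cites a known linear-time twin-partition result rather than spelling out the radix sort.
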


\begin{proof}
  All maximal sets of twins
  can be computed in linear time \citep{HPV98}.
  It is now easy to check
  which of them has size larger than~\(r\)
  and to apply \cref{rr:reducenbs}.

  To prove that \cref{rr:reducenbs} is correct,
  we prove that its
  input instance~\(I=(G,s,t,k,\ell,\kappa,\lambda,\eta)\)
  is a yes\hyp instance
  if and only if
  its
  output instance~\(I'=(G',s,t,k,\ell,\kappa',\lambda',\eta)\)
  is.
  Herein, note that \(\eta\)~is the zero function,
  so we will ignore it in the rest of the proof.

  \RD{}
  Let $P$~be a simple $s$-$t$-path
  with \(\sum_{v\in V(P)}\kappa(v)\leq k\)
  and \(\sum_{v\in N(V(P))}\lambda(v)\leq\ell\)
  in~\(G\).
  Let $U\subseteq V\setminus \{s, t\}$
  be an arbitrary set of twins
  with \(|U|>r\).
  Since $G$~is $K_{r,r}$\ssfree{},
  $|N(U)| \leq r-1$.
  Thus,
  \(P\)~contains at most
  $|N(U)|-1\leq r-2$~vertices of~$U$.
  \cref{rr:reducenbs} reduces $U$~to a set~$U'$
  with $r$~vertices, where only
  one of the vertices~$v\in U'$
  has weight~\(\kappa'(v)>1\).
  Thus,
  without loss of generality,
  we can assume that \(P\)~uses
  only the \(r-1\)~vertices~\(v\in U\cap U'\)
  with \(\kappa'(v)=1\).
  Hence,
  \begin{equation}
    \sum_{v\in V(P)\cap U}\kappa(v)
    =
    \sum_{v\in V(P)\cap U}\kappa'(v)
    =|V(P)\cap U|.
    \label{kappas1}
  \end{equation}
  Moreover,
  if \(P\)~uses a vertex of~\(U\),
  then it also uses a vertex of~\(N(U)\)
  and, hence,
  \(U\setminus V(P)\subseteq N(V(P))\).
  Thus,
  \begin{equation}
    \begin{aligned}
      \sum_{v\in N_G(V(P))\cap U}\lambda(v)
      &=    \smashoperator{\sum_{v\in U\setminus V(P)}}\lambda(v)
      =    \smashoperator{\sum_{v\in U'\setminus V(P)}}\lambda'(v)
      = \sum_{v\in N_{G'}(V(P))\cap U'}\lambda'(v)
 \end{aligned}
    \label{lambdas1}
  \end{equation}
  since
  \(|U\setminus U'|=|U|-r\)
  and
  there is a vertex~\(v\in U'\cap U\)
  that has
  \(\lambda(v)=1\)
  on the left\hyp hand side of \eqref{lambdas1}
  but \(\lambda'(v)=|U|-r+1\)
  on the right\hyp hand side of \eqref{lambdas1}.
  From \eqref{kappas1}, \eqref{lambdas1},
  and the arbitrary choice of~\(U\),
  it follows that
  \(P\)~is an $s$-$t$-path
  with \(\sum_{v\in V(P)}\kappa'(v)\leq k\)
  and \(\sum_{v\in N(V(P))}\lambda'(v)\leq\ell\)
  in~\(G'\).
  Thus,
  \(I'\)~is a yes\hyp instance.

  \LD{}
  Let $P$~be a simple $s$-$t$-path
  with \(\sum_{v\in V(P)}\kappa'(v)\leq k\)
  and \(\sum_{v\in N(V(P))}\lambda'(v)\leq\ell\)
  in~\(G'\).
  Let $U\subseteq V\setminus \{s, t\}$
  be a set of twins in~\(G\)
  reduced to a subset~\(U'\)
  in~\(G'\)
  by \cref{rr:reducenbs}.
  The only vertex~$v\in U'$
  with weight~\(\kappa'(v)>1=\kappa(v)\)
  has \(\kappa'(v)=k+1\)
  and thus is not on~\(P\).
  Yet,
  if \(P\)~uses vertices of~\(U'\),
  then \(v\in U'\setminus V(P)\subseteq N_{G'}(V(P))\)
  and \(U\setminus V(P)\subseteq N_G(V(P))\).
  Thus,
  \eqref{kappas1} and \eqref{lambdas1} apply
  and,
  together with the arbitrary choice of~\(U\),
  show that
  \(P\)~is an $s$-$t$-path
  with \(\sum_{v\in V(P)}\kappa(v)\leq k\)
  and \(\sum_{v\in N(V(P))}\lambda(v)\leq\ell\)
  in~\(G\)
  and, thus,
  \(I\)~is a yes\hyp instance.
\end{proof}

\noindent
Having proved the correctness of \cref{rr:reducenbs},
we now prove a size bound
for the instances
remaining after \cref{thm:kernelkrr}.

\begin{proposition}
  \label[proposition]{lem:instreduce}
  Applied to an instance
  of \sspAcr{}
  with a \(K_{r,r}\)\ssfree{} graph
  with vertex cover number~\(\vc\),
  \cref{rr:reducenbs,rrule:onecomponent} yield
  an instance of
  \WsspAcr{} on at most $(\vc+2) + r(\vc +2)^r$~vertices in linear time.
\end{proposition}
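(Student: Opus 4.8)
The plan is to bound the number of vertices remaining after exhaustively applying \cref{rrule:onecomponent} and \cref{rr:reducenbs}. Fix a minimum vertex cover $C$ of the input graph $G$, so $|C|=\vc$, and let $s,t\in V$. Set $C':=C\cup\{s,t\}$; then $|C'|\leq \vc+2$ and $I:=V\setminus C'$ is an independent set, since removing $s$ and $t$ from consideration only enlarges the cover. The key observation is that every vertex $v\in I$ satisfies $N(v)\subseteq C'$ (its neighbors lie in the cover, and $v\neq s,t$), so each vertex of $I$ is determined, as far as its neighborhood is concerned, by a subset of $C'$. Moreover two vertices $u,v\in I$ with $N(u)=N(v)$ are twins in the sense of the paper's \cref{def:compression}'s surrounding \emph{twins} definition, and since neither is $s$ nor $t$, they belong to a common maximal twin class $U\subseteq V\setminus\{s,t\}$.

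First I would partition $I$ into its twin classes according to the (common) neighborhood. There are at most $2^{|C'|}\leq 2^{\vc+2}$ possible neighborhoods, but we want a polynomial-in-$\vc$ bound, and this is exactly where $K_{r,r}$-freeness enters: a vertex of $I$ whose neighborhood has size $\geq r$, together with any $r$ vertices of $I$ sharing a size-$\geq r$ neighborhood, would create a $K_{r,r}$ subgraph. Hence every twin class $U$ with $|N(U)|\geq r$ has $|U|<r$ — wait, more carefully: if $|N(U)|\geq r$ and $|U|\geq r$ then picking $r$ vertices from $U$ and $r$ from $N(U)$ yields $K_{r,r}$, contradiction; so any twin class with $|N(U)|\geq r$ has size at most $r-1$. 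On the other hand, twin classes $U$ with $|N(U)|\leq r-1$ correspond to subsets of $C'$ of size at most $r-1$, of which there are at most $\sum_{i=0}^{r-1}\binom{\vc+2}{i}\leq (\vc+2)^r$ (a crude bound suffices). After \cref{rr:reducenbs} is applied exhaustively, every twin class $U\subseteq V\setminus\{s,t\}$ has $|U|\leq r$. Therefore the number of vertices of $I$ surviving is at most $r$ times the number of twin classes of size $\leq r-1$ in neighborhood, i.e. at most $r(\vc+2)^r$; the twin classes of size $\leq r-1$ contribute even fewer vertices, so they are absorbed into the same bound.

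Putting this together: the surviving graph has at most $|C'|+r(\vc+2)^r\leq (\vc+2)+r(\vc+2)^r$ vertices. Linear running time follows from \cref{obs:reducenbslintime} (twin classes computable in linear time, \cref{rr:reducenbs} applied in linear time) and the linear-time executability of \cref{rrule:onecomponent}; a single pass suffices since \cref{rr:reducenbs} only deletes vertices and reweights, and it never creates new twin classes among $V\setminus\{s,t\}$ larger than $r$. The main obstacle — really the only nontrivial point — is the counting argument combining $K_{r,r}$-freeness with the vertex-cover structure: one must be careful to argue that \emph{all} surviving vertices outside $C'$ lie in twin classes indexed by small subsets of $C'$, so that the exponent $r$ (not $\vc$) governs the blow-up, and to handle the at most two exceptional vertices $s,t$ separately since \cref{rr:reducenbs} deliberately excludes them.
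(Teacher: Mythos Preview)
Your argument has a genuine gap in the counting of vertices in~$I$ whose neighborhood has size at least~$r$. You correctly observe that every twin class~$U\subseteq I$ with $|N(U)|\geq r$ satisfies $|U|\leq r-1$ by $K_{r,r}$-freeness. But you never bound the \emph{number} of such twin classes, and there can be as many as $\sum_{i\geq r}\binom{\vc+2}{i}$ distinct subsets of~$C'$ of size at least~$r$, which is not polynomial in~$\vc$. Your sentence ``the twin classes of size $\leq r-1$ contribute even fewer vertices, so they are absorbed into the same bound'' does not hold: each such class is small, but you have not shown that there are only polynomially many of them. Your own closing remark---that one must argue all surviving vertices outside~$C'$ lie in twin classes indexed by \emph{small} subsets of~$C'$---names exactly the claim that is false, not merely delicate.

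The paper's proof avoids this by \emph{not} counting twin classes for the high-degree vertices. Instead it counts vertices directly: every vertex $v\in I$ with $|N(v)|\geq r$ contains some $r$-subset $S\subseteq C'$ in its neighborhood, and for each fixed $r$-subset~$S$ at most $r-1$ vertices of~$I$ can have $S\subseteq N(v)$ (otherwise a $K_{r,r}$ appears). This yields at most $(r-1)\binom{|C'|}{r}$ vertices of degree $\geq r$ in~$I$, independently of the twin structure. Combined with your (correct) bound $r\sum_{i=1}^{r-1}\binom{|C'|}{i}$ on low-degree vertices---for which the twin-class reduction in \cref{rr:reducenbs} is essential---and the elimination of degree-zero vertices by \cref{rrule:onecomponent}, one obtains $|V'|\leq (\vc+2)+r(\vc+2)^r$. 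So the missing idea is precisely this switch from twin-class counting to common-neighbor counting for the high-degree range.
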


\begin{proof}
  Let \((G',s,t,k,\ell,\lambda',\kappa',\eta)\)~be
  the instance obtained from applying
  \cref{rr:reducenbs,rrule:onecomponent}
  to an instance \((G,s,t,k,\ell,\lambda,\kappa,\eta)\).

  Let \(C\)~be
  a minimum\hyp cardinality vertex cover for~\(G'\)
  that contains~\(s\) and~\(t\),
  let the vertex set of~\(G'\)
  be~$V$, and let~$Y=V\setminus C$.
  Since \(G'\)~is
  a subgraph of~\(G\),
  one has
  \(|C|\leq\vc(G')+2\leq\vc(G)+2=\vc+2\).
  It remains to bound~\(|Y|\).
  To this end,
  we bound the number of vertices
  of degree at least~\(r\) in~\(Y\)
  and the number of vertices
  of degree exactly~\(i\) in~\(Y\)
  for each \(i\in\{0,\dots,r-1\}\).
  Note
  that vertices in~\(Y\)
  have neighbors only in~\(C\).
  
  Since \cref{rrule:onecomponent} has been applied,
  there are no vertices of degree zero in~\(Y\).

  Since \cref{rr:reducenbs}
  has been applied,
  for each~\(i\in\{1,\dots,r-1\}\)
  and each subset~\(C'\subseteq C\)
  with \(|C'|=i\),
  we find at most \(r\)~vertices in~\(Y\)
  whose neighborhood is~\(C'\).
  Thus,
  for each~\(i\in\{1,\dots,r-1\}\),
  the number of vertices with degree~\(i\)
  in~\(Y\) is at most
  \(
    r\cdot \binom{|C|}{i}.
  \)

  Finally,
  since \(G\)~is \(K_{r,r}\)\ssfree{},
  any $r$-sized subset of the vertex cover~$C$
  has at most $r-1$~common neighbors.
  Hence,
  since vertices in~\(Y\)
  have neighbors only in~\(C\),
  the number of vertices in~$Y$
  of degree greater or equal to~$r$
  is at most \((r-1)\cdot\binom{|C|}{r}.\)
  We conclude that
  \begin{align*}
    |V'|&\leq |C| + (r-1)\cdot\binom{|C|}{r} + r\cdot \sum_{i=1}^{r-1} \binom{|C|}{i}\le (\vc+2) + r(\vc +2)^r.\qedhere
  \end{align*}
\end{proof}

\noindent
Having shown how to reduce an instance of \sspAcr{}
on \(K_{r,r}\)\ssfree{} graphs
to an equivalent instance of \WsspAcr{}
on \(O(\vc^r)\)~vertices
for constant~\(r\),
we finished the first step
to proving \cref{thm:kernelkrr}.
However,
our data reduction works
by ``hiding'' an unbounded number of twins
in vertices of unbounded weights.
Therefore,
the second step on the proof of \cref{thm:kernelkrr}
is reducing the weights.

To reduce the weights of an \WsspAcr{} instance,
we are going to apply a theorem by \citet{FT87}.
The theorem is a key approach to
polynomial\hyp size kernels for weighted problems
\citep{EKMR17}.
Notably,
we are seemingly the first ones
to apply the theorem of \citet{FT87}
to eventually kernelize
an \emph{unweighted} problem---\sspAcr{}.

\begin{proposition}[\citet{FT87}]
  \label[proposition]{thm:FrankTardos}
  There is an algorithm that,
  on input~$w\in\Q^d$ and integer~$N$,
  computes in polynomial time
  a vector~$\bar{w}\in \Z^d$
  with~$\norm{\bar{w}}{\infty}\leq 2^{4d^3}N^{d(d+2)}$
  such that~$\sign(w^\top b)=\sign(\bar{w}^\top b)$
  for all~$b\in\Z^d$ with~$\norm{b}{1}\leq N-1$,
  where
  \[
    \sign(x)=
    \begin{cases}
      +1&\text{if $x>0$},\\
      \phantom{+}0&\text{if $x=0$, and}\\
      -1&\text{if $x<0$}.
    \end{cases}
  \]
\end{proposition}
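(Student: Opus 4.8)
The plan is to derive this, as in \citet{FT87}, from the fact that simultaneous Diophantine approximation is solvable in polynomial time by the Lenstra--Lenstra--Lov\'{a}sz algorithm: given $\alpha\in\Q^d$ and $\varepsilon\in(0,1)$, one computes in polynomial time an integer $q\geq1$ and a vector $p\in\Z^d$ with $q\leq 2^{d(d+1)/4}\varepsilon^{-d}$ and $\lvert q\alpha_i-p_i\rvert\leq\varepsilon$ for all~$i$. Since multiplying $w$ by a positive rational changes neither side of the claimed sign identity, I would first clear denominators and assume $w\in\Z^d$; the point is that the promised bound on $\norm{\bar w}{\infty}$ must then hold \emph{independently} of the magnitude of this integral~$w$, which is the whole difficulty: a one-shot approximation controls $\sign(w^\top b)$ only when $\lvert w^\top b\rvert$ is bounded away from~$0$, a gap that can be exponentially small in the bit-length of~$w$.

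The structural plan is a recursion on the dimension, building $\bar w$ in at most~$d$ stages. In the top stage I apply the subroutine to $\alpha:=w/\norm{w}{1}$ with $\varepsilon$ a suitable inverse polynomial in~$N$, obtaining a \emph{small} nonzero integer vector~$p$ (small because $\norm{p}{\infty}\leq q+1$ and the bound on~$q$ does not involve~$w$). Then for every $b\in\Z^d$ with $\norm{b}{1}\leq N-1$ one has $\lvert q\,w^\top b/\norm{w}{1}-p^\top b\rvert\leq(N-1)\varepsilon<1$, so --- $p^\top b$ being an integer --- $p^\top b=0$ whenever $w^\top b=0$, and $\sign(p^\top b)=\sign(w^\top b)$ whenever $p^\top b\neq0$. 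Hence the only unresolved vectors lie in the primitive sublattice $\Lambda:=\Z^d\cap p^\perp$ of rank~$d-1$. I would compute an LLL-reduced basis $b_1,\dots,b_{d-1}$ of~$\Lambda$, form $w''\in\Q^{d-1}$ with $w''_i:=w^\top b_i$, and recurse to obtain a small $\bar w''\in\Z^{d-1}$ preserving $\sign$ of inner products with all sufficiently short integer vectors.

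To assemble the answer I would lift $\bar w''$ to an integer vector $\bar v\in\Z^d$ with $\bar v^\top b_i=\bar w''_i$ for all~$i$ (possible since $\Lambda$ is a direct summand of~$\Z^d$) and set $\bar w:=M\cdot p+\bar v$ for $M$ slightly larger than $(N-1)\norm{\bar v}{\infty}$. For $b$ with $\norm{b}{1}\leq N-1$: if $p^\top b\neq0$, the term $M\cdot p^\top b$ dominates $\bar v^\top b$ and $\sign(\bar w^\top b)=\sign(p^\top b)=\sign(w^\top b)$; if $p^\top b=0$, then $b\in\Lambda$, and writing $b=\sum_i c_i b_i$ gives $\bar w^\top b=\sum_i c_i\bar w''_i$, whose sign is $\sign(w^\top b)$ by induction (since $\sum_i c_i w''_i=w^\top b$ and $\norm{c}{1}$ lies in the recursion's admissible range). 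Polynomial running time is clear: at most~$d$ stages, each one call to the Diophantine subroutine plus a bounded amount of lattice and linear algebra.

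The delicate part, and where I expect the real work to lie, is quantitative. First, passing to~$\Lambda$ and back must not blow up norms: an \emph{LLL-reduced} basis guarantees that a $b\in\Lambda$ with $\norm{b}{1}\leq N-1$ has coordinates of absolute value at most $2^{O(d)}N$ in that basis, so the bound handed to each recursive call --- and hence the precision~$\varepsilon$, and every $\norm{p}{\infty}$ and $\norm{\bar v}{\infty}$ --- stays inverse-polynomial in~$N$ and exponential only in~$d$, never in the bit-length of~$w$; this is precisely what the naive one-shot approach fails to do. Second, one must track how these per-stage sizes compound over the $\leq d$ stages and through the choice of~$M$, and verify the result is within $2^{4d^3}N^{d(d+2)}$ --- a crude analysis yields larger exponents, so the constants must be chased carefully. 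Everything else --- clearing denominators, the ``$M p+\bar v$'' combination, and the running-time bookkeeping --- is routine.
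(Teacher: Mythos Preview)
The paper does not prove this proposition; it is quoted verbatim from \citet{FT87} and used as a black box (immediately afterwards only \cref{obs:nonneg} and \cref{lem:weightreduce} are proved, both of which \emph{apply} the result rather than establish it). So there is no ``paper's own proof'' to compare against.

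That said, your sketch is essentially the Frank--Tardos argument: simultaneous Diophantine approximation via LLL to produce a small integer surrogate~$p$, recursion into the sublattice $p^\perp\cap\Z^d$, and the $Mp+\bar v$ recombination. You have correctly identified where the actual work lies---controlling the blow-up of the norm bound across the $d$ recursive levels so that the final exponent lands at $4d^3$ rather than something larger---and you are right that this is a bookkeeping exercise that a crude analysis will overshoot. If you actually need to verify the constants, consult the original paper; for the purposes of \emph{this} paper, the proposition is simply imported.
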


\begin{observation}
  \label[observation]{obs:nonneg}
  \looseness=-1
  For \(N\geq 2\),
  \cref{thm:FrankTardos}
  gives
  \(\sign(w^\top e_i)=\sign(\bar w^\top e_i)\)
  for each~\(i\in\{1,\dots,d\}\),
  where \(e_i\in \Z^d\) is the vector that has 1 in the \(i\)-th coordinate
  and zeroes in the others.
  Thus,
  one has \(\sign(w_i)=\sign(\bar w_i)\) for each \(i\in\{1,\dots,d\}\).
  That is,
  when reducing a weight vector from~\(w\) to~\(\bar w\),
  \cref{thm:FrankTardos} maintains the signs of weights.
\end{observation}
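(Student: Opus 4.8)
The plan is to obtain the statement as a direct specialization of \cref{thm:FrankTardos} to the standard basis vectors of~$\Z^d$. First I would fix $i\in\{1,\dots,d\}$ and apply the theorem with the test vector $b=e_i$. Since $\norm{e_i}{1}=1$ and the assumption $N\geq 2$ gives $N-1\geq 1$, the vector $e_i$ lies in the admissible range $\{b\in\Z^d:\norm{b}{1}\leq N-1\}$ for which \cref{thm:FrankTardos} guarantees $\sign(w^\top b)=\sign(\bar w^\top b)$. Hence $\sign(w^\top e_i)=\sign(\bar w^\top e_i)$ for every $i$.

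Then I would simply unfold the inner products: by definition of the unit vector $e_i$ one has $w^\top e_i=w_i$ and $\bar w^\top e_i=\bar w_i$, so the previous equality reads $\sign(w_i)=\sign(\bar w_i)$, which is exactly the claimed coordinatewise sign preservation. The hypothesis $N\geq 2$ is precisely what is needed at this point, since for $N=1$ the only admissible test vector is $b=0$, which carries no information about individual coordinates. I do not expect any genuine obstacle here — the observation is a one-line corollary of \cref{thm:FrankTardos}, its entire content being the choice $b=e_i$ together with the arithmetic $1\leq N-1$.
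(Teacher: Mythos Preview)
Your proposal is correct and matches the paper's own reasoning exactly: the observation is stated without a separate proof environment because the argument is embedded in its statement---apply \cref{thm:FrankTardos} with $b=e_i$, use $\norm{e_i}{1}=1\leq N-1$, and read off $\sign(w_i)=\sign(\bar w_i)$. Your added remark on why $N\geq 2$ is necessary is a nice clarification but otherwise there is nothing to add.
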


\noindent
We apply \cref{thm:FrankTardos,obs:nonneg}
to the weights of \WsspAcr{}.

\begin{lemma}
\label[lemma]{lem:weightreduce}
An  instance~$I=(G, s,t,k,\ell,\lambda, \kappa,\eta)$
of \WsspAcr{}
on an \(n\)-vertex graph~\(G=(V,E)\)
can be reduced in polynomial time to
an instance~$I'=(G,s,t,k',\ell', \lambda', \kappa',\eta')$
of \WsspAcr{}
such that
\begin{enumerate}[i)]
\item\label{shrink1} $\{k',\kappa'(v), \ell', \lambda'(v),\eta'(v)\}\subseteq\{0,\dots, 2^{4(2n+1)^3}\cdot (n+2)^{(2n+1)(2n+3)}\}$, for each vertex~$v\in V$, and
\item\label{shrink2} $I$ is a \yes-instance if and only if $I'$ is a \yes-instance.
\end{enumerate}
\end{lemma}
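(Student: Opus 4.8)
The plan is to encode the entire \WsspAcr{} instance~$I$ as a single rational (indeed integer) weight vector and to invoke \cref{thm:FrankTardos} once, choosing~$d$ large enough to hold all the numbers and $N$ large enough that the linear inequalities the problem depends on are exactly the ones whose sign must be preserved. Concretely, I would set $d := 2n+1$ and collect into one vector~$w\in\Z^d$ the values $\kappa(v)$ and~$\eta(v)$ for the $n$ vertices of~$G$, together with the single value~$k$ (the $\lambda$-weights are handled on a second, independent invocation — or, more simply, one can stack $\kappa,\eta,\lambda$ and $k,\ell$ into one vector of length $3n+2$ and reindex; either way $d=O(n)$, which is all \eqref{shrink1} needs up to adjusting the exponent). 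I would then apply \cref{thm:FrankTardos} with~$N := n+2$ to obtain~$\bar w\in\Z^d$ with $\norm{\bar w}{\infty}\le 2^{4d^3}N^{d(d+2)}$, and read off~$k',\kappa',\eta'$ (and likewise $\ell',\lambda'$) from~$\bar w$. By \cref{obs:nonneg}, since $N\ge 2$, the reduced weights keep the signs of the originals, so nonnegative weights stay nonnegative and the~$\kappa'$-weights stay positive; this establishes~\eqref{shrink1} with the bound stated (after substituting $d=2n+1$).

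For~\eqref{shrink2} I would argue that for every simple $s$-$t$-path~$P$ in~$G$, the two defining inequalities $\sum_{v\in V(P)}\kappa(v)\le k$ and $\sum_{v\in V(P)}\eta(v)+\sum_{v\in N(V(P))}\lambda(v)\le\ell$ are of the form $\sign(w^\top b)\le 0$ for a suitable $b\in\Z^d$ built from the $\{0,1\}$-indicator of~$V(P)$ (respectively of~$V(P)$ and~$N(V(P))$), augmented with a~$-1$ in the coordinate carrying~$k$ (respectively~$\ell$). Each such~$b$ has $\|b\|_1\le n+1\le N-1$, because $V(P)$ and $N(V(P))$ are disjoint vertex sets, so their indicators together contribute at most~$n$, plus the single~$-1$ for~$k$ or~$\ell$. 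Hence \cref{thm:FrankTardos} guarantees $\sign(w^\top b)=\sign(\bar w^\top b)$, i.e.\ the inequality holds for~$I$ exactly when it holds for~$I'$. Since~$G$, $s$, $t$ are unchanged, the set of candidate paths is identical, so~$I$ is a \yes-instance iff~$I'$ is; this proves~\eqref{shrink2}. The running time is polynomial because \cref{thm:FrankTardos} runs in polynomial time and is invoked a constant number of times.

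The only mildly delicate point — and the place I would be most careful — is the bookkeeping that makes the inequalities genuinely linear in one fixed vector: one must put $k$ (and $\ell$) into the weight vector $w$ rather than treating them as separate constants, so that ``$\sum_{v\in V(P)}\kappa(v)\le k$'' becomes a homogeneous sign condition $\sign(w^\top b)\le 0$; and one must check that the coefficient vector~$b$ encoding a path, with the extra~$-1$ for the right-hand side, still has $\ell_1$-norm at most~$N-1=n+1$. Both are immediate once the indices are laid out, but the $N$ must be chosen with this slack in mind, which is why $N=n+2$ (not $n$ or $n+1$) is the right choice. If $\kappa,\eta,\lambda$ are reduced in separate calls one gets $d=n+1$, $d=n+1$, and $d=n$ respectively; batching them gives $d=2n+1$ as written in~\eqref{shrink1}, and I would follow the statement and use the single batched call, noting that $N=n+2\le d-1$ is satisfied so the hypothesis $\|b\|_1\le N-1$ of \cref{thm:FrankTardos} is met for every path-indicator vector.
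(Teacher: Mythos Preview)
Your overall strategy is right --- put $k$ and~$\ell$ into the weight vector, use $N=n+2$, and read feasibility as a sign condition on $w^\top b$ for an indicator vector~$b$ with $\|b\|_1\le n+1$ --- and your \eqref{shrink1} argument via \cref{obs:nonneg} is fine. But the specific grouping you propose does not work. You put $\kappa$ and~$\eta$ (together with~$k$) into the first call and relegate~$\lambda$ (with, presumably,~$\ell$) to a separate call. The second feasibility inequality, however, is
\[
\sum_{v\in V(P)}\eta(v)+\sum_{v\in N(V(P))}\lambda(v)\;\le\;\ell,
\]
so $\eta$, $\lambda$, and~$\ell$ occur together in one linear form. \cref{thm:FrankTardos} only preserves signs of linear forms in the single vector it was applied to; if $\eta'$ comes from one invocation and $\lambda',\ell'$ from another, nothing ties $\sign(\sum\eta'+\sum\lambda'-\ell')$ to $\sign(\sum\eta+\sum\lambda-\ell)$. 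With your grouping, \eqref{shrink2} therefore does not follow. (The sentence ``$N=n+2\le d-1$ is satisfied so the hypothesis $\|b\|_1\le N-1$\dots'' is also off: there is no hypothesis relating $N$ to~$d$; only $\|b\|_1\le N-1$ matters.)

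The fix --- and this is exactly what the paper does --- is to group by constraint: apply \cref{thm:FrankTardos} once to $(\eta,\lambda,\ell)\in\N^{2n+1}$ and once to $(\kappa,0^n,k)\in\N^{2n+1}$ (padded with $n$ zeros so that the same dimension, and hence the same bound in~\eqref{shrink1}, applies). The test vectors are then $b=(x,y,-1)$ for the neighbourhood constraint and $b=(x,0^n,-1)$ for the length constraint, where $x$ and $y$ are the $0/1$ indicators of $V(P)$ and $N(V(P))$; both satisfy $\|b\|_1\le n+1=N-1$, just as you argue. Your ``stack everything into one $(3n+2)$-vector'' alternative is correct too, but yields a larger exponent than the one stated in the lemma.
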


\begin{proof}
  In this proof,
  we will conveniently
  denote the weight functions~\(\lambda,\lambda',\kappa,\kappa',\eta\), and \(\eta'\)
  as column vectors in~\(\N^n\)
  such that \(\lambda_v=\lambda(v)\)
  for each \(v\in V\),
  and similarly for the other weight functions.

  We apply \cref{thm:FrankTardos}
  with \(d=2n+1\) and \(N=n+2\)
  separately
  to the vectors~$(\eta,\lambda, \ell) \in \N^{2n+1}$
  and~$(\kappa,\{0\}^n, k) \in \N^{2n+1}$
  to obtain vectors $(\eta',\lambda',\ell') \in \Z^{2n+1}$ and $(\kappa',\{0\}^n,k')\in\Z^{2n+1}$
  in polynomial time.

  \eqref{shrink1} This follows from \cref{thm:FrankTardos}
  with \(d=2n+1\)
  and \(N=n+2\),
  and from \cref{obs:nonneg}
  since \((\eta,\lambda,\ell)\)
  and \((\kappa,\{0\}^n,k)\) are vectors of nonnegative numbers.

  \eqref{shrink2}
  Consider an arbitrary
  $s$-$t$-path $P$ in~$G$
  and two associated vectors~\(x,y\in\Z^n\),
  where
  \begin{align*}
    x_v&=
    \begin{cases}
      1&\text{ if $v\in V(P)$,}\\
      0&\text{ otherwise,}
    \end{cases}
       &
          y_v&=
    \begin{cases}
      1&\text{ if $v\in N(V(P))$ and}\\
      0&\text{ otherwise.}
    \end{cases}
  \end{align*}
  Observe that $\|(x,y,-1)\|_1 \le n+1$
  and $\|(x,\{0\}^n,-1)\|_1 \le n+1$.
  Since \(n+1\leq N-1\),
  \cref{thm:FrankTardos}
  gives
  \begin{align*}
  \sign((x,y,-1)^\top(\eta,\lambda,\ell))	&=\sign((x,y,-1)^\top(\eta',\lambda',\ell')) \text{\quad and}\\
  \sign((x,\{0\}^n,-1)^\top(\kappa,\{0\}^n, k))	&=\sign((x,\{0\}^n,-1)^\top(\kappa',\{0\}^n, k')),
  \end{align*}
  which is equivalent to
  \begin{align*}
    \smashoperator{\sum_{v\in V(P)}} \eta(v) + \sum_{v\in N(V(P))} \lambda(v) \leq \ell
    &\iff \smashoperator{\sum_{v\in V(P)}} \eta'(v) + \sum_{v\in N(V(P))} \lambda'(v) \leq \ell' \text{\quad and}\\
    \smashoperator{\sum_{v\in P}} \kappa(v) \leq k
    &\iff \smashoperator{\sum_{v\in P}} \kappa'(v) \leq k'.\qedhere
  \end{align*}
\end{proof}

\noindent
We have finished two steps towards
the proof of \cref{thm:kernelkrr}:
we reduced \sspAcr{} in \(K_{r,r}\)\ssfree{}
graphs for constant~\(r\)
to instances of \WsspAcr{} with \(O(\vc^r)\)~vertices
using \cref{lem:instreduce}
and shrunk
its weights to encoding\hyp length
\(O(\vc^{3r})\) using \cref{lem:weightreduce}.
To finish the proof of \cref{thm:kernelkrr},
it remains to reduce
\WsspAcr{} back to \sspAcr{}
on \(K_{r,r}\)\ssfree{} graphs.

\begin{proof}[Proof of \cref{thm:kernelkrr}]
  For $r=1$,
  the problem is solvable in polynomial time (see \cref{prop:k22-hardness})
  and thus has a kernel of constant size.
  Henceforth,
  assume that~$r\geq 2$.
  Using
  \cref{lem:instreduce}
  and
  \cref{lem:weightreduce},
  we reduce any \sspAcr{} instance~\(I\)
  on a \(K_{r,r}\)\ssfree{} \(n\)-vertex graph
  for constant~\(r\)
  with vertex cover number~\(\vc\)
  to an equivalent \WsspAcr{} instance~\(I'\)
  on \(O(\vc^r)\)~vertices
  whose weights are bounded by \(2^{O({\vc}^{3r})}\).
  Thus,
  the overall encoding length of~\(I'\)
  is \(O({\vc}^{4r})\).
  Since \sspAcr{} is NP-complete in~$K_{r,r}$\ssfree{}
  graphs by \cref{prop:k22-hardness},
  we can in polynomial time reduce~$I'$
  to an equivalent instance~\(I^*\)
  of \sspAcr{} on \(K_{r,r}\)\ssfree{} graphs.
  Since the running time
  of the reduction is polynomial,
  the size of~\(I^*\) is polynomial
  in the size of~\(I'\) and, hence,
  polynomial in~\(\vc\).
\end{proof}

\noindent
Finally,
observe that
we indeed have shown polynomial\hyp size problem kernels
for \sspAcr{} parameterized by~$\vc$ in $K_{r,r}$-free graphs
only for constant~$r$,
since $r$~appears in the degree of the size polynomial.
Note that,
unless \(\unlessPK\),
we cannot show a problem kernel with a size polynomial in both~$\vc$ and~$r$:

\begin{remark}
  \label{rem:vc+r}
  Since every graph is $K_{r,r}$\ssfree{} for $r > \vc$, from
  \cref{thm:wk1hard} it follows, that for \sspAcr{} in
  $K_{r,r}$\ssfree{} graphs, there is no problem kernel with size
  polynomial in~\(\vc+r\) unless \(\unlessPK\).
\end{remark}

\section{Graphs with small treewidth}
\label{sec:tlgs}

In this section,
we study \sspAcr{}
in graphs with small \emph{treewidth}
(formally defined below).
In \cref{sec:tw},
we first show an algorithm for \sspAcr{}
parameterized by treewidth.
We then prove in \cref{sec:notwkern}
that \sspAcr{} does not allow
for problem kernels with size
polynomial in treewidth.
Finally,
in \cref{sec:subexp},
we show how our fixed\hyp parameter algorithm
for treewidth can be used
to obtain subexponential\hyp time algorithms
for \sspAcr{}
in restricted graph classes,
for example almost planar
graphs like road networks.

\noindent

\subsection{A fixed\hyp parameter algorithm}
\label{sec:tw}

\looseness=-1
In this section,
we present a fixed\hyp parameter algorithm
for \sspAcr{}
parameterized by treewidth.
Before describing the algorithm,
we  introduce
the treewidth concept.
We roughly follow
the notation for tree decompositions
of \citet{BCKN15},
since we will be using some of their
results to make our algorithm
run in single\hyp exponential time.

\begin{definition}[tree decomposition, treewidth]
  \label[definition]{def:tw}
  A \emph{tree decomposition}~\(\mathbb T=(T,\beta)\)
  of a graph~\(G=(V,E)\)
  consists of a tree~\(T\)
  and a function~\(\beta\colon V(T)\to 2^V\)
  that associates each \emph{node}~\(x\)
  of the tree~\(T\)
  with a subset~\(B_x:=\beta(x)\subseteq V\),
  called a \emph{bag}, such that
  \begin{enumerate}[i)]
  \item\label{tw1} for each vertex~\(v\in V\),
    there is a node~\(x\) of~\(T\)
    with \(v\in B_x\),
    
  \item\label{tw2} for each edge~\(\{u,v\}\in E\),
    there is a node~\(x\) of~\(T\)
    with \(\{u,v\}\subseteq B_x\),
  \item\label{tw3} for each~\(v\in V\)
    the nodes~\(x\) with \(v\in B_x\)
    induce a subtree of~\(T\).
  \end{enumerate}
  The \emph{width of~\(\mathbb T\)} is
  \(w(\mathbb T):=\max_{x\in V(T)}|B_x|-1\).
  The \emph{treewidth} of~\(G\)
  is \(\tw(G):=\min\{w(\mathbb T)\mid \text{$\mathbb T$ is a tree decomposition of~$G$}\}\).
\end{definition}

\noindent
In this section,
we will prove the following result.

\begin{theorem}
  \label{thm:twsingexp}
  \sspTsc{} is solvable
  in \(2^{O(\tw)}\cdot \ell^2\cdot n\)~time
  in graphs of treewidth~\(\tw\).
\end{theorem}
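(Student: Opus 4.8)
## Proof Plan

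The plan is to run a standard dynamic program over a nice tree decomposition of width $\tw$, but with the twist needed to get the $\ell^2$ factor (rather than something worse). First I would compute, in $2^{O(\tw)}\cdot n$ time, a nice tree decomposition of width $O(\tw)$ using the known constant-factor-approximation algorithms for treewidth; then convert it to a \emph{very nice} decomposition with introduce-vertex, forget-vertex, introduce-edge, and join nodes, as in \citet{BCKN15}, so that each edge is introduced exactly once. The DP will work on the weighted problem \WsspAcr{} directly (this is needed for the subexponential applications anyway), maintaining, for each bag $B_x$, a table indexed by (a) a partial description of how the sought $s$-$t$-path interacts with $B_x$, and (b) a bound on the number of path-neighbors.

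The core combinatorial object stored at node $x$ is: a partition of $B_x$ into path vertices, forbidden/neighbor vertices, and ``untouched'' vertices, together with a pairing (matching) on the path vertices in $B_x$ recording which of them are joined by path segments already fully contained in $G_x$ (the subgraph introduced so far). This is the classic ``partial solution'' encoding for bounded-pathwidth/treewidth path problems, and there are $2^{O(\tw)}$ such states per bag (number of partitions times number of matchings on a set of size $O(\tw)$). Crucially, for each such state we additionally carry an integer $j \in \{0,1,\dots,\ell\}$ counting the $\eta$-weight of interior path vertices already committed plus the $\lambda$-weight of neighbor vertices already committed (vertices that have a path-neighbor in $G_x$); we store for each $(\text{state},j)$ the minimum $\kappa$-cost of a partial solution realizing it. A subtle point I would handle carefully: a vertex may become a neighbor of the path only \emph{after} it is forgotten, so neighbor-weight must be ``charged'' at the forget node, once we know the vertex was not itself on the path but had a path-neighbor introduced while it was in the bag; the introduce-edge node is where we record that a bag vertex has a path-neighbor, and we must avoid double-counting by charging $\lambda(v)$ exactly once, at the moment $v$ is forgotten. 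The transitions at introduce-vertex, introduce-edge, join, and forget nodes are the routine ones: at a join node we combine two states with the same partition on the shared bag, add up $\kappa$-costs, and convolve the $j$-indices — this is where one naively gets an $\ell^2$ per-node cost (convolution of two arrays of length $\ell+1$), and multiplied by the $2^{O(\tw)}$ states and $O(n)$ nodes this gives exactly the claimed $2^{O(\tw)}\cdot \ell^2\cdot n$ bound. To keep the number of \emph{matchings} (pairings of path vertices in the bag) from blowing the $2^{O(\tw)}$ into $2^{O(\tw\log\tw)}$, I would invoke the rank-based / representative-sets technique of \citet{BCKN15} (or Cut\&Count), which is the standard way to make connectivity-type DPs single-exponential in $\tw$; this is what the remark about ``following the notation of \citet{BCKN15}'' in the excerpt is pointing at.

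The main obstacle I expect is precisely the bookkeeping for the open neighborhood: ensuring $|N(V(P))|$ (here, the weighted version $\sum_{v\in N(V(P))}\lambda(v)$) is counted \emph{exactly} once per neighbor vertex, despite a neighbor vertex possibly appearing in many bags and possibly being introduced before, between, or after its path-neighbors. The clean way is to defer: a vertex $v$ carries, while in the bag, one bit saying ``$v$ is designated a neighbor (not a path vertex) and already has $\ge 1$ path-neighbor among introduced edges,'' and we add $\lambda(v)$ to the running count $j$ at the unique forget node of $v$ — but only if that bit is set. Symmetrically, $\eta(v)$ for interior path vertices can be charged at the introduce-vertex (or forget) step for each $v$ put on the path, again exactly once. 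I would also need the small observation that if at any point the required count would exceed $\ell$, the state is simply dropped, which keeps the $j$-range at $O(\ell)$ and is also what makes ``$\ell^2$'' rather than ``$n^2$'' correct. Finally, to recover the actual answer I check, at the root (a forget node with empty bag), whether some state with $s$ and $t$ linked into a single path-component, an empty matching, total $\kappa$-cost $\le k$, and $j \le \ell$ survives; standard traceback gives the path. Correctness is then the usual induction over the tree decomposition: each partial solution on $G_x$ projects to a valid state, and conversely each state is witnessed by a partial solution of the recorded minimum cost.
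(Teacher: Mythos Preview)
Your proposal is correct and takes essentially the same approach as the paper: dynamic programming over a nice tree decomposition, with states recording the role of each bag vertex (path endpoint / inner path vertex / allowed neighbor / untouched) together with the connectivity pattern among path endpoints and a budget index in $\{0,\dots,\ell\}$, using the rank-based representative-sets technique of \citet{BCKN15} to keep the partition count single-exponential and obtaining the $\ell^2$ factor from the convolution at join nodes. The only minor bookkeeping difference is that the paper commits to each vertex's role and charges $\lambda(v)$ at the \emph{introduce}-vertex node rather than deferring to the forget node as you propose; both choices work.
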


\noindent
To prove \cref{thm:twsingexp},
we first need to compute a tree decomposition
of the input graph.
\citet{BDD+16}
proved that a tree decomposition
of width~\(O(\tw(G))\)
of a graph~\(G\)
is computable in
\(2^{O(\tw)}\cdot n\)-time.
Applying the following \cref{prop:wssptw}
to such a tree decomposition
yields \cref{thm:twsingexp}:

\begin{proposition}
  \label{prop:wssptw}
  \WsspTsc{} is solvable
  in \(n\cdot\ell^2\cdot\tw^{O(1)}\cdot (9+2^{(\omega+3)/2})^{\tw}\text{ time}\)
  when a tree decomposition of width~\(\tw\) is given,
  where \(\omega<2.2373\)
  is the matrix multiplication exponent.
\end{proposition}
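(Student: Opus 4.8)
The plan is to solve \WsspAcr{} by dynamic programming over a nice tree decomposition of width $\tw$, tracking for each bag the information needed to reconstruct whether a partial $s$-$t$-path with bounded $\kappa$-weight and bounded neighborhood $\lambda$-weight can be completed. First I would convert the given decomposition into a nice tree decomposition (leaf, introduce-vertex, introduce-edge, forget, and join nodes) of the same width in $\tw^{O(1)}\cdot n$ time, rooted so that $s$ and $t$ live in the root bag (or are handled by a standard augmentation). At a node $x$ with bag $B_x$, a partial solution is a set of paths and a single $s$--$t$ path restricted to the subgraph $G_x$ induced by the vertices introduced so far; the combinatorial state must record (a) the partition of $B_x$ into path-endpoints/internal/untouched vertices, i.e.\ for each $v\in B_x$ a label in $\{\text{out},\text{deg-}1\text{ on a path},\text{deg-}2\text{ on a path}\}$, (b) a matching on the path-endpoints inside $B_x$ saying which of them are connected by a path segment already constructed in $G_x$, and (c) for each $v\in B_x$ whether $v$ is already counted as a neighbor of the path (because some already-introduced path vertex is adjacent to it) or not. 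Crucially, the $\lambda$-weight of neighbors \emph{already outside} $B_x$ and the $\kappa$-weight of path vertices already outside $B_x$ are accumulated into two running integers; since $\sum\lambda$ is capped at $\ell$ we only keep states with running $\lambda$-weight $\le\ell$, giving the $\ell$ factor (the $\ell^2$ arises from also storing, or iterating over, a target split of the neighbor budget at join nodes, or from handling vertices in $B_x$ that are \emph{potential} future neighbors), and the $\kappa$-weight is only thresholded at $k$ so contributes no extra factor beyond what a constant-size ``reached $k{+}1$'' flag needs.

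Next I would write the transition rules. Introduce-vertex and introduce-edge nodes are routine: introducing an edge $\{u,v\}$ may upgrade a path-endpoint label, merge two matched blocks (rejecting if it would close a cycle or create a degree-3 vertex), and may newly mark a $B_x$-vertex as a neighbor; introduce-vertex just adds an isolated out-labeled vertex. Forget nodes are where accumulation happens: before forgetting $v$, a path vertex must have final degree exactly $2$ (or be $s$ or $t$ with degree $1$), so we reject otherwise, add $\kappa(v)$ to the running path-weight and $\eta(v)$ if applicable; and if $v$ was marked as a neighbor we add $\lambda(v)$ to the running neighbor-weight; if $v$ is a non-path vertex that was never marked, it simply disappears (it can never become a neighbor later because all its edges have been introduced once both endpoints were forgotten — this is the standard edge-introduction invariant). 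Join nodes combine two children with the same bag: the labels must be compatible, the two matchings are merged via the usual union-find/disjoint-path bookkeeping (rejecting cycles), the ``is-a-neighbor'' flags are OR-ed, and the running $\kappa$- and $\lambda$-weights are added — but we must avoid double-counting $\lambda(v)$ for $v\in B_x$ marked on both sides, which is handled by only charging $B_x$-vertices at their forget node, and for weights already outside $B_x$ the two children's vertex sets are disjoint so addition is correct; the convolution over the $\le\ell$ possible neighbor-weight splits gives one $\ell$ factor and, together with an analogous split needed for the $\eta$-contributions or for a second budget coordinate, the $\ell^2$.

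The single-exponential base $(9+2^{(\omega+3)/2})^{\tw}$ is the main obstacle, and I would obtain it exactly as \citet{BCKN15} do for path/cycle-type problems: the naive DP has a $2^{O(\tw\log\tw)}$ factor from the matchings on endpoints, and the cure is the rank-based approach — represent the set of ``partial solution behaviors'' with a given label/neighbor pattern as a vector indexed by perfect matchings of the endpoint set, observe that feasibility of completion depends only on the matroid/partition-connectivity structure so these vectors live effectively in a space of dimension $2^{O(\tw)}$, and after every transition replace the current table by a representative set of size $2^{O(\tw)}$ computed by Gaussian elimination over $\mathrm{GF}(2)$ (the $\tw^{O(1)}$ and the matrix-multiplication exponent $\omega$ enter through the cost of these reductions and of the join convolution, exactly giving the stated base). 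The remaining work is to verify the interface: that the \citet{BCKN15} ``cut and count''/rank-based machinery, stated there for connectivity constraints without vertex weights or neighborhood counting, adapts verbatim once the neighbor-bookkeeping is folded into the (polynomially many, since capped by $\ell$) weight coordinates — which it does, because those coordinates are carried additively and do not interact with the matching structure. Finally, multiplying the per-node cost $\ell^2\cdot\tw^{O(1)}\cdot(9+2^{(\omega+3)/2})^{\tw}$ by the $O(\tw\cdot n)$ nodes of the nice decomposition and reading off the root entry with running weights $\le k$ and $\le\ell$ yields the claimed bound, and hence \cref{thm:twsingexp}.
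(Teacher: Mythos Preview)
Your proposal is correct and follows essentially the same approach as the paper: dynamic programming over a nice tree decomposition with per-bag vertex labels, a partition/matching on path endpoints, a neighbor-budget coordinate bounded by~$\ell$ (whence the $\ell^2$ at join nodes via the budget convolution), and the rank-based representative-set machinery of \citet{BCKN15} to achieve the single-exponential base. Two minor imprecisions worth flagging: your three-label system $\{\text{out},\text{deg-}1,\text{deg-}2\}$ omits the degree-$0$ path-vertex state (the paper's~$\Dz$), which is needed when a path vertex is freshly introduced with no incident edges yet; and the precise base $(9+2^{(\omega+3)/2})$ does not fall out automatically from invoking \citet{BCKN15} but requires a careful multinomial count over how the five vertex states ($\Dz,\De,\Di,N,\text{other}$) split between the two children at a join node---the paper carries this out explicitly.
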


\noindent
To prove \cref{thm:twsingexp},
it thus remains to prove \cref{prop:wssptw}.
Note that \cref{prop:wssptw}
actually solves the weighted
problem \WsspAcr{} (\cref{prob:wssp}),
where the term \(\ell^2\)
is only pseudo\hyp polynomial for \WsspAcr{}.
It is a true polynomial for \sspAcr{}
since we can assume \(\ell\leq n\).

\subsubsection{Assumptions on the tree decomposition}
Our algorithm for \cref{prop:wssptw}
will work on
the following simplified kind of tree decomposition,
which can be obtained
from a classical tree decomposition
of width~\(\tw\)
in \(n\cdot \tw^{O(1)}\)~time
without increasing its width \citep{BCKN15}.

\begin{definition}[nice tree decomposition]
  A \emph{nice tree decomposition}~\(\mathbb T\)
  is a tree decomposition
  rooted at one bag~\(r\)
  and in which each bag
  is of one of
  the following types.

  \begin{description}
  \item[Leaf node:] a leaf~\(x\) of \(\mathbb T\)
    with~\(B_x=\emptyset\).
  \item[Introduce vertex node:]
    an internal node~\(x\) of~\(\mathbb T\)
    with one child~\(y\)
    such that \(B_x=B_y\cup\{v\}\)
    for some vertex~\(v\notin B_y\).
    This node is said to \emph{introduce vertex~\(v\)}.
  \item[Introduce edge node:]
    an internal node~\(x\) of~\(\mathbb T\)
    labeled with an edge \(\{u,v\}\in E\)
    and with one child~\(y\)
    such that \(\{u,v\}\subseteq B_x=B_y\).
    This node is said to
    \emph{introduce edge~\(\{u,v\}\)}.
  \item[Forget node:]
    an internal node~\(x\) of~\(\mathbb T\)
    with one child~\(y\)
    such that \(B_x=B_y\setminus\{v\}\)
    for some node~\(v\in B_y\).
    This node is said to \emph{forget~\(v\)}.
    
  \item[Join node:]
    an internal node~\(x\) of~\(\mathbb T\)
    with two children~\(y\) and~\(z\)
    such that \(B_x=B_y=B_z\).
  \end{description}
\end{definition}
\noindent
We additionally require that
each edge is introduced
exactly once
and  make the following,
problem specific assumptions
on tree decompositions.
\begin{assumption}
  \label{ass:st}
  When solving \WsspAcr{},
  we will assume that
  the source~\(s\) and destination~\(t\)
  of the sought path
  are contained in
  all bags of the tree decomposition
  and that the root bag contains only~\(s\) and~\(t\).
  This ensures that
  \begin{itemize}
  \item every bag contains vertices of the sought solution, and that
  \item \(s\) and~\(t\) are never forgotten or introduced.
  \end{itemize}
  Such a tree decomposition
  can be obtained from a nice tree decomposition
  by adding \(s\) and \(t\) to all bags
  and adding forget nodes above the root node
  until one arrives at a new root containing only~$s$ and~$t$.
  This will increase
  the width of the tree decomposition
  by at most two.
\end{assumption}

\noindent
Our algorithm will be based
on computing partial solutions
for subgraphs corresponding to
a node of a tree decomposition
by means of
combining partial solutions
for the subgraphs
corresponding to its children.
Formally, these subgraphs
are the following.

\begin{definition}[subgraphs induced by a tree decomposition]
  Let \(G=(V,E)\)~be a graph
  and \(\mathbb T\)~be
  a nice tree decomposition for~\(G\)
  with root~\(r\).
  Then,
  for any node~\(x\) of~\(\mathbb T\),
  \begin{align*}
    V_x&:=\{v\in V\mid v\in B_y\text{ for a descendant $y$ of~$x$}\}\text{, and}\\
    G_x&:=(V_x,E_x)\text{, where }\\
    E_x&=\{e\in E\mid
    e\text{ is introduced in a descendant of~$x$}\}.
  \end{align*}
  Herein,
we consider each node~\(x\) of~\(\mathbb T\)
to be a descendent of itself.
\end{definition}
\noindent
Having defined subgraphs
induced by subtrees,
we can define partial solutions
in them.
\subsubsection{Partial solutions}

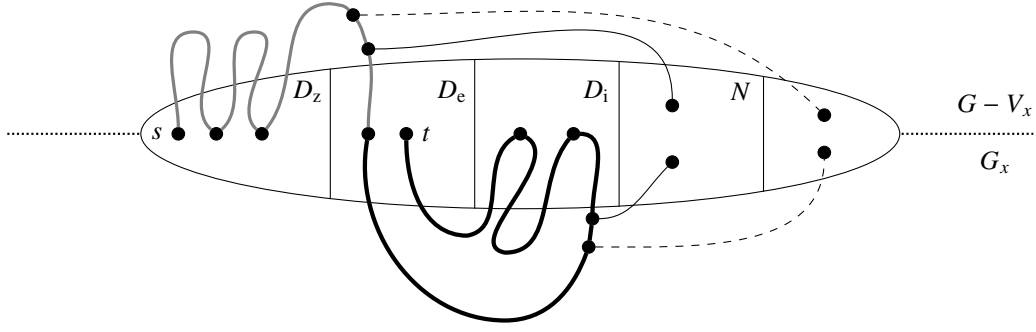
\begin{figure*}[t]
 \centering
  \begin{tikzpicture}[y=0.75cm]

    \usetikzlibrary{decorations.pathreplacing,calc}
    \tikzstyle{tnode}=[circle, fill, scale=1/2,draw]
    \tikzstyle{uedge}=[gray, very thick]
    \tikzstyle{ledge}=[ultra thick]
    \def\xr{1}
    \def\yr{1}

	\draw[densely dotted,thick] (-6.75*\xr,0) -- (6.75*\xr,0);    
	\node at (6.25*\xr,0.5*\yr){$G-V_x$};
	\node at (6.25*\xr,-0.5*\yr){$G_x$};
    \begin{scope}
      \clip  (0,0) ellipse (5*\xr cm and 1*\yr cm);
      \draw[thick,fill=white]  (0,0) ellipse (5*\xr cm and 1*\yr cm);
      \foreach \x in {-2.5,-0.6, 1.3, 3.2}{ \draw (\x*\xr,-3*\yr) -- (\x*\xr,4*\yr); }	
      \node at (-2.8*\xr, 0.75*\yr){$\Dz$};
      \node at (-0.9*\xr,0.75*\yr){$\De$};
      \node at ( 1*\xr, 0.75*\yr){$\Di$};
      \node at (2.9*\xr, 0.75*\yr){$N$};
    \end{scope}
		
    \node (s) at (-4.5*\xr,0*\yr)[tnode,label=180:{$s$}]{};
    \node (x0) at (-4*\xr,0*\yr)[tnode]{};
    \node (x1) at (-3.4*\xr,0*\yr)[tnode]{};
    \node (xo1) at (-2.2*\xr,2.1*\yr)[tnode]{};
    \node (xo2) at (-2*\xr,1.5*\yr)[tnode]{};

    \node (x22) at (-2*\xr,0*\yr)[tnode]{};
    \node (t) at (-1.5*\xr,0*\yr)[tnode,label=0:{$t$}]{};

    \node (xo3) at (0.9*\xr,-2.0*\yr)[tnode]{}{};
    \node (xo4) at (0.95*\xr,-1.5*\yr)[tnode]{}{};
    \node (x31) at (0*\xr, 0*\yr)[tnode]{};
    \node (x32) at (0.7*\xr, 0*\yr)[tnode]{};

    \node (x41) at (2*\xr,0.5*\yr)[tnode]{};
    \node (x42) at (2*\xr,-0.5*\yr)[tnode]{};

    \node (x51) at (4*\xr,0.33*\yr)[tnode]{};
    \node (x52) at (4*\xr,-0.33*\yr)[tnode]{};
 
    \draw[uedge] (s) to [out=90,in=180](-4.3*\xr, 1.8*\yr) to [out=0,in=140](x0);
    \draw[uedge] (x0) to [out=40,in=180](-3.5*\xr, 1.8*\yr) to [out=0,in=140](x1);
	\draw[uedge] (x1)to [out=50,in=180](-2.5*\xr, 2.3*\yr) to [out=-10,in=130](xo1);
	\draw[uedge] (xo1) to [out=-55,in=110](xo2);
	\draw[uedge] (xo2) to [out=-80,in=90](x22);
	\draw[ledge] (x22) to  [out=-100,in=180](-0.2*\xr,-3.3*\yr) to [out=0,in=-110](xo3);
	\draw[ledge] (xo3) to (xo4);
	\draw[ledge] (xo4) to [out=90,in=0](x32);
	\draw[ledge] (x32) to  [out=-130,in=0](-0.2*\xr,-2.1*\yr) to [out=170,in=-30](x31);
	
	\draw[ledge] (x31) to  [out=-150,in=10](-0.7*\xr,-1.8*\yr) to [out=180,in=-90](t);
    \draw[-] (xo4) to [out=0,in=-135](x42);
    \draw[-] (xo2) to [out=0,in=90](x41);
    \draw[-,dashed] (xo1) to [out=0,in=135](x51);
    \draw[-,dashed] (xo3) to [out=0,in=-90](x52);

  \end{tikzpicture}
  \caption{Illustration of a partial solution: the
    thick edges are an overall solution,
    where the darker edges are the part of the solution in~\(G_x\).  The dashed edges are forbidden to exist.}
 \label{fig:partialsol}
\end{figure*}

Assume that we have a solution path~\(P\)
to \WsspAcr{}.
Then, the part of~\(P\) in~\(G_x\)
is a collection~\(\pts\) of paths
(some might consist of a single vertex,
that is,
have length zero).
When computing a partial solution
for a parent~\(y\) of~\(x\),
we ideally want to check
which partial solutions for~\(x\)
can be continued to partial solutions for~\(y\).
However,
we cannot try all possible
partial solutions for~\(G_x\)%
---there might be too many.
Moreover,
this is not necessary:
by \cref{def:tw}\eqref{tw2}--\eqref{tw3},
vertices in bag~\(B_y\)
cannot be vertices of
and cannot have edges
to vertices of~\(V_x\setminus B_x\).
Thus,
it is enough to know the
states of vertices in bag~\(B_x\)
in order to know
which partial solutions of~\(x\)
can be continued to~\(y\).
The state of such vertices
is characterized by
\begin{itemize}
\item which vertices of~\(B_x\) are end points of paths in~\(\pts{}\), inner vertices of paths in~\(\pts{}\), or paths of zero length in~\(\pts{}\),
  
\item which vertices of~\(B_x\) are allowed
  to be neighbors of the solution path~\(P\),
\item how many neighbors the solution path~\(P\)
  is allowed to have in~\(G_x\), and
\item which vertices of~\(B_x\)
  belong to the same path of~\(\pts{}\).
\end{itemize}

\noindent
We formalize this as follows.

\begin{definition}[partial solution]
  \label[definition]{def:partsol}
  Let \((G,s,t,k,\ell,\kappa,\lambda,\eta)\)~be
  an instance of~\(\WsspAcr{}\),
  \(\mathbb T\)~be a tree decomposition for~\(G\),
  \(x\)~be a node of~\(\mathbb T\),
  \(\Dz\uplus \De\uplus \Di\uplus \Ne\subseteq B_x\)
  such that \(\{s,t\}\subseteq \Dz\cup\De\),
  \(\prtn\)~be a partition
  of~\(\De\),
  and \(l\leq \ell\).
  Then, we call \(S=(\Dz,\De,\Di,N,l)\)
  a \emph{pre-signature} and
  \((S,\prtn{})\)
  a \emph{solution signature} at~\(x\).
  For a set~\(\pts\) of paths in~\(G_x\)
  and a set~\(N\subseteq B_x\),  let
  \begin{align*}
    V(\pts)&:=\bigcup_{P\in\pts}V(P),&&&
    N_x(\pts)&:=V_x\cap N(V(\pts)),\\
    \nei_x(\pts,N)&:=\smashoperator{\sum_{v\in N_x(\pts)\cup N}}\lambda(v)+\smashoperator{\sum_{v\in V(\pts)}}\eta(v),
    &\text{ and }
      &&
    \cost(\pts)&:=\smashoperator{\sum_{v\in V(\pts)}}\kappa(v).      
  \end{align*}
  A set~\(\pts\) of paths in~\(G_x\)
  is a \emph{partial solution}
  of \emph{cost \(\cost(\pts)\)}
  for~\((S,\prtn{})\) if
  \begin{enumerate}[i)]
    
  \item $\Dz$ are exactly the vertices of zero-length paths \(P\in\pts\),
  \item\label{item:ends} $\De$ are exactly the end points of non-zero-length paths \(P\in\pts\),
  \item $\Di$ are exactly
    those vertices in~\(B_x\)
    that are inner vertices of paths~\(P\in\pts\),
  \item \(N(V(\pts))\cap B_x\subseteq \Ne\),
    that is,
    the vertices in~$B_x$
    that are neighbors of paths in~$\pts$ are in~$N$,
  \item \(\nei_x(\pts,N)\leq l\), and
  \item\label{item:paths} \(\pts{}\) consists of exactly \(|p|+|\Dz|\)~paths
    and each two vertices~\(u,v\in \De\)
    belong to the same path of~\(\pts{}\)
    if and only if they are in the same set
    of the partition~\(\prtn{}\).
  \end{enumerate}
  We will also say that~$\pts{}$ is a partial solution for~$G_x$,
  when the concrete solution signature is clear from context.
  For a solution signature~\((S,\prtn)\) at a node~\(x\), we denote
  \begin{align*}
    \feas_x(S,p)&:=\{\pts{}\mid \pts{}\text{ is a partial solution for~\((S,p)\)}\},\\
    \mcost_x(S,p)&:=\min\{\cost(\pts)\mid \pts\in\feas_x(S,p)\}.
  \end{align*}
\end{definition}

\noindent
Since the root bag~\(B_r=\{s,t\}\) by \cref{ass:st},
our input instance to \WsspAcr{}
is a yes\hyp instance if and only if
\begin{align}
\mcost_r((\emptyset,\{s,t\},\emptyset,\emptyset,\ell),\{\{s,t\}\})\leq k.\label{minKrsmall}
\end{align}
Thus,
our aim is computing this cost.
The naive dynamic programming approach is:
\begin{itemize}
\item compute \(\mcost_x(S,\prtn)\) for each solution signature~\((S,\prtn)\) and each leaf node~\(x\),
\item compute \(\mcost_x(S,\prtn)\) for each solution signature~\((S,\prtn)\) and each inner node~\(x\)
  under the assumption that
  \(\mcost_y(S',\prtn')\)~has already been computed
  for all solution signatures~\((S',\prtn')\)
  at children~\(y\) of~\(x\).
\end{itemize}

\noindent
However,
this approach is not suitable
to prove \cref{prop:wssptw},
since
the number of possible solution signatures
is too large:
the number of different partitions~\(\prtn{}\)
of \(\tw\)~vertices
is the \(\tw\)-th Bell number,
whose best known upper bound is \(O(\tw^\tw/\log\tw)\).
Thus,
we do not even have time to \emph{look}
at all solution signatures.

\subsubsection{Reducing the number of partitions}
To reduce the number of needed partitions,
we use an approach developed by \citet{BCKN15},
which has also been evaluated in experiments \citep{FBN15,BS19}.
We will replace the task of computing
\eqref{minKrsmall} for all possible partitions
by computing only sets of \emph{weighted partitions}
containing the needed information.
\begin{definition}[sets of weighted partitions]
  Let \(\Pi(U)\)~be
  the set of all partitions of~\(U\).
  A~\emph{set of weighted partitions} is a
  set~\(\mathcal A\subseteq\Pi(U)\times\mathbb N\).
  For a
  \emph{weighted partition}~\((p,w)\in\mathcal A\),
  we call \(w\)~its \emph{weight}.
\end{definition}

\noindent
Using sets of weighted partitions,
we can reformulate our task
of computing \(\mcost_x(S,\prtn)\)
for all bags~\(B_x\)
and all solution signatures~\((S,\prtn)\)
as follows.
Consider
a pre-signature~\(S=(\Dz,\Di,\De,N,l)\)
for a node~\(x\) of a tree decomposition
and
\begin{align}
  \mathcal A_x(S):=
  \Biggl\{
  \Bigl(
  p,
  \min_{\pts\in\feas_x(S,p)}K(\pts)
  \Bigr)
  \Biggm|
  \begin{aligned}
    p\in\Pi(\De),
    \feas_x(S,p)\ne\emptyset
  \end{aligned}
  \Biggr\}.
  \label{ax}
\end{align}
\begin{observation}\label[observation]{obs:pairs}
  By \cref{def:partsol}\eqref{item:ends},
  each path in a partial solution
  has both of its end points in~$\De$.
  Thus,
  from \cref{def:partsol}\eqref{item:paths},
  it follows that any partition~$\prtn{}\in  \mathcal A_x(S)$
  consists of pairs of vertices of~$\De$,
  since $\prtn$~allows for partial solutions.
  Thus,
  we can consider $\prtn{}$ as a perfect matching on~$\De$.
\end{observation}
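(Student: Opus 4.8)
The plan is to unfold the relevant clauses of \cref{def:partsol} applied to an arbitrary witnessing partial solution. Fix a weighted partition $(\prtn,w)\in\mathcal{A}_x(S)$. By the definition of $\mathcal{A}_x(S)$ in~\eqref{ax} we have $\feas_x(S,\prtn)\neq\emptyset$, so I would pick some $\pts\in\feas_x(S,\prtn)$, that is, a partial solution for $(S,\prtn)$, and read off its structure. The paths in a partial solution are pairwise vertex-disjoint (they model the intersection of a simple $s$-$t$-path with $G_x$), so ``lying on the same path of $\pts$'' is an equivalence relation on $V(\pts)$, and every path $P\in\pts$ of nonzero length has exactly two \emph{distinct} end points. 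By \cref{def:partsol}\eqref{item:ends} both of these lie in $\De$; by the third condition of \cref{def:partsol} every inner vertex of $P$ lying in $B_x$ lies in $\Di$; and the remaining inner vertices of $P$ are not in $B_x$, hence not in $\De\subseteq B_x$. Since $\Dz\uplus\De\uplus\Di\uplus\Ne$ is a disjoint union and, by the first condition of \cref{def:partsol}, the zero-length paths of $\pts$ contribute only vertices of $\Dz$, it follows that the only vertices of $\De$ on any path of $\pts$ are the two end points of the nonzero-length paths.

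Next I would combine this with \cref{def:partsol}\eqref{item:paths}, which states that two vertices of $\De$ lie in the same block of $\prtn$ exactly when they lie on the same path of $\pts$. By the previous paragraph, two distinct vertices of $\De$ are in the same block of $\prtn$ precisely when they are the two end points of a common nonzero-length path of $\pts$; hence each block of $\prtn$ is contained in the two-element end-point set of some path of $\pts$, and distinct paths yield distinct blocks. It then remains to rule out singleton blocks: by \cref{def:partsol}\eqref{item:ends} every $u\in\De$ is an end point of some nonzero-length $P\in\pts$, and the second end point $v$ of $P$ satisfies $v\in\De$, $v\neq u$, and lies in the same block as $u$. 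Therefore every block of $\prtn$ has size exactly two, so $\prtn$ is a perfect matching on $\De$ (and, incidentally, $|\De|$ is even).

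I do not expect a genuine obstacle here: the statement is a bookkeeping consequence of the definition. The two points that need care are (a) using the pairwise disjointness of $\Dz,\De,\Di,\Ne$ to identify the $\De$-vertices of a path as exactly its end points, which caps each block of $\prtn$ at a pair, and (b) exploiting the word ``exactly'' in \cref{def:partsol}\eqref{item:ends} --- that $\De$ comprises \emph{all} end points of nonzero-length paths --- to exclude singleton blocks, which is what upgrades ``partition into pairs and singletons'' to ``perfect matching''.
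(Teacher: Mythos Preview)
Your proposal is correct and follows exactly the reasoning the paper sketches inside the observation itself: the paper gives no separate proof, only the two-line justification via \cref{def:partsol}\eqref{item:ends} and \eqref{item:paths}, which you have unpacked carefully and completely.
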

\noindent
Now,
our problem of verifying \eqref{minKrsmall}
at the root node~\(r\)
of a tree decomposition
is equivalent to checking
whether
\(\mathcal A_r(\emptyset,\{s,t\},\emptyset,\emptyset,\ell)\)
contains a partition~\(\{\{s,t\}\}\)
of weight at most~\(k\).
Thus we can,
in a classical
dynamic programming manner
\begin{itemize}
\item compute \(\mathcal A_x(S)\) for each pre-signature~\(S\) and each leaf node~\(x\),
\item compute \(\mathcal A_x(S)\) for each pre-signature~\(S\) and each inner node~\(x\)
  under the assumption that~\(\mathcal A_y(S')\)
  has already been computed for all pre-signatures~\(S'\) at children~\(y\) of~\(x\).
\end{itemize}

\noindent
Yet we will not
work with the full sets~\(A_x(S)\)
but with ``representative''
subsets of size~\(2^{O(\tw)}\).
Since the number of pre-signatures is
\(2^{O(\tw)}\cdot\ell\),
this will allow us to %
prove \cref{prop:wssptw}.
In order to formally introduce
representative sets of weighted partitions,
we need some notation.
\begin{definition}[partition lattice]
  The set \(\Pi(U)\)
  is semi\hyp ordered
  by the coarsening relation~\(\sqsubseteq\),
  where \(p\sqsubseteq q\)
  if every set of~\(p\)
  is included in some set of~\(q\).
  We also say that \(q\)~is
  \emph{coarser} than~\(p\)
  and that \(p\)~is \emph{finer} than~\(q\).

  For two
  partitions~\(p,q\in \Pi(U)\),
  by
  \(p\sqcup q\) we denote
  the (unique) finest partition
  that is coarser
  than both~\(p\) and~\(q\).
\end{definition}

\noindent
To get an intuition for the \(p\sqcup q\)
operation,
recall from \cref{def:partsol}
that %
two end points of paths in partial solutions
belong to the same path (that is, connected component)
of a partial solution
if and only if
they are in the same set of~\(p\).
In these terms,
if \(p\in\Pi(U)\)~are the vertex sets
of the connected components
of a graph~\((U,E)\)
and \(q\in\Pi(U)\)~are the vertex sets
of the connected components
of a graph~\((U,E')\),
then \(p\sqcup q\)~are the vertex sets
of the connected components
of the graph~\((U,E\cup E')\).

Now,
assume that there is a solution~\(P\) to
\(\WsspAcr{}\)
in a graph~\(G\)
and consider an arbitrary node~\(x\)
of a tree decomposition.
Then,
the subpaths~\(\pts{}\) of~\(P\)
that lie in~\(G_x\)
are a partial solution
for some solution
signature~\((S,\prtn)\)
with $S=(\Dz,\De,\Di,N,l)$ at~\(x\).
The partition~\(\prtn{}\) of~\(\De\)
consists of the pairs of end points of
nonzero\hyp length paths in~\(\pts{}\).
Since,
in the overall solution~\(P\),
the vertices in~\(\De\) are all connected,
the vertices of~\(\De\)
are connected in~\(G\setminus E_x\)
according to a partition~\(q\) of~\(\De\)
such that \(p\sqcup q=\{\De\}\).
Now,
if in~\(P\), we replace
the subpaths~\(\pts\)
by any other partial solution~\(\pts'\)
to a solution signature
\((S,\prtn')\)
such that \(\cost(\pts')\leq \cost(\pts)\)
and \(p'\sqcup q=\{\De\}\),
then
we obtain a solution~\(P'\) for~\(G\)
with at most the cost of~\(P\).
Thus,
one of the two weighted partitions~\((p,K(p))\)
and~\((p',K(p'))\) in~\(\mathcal A_x(S)\)
is redundant.
This leads to the definition
of \emph{representative} sets of weighted partitions.

\begin{definition}[representative sets \citep{BCKN15}]
  For a set \(\mathcal A\subseteq\Pi(U)\times\N\)
  of weighted partitions
  and a partition~\(q\in\Pi(U)\), let
  \begin{align*}
    \opt(q,\mathcal A)&:=\min\{w\mid (p,w)\in\mathcal A\text{ and }p\sqcup q=\{U\}\}.
  \end{align*}
  Another set~\(\mathcal A'\subseteq\Pi(U)\times\mathbb N\) of weighted partitions is said
  to \emph{represent~\(\mathcal A\)}
  if
  \[\opt(p,\mathcal A)=\opt(p,\mathcal A')\text{ for all }p\in\Pi(U).\] 
  A function~\(f\colon 2^{\Pi(U)\times\N}\times Z\to
  2^{\Pi(U)\times\N}\)
  is said to \emph{preserve representation}
  if,
  for all~\(\mathcal A,\mathcal A'\subseteq\Pi(U)\times \N\) and all \(z\in Z\),
  it holds that, if \(\mathcal A'\)
  represents~\(\mathcal A\),
  then \(f(\mathcal A',z)\)
  represents~\(f(\mathcal A,z)\).
  Herein,
  \(Z\)~stands representative for further
  arguments to~\(f\).
\end{definition}
\noindent
Transferring this definition
to \WsspAcr{} and our sets \(\mathcal A_x(S)\)
in \eqref{ax},
\(\opt(q,\mathcal A_x(S))\)
is the minimum cost
of any partial solution
for any signature~\((S,p)\)
at a node~\(x\)
that leads to a connected overall
solution when the vertices
in~\(\De\)~are connected
in \(G\setminus E_x\) as described by partition~\(q\).
Moreover,
a subset~\(\mathcal A'\subseteq\mathcal A_x(S)\)
is representative
if this minimum cost for~\(\mathcal A'\)
is the same.
\begin{proposition}[\citet{BCKN15}]
  \label{prop:shrink}
  Given a set~\(\mathcal A\subseteq\Pi(U)\times \N\)
  of weighted partitions
  that form perfect matchings on~$U$,
  a representative subset~\(\mathcal A'
  \subseteq\mathcal A\)
  with \(|\mathcal A'|\leq 2^{|U|/2}\)
  is computable in
  \(2^{(\omega-1)\cdot|U|/2}\cdot|\mathcal A|\cdot|U|^{O(1)}\)~time,
  where \(\omega<2.2373\)
  is the matrix multiplication exponent.
\end{proposition}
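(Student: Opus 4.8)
The plan is to follow the \emph{rank-based approach} of \citet{BCKN15}, which turns the computation of a small representative set into the problem of extracting a maximal set of $\mathbb{F}_2$-linearly independent rows from a suitable $0/1$ matrix. Since, by \cref{obs:pairs}, every partition occurring in~$\mathcal{A}$ is a perfect matching on~$U$, I would work with the \emph{single-cycle matrix}~$M$ whose rows and columns are indexed by the perfect matchings of~$U$ and with $M[p][q]=1$ iff $p\sqcup q=\{U\}$, equivalently iff the $2$-regular multigraph $p\cup q$ is a single cycle. The first step is to observe that a subset $\mathcal{A}'\subseteq\mathcal{A}$ represents~$\mathcal{A}$ precisely when, for every perfect matching~$q$, the minimum weight of a pair $(p,w)\in\mathcal{A}'$ with $M[p][q]=1$ equals the corresponding minimum over~$\mathcal{A}$; i.e.\ $\opt(q,\mathcal A')=\opt(q,\mathcal A)$ is exactly a statement about~$M$ restricted to the matching case.

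The key structural ingredient, which is the heart of the rank-based method of \citet{BCKN15}, is that $M$ factors over~$\mathbb{F}_2$ through a space of dimension~$2^{|U|/2-1}$, so $\mathrm{rank}_{\mathbb{F}_2}(M)\le 2^{|U|/2-1}\le 2^{|U|/2}$. The intuition is a parity count: fixing on which side of a cut a distinguished element~$u_1\in U$ lies, the number of subsets of~$U$ selecting exactly one endpoint of each edge of~$p$ and simultaneously exactly one endpoint of each edge of~$q$ equals $2^{c-1}$, where $c$ is the number of cycles of~$p\cup q$; this number is odd iff $c=1$, i.e.\ iff $M[p][q]=1$. Hence $M=CC^\top$ over~$\mathbb{F}_2$ for an incidence matrix~$C$ of (parity-normalized) common transversals, and \citet{BCKN15} bound $\mathrm{rank}_{\mathbb{F}_2}(C)\le 2^{|U|/2-1}$.

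Given this, the reduction algorithm is a weight-aware Gaussian elimination on the rows of~$C$ restricted to the partitions appearing in~$\mathcal{A}$: sort the weighted partitions by nondecreasing weight, process them in this order, and maintain both the kept set~$\mathcal{A}'$ and a row-echelon form of the corresponding rows of~$C$; a pair $(p,w)$ is added to~$\mathcal{A}'$ iff $C[p]$ is not in the $\mathbb{F}_2$-span of the rows kept so far. Correctness follows from a matroid-style exchange argument over~$\mathbb{F}_2$. For a query matching~$q$, let $(p^\ast,w^\ast)\in\mathcal{A}$ be of minimum weight with $M[p^\ast][q]=1$. If $p^\ast$ was discarded, then $C[p^\ast]$ is an $\mathbb{F}_2$-combination of rows $C[p]$ of kept pairs of weight at most~$w^\ast$; since $M[p^\ast][q]=C[p^\ast]\cdot C[q]^\top=1$ and $C[p^\ast]\cdot C[q]^\top=\sum_p C[p]\cdot C[q]^\top$ over~$\mathbb{F}_2$, an odd number of those summands are~$1$, so some kept~$p$ has $M[p][q]=C[p]\cdot C[q]^\top=1$ and weight at most~$w^\ast$. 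Thus $\opt(q,\mathcal{A}')=\opt(q,\mathcal{A})$, $\mathcal{A}'$ represents~$\mathcal{A}$, $\mathcal{A}'\subseteq\mathcal{A}$ by construction, and $|\mathcal{A}'|\le \mathrm{rank}_{\mathbb{F}_2}(C)\le 2^{|U|/2}$.

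For the running time, all we manipulate is the $|\mathcal{A}|\times r$ matrix of $C$-rows with $r:=2^{|U|/2}$, writable in $|\mathcal{A}|\cdot|U|^{O(1)}$ time. The order-respecting greedy selection is implemented in blocks of~$r$ rows: each block is first reduced against the current echelon basis and then re-triangulated, both via fast matrix multiplication, at cost $r^{\omega}\cdot|U|^{O(1)}$ per block and $O(\lceil|\mathcal{A}|/r\rceil)$ blocks, giving $|\mathcal{A}|\cdot r^{\omega-1}\cdot|U|^{O(1)}=2^{(\omega-1)|U|/2}\cdot|\mathcal{A}|\cdot|U|^{O(1)}$ total. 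I expect the main obstacle to be exactly this last point: naive elimination costs $\Theta(|\mathcal{A}|\cdot r^2)=\Theta(2^{|U|}\cdot|\mathcal{A}|)$, so one must marry the weight-order greedy (which the representativeness proof needs) with a blocked, matrix-multiplication-based elimination, and argue that interleaving the global weight order with the block structure still yields a representative set of the claimed size.
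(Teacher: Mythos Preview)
Your sketch is correct and faithfully outlines the rank-based approach of \citet{BCKN15}; note, however, that the present paper does not prove this proposition at all but simply imports it from that reference as a black box. So there is nothing to compare against here beyond observing that your account matches the cited source.
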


\noindent
When computing
the sets~\(\mathcal A_x(S)\)
for pre-signatures~\(S\) at a node~\(x\),
we will first replace
the \(A_y(S')\) for all pre-signatures~\(S'\)
at the child nodes~\(y\)
by their representative sets
and thus work on sets of size~\(2^{O(\tw)}\).
More precisely,
we will compute~\(A_x(S)\)
from the children sets~\(A_y(S')\)
using the following operators,
which \citet{BCKN15}
have showed to preserve representation.
\begin{proposition}[operators on weighted partitions \citep{BCKN15}]
  \label[proposition]{prop:operators}
    Let \(U\)~be a set and~\(\mathcal A\subseteq \Pi(U)\times\mathbb N\).
    The following operations preserve representation.
\begin{itemize}[aaaaaaaaaaaaaaa]
\item[$\rmc(\mathcal A):=$]$\{(p,w)\mid \forall (p,w')\in\mathcal A:w'\geq w\}\subseteq\Pi(U)\times\N$
  simply removes duplicate partitions
  from the set,
  keeping the one with smallest weight.
\item[{$\mathcal A\mincup \mathcal B:=$}]$\rmc(A\cup B)\subseteq\Pi(U)\times\N$
  for some~\(\mathcal B\subseteq\Pi(U)\times\N\)
  takes all weighted partitions from
  \(\mathcal A\) and \(\mathcal B\),
  removing copies of larger weight.

\item[$\glue(\{u,v\},\mathcal A)\subseteq$] $\Pi(U\cup\{u,v\})\times\N$
  in all partitions
  merges the sets containing~\(u\) and~\(v\)
  into one.
  If necessary,
  $u$ and~$v$ are introduced into the universe
  as singletons first.
\item[$\shift(w',\mathcal A):=$]$\{(p,w+w')\mid (p,w)\in\mathcal A\}\subseteq
  \Pi(U)\times\N$ increases the weight of each partition in~\(\mathcal A\) by~\(w'\in\Z\)
  for \(w'\geq-\min\{w\mid (p,w)\in\mathcal A\}\).%
  \footnote{\citet{BCKN15} actually
    require \(w'\in\N\) here.
    Yet from the proof of their Lemma~3.6,
    it is easy to see that \(\shift\)
    preserves representation
    whenever it returns
    a set of partitions with nonnegative weights.}
\item[$\proj(X,\mathcal A)\subseteq$] $\Pi(U\setminus X)\times\N$
  for \(X\subseteq U\)
  removes the elements of~\(X\)
  from~\(U\),
  from all sets in~\(\mathcal A\),
  and removes partitions
  in which the removal of the elements of~\(X\) reduced
  the number of sets.
\item[$\join(\mathcal A,\mathcal B):=$]$\rmc(\{(p\sqcup q, w_1+w_2)\mid (p,w_1)\in\mathcal A, (q,w_2)\in\mathcal B\})$ for any \(\mathcal B\in\Pi(U)\).
\end{itemize}
\noindent
Moreover,
all operations \(\rmc,\mincup,\glue,\shift,\proj\)
can be executed in \(s\cdot |U|^{O(1)}\)~time,
where \(s\)~is the size of the input to the operations,
whereas \(\join\) can be executed in
\(|\mathcal A|\cdot|\mathcal B|\cdot|U|^{O(1)}\)time.
\end{proposition}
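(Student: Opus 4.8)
The plan is to verify, for each of the six operators separately, the defining condition of representation preservation: fixing the extra argument(s) generically denoted $z$, I assume that $\mathcal A'$ represents $\mathcal A$, i.e.\ $\opt(q,\mathcal A)=\opt(q,\mathcal A')$ for every query partition $q$, and show $\opt(q,f(\mathcal A',z))=\opt(q,f(\mathcal A,z))$ for every $q$ in the appropriate universe. The whole argument rests on two elementary facts about the partition lattice: the coarsening-join $\sqcup$ is associative and commutative, and $p\sqcup q=\{U\}$ is monotone under coarsening of $p$. The uniform strategy is to express, for each $f$, the quantity $\opt(q,f(\mathcal A,z))$ purely in terms of the values $\opt(\cdot,\mathcal A)$ through a transformation of the query $q$ that does not depend on whether we start from $\mathcal A$ or $\mathcal A'$; representation preservation then follows at once, since $\mathcal A$ and $\mathcal A'$ agree on all such values. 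Throughout I adopt the convention $\opt(q,\cdot)=+\infty$ when no weighted partition satisfies $p\sqcup q=\{U\}$.

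For the three ``bookkeeping'' operators the transformation is trivial. One checks $\opt(q,\rmc(\mathcal A))=\opt(q,\mathcal A)$ directly, since discarding a heavier duplicate never changes the minimum feasible weight. For $\mincup=\rmc(\cdot\cup\mathcal B)$ the union yields $\opt(q,\mathcal A\cup\mathcal B)=\min\{\opt(q,\mathcal A),\opt(q,\mathcal B)\}$, so $\opt(q,\mathcal A'\mincup\mathcal B)=\opt(q,\mathcal A\mincup\mathcal B)$. For $\shift(w',\cdot)$ every weight grows by the same $w'$, whence $\opt(q,\shift(w',\mathcal A))=\opt(q,\mathcal A)+w'$, the hypothesis $w'\geq-\min\{w\mid(p,w)\in\mathcal A\}$ serving only to keep the output weights nonnegative as the footnote demands. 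The two structural operators use associativity of $\sqcup$. For $\glue(\{u,v\},\cdot)$, merging the blocks of $u$ and $v$ in a partition $p$ equals $p\sqcup e_{uv}$, where $e_{uv}$ denotes the partition whose unique non-singleton block is $\{u,v\}$; by associativity $(p\sqcup e_{uv})\sqcup q=p\sqcup(e_{uv}\sqcup q)$, so $\opt(q,\glue(\{u,v\},\mathcal A))=\opt(e_{uv}\sqcup q,\mathcal A)$ after the routine bookkeeping of introducing $u,v$ into the universe as singletons. For $\proj(X,\cdot)$, restricting partitions to $U\setminus X$ and discarding those in which a whole block lay inside $X$ is matched by extending the query: one shows $\opt(q,\proj(X,\mathcal A))=\opt(\hat q,\mathcal A)$, where $\hat q\in\Pi(U)$ adds the elements of $X$ to $q$ as singletons, the discard condition being exactly what forces a removed-only block to leave an isolated element under $\sqcup\hat q$, so the two connectivity tests coincide. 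In each case the right-hand side depends only on $\opt(\cdot,\mathcal A)$, so the identity transfers verbatim from $\mathcal A$ to $\mathcal A'$.

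The heart of the proof is $\join$. Unfolding the definition and using associativity of $\sqcup$,
\[
  \opt(r,\join(\mathcal A,\mathcal B))
  =\min_{(q,w_2)\in\mathcal B}\Bigl(w_2+\opt(q\sqcup r,\mathcal A)\Bigr),
\]
since $(p\sqcup q)\sqcup r=\{U\}$ iff $p\sqcup(q\sqcup r)=\{U\}$, so the inner minimisation over $(p,w_1)\in\mathcal A$ is precisely $\opt(q\sqcup r,\mathcal A)$. Because $\mathcal A'$ represents $\mathcal A$, the term $\opt(q\sqcup r,\mathcal A)$ may be replaced by $\opt(q\sqcup r,\mathcal A')$ for every $q$ and $r$, giving $\opt(r,\join(\mathcal A,\mathcal B))=\opt(r,\join(\mathcal A',\mathcal B))$; the symmetric computation settles representation in the second argument. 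I expect this step to be the main obstacle, both because the rearrangement must be justified carefully under the $+\infty$ convention and because $\join$ is the operation whose representation preservation is actually invoked when the operators are chained at join nodes of the tree decomposition.

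Finally, for the running times, $\rmc$, $\mincup$, $\shift$, $\glue$, and $\proj$ each sweep once over their input, performing only block relabelling, a single union-find merge, or a weight comparison per weighted partition---costing $|U|^{O(1)}$ each and bucketing by the (canonically encoded) partition to delete duplicates---for a total of $s\cdot|U|^{O(1)}$, where $s$ is the input size. For $\join$ one iterates over all pairs in $\mathcal A\times\mathcal B$, computes each $p\sqcup q$ by union-find in $|U|^{O(1)}$ time and applies $\rmc$, yielding $|\mathcal A|\cdot|\mathcal B|\cdot|U|^{O(1)}$. Since all identities and bounds coincide with those established in the rank-based framework, the argument may alternatively be phrased as checking that our weighted-partition sets meet the hypotheses of \citet{BCKN15} and invoking their corresponding lemmas directly.
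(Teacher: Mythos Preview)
The paper does not prove this proposition; it is quoted from \citet{BCKN15} (with a footnote observing that the slight extension of $\shift$ to negative offsets is immediate from their proof of Lemma~3.6). Your proposal is a correct, self-contained reconstruction of the BCKN15 arguments: the key device---rewriting $\opt(q,f(\mathcal A,z))$ in terms of $\opt(\cdot,\mathcal A)$ for a transformed query that does not depend on~$\mathcal A$---is exactly their method, and your treatments of $\glue$ and $\join$ via associativity of~$\sqcup$, and of $\proj$ via extending the query by singletons on~$X$, match their Lemmas~3.3--3.6. Your closing sentence (that one may simply invoke \citet{BCKN15} directly) is in fact all the paper itself does.
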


\subsubsection{The dynamic programming algorithm}

We will now show
how to use the operators from \cref{prop:operators}
to compute,
for each node~\(x\) of a tree decomposition
and each pre-signature~\(S\),
the weighted set of partitions~\(\mathcal A_x(S)\)
from~\eqref{ax}
assuming that \(\mathcal A_y(S')\)
has been computed for all children~\(y\)
of~\(x\) and all pre-signatures~\(S'\).
Using these operators guarantees that,
when applying them to representative subsets,
we will again get representative subsets.
We describe our algorithm independently for
leaf nodes,
forget nodes,
insert vertex nodes,
insert edge nodes,
and join nodes.

\paragraph{Leaf node \(x\).}
\label{sec:leafnode}
By \cref{ass:st},
\(B_x=\{s,t\}\).
Moreover,
\(G_x\)~has no edges.
Thus,
any partial solution
for any pre-signature~\(S=(\Dz,\De,\Di,N,l)\)
at~$x$
will contain~\(s\) and~\(t\)
as paths of length zero
and
\[
  \mathcal A_x(S)=
  \begin{cases}
    \Bigl\{\bigr(\emptyset,
    \kappa(s)+\kappa(t)\bigl)\Bigr\}&
    \text{if }\Dz=\{s,t\}, \De=\Di=N=\emptyset, \eta(s)+\eta(t)\leq l, \text{ and}\\
    \emptyset&\text{otherwise}.
  \end{cases}
\]

\paragraph{Introduce vertex~\(v\) node~\(x\) with child~\(y\).}
\label{sec:introducevertex}
For any %
pre-signature~\(S=(\Dz,\De,\Di,N,l)\),
we compute \(\mathcal A_x(S)\)
according to one of the following three cases.

\begin{enumerate}[({IV}1)]
\item
  \label{iv1}
  If \(v\in \Dz\cup\De\cup\Di\),
  then partial solutions for~\(S\)
  exist only if~\(\eta(v)\leq l\) and
  \(v\in\Dz\),
  since the newly introduced vertex
  has no edges in~\(G_x\) yet.
  Removing~$v$ from such a partial solution
  yields a partial solution~$\pts$ for~\(G_y\)
  with \(\Lambda_y(\pts,N)\leq l-\eta(v)\)
  and with \(\kappa(v)\) less cost.
\item
  \label{iv2}
  If \(v\in N\),
  then partial solutions for~\(S\)
  exist only if~\(l\geq\lambda(v)\).
  Such a partial solution
  is a partial solution~\(\pts{}\)
  for~\(G_y\)
  with the same cost and
  \(\Lambda_y(\pts,N\setminus\{v\})\leq l-\lambda(v)\).
  
\item
  \label{iv3}
  If neither of both,
  then any partial solution for~\(S\)
  is also one for~\(G_y\)
  of the same cost.
\end{enumerate}

\noindent
Thus,
\begin{align*}
  \mathcal A_x(S)=
  \begin{cases}
    \shift\Bigr(\kappa(v),
    \mathcal A_y(\Dz\setminus\{v\},\De,\Di,N,l-\eta(v))\Bigl)&\text{ if $v\in\Dz$ and $l\geq\eta(v)$},\\
    \mathcal A_y(\Dz,\De,\Di,N\setminus\{v\},l-\lambda(v))&\text{ if $v\in N$ and $l\geq\lambda(v)$},\\
    \mathcal A_y(\Dz,\De,\Di,N,l)&\text{ if $v\notin \Dz\cup\De\cup\Di\cup N$,}\\
    \emptyset&\text{ otherwise}.
  \end{cases}
\end{align*}

\paragraph{Introduce edge \(\{u,v\}\) node~\(x\)
  with child~\(y\).}
\label{sec:introduceedge}

\begin{table*}
  \centering
  \begin{tabular}{r|llllllllllllllll}
    \toprule
    $u\in$ & \Dz         & \Dz         & \Dz         & \Dz         & \Dz         & \De         & \De         & \De         & \De         & \Di         & \Di         & \Di         & $N$         & $N$         &  $B$ & \\
    $v\in$ & \Dz         & \De         & \Di         & $N$         & $B$         & \De         & \Di         & $N$         & $B$         & \Di         & $N$         & $B$         & $N$         & $B$           &  $B$ & \\
    \midrule
    Case IE  & \ref{case1} & \ref{case1} & \ref{case1} & \ref{case1} & \ref{case3} & \ref{case2} & \ref{case2} & \ref{case1} & \ref{case3} & \ref{case2} & \ref{case1} & \ref{case3} & \ref{case1} & \ref{case1} & \ref{case1} \\
    \bottomrule
  \end{tabular}
  \caption{All possibilities for containment of the vertices~$u$ and~$v$ in one of the sets~$\Dz,\De,\Di,N$ and $B:=B_x\setminus(\Dz\cup\De\cup\Di\cup N)$ (modulo symmetry)
    and which of the cases (IE\ref{case3})--(IE\ref{case1}) applies.}
  \label{tab:cases}
\end{table*}

For any pre-signature~\(S=(\Dz,\De,\Di,N,l)\),
we compute \(\mathcal A_x(S)\)
according to one of three cases for~\(u\) and~\(v\):

\begin{enumerate}[({IE}1)]
\item
  \label{case3}
  \looseness=-1
  If $v\in\Dz\cup\De\cup\Di$
  and $u\notin \Dz\cup\De\cup\Di\cup N$,
  then
  \(v\)~is required to be part of a path
  in our partial solution to~\(G_x\),
  whereas its neighbor~\(u\)
  is not allowed to be on any path
  nor neighbor of a path.
  There is no such feasible solution.
  The same holds when exchanging the roles of~$u$ and~$v$.
\item
  \label{case2}
  If \(\{u,v\}\subseteq \De\cup\Di\)
  then we have two choices:
  take edge~\(\{u,v\}\) into a path of a partial solution
  or not.
  A partial solution for~$S$
  not containing edge~$\{u,v\}$ on any path
  is also one for~$G_y$.
  A partial solution~$\pts$ for~$S$
  containing~$\{u,v\}$ on one of the paths
  gives rise to a partial solution~$\pts'$ for~$G_y$
  in which
  $u$ and~$v$
  must be in different connected components
  (since $\pts$ is not allowed to contain cycles),
  which, after adding edge~$\{u,v\}$,
  will be one connected component of~$\pts$.
  Moreover,
  if \(u\in\De\),
  then \(u\)~is a path of zero length in~$\pts'$;
  if \(u\in\Di\), then
  \(u\)~is an end point
  of a path in~$\pts'$.
  Symmetrically, this holds for~\(v\).

\item
  \label{case1}
  None of the above (see \cref{tab:cases}).
  Edge~\(\{u,v\}\)
  cannot be part of a path
  in a partial solution to~\(S\).
  Moreover,
  if it is incident to a path vertex,
  then both its vertices lie on a partial solution path
  or one of them is in~$N$.
  Thus,
  any partial solution for~\(G_y\)
  satisfying~\(S\)
  is a partial solution
  of the same cost for~\(G_x\).
\end{enumerate}

\noindent
Thus, we can compute $\mathcal A_x(S)$ as follows,
where the $\proj$ operation in the fourth case
replaces $u$ by~$v$ in all partitions,
in the fifth case replaces $v$ by $u$,
and in the third case
together with \cref{obs:pairs} ensures
that an edge is introduced only between vertices~$u$ and~$v$
in different partial solution paths:
if $u$ and $v$ in the same partial solution path,
then $\{u,v\}$ is a set of a partition~$\prtn{}$,
the $\glue$ operation does nothing to~$\prtn{}$,
and $\proj$  discards~$\prtn$ because
$\{u,v\}$ becomes empty after deleting~$u$ and~$v$.

\noindent
\begin{align*}
  \mathcal A_x(S)=
  \begin{cases}
    \emptyset&\text{ if $v\in\Dz\cup\De\cup\Di$
      and}\\
    &\quad u\notin \Dz\cup\De\cup\Di\cup N
    \text{, or symmetrically,}\\
    \mathcal A_y(S)\mincup
    \mathcal \glue(\{u,v\},
    \mathcal A_y(S[\{u,v\}_{\De\to \Dz}]))
    &\text{ if }\{u,v\}\subseteq\De,
    \\
    \mathcal A_y(S)\mincup
    \mathcal \proj(\{u,v\},\glue(\{u,v\},
    \mathcal A_y(S[\{u,v\}_{\Di\to \De}])))
    &\text{ if }\{u,v\}\subseteq\Di,
    \\
    \mathcal A_y(S)\mincup
    \mathcal \proj(\{u\},\glue(\{u,v\},
    \mathcal A_y(S[\{u\}_{\De\to\Dz},\{v\}_{\Di\to\De}])))
    &\text{ if }v\in\Di,u\in\De,
    \\
    \mathcal A_y(S)\mincup
    \mathcal \proj(\{v\},\glue(\{u,v\},
    \mathcal A_y(S[\{v\}_{\De\to\Dz},\{u\}_{\Di\to\De}])))
    &\text{ if }u\in\Di,v\in\De,\text{ and}
    \\
    \mathcal A_y(S)&\text{ otherwise,}
  \end{cases}
\end{align*}
where
\begin{align*}
  S[\{u,v\}_{\De\to \Dz}]
                 &:=(\Dz\cup\{u,v\},\De\setminus\{u,v\},\Di, N,l)),
  \\
  S[\{u,v\}_{\Di\to \De}]
  &:=
    (\Dz,\De\cup\{u,v\},\Di\setminus\{u,v\}, N,l)),\\
  S[\{u\}_{\De\to\Dz},\{v\}_{\Di\to\De}]
                 &:=(\Dz\cup\{u\},(\De\cup\{v\})\setminus\{u\},\Di\setminus\{v\}, N,l).
\end{align*}

\paragraph{Forget vertex~\(v\) node~\(x\)
  with child~\(y\).}
\label{sec:forgetnode}

By \cref{ass:st},
the forgotten vertex~\(v\notin\{s,t\}\).
For any pre-signature~\(S=(\Dz,\De,\Di,N,l)\),
we compute \(\mathcal A_x(S)\) as follows.
A partial solution~$\pts$ to~$S$ at~$G_x$
is also one at~$G_y$,
just its pre-signature depends on
the role of the forgotten vertex~\(v\)
in~$\pts$:
\begin{enumerate}[({F}1)]
\item If $v$~is a neighbor of a path in~$\pts$,
  then $\pts$~is a partial solution at~$y$
  with the pre-signature obtained from~$S$ by
  adding~$v$ to~$N$.
  Note that
  we can safely ignore \(v\) in~\(x\) and all parent nodes:
  we already accounted for the cost~\(\lambda(v)\)
  when \(v\)~was introduced
  and
  by \cref{def:tw}\eqref{tw2} and \eqref{tw3},
  no parent node of~\(x\)
  will ever introduce an edge incident to~\(v\).
  
\item If $v$~is part of a path in~$\pts$,
  then
  \(v\)~must be an \emph{inner} vertex of
  such path
  since
  \(v\)~must be an inner vertex
  of the overall solution
  and
  no parent node of~\(x\)
  will be able to introduce edges incident to~\(v\)
  due to \cref{def:tw}\eqref{tw2} and \eqref{tw3}.
  In this case,
  $\pts$~is a partial solution
  at~$y$
  with the pre-signature obtained from~$S$ by
  adding~$v$ to~$\Di$.

\item Neither of both.
  In this case,
  any partial solution for~$S$ at \(x\)
  is also one of the same cost for~$S$ at~$y$.
\end{enumerate}

\noindent
Thus,
\begin{align}
  \begin{split}
    \mathcal A_x(S)&=\mathcal A_y(\Dz,\De,\Di,N\cup\{v\},l)\mincup A_y(\Dz,\De,\Di\cup\{v\},N,l)\mincup \mathcal A_y(\Dz,\De,\Di,N,l).
    \label{funion}
  \end{split}
\end{align}

\paragraph{Join node~$x$ with children~\(y\) and~\(z\).}
\label{sec:joinnode}

For any
pre-signature~\(S=(\Dz,\De,\Di,N,l)\),
to compute \(\mathcal A_x(S)\),
we consider the roles of each vertex~\(v\in B_x\)
and the types of partial solutions $\pts_y$ for~$G_y$
and $\pts_z$ for~$G_z$
that a partial solution~$\pts$ for~$S$ decomposes into:

\begin{enumerate}[({J}1)]
\item
  \label{j1}
  If \(v\in N\),
  then~$\pts$
  allows~\(v\) as a neighbor.
  Thus,
  $\pts_y$ and~$\pts_z$
  also must allow~\(v\) as a neighbor.
\item
  \label{j2}
  If \(v\in\Dz\),
  then \(v\)~must be a path of length zero
  in~$\pts$ and therefore in~$\pts_y$ and~$\pts_z$.

\item
  \label{j3}
  \looseness=-1
  If \(v\in\De\),
  then $v$~is the end point of a path in~$\pts$.
  Thus,
  \(v\)~is a path of length zero in~$\pts_y$
  and an end point of a path of nonzero length
  in~$\pts_z$
  or vice versa.

\item
  \label{j4}
  If \(v\in\Di\),
  then $v$~is an inner vertex of a path in~$\pts$.
  Thus,
  \(v\) might be the end point
  of paths of non\hyp zero length
  in both~$\pts_y$ and~$\pts_z$,
  it might be an inner vertex of~$\pts_y$
  and a path of length zero of~\(\pts_z\),
  or vice versa.
  
\item
  \label{j5}
  \looseness=-1
  Otherwise,
  \(v\)~is not part of~$\pts$
  nor allowed to neighbor it.
  Thus,
  \(v\)~is also disallowed to be part of or to neighbor
  $\pts_y$ and $\pts_z$.
\end{enumerate}

\noindent
To compute the \(\mathcal A_x(S)\), let
\begin{align*}
  D&:=\Dz\cup\De\cup\Di,&
  \lambda_N&:=\sum_{v\in N}\lambda(v)+\sum_{v\in D}\eta(v),&
  \kappa_D&:=\sum_{v\in D}\kappa(v).
\end{align*}
By \cref{def:tw}\eqref{tw3},
$V_y\cap V_z\subseteq B_x$.
Thus,
the set of vertices common to~\(\pts_y\)
and \(\pts_z\) lies in~\(B_x\) and, hence, is precisely~\(D\).
Thus
\[
  K(\pts_y)+K(\pts_z)-\kappa_D=K(\pts)\geq 0.
\]
By \cref{def:tw}\eqref{tw2} and \eqref{tw3},
all common neighbors
of~\(\pts_y\) and~\(\pts_z\) also lie in~$B_x$,
and thus in~\(N\).
Thus,
\(\Lambda_y(\pts_y,N)\leq l_y\)
and \(\Lambda_z(\pts_z,N)\leq l_z\)
for some
\[
  l_y+l_z=l+\lambda_N.
\]
Thus, by (J\ref{j1})--(J\ref{j5}), one has
\begin{align}
  \mathcal A_x(S)&:=
  \,\,\smashoperator{\bigmincup\limits_{\substack{
  l_y+l_z=l+l_N\\
  \Dz^y\uplus\De^y\uplus\Di^y=D\\
  \Dz\subseteq\Dz^y\\
  \De^y\subseteq\De\cup\Di\\
  \Di^y\subseteq\Di
  }}}\,\,
  \shift\Biggl(-\kappa_D,
  \proj\Bigl((\De^y\cup\De^z)\setminus\De^x),
  \join\bigl(\mathcal A_y(\Dz^y,\De^y,\Di^y,N,l_y),
  \mathcal A_z(\Dz^z,\De^z,\Di^z,N,l_z)\bigr)\Bigr)\Biggr)
  \label{junion}
\end{align}
where the \(\Dz^z,\De^z\) and \(\Di^z\) are
fully determined by the choice of \(\Dz^y,\De^y\) and \(\Di^y\)
via
\begin{align*}
  \Dz^z&=\Dz\cup\Di^y\cup(\De\cap\De^y),&
  \De^z&=(\De\cap\Dz^y)\cup(\Di\cap\De^y),&
  \Di^z&=D\setminus(\Dz^z\cup\De^z),
\end{align*}
and where the $\proj$ operation
removes those vertices from the partitions
that are in $\De^y$ or $\De^z$ but
are not in $\De^x$.
Note that,
as in the case of introduce edge nodes,
the $\proj$ operation also ensures
that we do not create disconnected cycles:
when two vertices~$u,v\in\De^y\cap\De^z$
are joined to become part of~$\Di^y$
and $\{u,v\}$~is a set both
of a partition for~$y$ and a partition for~$z$,
then $\{u,v\}$ will be a set in the joined partition,
it will be empty after removing~$u$ and~$v$,
and therefore the partition will be discarded by~$\proj$.

\paragraph*{Wrapping up.}
\looseness=-1
Having described how to compute \eqref{ax}
for each node type of a nice tree decomposition,
we are now ready to prove \cref{prop:wssptw}
exploiting that we can efficiently
compute small representative subsets
of our families of weighted partitions
using \cref{prop:shrink}.
We will apply this shrinking procedure
to all intermediate sets computed in our algorithm.

\begin{proof}[Proof of \cref{prop:wssptw}]
  Our algorithm works as follows.
  It first preprocesses the given tree decomposition
  according to \cref{ass:st},
  which can be done in \(n\cdot\tw^{O(1)}\)~time
  and thus gives a tree decomposition
  with \(n\cdot\tw^{O(1)}\)~bags \citep{BCKN15}.
  Henceforth,
  we will be working on a tree decomposition
  of width at most~\(\tw+2\),
  that is,
  each bag has size at most~\(\tw+3\).

  The algorithm now computes
  \eqref{ax}
  for each node of the tree decomposition
  and each pre-signature~\(S\)
  as described in
  \cref{sec:leafnode,sec:introducevertex,sec:introduceedge,sec:forgetnode,sec:joinnode}.
  However,
  after computing \(\mathcal A_x(S)\)
  for some pre-signature~\(S\) at some node~\(x\),
  it will use \cref{prop:shrink}
  to store only a representative subset \(\mathcal A_x'(S)\)
  with
  \(|\mathcal A_x'(S)|\leq 2^{(\tw+3)/2}\).
  Since we compute \(\mathcal A_x(S)\)
  only using operators in \cref{prop:operators},
  the set \(\mathcal A_x'(S)\) represents \(\mathcal A_x(S)\)
  at each node~\(x\) of the tree decomposition,
  in particular at the root node~\(r\),
  where we can now verify whether
  \(\mathcal A_r'(\emptyset,\{s,t\},\emptyset,\emptyset,\ell)\)
  contains a partition \(\{\{s,t\}\}\) of weight at most~\(k\).

  We analyze the running time of this algorithm.
  For each pre\hyp signature~\(S\),
  each leaf node
  can be processed according to \cref{sec:leafnode}
  in constant time
  and the stored representative
  subset \(\mathcal A_x'(S)\subseteq\mathcal A_x(S)\)
  has constant size
  and can be computed in constant time.
  According to \cref{def:partsol},
  there are at most \(5^{\tw+3}\cdot\ell\)
  pre\hyp signatures,
  since the bags of our tree decomposition
  have size at most~\(\tw+3\).
  Thus,
  each leaf node can
  be processed in \(O(5^\tw\cdot \ell)\)~time.

  \looseness=-1
  For each pre\hyp signature~\(S\)
  with \(|\De|=i\),
  the most expensive operation
  when processing
  introduce vertex,
  introduce edge,
  and forget nodes
  is the union operation in \eqref{funion},
  which is applied to
  three sets of weighted partitions,
  each of a size upper bounded by \(2^{i/2}\).
  By \cref{prop:operators},
  this union can be computed in
  \(3\cdot 2^{i/2}\cdot i^{O(1)}=2^{i/2}\cdot i^{O(1)}\)~time.
  The resulting
  set of weighted partitions
  therefore has size at most \(2^{i/2}\cdot i^{O(1)}\).
  Thus, shrinking using \cref{prop:shrink}
  works in \(2^{(\omega-1)i/2}\cdot 2^{i/2}
  \cdot i^{O(1)}\)~time,
  which is the most expensive operation
  for each fixed pre\hyp signature.
  Since there are at most
  \(\binom{\tw+3}{i} \cdot 4^{\tw+3-i}\cdot \ell\)~pre\hyp signatures 
  with \(|\De|=i\),
  each
  introduce vertex,
  introduce edge,
  and forget node
  is processed in
  \begin{align*}
    &\sum_{i=1}^{\tw+3}\binom{\tw+3}{i}\cdot 4^{\tw+3-i}\cdot \ell\cdot 2^{\omega i/2} \cdot i^{O(1)}\leq
      \ell\cdot\tw^{O(1)}\cdot
      \sum_{i=1}^{\tw+3}\binom{\tw+3}{i}\cdot (2^{\omega/2})^i\cdot 4^{\tw+3-i}=\ell\cdot\tw^{O(1)}\cdot (4+2^{\omega/2})^{\tw+3}\text{ time.}
  \end{align*}

  \noindent
  For each pre\hyp signature~\((\Dz,\De,\Di,N,l)\)
  with \(|\Dz|=\iz,|\De|=\ie,|\Di|=\ii\),
  processing a join node
  according to \cref{sec:joinnode}
  is more costly.
  By (J\ref{j1})--(J\ref{j5}),
  the union operator in \eqref{junion}
  has up to \(1^{\iz}\cdot 2^{\ie}\cdot 3^{\ii}\cdot l\)~operands.
  Each operand
  is a join of two sets of size at most \(2^{(\ie+\ii)/2}\)
  and, by \cref{prop:operators},
  takes \(2^{\ie+\ii}\cdot\tw^{O(1)}\)~time
  to compute.
  Thus,
  the union operator
  is applied to sets
  whose total size is bounded
  by \(2^{\ie}\cdot 3^{\ii}\cdot 2^{\ie+\ii}\cdot l\).
  Shrinking according to
  \cref{prop:shrink}
  thus works in
  \(2^{(\omega-1)\ie/2}\cdot 2^{\ie}\cdot 3^{\ii}\cdot 2^{\ie+\ii}\cdot l\cdot\tw^{O(1)}\)~time
  and, again,
  is the most expensive operation
  in the computation of a join node
  for a fixed pre\hyp signature.
  There are
  \[\binom{\tw+3}{i}\sum_{\iz+\ie+\ii=i}\binom{i}{\iz,\ie,\ii}\cdot 2^{\tw+3-i}\cdot\ell\]
  pre\hyp signatures~\((\Dz,\De,\Di,N,l)\)
  with \(|\Dz|=\iz,|\De|=\ie,|\Di|=\ii\),
  and \(\iz+\ie+\ii=i\),
  where
  \(\binom{i}{\iz,\ie,\ii}\)
  is the multinomial coefficient---%
  the number of ways
  to throw \(i\)~items
  into three distinct bins such that
  the first bin gets \(\iz\)~items,
  the second gets \(\ie\)~items,
  and the third gets \(\ii\)~items.
  Thus,
  each join node is processed in a total time of
  \begin{align*}
    &\sum_{i=1}^{\tw+3}\binom{\tw+3}{i}\sum_{\iz+\ie+\ii=i}\binom{i}{\iz,\ie,\ii}\cdot 2^{\tw+3-i}\cdot\ell \cdot
2^{(\omega-1)\ie/2}\cdot 2^{\ie}\cdot 3^{\ii}\cdot 2^{\ie+\ii}\cdot l\cdot\tw^{O(1)}\\
    &\leq\ell^2\cdot \tw^{O(1)}\cdot \sum_{i=1}^{\tw+3}\binom{\tw+3}{i}
      \left(\sum_{\iz+\ie+\ii=i}\binom{i}{\iz,\ie,\ii}\cdot 1^{\iz}\cdot (2^{(\omega+3)/2})^\ie\cdot 6^{\ii}\right)\cdot 2^{\tw+3-i}\\
    \intertext{which, by the multinomial theorem, is}
    &=\ell^2\cdot \tw^{O(1)}\cdot \sum_{i=1}^{\tw+3}\binom{\tw+3}{i}
      \cdot (7+2^{(\omega+3)/2})^i\cdot 2^{\tw+3-i}=\ell^2\cdot\tw^{O(1)}\cdot(9+2^{(\omega+3)/2})^{\tw+3}.
  \end{align*}
  Since our tree decomposition has \(n\cdot\tw^{O(1)}\)~bags,
  we conclude that
  we can solve \WsspAcr{} in \(n\cdot\ell^2\cdot\tw^{O(1)}\cdot (9+2^{(\omega+3)/2})^{\tw}\) time.  
\end{proof}

\subsection{Limits of data reduction}
\label{sec:notwkern}
In the previous section,
we presented a fixed\hyp parameter algorithm
for \sspAcr{} parameterized by treewidth.
Now
we prove that,
unless~$\unlessPK$,
\sspAcr{} has no problem kernel
of size polynomial in~$k$, $\ell$, and the treewidth of the input graph combined.
In fact,
we prove the following:

\begin{theorem}
  \label{thm:nopktwell}
  \sspTsc{}
  has no problem kernel
  with size polynomial in~$\tw+k+\ell$, even on planar graphs
  with maximum degree six,
  where \(\tw\)~is the treewidth,
  unless $\unlessPK$.
\end{theorem}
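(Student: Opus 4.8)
The plan is to rule out the polynomial kernel by an OR\hyp cross\hyp composition \citep{BJK14} into \sspAcr{} parameterized by $\tw+k+\ell$, starting from the \NP-hard problem \textsc{Hamiltonian Cycle in hexagonal grid graphs} --- the very problem already reduced from in the proof of \cref{prop:k22-hardness}. As the polynomial equivalence relation I would take ``same number of vertices'' (with malformed inputs mapped to a fixed no\hyp instance), so that I may assume the given instances $G_1,\dots,G_t$ all have exactly $n$ vertices; I then pad $t$ up to the next power of two $t'=2^d$ with $d=\lceil\log t\rceil$ by adding dummy gadgets. For each genuine $G_i$, \cref{prop:k22-hardness} yields an equivalent \sspAcr{} instance $(G_i',a_i,b_i,n+7,0)$, where $G_i'$ is planar (a hexagonal grid graph with a small planar gadget attached near its boundary), has maximum degree four, the distinguished vertices $a_i,b_i$ lie on its outer face, and every $a_i$-$b_i$-path with at most $n+7$ vertices and no neighbour inside $G_i'$ must visit \emph{all} of $G_i'$ and corresponds to a Hamiltonian cycle of $G_i$; each dummy gadget is a path on $n+7$ vertices whose two attachment vertices are interior, so it admits no such zero\hyp neighbour subpath.

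The composed graph $G^\star$ then consists of a perfect binary ``in\hyp tree'' $T^{\mathrm{in}}$ of depth $d$, rooted at a fresh global source $\sigma$ with leaves $\ell_1,\dots,\ell_{t'}$; a perfect binary ``out\hyp tree'' $T^{\mathrm{out}}$ of depth $d$, rooted at a fresh global sink $\tau$ with leaves $r_1,\dots,r_{t'}$; the gadgets $G_1',\dots,G_{t'}'$; and the bridges $\{\ell_i,a_i\}$ and $\{r_i,b_i\}$ for every $i$. I output $(G^\star,\sigma,\tau,k,\ell)$ with $k:=n+2d+9$ and $\ell:=2d$. Drawing $T^{\mathrm{in}}$ with its leaves left\hyp to\hyp right on a line, hanging a planar drawing of each $G_i'$ below $\ell_i$ with $a_i$ on top and $b_i$ at the bottom of its box, and drawing $T^{\mathrm{out}}$ mirrored underneath, one sees that $G^\star$ is planar and of maximum degree at most six (internal tree nodes have degree three, roots and leaves less, and every gadget vertex keeps its degree of at most four plus at most one incident bridge, where the two attachment vertices have tiny internal degree). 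Since $G^\star$ is obtained from two trees (treewidth one) and the planar gadgets (each on $n+7$ vertices, hence treewidth at most $n+6$) by adding bridges, its treewidth is at most $n+6$; thus $\tw(G^\star)+k+\ell=O(n+\log t)$, exactly the bound an OR\hyp cross\hyp composition requires.

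It remains to show that $(G^\star,\sigma,\tau,k,\ell)$ is a yes\hyp instance of \sspAcr{} if and only if some $G_i$ has a Hamiltonian cycle. For one direction, if some genuine $G_i$ is a yes\hyp instance, then by \cref{prop:k22-hardness} the gadget $G_i'$ has an $a_i$-$b_i$-path on all $n+7$ of its vertices having no neighbour inside $G_i'$; concatenating the $\sigma$-$\ell_i$ path in $T^{\mathrm{in}}$, the bridge $\{\ell_i,a_i\}$, this path, the bridge $\{b_i,r_i\}$, and the $r_i$-$\tau$ path in $T^{\mathrm{out}}$ gives a $\sigma$-$\tau$-path on exactly $n+2d+9=k$ vertices whose only neighbours are the $d$ off\hyp path siblings met while descending $T^{\mathrm{in}}$ and the $d$ met in $T^{\mathrm{out}}$, i.e.\ at most $\ell$ neighbours. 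Conversely, any $\sigma$-$\tau$-path $P$ in $G^\star$ must leave $T^{\mathrm{in}}$ through exactly one bridge $\{\ell_i,a_i\}$ (it is simple and the trees reach $\tau$ only through $T^{\mathrm{out}}$), traverse a simple $a_i$-$b_i$-path inside the component $G_i'$, and then climb $T^{\mathrm{out}}$ to $\tau$; the parts of $P$ inside the two perfect binary trees already contribute exactly $2d=\ell$ neighbours (one off\hyp path sibling per level on each side, the leaves being at depth $d$), so the $a_i$-$b_i$-subpath of $P$ must have no neighbour inside the connected graph $G_i'$, whence it visits all of $G_i'$, which forces $i$ to be genuine and, by \cref{prop:k22-hardness}, $G_i$ to have a Hamiltonian cycle. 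Applying the OR\hyp cross\hyp composition framework \citep{BJK14} then yields the theorem.

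The step I expect to be most delicate is this last neighbour\hyp counting: everything hinges on choosing $\ell$ so tightly that the two tree\hyp paths exhaust the whole neighbour budget, which is precisely why the trees must be perfectly balanced (hence the padding to a power of two) and why it is convenient to reuse the gadget of \cref{prop:k22-hardness}, whose defining feature is that ``zero external neighbours'' already forces a Hamiltonian traversal. A further point needing care is to verify that the displayed planar drawing is genuinely crossing\hyp free given that $a_i$ and $b_i$ can be placed on opposite sides of the outer face of $G_i'$, and that none of the bridge attachments pushes a vertex degree above six.
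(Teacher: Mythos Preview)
Your approach is essentially the same as the paper's: the paper uses exactly this two\hyp binary\hyp tree OR\hyp composition (citing \citet{LF18}, who had already established correctness, planarity, and degree~$\le 6$) and merely adds an explicit tree decomposition showing $\tw(G^\star)\le n+3$. Your choice to compose from \textsc{Hamiltonian Cycle} via the gadget of \cref{prop:k22-hardness} (so that each embedded instance has $\ell_i=0$) is a harmless specialization of their construction.

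One point deserves more care than your parenthetical ``it is simple and the trees reach~$\tau$ only through~$T^{\mathrm{out}}$'' provides: a simple $\sigma$--$\tau$ path could \emph{a priori} re-enter $T^{\mathrm{in}}$ after visiting $T^{\mathrm{out}}$, threading through several gadgets. This is in fact impossible purely by the tree structure, independent of the $k,\ell$ budgets: if $P$ visited $G'_{i_1},G'_{i_2},G'_{i_3},\dots$ in order, then the $T^{\mathrm{in}}$\hyp segment starting at $\ell_{i_2}$ is trapped strictly below $\mathrm{LCA}(\ell_{i_1},\ell_{i_2})$ (all ancestors of $\ell_{i_1}$ are already used), so $\ell_{i_3}$ lies in that subtree; but then in $T^{\mathrm{out}}$ the segment from $r_{i_3}$ must eventually climb through $\mathrm{LCA}(r_{i_1},r_{i_2})$ to reach~$\tau$, which was already consumed by the $r_{i_1}$--$r_{i_2}$ segment. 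A short induction on this ``shrinking subtree'' observation handles all odd~$c\ge3$. With that filled in, your proof is complete; the neighbour count and the dummy\hyp gadget argument are fine as stated.
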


\noindent
To prove \cref{thm:nopktwell},
we use a special kind of reduction
called \emph{cross composition}
\citep{BJK14}.

\begin{definition}[cross composition]
  \label[definition]{def:crossco}
  A \emph{polynomial equivalence relation}~$\sim$ is
  an equivalence relation over~$\Sigma^*$ such that
  \begin{itemize}
  \item there is an algorithm that decides~$x\sim y$ in polynomial time
    for any two instances~$x,y\in\Sigma^*$, and such that
  \item the index of~$\sim$ over any \emph{finite} set~$S\subseteq
    \Sigma^*$ is polynomial in $\max_{x\in S}|x|$.
  \end{itemize}

  \noindent
  A language~$K\subseteq\Sigma^*$ \emph{cross-composes} into a
  parameterized language~$L\subseteq\Sigma^*\times\mathbb N$ if there
  is a polynomial\hyp time algorithm, called \emph{cross composition},
  that, given a sequence~$x_1,\ldots,x_p$ of $p$~instances that are
  equivalent under some polynomial equivalence relation, outputs
  an instance~$(x^*,k)$ such that
  \begin{itemize}
  \item $k$~is bounded by a polynomial in~$\max^p_{i=1}|x_i|+\log p$ and
  \item $(x^*,k)\in L$ if and only if there is an~$i\in\{1,\dots,p\}$
    such that~$x_i\in K$.
  \end{itemize}
\end{definition}

\noindent
Cross compositions can be used
to rule out
problem kernels of polynomial size
using the following result:

\begin{proposition}[\citet{BJK14}] \label{thm:Bod-No-Poly-Kernel}
  If an NP-hard language~$K\subseteq \Sigma^*$ cross-composes into the
  parameterized language~$L\subseteq \Sigma^*\times \mathbb{N}$, then
  there is no polynomial-size problem kernel for~$L$ unless $\unlessPK$.
\end{proposition}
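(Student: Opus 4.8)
The plan is to argue by contraposition, as is standard for kernelization lower bounds: assume toward a contradiction that $L$ has a problem kernel whose size~$f$ is polynomial, and combine this kernelization with the given cross-composition to build an \emph{OR-distillation} of the NP-hard language~$K$. An OR-distillation of~$K$ is a polynomial-time algorithm that maps any sequence $x_1,\dots,x_t$ of strings of length at most~$s$ to a single string~$y$ whose length is $\poly(s)$ --- crucially independent of~$t$ --- and for which $y$ lies in some fixed target language if and only if some~$x_i\in K$. The result I would invoke as a black box is the distillation lower bound of Fortnow and Santhanam, which underpins the framework of \citet{BDFH09,BJK14}: if an NP-hard language admits an OR-distillation into \emph{any} language, then $\unlessPK$. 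Since deriving $\unlessPK$ from the assumed kernel is precisely what the proposition permits, exhibiting such a distillation completes the proof.

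To build the distillation, I would first delete duplicate instances from $x_1,\dots,x_t$; this leaves the disjunction ``some $x_i\in K$'' unchanged, and afterwards the number of distinct instances satisfies $t\le 2^{s+1}$, so $\log t\le s+1$. Next I would invoke the polynomial equivalence relation~$\sim$ of \cref{def:crossco}: its index over the finite set $\{x_1,\dots,x_t\}$ is bounded by a polynomial $r=\poly(s)$, so the instances fall into at most~$r$ equivalence classes. Within each class I would run the cross-composition, obtaining for the $j$-th class a single instance $(x^*_j,k_j)$ of~$L$ with parameter $k_j\le\poly(s+\log t)\le\poly(s)$ that is a \yes-instance if and only if its class contains a member of~$K$. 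I would then apply the assumed kernelization to each $(x^*_j,k_j)$, shrinking it to an equivalent instance $(x'_j,k'_j)$ whose total size is bounded by $f(k_j)\le\poly(s)$.

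Finally I would package the $r=\poly(s)$ kernelized instances into one string $y=\langle (x'_1,k'_1),\dots,(x'_r,k'_r)\rangle$ of length $\poly(s)\cdot\poly(s)=\poly(s)$, and take the target to be the ``OR-version'' $L'$ consisting of all encodings of lists of $L$-instances at least one of which is a \yes-instance. Chaining the three equivalences --- the kernel preserves membership in~$L$, the cross-composition realizes the disjunction over each class, and the classes cover all inputs --- yields $y\in L'$ if and only if some $x_i\in K$, so the whole map is an OR-distillation of~$K$ into~$L'$. I deliberately take the target to be this artificial OR-language rather than reducing back into~$K$, since the distillation lower bound tolerates an arbitrary (indeed even undecidable) target and therefore needs no assumption $L\in\NP$.

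The delicate point, and the step I expect to require the most care, is the size and time accounting that keeps the output length polynomial in~$s$ and genuinely independent of~$t$. Two features do the work. The polynomial equivalence relation guarantees only $\poly(s)$ classes, so we kernelize $\poly(s)$ composed instances rather than~$t$ of them; and the deduplication bounds $\log t$ by $\poly(s)$, absorbing the $\log p$ term allowed in the cross-composition's parameter into $\poly(s)$. Note also that the composed instances $x^*_j$ may themselves be as large as $\poly(t\cdot s)$, hence exponential in~$s$; this is harmless, because the distillation is only required to run in time polynomial in the \emph{total input length} $\sum_i|x_i|$, whereas it is the kernel --- whose output is bounded by $f(k_j)$ in the parameter $k_j=\poly(s)$ and not by $|x^*_j|$ --- that performs the actual compression down to $\poly(s)$ bits. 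Getting either the class bound or the $\log t$ bound wrong would leave a residual dependence on~$t$ in the output and collapse the argument; everything else is routine bookkeeping layered on top of the cited distillation theorem.
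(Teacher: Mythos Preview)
The paper does not prove this proposition; it is quoted verbatim as a black-box result from \citet{BJK14}. Your sketch is essentially the standard proof from that reference: compose within each equivalence class, kernelize each composed instance, package the results as an OR-instance, and invoke the Fortnow--Santhanam distillation lower bound. The argument is correct, including the two accounting points you single out (deduplication to bound $\log t$ by $O(s)$, and the polynomial equivalence relation to bound the number of classes by $\poly(s)$).
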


\noindent
Using a cross composition,
\citet{LF18} proved
that \sspAcr{} on planar graphs of maximum degree six
does not admit
a problem kernel with size polynomial in~$k+\ell$.
To prove \cref{thm:nopktwell},
we show that the graph
created by their cross composition
has treewidth at most~$n+3$,
where~$n$ is the number of vertices
in each input instance to their cross composition.
To this end,
we briefly describe
their composition 
(see \cref{fig:LFconstr} for an illustrative example):
 
\begin{construction}
  \label[construction]{constr:twnopkern}
  \begin{figure}[t]
  \centering
   \begin{tikzpicture}

    \tikzset{decorate sep/.style 2 args=
    {decorate,decoration={shape backgrounds,shape=circle,shape size=#1,shape sep=#2}}}

    \tikzstyle{xnode}=[circle, scale=1/2, fill, draw];
    \tikzstyle{tnode}=[circle, scale=1/2, fill, draw];
    \tikzstyle{xedge}=[thick,-];

    \def\xr{1}
    \def\yr{1}
    \def\txr{1.5}
    \def\tyr{1}

    \newcommand{\LFbintree}[4]{
      \node (#21) at (0,0)[tnode,label=-90:{$#4=#2_1$}]{};
      \node (#22) at (1*#1*\txr,-1*\tyr)[tnode,label=-90:{$#2_2$}]{};
      \node (#31) at (2*#1*\txr,-1.5*\tyr)[tnode,label=-90:{$#2_3=#3_1$}]{};
      \node (#32) at (2*#1*\txr,-0.5*\tyr)[tnode,label=-90:{$#2_4=#3_2$}]{};
      \node (#25) at (1*#1*\txr,1*\tyr)[tnode,label=-90:{$#2_5$}]{};
      \node (#33) at (2*#1*\txr,0.5*\tyr)[tnode,label=-90:{$#2_6=#3_3$}]{};
      \node (#34) at (2*#1*\txr,1.5*\tyr)[tnode,label=-90:{$#2_7=#3_4$}]{};
      \draw[xedge] (#21) -- (#22) -- (#31);
      \draw[xedge] (#22) -- (#32);
      \draw[xedge] (#21) -- (#25) -- (#33);
      \draw[xedge] (#25)-- (#34);
    }

    \newcommand{\LFinsta}[1]{
    \node at (0,0)[minimum width=3*\xr cm,ellipse, draw]{$G_#1$};
    \node (s#1) at (-1.5*\xr,0)[xnode,label=-90:{$s_#1$}]{};
    \node (t#1) at (1.5*\xr,0)[xnode,label=-90:{$t_#1$}]{};
    }

    \LFbintree{1}{g}{a}{s};
    \begin{scope}[xshift=12*\xr cm]\LFbintree{-1}{h}{b}{t};\end{scope}
    \begin{scope}[xshift=6*\xr cm,yshift=-1.5*\yr cm]\LFinsta{1};\end{scope}
    \begin{scope}[xshift=6*\xr cm,yshift=-0.5*\yr cm]\LFinsta{2};\end{scope}
    \begin{scope}[xshift=6*\xr cm,yshift=0.5*\yr cm]\LFinsta{3};\end{scope}
    \begin{scope}[xshift=6*\xr cm,yshift=1.5*\yr cm]\LFinsta{4};\end{scope}

    \foreach \x in {1,...,4}{
      \draw[xedge] (a\x) -- (s\x);
      \draw[xedge] (t\x) -- (b\x);
    }
    \end{tikzpicture}
    \caption{Illustrative example of~\cref{constr:twnopkern} with~$p=4$ instances.}
    \label{fig:LFconstr}
  \end{figure}
  For \(i\in\{1,\dots,p\}\),
  let
  \((G_i,s_i,t_i,k_i,\ell_i)\)~be
  instances of \sspAcr{}
  such that
  each \(G_i\) is a planar graph
  of maximum degree five
  and has a planar embedding with
  \(s_i\) and~\(t_i\) on the outer face.
  Without loss of generality,
  \(p\)~is a power of two
  (otherwise, we pad the list of input instances with no\hyp instances),
  the vertex sets
  of the graphs~\(G_1,\dots,G_p\) are pairwise disjoint,
  and, for all \(i\in\{1,\dots,p\}\),
  one has \(|V(G_i)|=n\), \(\ell_i=\ell\), and \(k_i=k\)
  (this is a polynomial equivalence relation).
  We construct an instance \((G,s,t,k',\ell')\)
  of \sspAcr{},
  where
  \begin{align*}
    k'&:=k+2\log p, & \ell'&:=\ell + 2\log p-1,
  \end{align*}
  and the graph~\(G\) is as follows.
  Graph~\(G\) consists of
  \(G_1,\dots,G_p\)
  and
  two rooted balanced binary trees~$T_s$ and~$T_t$
  with roots~$s$ and~$t$, respectively, each having~$p$ leaves.
  Let $g_1,\ldots,g_{2p-1}$ and $h_1,\ldots,h_{2p-1}$~denote
  the vertices of~$T_s$ and~$T_t$
  enumerated by a depth-first search starting at~$s$ and~$t$,
  respectively.
  Moreover,
  let $a_1,\ldots,a_p$ and $b_1,\ldots,b_p$~denote
  the leaves of~$T_s$ and~$T_t$
  as enumerated in each depth-first search mentioned before.
  Then,
  for each~$i\in\{1,\dots,p\}$,
  graph~\(G\) contains
  the edges~$\{a_i,s_i\}$ and~$\{b_i,t_i\}$.
  This finishes the construction.
\end{construction}

\begin{proof}[Proof of \cref{thm:nopktwell}]
 \citet{LF18} already proved
 that \cref{constr:twnopkern}
 is a correct cross composition.
 Moreover,
 obviously,
 \(k',\ell'\in O(n+\log p)\).
 Thus,
 to prove \cref{thm:nopktwell},
 we show
 $\tw(G)\leq 3n+3$
 for the graph~\(G\)
 constructed by \cref{constr:twnopkern}.
 To this end,
 we give a tree decomposition of width at most~\(3n+3\) for~$G$
 (recall \cref{def:tw} of tree decompositions)
 as illustrated in \cref{fig:tdc}:
 \begin{figure*}[t]
  \centering
  \begin{tikzpicture}
    \usetikzlibrary{decorations.pathreplacing,calc}
    \tikzstyle{tnode}=[circle, fill, scale=1/2,draw]
    \def\xr{0.88}
    \def\yr{0.4}

    \newcommand{\tree}[4]{
	    \node (r) at (#1,#2)[tnode,label=-90:{$#3$},label=90:{\fbox{$\{#3\}$}}]{};
	    \node (rl) at (#1-2*\xr,#2-2*\yr)[tnode,label=-90:{$#4_2$},label=135:{\fbox{$\{#3,#4_2\}$}}]{};
	    \node (rr) at (#1+2*\xr,#2-2*\yr)[tnode,label=-90:{$#4_5$},label=45:{\fbox{$\{#3,#4_5\}$}}]{};
	    \draw (r) -- (rl);		\draw (r) -- (rr);
	    \node (rll) at (#1-3*\xr,#2-4*\yr)[tnode,label=135:{$#4_3$},label=-90:{\fbox{$\{#4_2,#4_3\}$}}]{};
	    \node (rlr) at (#1-1*\xr,#2-4*\yr)[tnode,label=45:{$#4_4$},label=-90:{\fbox{$\{#4_2,#4_4\}$}}]{};
	    \draw (rl) -- (rll);		\draw (rl) -- (rlr);
	    \node (rrl) at (#1+1*\xr,#2-4*\yr)[tnode,label=135:{$#4_6$},label=-90:{\fbox{$\{#4_5,#4_6\}$}}]{};
	    \node (rrr) at (#1+3*\xr,#2-4*\yr)[tnode,label=45:{$#4_7$},label=-90:{\fbox{$\{#4_5,#4_7\}$}}]{};
	    \draw (rr) -- (rrl);		\draw (rr) -- (rrr);

    }

    \newcommand{\treeX}[6]{
	    \node (r) at (#1,#2)[tnode,label=-90:{$#3$},label=90:{\fbox{$\{#3,#5\}$}}]{};
	    \node (rl) at (#1-4*\xr,#2-2*\yr)[tnode,label=-90:{$#4_2$},label=135:{\fbox{$\{#3,#4_2,#5,#6_2\}$}}]{};
	    \node (rr) at (#1+4*\xr,#2-2*\yr)[tnode,label=-90:{$#4_5$},label=45:{\fbox{$\{#3,#4_5,#5,#6_5\}$}}]{};
	    \draw (r) -- (rl);		\draw (r) -- (rr);
	    \node (rll) at (#1-6*\xr,#2-4*\yr)[tnode,label=135:{$#4_3$},label=-90:{\fbox{$\{#4_2,#4_3,#6_2,#6_3,V_1\}$}}]{};
	    \node (rlr) at (#1-2*\xr,#2-4*\yr)[tnode,label=45:{$#4_4$},label=-90:{\fbox{$\{#4_2,#4_4,#6_2,#6_4,V_2\}$}}]{};
	    \draw (rl) -- (rll);		\draw (rl) -- (rlr);
	    \node (rrl) at (#1+2*\xr,#2-4*\yr)[tnode,label=135:{$#4_6$},label=-90:{\fbox{$\{#4_4,#4_6,#6_4,#6_6,V_3\}$}}]{};
	    \node (rrr) at (#1+6*\xr,#2-4*\yr)[tnode,label=45:{$#4_7$},label=-90:{\fbox{$\{#4_4,#4_7,#6_4,#6_7,V_4\}$}}]{};
	    \draw (rr) -- (rrl);		\draw (rr) -- (rrr);
    }
    \begin{scope}[scale=0.66,transform shape]
      \tree{0}{0}{s}{g}
    \end{scope}
    \begin{scope}[xshift=9*\xr cm,scale=0.66,transform shape]
      \tree{0}{0}{t}{h}
    \end{scope}
    \node at (-3*\xr,1.5*\yr)[]{(a)};
    \node at (6*\xr,1.5*\yr)[]{(b)};
    \node at (-3*\xr,-5*\yr)[]{(c)};
    \treeX{4.5*\xr}{-6.25*\yr}{s}{g}{t}{h}
  \end{tikzpicture}
  \caption{Overview on the tree decompositions (sets in boxes refer to the bags), exemplified for~$p=4$ input instances. (a) and (b) display the tree decomposition for~$T_s$ and~$T_t$, respectively.
  (c) displays the tree decomposition~$\bbT$ (and~$\bbT_{st}$ when removing~$V_i^*$ for \(i\in\{1,\dots,4\}\)). 
  Here, $V_i$~represents the set of vertices in the input graph~$G_i$.
  }
  \label{fig:tdc}
 \end{figure*}

 First,
 we construct a tree decomposition~$\bbT_s=(T_s,\beta_s)$
 of~$T_s$ with bags as follows.
 Let~$\parent_{T_s}(v)$ denote the parent of~$v\in V(T_s)$
 (where~$\parent_{T_s}(s)=s$).
 For each~$v\in V(T_s)$,
 let~$\beta_s(v):=\{v,\parent_{T_s}(v)\}$.
 Then $\bbT_s$~is a tree decomposition of width one.
 Let $\bbT_t=(T_t,\beta_t)$~be
 the tree decomposition for~$T_t$ constructed analogously.

 We now construct
 a tree decomposition~$\bbT_{st}=(T,\beta_{st})$
 for the disjoint union of~$T_s$ and~$T_t$
 as follows:
 take \(T=T_s\)
 and,
 for each \(i\in\{1,\dots,2p-1\}\),
 let \(\beta_{st}(g_i):=\beta_s(g_i)\cup \beta_t(h_i)\),
 where \(g_i\) and \(h_i\)
 are the vertices of~\(T_s\) and~\(T_t\)
 according to the depth-first labeling
 in \cref{constr:twnopkern}.
 As~$\bbT_s$ and~$\bbT_t$
 are tree decompositions of
 two vertex\hyp disjoint
 trees~$T_s$ and~$T_t$, respectively,
 and \(\{g_i,g_j\}\)~is an edge of~\(T_s\)
 if and only if
 \(\{h_i,h_j\}\)~is an edge of~\(T_t\),
 $\bbT_{st}$~is a tree decomposition
 for the disjoint union of~$T_s$ and~$T_t$.
 The width of \(\bbT_{st}\) is three.
 
 Now,
 recall that,
 for \(i\in\{1,\dots,p\}\),
 the graph~$G_i$ in~$G$
 is adjacent to exactly one leaf~\(a_i\) of~$T_s$
 and one leaf~\(b_i\) of~$T_t$
 (via paths on $k$~vertices each).
 Hence,
 we obtain a tree decomposition~$\bbT$ of~$G$
 from~$\bbT_{st}$ by,
 for each \(i\in\{1,\dots,p\}\),
 adding~$V(G_i)$ 
 to bag~$\beta(a_i)$,
 which contains both~\(a_i\) and~\(b_i\).
 The width of~$\bbT$ is at 
 most~$n+3$, 
 and hence, we have~$\tw(G)\leq n+3$.
\end{proof}

\subsection{Subexponential-time algorithms using treewidth}
\label{sec:subexp}
\label{sec:plansubexp}
In this section,
we briefly describe the implications of \cref{thm:twsingexp}
to subexponential\hyp time algorithms for \sspAcr{}
on restricted graph classes.
For example, road networks
have small crossing number.
We can use \cref{thm:twsingexp}
to solve \sspAcr{} in subexponential time
on graphs with \emph{constant} crossing number,
which
are \emph{\(H\)-minor free}
for some graph~\(H\) \citep{BFM06}.

\begin{definition}[graph minor]
  A graph~\(H\)
  is a \emph{minor}
  of a graph~\(G\)
  if
  \(H\)~can be obtained
  from~\(G\)
  by a sequence of
  vertex deletions,
  edge deletions,
  and edge contractions.
  If a graph~\(G\)
  does not contain~\(H\)
  as a minor,
  then \(G\)~is
  said to be
  \emph{\(H\)-minor free}.
\end{definition}

\noindent
\citet{BFM06} showed that,
if a graph~\(G\)
contains \(K_{r,r}\)~as a minor,
then the crossing number of~\(G\)
is \(\crn(G)\geq \frac12(r-2)^2\).
Thus,
any graph~\(G\)
is \(K_{r,r}\)-minor free
for \(r>\sqrt{2\crn(G)}+2\),
which goes in line
with the well\hyp known fact
that planar graphs
are \(K_{3,3}\)-minor free \citep{Wag37}.
\citet{DH08} showed that,
for any graph~\(H\),
all \(H\)-minor free graphs
have treewidth~\(\tw\in O(\sqrt n)\).%
\footnote{In fact,
  they showed \(\tw\in O(\sqrt q)\) for any graph parameter~\(q\)
  that is \(\Omega(p)\) on a \((\sqrt p\times\sqrt p)\)-grid
  and does not increase when taking minors.
For example, the vertex cover number or feedback vertex number.}
Hence,
\cref{thm:twsingexp}
immediately yields:

\begin{corollary}
  \label{thm:plansubexp}
  \sspTsc{} is solvable
  in \(2^{O(\sqrt{n})}\)~time
  on graphs
  with constant crossing number
  (and in all graphs excluding some fixed minor).
\end{corollary}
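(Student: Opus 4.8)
The plan is to chain together the structural facts assembled just before the statement with \cref{thm:twsingexp}. By \cref{rrule:onecomponent} we may assume the input graph~$G$ is connected; this rule only deletes vertices and hence cannot increase the treewidth. We may also assume $\ell\le n$, since the open neighborhood of any vertex set has at most $n$ vertices, so an instance with $\ell\ge n$ is equivalent to the one obtained by setting $\ell:=n$. Now, if $G$ excludes some fixed minor~$H$, then $\tw(G)\in O(\sqrt n)$ by the theorem of \citet{DH08} quoted above. Feeding this into \cref{thm:twsingexp}, \sspAcr{} is solvable in $2^{O(\tw)}\cdot\ell^2\cdot n=2^{O(\sqrt n)}\cdot n^3=2^{O(\sqrt n)}$ time. (If one wants to spell out that \cref{thm:twsingexp} needs no tree decomposition handed to it, recall from the discussion preceding that theorem that a decomposition of width~$O(\tw)$ is computable in $2^{O(\tw)}\cdot n=2^{O(\sqrt n)}$ time by \citet{BDD+16}, which is already absorbed into the claimed bound.)

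For the crossing-number part, I would invoke the bound of \citet{BFM06}: if $G$ contains $K_{r,r}$ as a minor, then $\crn(G)\ge\tfrac12(r-2)^2$. Hence a graph of crossing number at most a fixed constant~$c$ is $K_{r,r}$-minor-free for every integer $r>\sqrt{2c}+2$; fixing one such~$r$ (a constant, since $c$ is), the graph excludes the fixed minor~$K_{r,r}$, and the first paragraph applies. The general ``excluding some fixed minor'' case is exactly the hypothesis of the first paragraph, so nothing further is needed.

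I do not expect a genuine obstacle here; the corollary is a direct consequence of \cref{thm:twsingexp} and the cited treewidth bound. The only point that deserves a sentence of care is that the $\ell^2$ factor in \cref{thm:twsingexp} is merely pseudo-polynomial for the weighted variant \WsspAcr{}, whereas for the unweighted \sspAcr{} treated in the corollary it is a true polynomial because of the assumption $\ell\le n$; this is what makes the overall running time genuinely $2^{O(\sqrt n)}$ rather than only subexponential in the bit-length.
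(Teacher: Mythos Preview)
Your proposal is correct and follows essentially the same route as the paper: chain the \citet{BFM06} bound (constant crossing number implies $K_{r,r}$-minor-free for constant~$r$) with the \citet{DH08} treewidth bound $\tw\in O(\sqrt n)$ for $H$-minor-free graphs, then plug into \cref{thm:twsingexp}. The paper in fact presents this as an immediate consequence without a formal proof block; your extra remarks about $\ell\le n$ and the $2^{O(\tw)}\cdot n$ tree-decomposition computation are correct and harmless.
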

\noindent
Complementing \cref{thm:plansubexp},
we now show that,
unless the Exponential Time Hypothesis (ETH) fails,
\cref{thm:plansubexp}
can be neither significantly improved
in planar graphs
nor generalized to general graphs.

\begin{hypothesis}[Exponential Time Hypothesis (ETH),
  \citet{IPZ01}]
  There is a constant~\(c\) such that
  \(n\)-variable \textsc{3-Sat}
  cannot be solved in \(2^{c(n+m)}\)~time.
\end{hypothesis}

\noindent
The ETH was introduced by
\citet{IPZ01}
and since then
has been used
to prove running time lower bounds
for various NP-hard problems
(we refer to \citet[Chapter~14]{CFK+15}
for an overview).
We use it to prove the following.

\begin{observation}
  \label{thm:ethlb}
  Unless the Exponential Time Hypothesis fails,
  \sspTsc{} has no \(2^{o(\sqrt n)}\)-time algorithm
  in planar graphs
  and no \(2^{o(n+m)}\)-time algorithm in general.
\end{observation}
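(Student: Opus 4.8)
The plan is to obtain both lower bounds by composing a reduction that appears already in the paper with known fine\hyp grained lower bounds for \textsc{Hamiltonian Cycle}, exploiting that this reduction changes the instance size only by an additive constant and hence preserves both the quantity $n+m$ and the quantity $\sqrt n$ up to constant factors.

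First I would observe that the construction used in the proof of \cref{prop:k22-hardness} does not actually need the input to be a hexagonal grid graph: starting from \emph{any} connected graph $G$ containing a vertex~$x$ of degree at least two, the very same construction (adding the seven vertices $s,t,a,b,c,d,e$ and the nine edges shown in \cref{fig:k22-hardness}) produces an \sspAcr{} instance $(G',s,t,k,\ell)$ with $k=|V(G)|+7$ and $\ell=0$ that is a \yes-instance if and only if $G$ has a Hamiltonian cycle; the grid\hyp graph hypothesis is only invoked there to certify $K_{r,r}$\ssfree{}ness, which is irrelevant here. Since $G'$ has $|V(G)|+7$ vertices and $|E(G)|+9$ edges, this is a linear\hyp size reduction. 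It is well known — a consequence of ETH together with standard linear\hyp size reductions — that \textsc{Hamiltonian Cycle} in subcubic graphs has no $2^{o(n+m)}$\hyp time algorithm unless ETH fails; composing that fact with the above reduction (whose output still has $m'=O(n')$) therefore rules out a $2^{o(n+m)}$\hyp time algorithm for \sspAcr{} on general graphs unless ETH fails.

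For the planar part I would run the same construction but start from an instance of \textsc{Hamiltonian Cycle} in hexagonal grid graphs, exactly as in \cref{prop:k22-hardness}. Hexagonal grid graphs are planar, and the seven new vertices together with their incident edges can be drawn in the outer face around $x$, $y$, $z$, so the produced graph $G'$ is planar; and again $|V(G')|=|V(G)|+7$. The remaining ingredient is that \textsc{Hamiltonian Cycle} in hexagonal grid graphs admits no $2^{o(\sqrt n)}$\hyp time algorithm unless ETH fails. This follows from the ETH\hyp tightness of \textsc{Planar 3-Sat} — which has no $2^{o(\sqrt N)}$\hyp time algorithm on size\hyp$N$ instances unless ETH fails, because the standard crossover\hyp gadget reduction from \textsc{3-Sat} increases the size only quadratically — together with the fact that the \NP\hyp hardness reduction of \citet{AFI+09} from planar instances to Hamiltonicity in hexagonal grid graphs is of linear size. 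Composing this with our additive\hyp constant reduction yields the claimed $2^{o(\sqrt n)}$ lower bound for \sspAcr{} on planar graphs.

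I expect the delicate step to be this last one: confirming that the chain of reductions leading from \textsc{Planar 3-Sat} down to Hamiltonicity in hexagonal grid graphs is of \emph{linear} (not merely polynomial) size, so that the $\sqrt{\cdot}$ in the running\hyp time lower bound is preserved — any polynomial blow\hyp up here would weaken the exponent. The general\hyp graph statement, by contrast, is routine, since every reduction it uses is of linear size and the $2^{o(n+m)}$ bound is robust to such reductions.
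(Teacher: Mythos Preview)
Your approach is workable but more circuitous than the paper's. The paper does not reuse the \cref{prop:k22-hardness} construction at all; it simply cites the reduction of \citet{LF18} from \textsc{Hamiltonian Path} to \sspAcr{}, which already preserves planarity and changes $n$ and $m$ by an additive constant, and then invokes the textbook ETH lower bounds for \textsc{Hamiltonian Path} in general and in planar graphs \citep[Theorems~14.6 and~14.9]{CFK+15}. That is the whole proof.

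Your route via \textsc{Hamiltonian Cycle} and the \cref{prop:k22-hardness} gadget is legitimate, but the detour through hexagonal grids is unnecessary and creates exactly the fragile link you worry about. You do not need an ETH lower bound specifically for hexagonal grid graphs: the $2^{o(\sqrt n)}$ lower bound for \textsc{Hamiltonian Cycle} in \emph{planar} graphs is already standard (same reference as above), so it suffices to argue that the gadget preserves planarity when applied to an arbitrary planar input. For that, note that you must choose $y$ and $z$ to be two neighbours of~$x$ lying on a common face with~$x$ (two cyclically consecutive neighbours in a fixed planar embedding always do), since the gadget attaches to all three of $x,y,z$; your proposal glosses over this choice. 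With that adjustment, both of your claimed bounds go through without ever touching \citet{AFI+09}, and the question of whether their reduction is linear\hyp size becomes moot.
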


\begin{proof}
  Assume that there is
  (i) a \(2^{o(\sqrt n)}\)-time algorithm
  for \sspAcr{} in planar graphs
  or 
  (ii) a \(2^{o(n+m)}\)-time algorithm
  for \sspAcr{} in general graphs.
  \citet{LF18} give
  a polynomial\hyp time many\hyp one reduction
  from \textsc{Hamiltonian Path}
  to \sspAcr{}
  that maintains planarity
  and increases
  the number of vertices and edges
  by at most a constant.
  Thus,
  in case of (i) 
  we get a \(2^{o(\sqrt{n})}\)-time algorithm
  for \textsc{Hamiltonian Path}
  in planar graphs
  and in case of (ii) 
  we get a \(2^{o(n+m)}\)-time algorithm
  in general graphs.
  This contradicts ETH
  \citep[Theorems~14.6 and~14.9]{CFK+15}.
\end{proof}

\section{Graphs with small feedback sets}
\label{sec:fes}
In the previous section,
we studied the parameterized complexity of \sspAcr{}
with respect to the treewidth parameter.
and showed that \sspAcr{}
has no problem kernel with size polynomial
in the treewidth of the input graph.
In this section,
we study polynomial\hyp size
kernelizability of \sspAcr{}
with respect to other parameters that measure
the tree\hyp likeness of a graph:
the \emph{feedback vertex number~\(\fvs\)}
and \emph{feedback edge number~\(\fes\)}.
Graphs in which these parameters are small
arise as waterways:
ignoring the few man-made canals,
natural river networks form forests \citep{Gia10}.

In \cref{sec:feskern},
we prove problem kernels with size 
polynomial in~\(\fvs+k+\ell\) (\cref{sssec:fvskell}) and 
polynomial in~\(\fes\) (\cref{sssec:fes}).
In \cref{sec:fvsnokern},
we show that
\sspAcr{} does not allow for problem kernels
of size polynomial in~\(\fvs+\ell\)
unless $\unlessPK$. %

\subsection{Efficient data reduction}
\label{sec:feskern}
In this section,
we prove problem kernels with size 
polynomial in~\(\fvs+k+\ell\) (\cref{sssec:fvskell}) and 
polynomial in~\(\fes\) (\cref{sssec:fes}).
Our data reduction rules
will delete leaves from trees
and shrink their 
paths consisting of degree-two vertices,
storing information about the changes in vertex weights.
That is,
the result of our data reduction will be
an instance of \WsspTsc{}.
These instances,
which we will call \emph{\simple{}},
satisfy a set of properties
that allow us to strip them of weights efficiently.

\begin{definition}
  \label{def:simplevwssp}
 An instance $(G,s,t,k,\ell,\lambda,\kappa,\eta)$
 of \WsspAcr{} with \(G=(V,E)\) is called \emph{\simple} 
 if there is a set~\(A\subseteq V\) such that
 \begin{enumerate}[(i)]
 \item $\kappa(s)=\kappa(t)=1$,
 \item $\lambda(v)=1$ for all~$v\in V$,

 \item $\eta(v)>\ell$ and~$\kappa(v)=1$ for all~$v\in A$, and
 \item in \(G-A\), every vertex~$v$ with $\kappa(v)>1$
   has exactly two neighbors~\(u\) and~\(w\),
   which have degree at most two, are distinct from~$s$ and~$t$, and~\(\kappa(u)=\kappa(w)=1\).
 \end{enumerate}
\end{definition}

\noindent
For the sake of readability, for any set~$W$ of vertices we also write~$\kappa(W)$ for~$\sum_{v\in W}\kappa(v)$ (analogously for~$\eta$).
Next we show that for any given simple instance of \WsspAcr{}, we can compute in linear time an equivalent instance of \sspAcr{} whose number of vertices only depends on~$\kappa$ and~$\eta$.
\begin{proposition}
  \label{prop:Hssp2ssp}
  Any \simple{} instance~$(G,s,t,k,\ell,\lambda,\kappa,\eta)$
  of \WsspAcr{}
  with \(G=(V,E)\) and given~\(A\subseteq V\)
  can be reduced to an equivalent instance of \sspAcr{}
  with at most $M:= \kappa(V) + \eta(V)$~vertices
  in time linear in~$M+|E|$.
\end{proposition}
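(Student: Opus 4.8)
The plan is to turn the \simple{} instance into an equivalent \sspAcr{} instance by \emph{uncontracting} the $\kappa$-heavy vertices into paths and \emph{de-weighting} the $\eta$-weights with pendant vertices (the $(\ell+1)$-pendant device already used in \cref{constr:vcnopk} and in \cref{prop:k22-hardness}). Given $(G,s,t,k,\ell,\lambda,\kappa,\eta)$ with its set $A$, I would first dispose of the trivial case: if $\eta(s)>\ell$ or $\eta(t)>\ell$ (in particular if $s\in A$ or $t\in A$), then no $s$-$t$-path is feasible and we output a constant-size no-instance. Otherwise build $G'=(V',E')$ as follows. Replace each $v\in V$ by a path $P_v=v^1v^2\cdots v^{\kappa(v)}$ on exactly $\kappa(v)$ vertices (a single vertex when $\kappa(v)=1$), and attach $\min\{\eta(v),\ell+1\}$ fresh degree-one vertices to $v^1$. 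For an edge $\{u,v\}\in E$ with $\kappa(u)=\kappa(v)=1$, keep the edge $\{u,v\}=\{u^1,v^1\}$. For a $\kappa$-heavy vertex $v$ — which by \cref{def:simplevwssp}(iii) is not in $A$, and which by \cref{def:simplevwssp}(iv) has exactly two neighbours $a,b$ in $G-A$, both of $\kappa$-weight one and distinct from $s,t$ — add the edges $\{a,v^1\}$ and $\{b,v^{\kappa(v)}\}$, and for every neighbour $c\in A$ of $v$ add the edge $\{c,v^1\}$. These cases cover every edge of $G$: a neighbour of a $\kappa$-heavy vertex lies in $A$ or is one of its two $G-A$-neighbours, and in both sub-cases has $\kappa$-weight one, so two $\kappa$-heavy vertices are never adjacent. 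The output is the \sspAcr{} instance $(G',s,t,k,\ell)$. Its vertex count is $\sum_v\kappa(v)+\sum_v\min\{\eta(v),\ell+1\}\le\kappa(V)+\eta(V)=M$, and — precomputing, by one scan of each adjacency list, the two $G-A$-neighbours of every $\kappa$-heavy vertex, plus an $A$-membership array — the whole construction runs in $O(M+|E|)$ time.

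For correctness the central structural fact is that any feasible $s$-$t$-path $P'$ of $G'$ meets every gadget path $P_v$ all-or-nothing: its interior vertices $v^2,\dots,v^{\kappa(v)-1}$ have degree two, its ends $v^1,v^{\kappa(v)}$ are not $s$ or $t$ (a nontrivial $P_v$ forces $v$ to be $\kappa$-heavy, hence $v\notin\{s,t\}$), and the only neighbours of $v^1$ and $v^{\kappa(v)}$ that can possibly lie on a path are their neighbours inside $P_v$ together with the special neighbours $a$ and $b$ — pendants have degree one, and each $A$-gadget main vertex carries $\ell+1$ pendants, so no such vertex can be on $P'$ without violating $|N(V(P'))|\le\ell$. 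Consequently $P'$ contains no pendant and no $A$-gadget vertex, $V(P')=\bigcup_{v\in U}V(P_v)$ for some $U\subseteq V\setminus A$, and contracting the gadget paths yields a simple $s$-$t$-path $P$ of $G$ with $V(P)=U$; conversely every simple $s$-$t$-path of $G$ that avoids $A$ — and every feasible path for \WsspAcr{} does, since \cref{def:simplevwssp}(iii) puts $\eta$-weight $>\ell$ on $A$ — expands back in this way. It then remains to match the two budgets. Clearly $|V(P')|=\sum_{v\in U}\kappa(v)$. For the neighbourhood, a vertex of $G'$ outside $V(P')$ is adjacent to $V(P')$ iff it is a pendant of some $v\in U$, or a gadget / $A$-gadget vertex adjacent to some used gadget path; using the all-or-nothing property again (a $w\notin U$ contributes none of its gadget vertices to $N(V(P'))$, since a $P'$ touching a special neighbour of $w$ would swallow $P_w$) one gets $|N_{G'}(V(P'))|=\sum_{v\in U}\min\{\eta(v),\ell+1\}+|N_G(U)|$, and $|N_G(U)|=\sum_{v\in N_G(U)}\lambda(v)$ by \cref{def:simplevwssp}(ii). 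Since $\min\{\eta(v),\ell+1\}=\eta(v)$ when $\eta(v)\le\ell$, whereas both $\sum_{v\in U}\min\{\eta(v),\ell+1\}$ and $\sum_{v\in U}\eta(v)$ exceed $\ell$ as soon as some $v\in U$ has $\eta(v)>\ell$, this yields $|V(P')|\le k\iff\sum_{v\in U}\kappa(v)\le k$ and $|N_{G'}(V(P'))|\le\ell\iff\sum_{v\in U}\eta(v)+\sum_{v\in N_G(U)}\lambda(v)\le\ell$; that is, $P'$ is feasible for \sspAcr{} iff $P$ is feasible for \WsspAcr{}.

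I expect the main obstacle to be the neighbourhood bookkeeping in the ``$G'$ feasible $\Rightarrow$ $G$ feasible'' direction: one must be sure that hanging \emph{all} of the $\eta$-pendants and \emph{all} of the re-routed $A$-edges of $v$ on the single endpoint $v^1$ of $P_v$ neither double-counts nor loses a neighbour. This reduces to the claim that $v^1$ (and likewise $v^{\kappa(v)}$) never lies in $N_{G'}(V(P'))\setminus V(P')$ for a feasible $P'$ — i.e.\ $v^1$ is a neighbour of $P'$ only when $v^1\in V(P')$ — and this is precisely where \cref{def:simplevwssp}(iv) (the two special neighbours having small degree and being $\neq s,t$) is essential: it forces any $P'$ that reaches $a$ or $b$ to traverse all of $P_v$, so $v^1$ cannot be a ``mere'' neighbour of $P'$. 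Once this observation is in place, the remaining equivalences are routine case distinctions on the role of $v$ in $P$, and the $O(M+|E|)$ running time is immediate from the construction.
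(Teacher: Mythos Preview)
Your proposal is correct and takes essentially the same approach as the paper: replace each vertex~$v$ by a path~$P_v$ of length~$\kappa(v)$, attach $\eta(v)$ pendants, and re-route the $A$-edges onto the gadget path. Your construction differs only in cosmetic details (you cap the number of pendants at $\ell+1$, pin all pendants and $A$-edges to the single endpoint~$v^1$, and handle the trivial case $\eta(s)>\ell$ or $\eta(t)>\ell$ explicitly), and your correctness argument is in fact more explicit than the paper's on the key neighbourhood-counting step---in particular, the observation that a $\kappa$-heavy $w\notin U$ cannot have its special neighbours on the path (forced by the degree-two constraint in \cref{def:simplevwssp}(iv)), which is exactly what makes the formula $|N_{G'}(V(P'))|=\sum_{v\in U}\min\{\eta(v),\ell+1\}+|N_G(U)|$ work.
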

  
\noindent
To prove \cref{prop:Hssp2ssp}, we use the following construction.

\begin{construction}
  \label{constr:Hssp2ssp}
  Let $(G,s,t,k,\ell,\lambda,\kappa,\eta)$~be
  a \simple{} instance of \WsspAcr{}
  with \(G=(V,E)\)
  and given set~$A\subseteq V$ as in~\cref{def:simplevwssp}.
  Construct an instance~$(G',s',t',k,\ell)$ of \sspAcr{}
  as follows (see \cref{fig:Hssp2ssp} for an illustrative example).
  Let $G'$~be initially a copy of~$G$.
  For each~$v\in V$ with~$\kappa(v)>1$,
  let~$\{v',v''\}=N_{G-A}(v)$,
  replace~$v$ by a path~$P_{v}$ with
  $\kappa(v)$~vertices,
  make one endpoint adjacent to~$v'$,
  and the other endpoint adjacent to~$v''$.
  Next, for each~$v\in V$, add a set~\(U_v\) of
  $\eta(v)$~vertices.
  If~$\kappa(v)=1$ make each~$u\in U_v$ only adjacent to~$v$.
  If~$\kappa(v)>1$ make each~$u\in U_v$ only adjacent to some vertex~$x$ on~$P_{v}$.
  Finally, for each~$v\in V\setminus(A\cup\{s,t\})$ with~$\kappa(v)>1$ and~$A_v:=N_G(v)\cap A\neq \emptyset$, make each~$w\in A_v$ adjacent with some vertex~$x$ on~$P_v$.
  This finishes the construction of~$G'$.
  Observe that the construction can be done in time linear
  in~$M+|E|$ and~$(G',s,t,k,\ell)$ consists of~$M$ vertices.
\end{construction}
\begin{figure}
  \centering
 \begin{tikzpicture}
    \tikzstyle{anode}=[rectangle, fill,scale=0.6, draw]
    \tikzstyle{xnode}=[circle, fill, scale=0.6, draw]
    \tikzstyle{xxnode}=[rectangle, fill, scale=0.6, draw]
    \tikzstyle{knode}=[rectangle, scale=0.7, draw]
    \tikzstyle{kxnode}=[circle, scale=0.7, draw]
    \tikzstyle{xedge}=[-,thick]

    \def\xr{1.0}
    \def\yr{0.9}

    \newcommand{\tovwmakenodes}{
      \draw[rounded corners,gray] (0,0.25) rectangle (3*\xr,0.75*\yr);
      \foreach \x in {1,3,5}\node (a\x) at (0.5*\x*\xr,0.5*\yr)[anode]{};
        \node (x0) at (-1*\xr,1*\yr)[xxnode,label=180:{$s$}]{};	
        \node (x1) at (-1*\xr,0*\yr)[xnode]{};	
      \node (x3) at (0*\xr,-2*\yr)[xnode]{};
      \node (x4) at (0.5*\xr,-2*\yr)[xnode]{};	
      \node (x5) at (1.5*\xr,-0.5*\yr)[xnode]{};
      \node (x6) at (1.5*\xr,-2*\yr)[xxnode]{};	
      \node (x7) at (2.5*\xr,-2*\yr)[xnode]{};
      \node (x8) at (3*\xr,-2*\yr)[xnode]{};	
      \node (x10) at (4*\xr,0*\yr)[xnode]{};
      \node (x11) at (4*\xr,1*\yr)[xnode,label=180:{$t$}]{};
    }
    \newcommand{\tovwmakeedges}{
      \draw[xedge] (x0) -- (x1) -- (x22) -- (x3) -- (x4) -- (x5) -- (x7) -- (x8) -- (x92) -- (x10) -- (x11);
      \draw[xedge] (x1) -- (a1);
      \draw[xedge] (x4) -- (x6) -- (x7);
      \draw[xedge] (a3) -- (x5) -- (a5); 
      \draw[xedge] (a5) -- (x10) ;
    }

    \begin{scope}[xshift=-1*\xr cm]
      \tovwmakenodes{};
      \node (x22) at (-0.5*\xr,-1*\yr)[knode]{};
      \node (x92) at (3.5*\xr,-1*\yr)[kxnode]{};
      \tovwmakeedges{}
      \draw[xedge] (a1) -- (x22) -- (a3);
      
      \begin{scope}[xshift=4.5*\xr cm, yshift=0.25*\yr cm]
        \draw[dashed, gray, rounded corners] (0,-0.7) rectangle (2.3*\xr, 1.4*\yr);
        \node at (0.25*\xr,1.*\yr)[xnode,label=0:{$\eta=0$, $\kappa=1$}]{};
        \node at (0.25*\xr,0.5*\yr)[xxnode,label=0:{$\eta>0$, $\kappa=1$}]{};
        \node at (0.25*\xr,0*\yr)[kxnode,label=0:{$\eta=0$, $\kappa>1$}]{};
        \node at (0.25*\xr,-0.5*\yr)[knode,label=0:{$\eta>0$, $\kappa>1$}]{};
        
      \end{scope}
    \end{scope}
    \node at (5.5*\xr,-1*\yr)[scale=1.5]{$\leadsto$};
    \tikzstyle{xxnode}=[xnode]
    \tikzstyle{anode}=[xnode]
    \tikzstyle{mygray}=[fill=gray!50!white]
    \newcommand{\tovwthesets}[5]{
      \node (y1) at ($(#1)+(#2)$)[scale=3/4,color=gray!50!black]{#5};
      \node (y2) at ($(y1)+(#3)$)[xnode,mygray]{};
      \node (y3) at ($(y1)+(#4)$)[xnode,mygray]{};
      \draw[xedge] (y2) -- (#1) -- (y3);
    }
    \newcommand{\tovwlargerkappa}{
      \foreach \x in {2,3,5,6}\node (x2\x) at (-1.075*\xr+0.15*\x*\xr,0.15*\yr-0.3*\x*\yr)[xnode,mygray]{};
      \foreach \x in {1,2,3}\node (x2x\x) at (-0.5*\xr-0.075*\xr+0.05*\x*\xr,-1*\yr-0.1*\x*\yr+0.15*\yr)[xnode,mygray,scale=0.4]{};
      \foreach \x in {2,3,5,6}\node (x9\x) at (2.925*\xr+0.15*\x*\xr,-2.15*\yr+0.3*\x*\yr)[xnode,mygray]{};
      \foreach \x in {1,2,3}\node (x9x\x) at (3.5*\xr-0.075*\xr+0.05*\x*\xr,-1*\yr+0.1*\x*\yr-0.15*\yr)[xnode,mygray,scale=0.4]{};
    }
    \begin{scope}[xshift=9*\xr cm]
      \tovwmakenodes{};
      \tovwlargerkappa{}
      \tovwmakeedges{};
      \tovwlargerkappa{}
      \draw[xedge] (a1) -- (x23) -- (a3);
        \tovwthesets{x0}{0,0.75*\yr}{-0.35*\xr,0}{0.35*\xr,0}{$\cdots$};
        \tovwthesets{x25}{-0.75*\xr,-0.25*\yr}{-0*\xr,0.25*\yr}{0*\xr,-0.4*\yr}{$\vdots$};
        \tovwthesets{a1}{0,0.75*\yr}{-0.35*\xr,0}{0.35*\xr,0}{$\cdots$};
        \tovwthesets{a3}{0,0.75*\yr}{-0.35*\xr,0}{0.35*\xr,0}{$\cdots$};
        \tovwthesets{a5}{0,0.75*\yr}{-0.35*\xr,0}{0.35*\xr,0}{$\cdots$};
        \tovwthesets{x6}{0,0.5*\yr}{-0.35*\xr,0}{0.35*\xr,0}{$\cdots$};
    \end{scope}
    \end{tikzpicture}
    \caption{Illustrative example to \cref{constr:Hssp2ssp}.
    On the left-hand side: an input graph with vertex weights,
    on the right-hand side: the graph obtained after applying \cref{constr:Hssp2ssp} (gray vertices indicate added vertices).
    Vertices enclosed in the gray solid rectangle form the set~$A$.
    }
    \label{fig:Hssp2ssp}
\end{figure}

\begin{proof}[Proof of \cref{prop:Hssp2ssp}]
  Let $(G,s,t,k,\ell,\lambda,\kappa,\eta)$~be
  a \simple{} instance of \WsspAcr{}
  with \(G=(V,E)\)
  and given set~$A\subseteq V$ as in~\cref{def:simplevwssp}.
  Apply \cref{constr:Hssp2ssp} to compute instance $I'=(G',s,t,k,\ell)$ of~\sspAcr{} with at most~$M:= \kappa(V) + \eta(V)$ vertices in time linear in~$M+|E|$.
 We claim that~$I$ is a \yes-instance if and only if~$I'$ is a \yes-instance.

 \RD{}
 Let~$I$ be a \yes-instance and $P:=(v_1,v_2,\ldots,v_q)$
 with \(v_1=s\) and \(v_q=t\)~be
 a solution $s$-$t$-path.
 Let $W\subseteq V(P)$~denote the vertices in~$P$ with~$\kappa(v)>1$.
 We claim that the path~$P'$ obtained from~$P$ by replacing each vertex~$v\in W$ by~$P_v$ is a solution $s$-$t$-path to~$I'$.
 First, observe that~$|V(P')|=|V(P)\setminus W|+\kappa(W)\leq k$.
 It remains to prove (recall that $\lambda(v)=1$ for all~$v\in V$)
 \begin{align*}
  |N_{G'}(V(P'))|	&= |N_{G}(V(P))|+\smashoperator{\sum_{v\in V(P')}} |U_{v}| = |N_{G}(V(P))|+\smashoperator{\sum_{v\in V(P)}} \eta(v) \leq \ell.
 \end{align*}
 To this end,
 it is enough to prove
 $N_{G'}(V(P'))= N_{G}(V(P))\uplus \biguplus_{v\in V(P')} U_{v}$.
 First observe that
 no vertex in~\(A\) is in~$V(P)$
 since~$\eta(v)>\ell$ for all~$v\in A$.
 Thus,
 no vertex in~$A$ is contained in~$V(P')$.
 For each~$v\in W$ let~$v'$ and~$v''$
 be the only two neighbors of~\(v\) in~$G-A$.
 Then,
 for each~$v\in W$,
 we have $N_{G'}(V(P_{v}))\setminus \{v',v''\} = N_G(v)\setminus  \{v',v''\}$,
 since the neighbors of~$v$ in~$A$ coincide with the neighbors of~$V(P_{v})$ in~$A$.
 Thus,
 \begin{align*}
  N_{G'}(V(P'))	&= \Biggl(N_{G'}(V(P)\setminus W)\setminus \bigcup_{v\in W} V(P_{v})\Biggr)
                    \cup \bigcup_{v\in W} \left(N_{G'}(V(P_{v}))\setminus \{v',v''\}\right) 
                    \uplus \biguplus_{v\in V(P')} U_{v} \\
                &= \left(N_{G}(V(P)\setminus W)\setminus W\right)\cup \bigcup_{v\in W} \left(N_{G}(v)\setminus \{v',v''\}\right) \uplus \biguplus_{v\in V(P')} U_{v} \\\
                &= N_{G}(V(P))\uplus \biguplus_{v\in V(P')} U_{v}.
 \end{align*}
 
 \LD{}
 Let $I'$~be a \yes-instance and
 let $P'$~be a solution $s$-$t$-path.
 Note that all vertices in~$P_{v}$
 for $v\in V$ with~$\kappa(v)>1$
 are of degree two in~$G'-(A\cup U_v)$ and distinct from~$s$ and~$t$.
 Hence, if a vertex of~$P_v$ is contained in~$P'$, then all vertices from~$P_v$ are contained in~$P'$.
 Let~$W\subseteq V$ denote the set of vertices~$v$ with~$\kappa(v)>1$ such that~$P_v$ is a subpath of~$P'$.
 We claim that the path~$P$ obtained from~$P'$ by replacing each path~$P_v$ by~$v\in W$
 is a solution $s$-$t$ path for~$I$.
 First, observe that~$\kappa(V(P))=|V(P')|-\sum_{v\in W}|V(P_v)|+\kappa(W)=|V(P')|\leq k$.
 Second, similarly as in the above
 direction, since~$I$ is \simple{},
 we have
 \begin{align*}
  |N_{G}(V(P))|+\sum_{v\in V(P)} \eta(v)=|N_{G}(V(P))|+\sum_{v\in V(P)} |U_{v}|=|N_{G'}(V(P'))|\leq \ell. &\qedhere
 \end{align*}
\end{proof}

\subsubsection{A problem kernel with $O(\fvs\cdot(k+\ell)^2)$ vertices}
\label{sssec:fvskell}
We show that \sspAcr{} admits a
problem kernel with a number of vertices
cubic in the  parameter~$\fvs+\ell+k$.

\begin{theorem}
  \label{thm:ssppk-fvskell}
  \sspTsc{} admits a problem kernel
  of size polynomial in $\fvs+k+\ell$
  with~$O(\fvs\cdot(k+\ell)^2)$ vertices.
\end{theorem}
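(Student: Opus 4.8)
The plan is to obtain the kernel by reducing a general \sspAcr{} instance to a \emph{\simple{}} \WsspAcr{} instance with $O(\fvs\cdot(k+\ell)^2)$ vertices, and then applying \cref{prop:Hssp2ssp} to get back to an unweighted \sspAcr{} instance whose vertex count is governed by $\kappa(V)+\eta(V)$, which we will also keep polynomially bounded in $\fvs+k+\ell$. First I would fix a feedback vertex set $S$ of size $\fvs$ (computable in polynomial or FPT time; a $2$-approximation suffices since it only changes the polynomial). Removing $S$ leaves a forest $F=G-S$. The key structural observation is that any solution path $P$ has at most $k$ vertices, so $P$ touches $S$ in at most $k$ vertices and $P\cap F$ decomposes into at most $k+1$ subpaths, each living inside a single tree of $F$; moreover $P$ has at most $\ell$ neighbors. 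The $F$-side of the graph is a forest, and the only vertices of $F$ that can ever matter for a solution are those ``close'' to $S$ or close to $s$ and $t$ — everything far out in the trees is useless and should be stripped away.

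The core of the argument is a set of reduction rules on the forest part. First, root each tree of $F$ and repeatedly delete leaves of $F$ that are not in $S\cup\{s,t\}$ and have no neighbor in $S$: such a leaf can only be a neighbor-penalty liability or a dead end, never an interior vertex of a solution, so deleting it (and charging its potential contribution via the $\eta$/$\lambda$ weights, as in the \simple{} framework) is safe — this is where the weighted variant \WsspAcr{} enters. Second, contract maximal paths of degree-two vertices of $F$ that are disjoint from $S\cup\{s,t\}$ and have no neighbors in $S$ into single weighted vertices with $\kappa$ equal to the path length (this is exactly condition~(iv) of \cref{def:simplevwssp}). After exhaustively applying these, every remaining vertex of $F$ is either in $S\cup\{s,t\}$, adjacent to $S$, a branching vertex of a tree of $F$, or lies on one of the $O(|S|+1)^2 = O(\fvs^2)$ tree-paths connecting ``important'' vertices. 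The number of important vertices is $O(\fvs)$ plus the two terminals, and a forest on $b$ marked vertices has only $O(b)$ branching vertices, so the forest ``skeleton'' has $O(\fvs)$ vertices and $O(\fvs)$ connecting paths. Each such connecting path, however, can still be long; but since a solution path that enters such a connecting path either traverses it fully (costing its whole length, at most $k$, toward $k$) or uses none of it, we may truncate each connecting path to length $\min(\text{original},k+1)$ — longer ones are never fully traversable within budget $k$, and can be replaced by a single high-$\kappa$ weighted vertex. Similarly, each vertex in $S$ or near $s,t$ can have many private forest-attachments that are only neighbor-liabilities; bundle all such attachments hanging off a single vertex into one weighted vertex, and since a solution has at most $\ell$ neighbors, only $O(\ell)$ of them can be ``used'', while the rest get an $\eta>\ell$ marker putting them into the set $A$ of \cref{def:simplevwssp}. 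After all this, the total vertex count is $O(\fvs)$ skeleton vertices, times $O(k)$ per connecting path for the $O(\fvs)$ paths — giving $O(\fvs\cdot k)$ — plus the per-vertex attachment gadgets contributing another $O(\fvs\cdot\ell)$ factor when one tracks that each of the $O(\fvs)$ important vertices can distinguish attachments by which of the $O(\ell)$ neighbor-slots they occupy and which of the $O(k)$ path positions they hang from; a careful accounting yields $O(\fvs\cdot(k+\ell)^2)$.

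Once the reduced instance is \simple{} with $O(\fvs\cdot(k+\ell)^2)$ vertices and with $\kappa$- and $\eta$-weights bounded by $O(k+\ell)$ per vertex (weights larger than $k$ resp.\ $\ell$ can be capped, as in \cref{rr:reducenbs}, to $k+1$ resp.\ $\ell+1$), I would invoke \cref{prop:Hssp2ssp}: it converts the \simple{} \WsspAcr{} instance into an equivalent unweighted \sspAcr{} instance with $M=\kappa(V)+\eta(V)\in O(\fvs\cdot(k+\ell)^2)\cdot O(k+\ell)$ vertices — and here I would be more careful, using that only $O(k)$ path-vertices and $O(\ell)$ neighbor-vertices actually need to be ``spelled out'' in a solution, so the capped weights combined with the $O(\fvs\cdot(k+\ell)^2)$ vertex bound keep $M$ polynomial in $\fvs+k+\ell$; a tighter bookkeeping recovers exactly $O(\fvs\cdot(k+\ell)^2)$. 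Correctness of each reduction rule is a routine exchange argument (a solution in the reduced instance lifts to one in the original and vice versa, because the deleted/contracted structure is either forced or irrelevant), analogous to the proof of \cref{obs:reducenbslintime}. The main obstacle I anticipate is the bookkeeping in the vertex-count bound: one must simultaneously control (a) the number of forest branch vertices, (b) the length of each of the $\Theta(\fvs^2)$ or $\Theta(\fvs)$ connecting paths, and (c) the number of weighted ``bundle'' gadgets per important vertex, and show their product is $O(\fvs\cdot(k+\ell)^2)$ rather than a larger polynomial — this requires observing that a feasible solution interacts with the forest in a very limited way ($\le k$ path-vertices, $\le\ell$ neighbors, spread over $\le k+1$ subpaths), which is exactly what caps each of the three quantities and prevents the bound from blowing up to, say, $\fvs^2\cdot(k+\ell)$ or $\fvs\cdot(k+\ell)^3$.
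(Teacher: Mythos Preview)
Your approach has the right shape---compute a feedback vertex set, prune the forest via leaf deletion and degree-two path contraction, land in a \simple{} \WsspAcr{} instance, then invoke \cref{prop:Hssp2ssp}---and this is indeed what the paper does. However, there is a genuine gap in your vertex-count argument.

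You claim that after deleting forest leaves with no neighbor in~$S$, the remaining ``important'' vertices (surviving leaves, hence branch vertices, hence the skeleton) number~$O(\fvs)$. This is false: a single vertex~$v\in S$ can be adjacent to arbitrarily many forest vertices, each of which survives your leaf-deletion rule, so the skeleton can have size unbounded in~$\fvs,k,\ell$. Your later remark about ``bundling attachments hanging off a single vertex'' does not repair this, because those forest neighbors of~$v$ need not be dead-end pendants---they may lie on long tree paths, be adjacent to several FVS vertices, or be genuine candidates to lie on the solution path---so you cannot collapse them into one weighted vertex without losing correctness.

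The ingredient you are missing is a \emph{forbidden set}~$R\subseteq S$: any FVS vertex with more than~$k+\ell$ neighbors (or more than~$\ell$ degree-one neighbors) can never lie on a solution path, so the paper marks it with~$\eta(v)=\ell+1$. Only forest vertices with a neighbor in~$S\setminus R$ are then declared ``good'', and since each vertex of~$S\setminus R$ has degree at most~$k+\ell$, there are at most~$\fvs\cdot(k+\ell)$ good vertices. Leaf deletion and edgy-path contraction now shrink each tree to~$O(|Y_T|)$ vertices where~$Y_T$ is its set of good vertices, giving an intermediate \WsspAcr{} instance with~$O(\fvs\cdot(k+\ell))$ vertices and weights in~$O(k+\ell)$---not~$O(\fvs\cdot(k+\ell)^2)$ vertices as you write. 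The final~$(k+\ell)^2$ factor appears only after \cref{prop:Hssp2ssp} unpacks the weights: $\kappa(V')+\eta(V')\in O(\fvs\cdot(k+\ell))\cdot O(k+\ell)=O(\fvs\cdot(k+\ell)^2)$.
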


\noindent
In a nutshell, we will construct a simple instance of \WsspAcr{} by shrinking the number and sizes of the trees in the graph after removing a feedback vertex set.
Herein, we store information on each shrinking step in the vertex weights.
To shrink the sizes of the trees, we delete leaves (as their number upper-bounds the number of vertices of degree at least three) and replace maximal paths consisting of degree-two vertices by shorter paths.
The number of vertices and edges as well as the vertex weights will be upper-bounded polynomially in~$\fvs+k+\ell$.
Finally, we employ \cref{prop:Hssp2ssp} on the simple instance of \WsspAcr{} to compute an instance of~\sspAcr{} where the number of vertices and edges is upper-bounded polynomially in~$\fvs+k+\ell$.

Let $G=(V,E)$~be the input graph
with $V=F\uplus W$
such that $F$~is a feedback vertex set
with $s,t\in F$ (hence, $G[W]$ is a forest).
Let~$\beta:=|F|$.
We distinguish the following
types of vertices of~$G$
(see \cref{fig:rrrulesfvskell} for an illustration).

\begin{itemize}[$Y\subseteq W$]
\item[$R\subseteq F$] is the subset of vertices in~$F$ with more than~$k+\ell$ neighbors or more than~$\ell$ degree\hyp one neighbors.
  Since no vertex of~$R$ is part of any solution path, we refer to the vertices in~$R$ as~\emph{forbidden}.
  
\item[$Y\subseteq W$] is the subset of vertices in~$W$ containing all vertices~$v$ with 
  $N(v) \cap F \nsubseteq R$,
  that is, vertices that have at least one neighbor in~$F$ that is not forbidden.
  We call the vertices in~$Y$ \emph{good}.
  
\item[$\calT$] is the set of
  connected components of~$H:=G[W]$,
  all of which are trees.
\end{itemize}
\noindent
Towards proving \cref{thm:ssppk-fvskell},
we will first prove the following,
and then strip the weights using \cref{prop:Hssp2ssp}.

\begin{proposition}
  \label{prop:sspToHssp}
  For any instance of~\sspAcr{} we can compute in polynomial time an equivalent \simple{} instance of~\WsspAcr{} with~$O(\fvs\cdot(k+\ell))$~vertices, $O(\fvs^2\cdot (k+\ell))$~edges and vertex weights $O(k+\ell)$.
\end{proposition}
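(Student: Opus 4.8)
The plan is to regard the input \sspAcr{} instance as a \WsspAcr{} instance with $\kappa\equiv\lambda\equiv 1$ and $\eta\equiv 0$ and, by a sequence of polynomial-time reduction rules, to turn it into an equivalent \simple{} instance in the sense of \cref{def:simplevwssp}, pushing all information lost by the reductions into the weights $\kappa$ and $\eta$. Throughout we use $\beta=O(\fvs)$, which we may assume after taking $F$ to be the union of $\{s,t\}$ with a feedback vertex set computed by a polynomial-time constant-factor approximation; if $s$ or $t$ has more than $k+\ell$ neighbours we may output a trivial \no-instance.

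First, mark the forbidden vertices: for every $u\in R$ set $\eta(u):=\ell+1$ and keep $\kappa(u)=1$. Since a solution path $P$ satisfies $|V(P)|+|N(V(P))|\le k+\ell$, no vertex of $R$ can be an inner vertex of $P$, and as $s,t\notin R$ no vertex of $R$ lies on $P$ at all. Second, delete every tree $T\in\calT$ that contains no good vertex: such a $P$ could enter $T$ only through an edge from $F$, but every $F$-neighbour of such a $T$ lies in $R$ and hence cannot be on $P$, so $V(P)\cap V(T)=\emptyset$; moreover no vertex of $T$ is adjacent to $F\setminus R$, so no vertex of $T$ is ever a neighbour of $P$, and $T$ is irrelevant. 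After this, a counting step controls the instance: every surviving tree contains a good vertex adjacent to some $u\in F\setminus R$, and each such $u$ has at most $k+\ell$ neighbours in total; hence there are $O(\beta(k+\ell))$ surviving trees, and likewise $|Y|=O(\beta(k+\ell))$ good vertices overall.

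For each surviving tree $T$ let $S_T$ be the minimal subtree of $T$ spanning $Y\cap V(T)$ (a single vertex when $|Y\cap V(T)|=1$). Since $T$ is a tree, every subpath of a solution path that lies inside $T$ has both of its ends among the good vertices of $T$ and therefore lies entirely in $S_T$; hence each vertex of $T\setminus S_T$ sits in a pendant subtree hanging off some $b\in S_T$, never lies on a solution path, and becomes a neighbour of the path exactly when $b$ does, contributing one neighbour per pendant component of $T-b$. So we delete every pendant subtree and add the number of its components to $\eta(b)$, replacing $\eta(b)$ by $\ell+1$ when this exceeds $\ell$. Now $S_T$ has at most $|Y\cap V(T)|$ leaves, hence $O(|Y\cap V(T)|)$ ``special'' vertices (its leaves, its vertices of degree $\ge 3$, and the good vertices) and $O(|Y\cap V(T)|)$ maximal degree-two segments between consecutive special vertices; a solution path traverses such a segment entirely or not at all, so we replace the interior of a segment $\sigma$ with endpoints $a,b$ by a fresh path $a-p_1-q_\sigma-p_2-r_\sigma-p_3-b$, where $r_\sigma$ (with $\kappa(r_\sigma)=1$) carries all edges from the old interior of $\sigma$ to vertices of $R$, the fresh buffers $p_1,p_2,p_3$ have $\kappa=1$, degree $2$, and are distinct from $s,t$, and $q_\sigma$ absorbs the segment length into $\kappa(q_\sigma)\le k+1$ and the accumulated pendant weight into $\eta(q_\sigma)$; if the segment is too long or its pendant weight exceeds $\ell$ we simply make it unusable by setting $\kappa(q_\sigma)=k+1,\eta(q_\sigma)=0$, respectively $\kappa(q_\sigma)=1,\eta(q_\sigma)=\ell+1$. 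The buffers guarantee that every remaining vertex of weight $\kappa>1$ has exactly two neighbours, both of degree at most two, distinct from $s$ and $t$, and of unit $\kappa$; together with $\kappa(s)=\kappa(t)=1$, $\lambda\equiv 1$, and $\eta(v)>\ell\Rightarrow\kappa(v)=1$ this shows the result is \simple{} with $A:=\{v:\eta(v)>\ell\}$.

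Equivalence then follows from a routine correspondence between solution paths: forbidden and deleted vertices never occur on a feasible path, the pendant and segment contributions to $|N(V(P))|$ are mirrored exactly by $\eta$, and expanding every $q_\sigma$ back into its segment (and contracting it again) leaves both $\sum_{v\in V(P)}\kappa(v)$ and $\sum_{v\in V(P)}\eta(v)+\sum_{v\in N(V(P))}\lambda(v)$ unchanged. For the size, the reduced graph keeps $F$ ($O(\beta)$ vertices) together with $O(|Y\cap V(T)|+1)$ vertices per surviving tree, i.e.\ $O(\beta+|Y|+(\text{number of surviving trees}))=O(\beta(k+\ell))=O(\fvs(k+\ell))$ vertices; its edges are those inside $F$ ($O(\beta^2)$), those from $F\setminus R$ into the forest ($O(\beta(k+\ell))$), those inside the reduced trees ($O(|Y|)$), and those from each $u\in R$ to $F$ and to the $O(\beta(k+\ell))$ surviving forest vertices, altogether $O(\beta^2(k+\ell))=O(\fvs^2(k+\ell))$; all weights lie in $\{0,\dots,\max(k,\ell)+1\}=O(k+\ell)$, and every step is clearly polynomial-time. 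The main obstacle I anticipate is exactly the treatment of the edges leaving the forbidden set $R$ into the forest: they must be retained to evaluate $|N(V(P))|$ correctly, yet neither the edge count may blow up nor may the structural constraints of a \simple{} instance be violated, and it is precisely the single representative vertex $r_\sigma$ per segment together with the degree-two buffers insulating each $q_\sigma$ that reconciles these requirements.
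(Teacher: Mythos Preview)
Your approach is correct and follows the same outline as the paper: mark $R$ via $\eta=\ell+1$, drop trees without good vertices, prune the pendant parts into $\eta$-weights at their attachment points, and contract long degree-two runs, arriving at $O(\beta(k+\ell))$ vertices with weights $O(k+\ell)$. The paper realises the pruning by iterated deletion of non-good leaves (\cref{rr:burntrees}) rather than an explicit Steiner subtree, but the effect is identical.

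The one notable difference is your segment gadget. The paper contracts a maximal \emph{edgy} path---defined so that its own endpoints $a,b$ are already non-good degree-two vertices of~$T$---to a single vertex~$x$ carrying $\kappa$, $\eta$, and all $R$-adjacencies at once (\cref{rr:shrinkedgypaths}). Because $a,b$ are non-good, their only neighbours in $G-R$ are~$x$ and the adjacent special vertex, so $a,b$ themselves already supply the degree-two, $\kappa{=}1$, non-$s,t$ buffers demanded by condition~(iv) of \cref{def:simplevwssp}. Your separate $q_\sigma$, $r_\sigma$, and three $p_i$ buffers achieve the same thing with five fresh vertices per segment instead of one; this is harmless for the asymptotic bounds, but unnecessary---the concern you flagged about reconciling the $R$-edges with simplicity is already handled by leaving the two outermost segment vertices in place.
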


\noindent
We will interpret the input \sspAcr{}
instance as an instance of \WsspAcr{}
with unit weight functions~\(\kappa\) and~\(\lambda\)
and the zero weight function~$\eta$.
For an exemplified illustration of the
following reduction rules,
we refer to \cref{fig:rrrulesfvskell}.
The first reduction rule ensures that each forbidden vertex remains forbidden
throughout our application of all reduction rules.
It is clearly applicable in linear time.

\begin{rrule}
  \label{rr:redisred}
  For each~$v\in R$, set~$\eta(v)=\ell+1$.
\end{rrule}

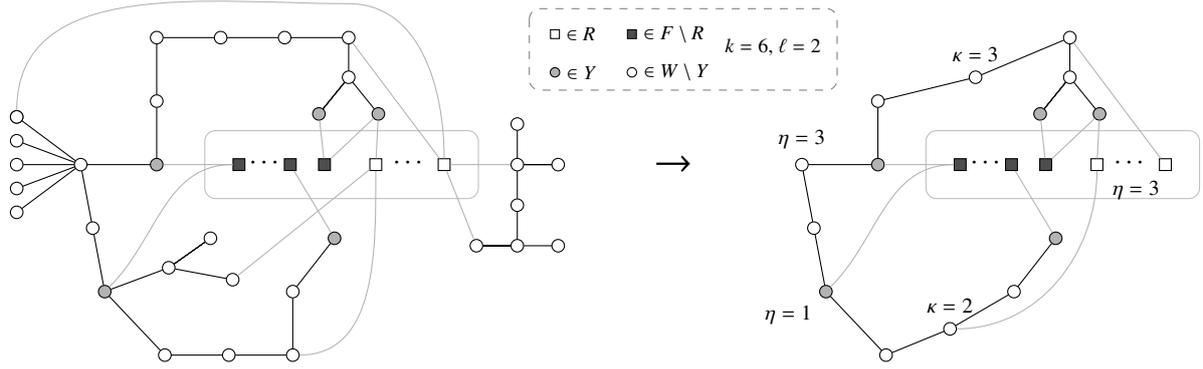
\begin{figure}
 \centering
 \begin{tikzpicture}

  \usetikzlibrary{arrows,patterns,calc}

  \def\xr{0.9}
  \def\yr{0.9}

  \tikzstyle{xnode}=[circle, scale=1/2, fill, draw];
  \tikzstyle{rnode}=[scale=2/3, fill=white, draw];
  \tikzstyle{nrnode}=[scale=2/3, fill=black!70!white, draw=black];
  \tikzstyle{gnode}=[xnode,fill=black!30!white];
  \tikzstyle{nnode}=[xnode,fill=white];
  \tikzstyle{bnode}=[xnode,fill=white];
  \tikzstyle{gedge}=[color=black!30!white];
  \tikzstyle{bedge}=[color=black!30!white];%

  \begin{scope}[xshift=-30*\xr,yshift=0*\yr]%
	  \node (r) at (0,0)[gnode]{};	
	  \node (x1) at (-1,0)[nnode]{};
	  \draw (r) -- (x1);
	  \foreach \x in {-20,-10,...,20}{
		  \node (x11) [left of=x1,xshift=5*\xr,yshift=\x*\yr,nnode]{};
		  \draw (x11) -- (x1);
	  }
	  \node (x11) [left of=x1,xshift=5*\xr,yshift=20*\yr,bnode]{};

	  \node (x2) [below of=x1,xshift=5*\xr,yshift=5*\yr,nnode]{};
	  \node (x3) [below of=x2,xshift=5*\xr,yshift=5*\yr,gnode]{};
		  \node (x31) [right of=x3,,xshift=-5*\xr,yshift=10*\yr,nnode]{};
		  \node (x32) [above right of=x31,xshift=-5*\xr,yshift=-10*\yr,nnode]{};
		  \node (x33) [right of=x31,xshift=-5*\xr,yshift=-5*\yr,bnode]{};
	  \node (x4) [below of=x3,xshift=25*\xr,yshift=5*\yr,nnode]{};
	  \node (x5) [right of=x4,xshift=-5*\xr,yshift=0*\yr,nnode]{};
	  \node (x6) [right of=x5,xshift=-5*\xr,yshift=0*\yr,bnode]{};
	  \node (x7) [above of=x6,xshift=0*\xr,yshift=-5*\yr,nnode]{};
	  \node (x8) [above right of=x7,xshift=-5*\xr,yshift=0*\yr,gnode]{};
	  \draw (x1) -- (x2) -- (x3) -- (x4) -- (x5) -- (x6) -- (x7) -- (x8);
	  \draw (x3) -- (x31) -- (x32) -- (x31) -- (x33);
	  \node (y1) [above of=r,,yshift=-5*\yr,nnode]{};
	  \node (y2) [above of=y1, yshift=-5*\yr,nnode]{};
	  \node (y3) [right of=y2,xshift=-5*\xr,nnode]{};
	  \node (y4) [right of=y3,xshift=-5*\xr,nnode]{};
	  \node (y5) [right of=y4,xshift=-5*\xr,bnode]{};
	  \node (y6) [below of=y5,yshift=15*\xr,nnode]{};
	  \node (y61) [below left of=y6,xshift=10*\xr,yshift=7*\yr,gnode]{};
	  \node (y62) [below right of=y6,xshift=-10*\xr,yshift=7*\yr,gnode]{};
	  \draw (r) -- (y1) -- (y2) -- (y3) -- (y4) -- (y5) -- (y6) -- (y61) -- (y6) -- (y62);

  \end{scope}

  \begin{scope}[xshift=120*\xr,node distance=17*\xr and 17*\yr]
	  \node (d1) at (0,0)[bnode]{};
	  \node (d2) [above of=d1,nnode]{};
	  \node (d3) [right of=d1,nnode]{};
	  \node (d4) [below of=d1,nnode]{};
	  \node (d5) [below of=d4,nnode]{};
	  \node (d6) [left of=d5,bnode]{};
	  \node (d7) [right of=d5,nnode]{};
	  \draw (d2) -- (d1) -- (d3) -- (d1) -- (d4) -- (d5) -- (d6) -- (d5) -- (d7);
  \end{scope}

  \begin{scope}[xshift=-10*\xr]
	  \draw[rounded corners,lightgray] (0,-0.5*\yr) rectangle (4*\xr,0.5*\yr);
	  \node (g1) at (0.5*\xr,0*\yr)[nrnode]{};
	  \node (g2) at (0.9*\xr,0*\yr)[]{$\cdots$};
	  \node (gx1) at (1.25*\xr,0*\yr)[nrnode]{};
	  \node (gx) at (1.75*\xr,0*\yr)[nrnode]{};
	  \node (r1) at (2.5*\xr,0*\yr)[rnode]{};
	  \node (r2) at (3*\xr,0*\yr)[]{$\cdots$};
	  \node (rx) at (3.5*\xr,0*\yr)[rnode]{};
	  \draw[gedge] (y61) -- (gx) -- (y62);
	  \draw[gedge] (gx1) -- (x8);
	  \draw[gedge] (g1) to [out=180,in=45](x3);
	  \draw[gedge] (g1) to (r);
	  \draw[bedge] (r1) to (x33);
	  \draw[bedge] (r1) to (y62);
	  \draw[bedge] (r1) to [out=-90,in=0](x6);
	  \draw[bedge] (d6) -- (rx)  -- (d1);
	  \draw[bedge] (rx) to (y5);
	  \draw[bedge] (rx) to [out=90,in=0]($(y5)+(0*\xr,0.5*\yr)$) to [out=180,in=90](x11);
  \end{scope}

  \node at (6.5*\xr,0)[scale=1.5]{$\to$};

  \begin{scope}[xshift=-30*\xr+300*\xr,yshift=0*\yr]%
	  \node (r) at (0,0)[gnode]{};	
	  \node (x1) at (-1,0)[nnode,label=90:{\footnotesize $\eta=3$}]{};
	  \draw (r) -- (x1);

	  \node (x2) [below of=x1,xshift=5*\xr,yshift=5*\yr,nnode]{};
	  \node (x3) [below of=x2,xshift=5*\xr,yshift=5*\yr,gnode,label=-135:{\footnotesize $\eta=1$}]{};
	  \node (x4) [below of=x3,xshift=25*\xr,yshift=5*\yr,nnode]{};
	  \node (x5) [right of=x4,xshift=-5*\xr,yshift=0*\yr]{};
	  \node (x6) [right of=x5,xshift=-5*\xr,yshift=0*\yr]{};
	  \node (x56) [right of=x4,xshift=-5*\xr,yshift=11*\yr,bnode,label=90:{\footnotesize $\kappa=2$}]{};
	  \node (x7) [above of=x6,xshift=0*\xr,yshift=-5*\yr,nnode]{};
	  \node (x8) [above right of=x7,xshift=-5*\xr,yshift=0*\yr,gnode]{};
	  \draw (x1) -- (x2) -- (x3) -- (x4) -- (x56) -- (x7) -- (x8);
	  \node (y1) [above of=r,,yshift=-5*\yr,nnode]{};
	  \node (y2) [above of=y1, yshift=-5*\yr]{};
	  \node (y3) [right of=y2,xshift=-5*\xr]{};
	  \node (y4) [right of=y3,xshift=-5*\xr]{};
	  \node (y3x) [below of=y3,xshift=14*\xr,yshift=15*\yr,nnode,label=90:{\footnotesize $\kappa=3$}]{};
	  \node (y5) [right of=y4,xshift=-5*\xr,nnode]{};
	  \node (y6) [below of=y5,yshift=15*\xr,nnode]{};
	  \node (y61) [below left of=y6,xshift=10*\xr,yshift=7*\yr,gnode]{};
	  \node (y62) [below right of=y6,xshift=-10*\xr,yshift=7*\yr,gnode]{};
	  \draw (r) -- (y1) -- (y3x) -- (y5) -- (y6);
	  \draw (y6) -- (y61) -- (y6) -- (y62);

  \end{scope}

  \begin{scope}[xshift=-10*\xr+300*\xr]
	  \draw[rounded corners,lightgray] (0,-0.5*\yr) rectangle (4*\xr,0.5*\yr);
	  \node (g1) at (0.5*\xr,0*\yr)[nrnode]{};
	  \node (g2) at (0.9*\xr,0*\yr)[]{$\cdots$};
	  \node (gx1) at (1.25*\xr,0*\yr)[nrnode]{};
	  \node (gx) at (1.75*\xr,0*\yr)[nrnode]{};
	  \node (r1) at (2.5*\xr,0*\yr)[rnode,label=-45:{\footnotesize $\eta=3$}]{};
	  \node (r2) at (3*\xr,0*\yr)[]{$\cdots$};
	  \node (rx) at (3.5*\xr,0*\yr)[rnode]{};
	  \draw[gedge] (y61) -- (gx) -- (y62);
	  \draw[gedge] (gx1) -- (x8);
	  \draw[gedge] (g1) to [out=180,in=45](x3);
	  \draw[gedge] (g1) to (r);
	  \draw[bedge] (r1) to (y62);
	  \draw[bedge] (r1) to [out=-90,in=0](x56);
	  \draw[bedge] (rx) to (y5);
  \end{scope}

  \begin{scope}[xshift=125*\xr,yshift=-20*\yr,scale=0.75,transform shape]
	  \draw[rounded corners,dashed, gray] (0,4*\yr) rectangle (6*\xr,2.4*\yr);
      \node at (0.5*\xr,3.5*\yr)[rnode,label=0:{$\in R$}]{};
	  \node at (2.0*\xr,3.5*\yr)[nrnode,label=0:{$\in F\setminus R$}]{};
	  \node at (2.0*\xr,2.75*\yr)[nnode,label=0:{$\in W\setminus Y$}]{};
	  \node at (0.5*\xr,2.75*\yr)[gnode,label=0:{$\in Y$}]{};
	  \node at (4.75*\xr,3.25*\yr)[]{$k=6$, $\ell=2$};
  \end{scope}

  \end{tikzpicture}
  \caption{Exemplified illustration of
    the partition into
    forbidden (white square, adjacent degree-one vertices are omitted)
    and good (gray round) vertices,
    and for the application of \cref{rr:killingtree,rr:burntrees,rr:shrinkedgypaths}. 
    The vertices enclosed in the light-gray rectangle
    are all vertices in the feedback vertex set~$F$.}
  \label{fig:rrrulesfvskell}
\end{figure}

\noindent
Since each vertex in~$F\setminus R$
has degree at most~$k+\ell$,
by the definition of good vertices,
we have the following.
\begin{observation}
  \label{obs:fewgoodvertices}
  The number of good vertices is $|Y|\leq \beta(k+\ell)$.
\end{observation}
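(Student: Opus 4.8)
The plan is a simple charging argument. First I would invoke the definition of the forbidden set: a vertex lies in~$R$ as soon as it has more than~$k+\ell$ neighbours, so every vertex~$u\in F\setminus R$ has degree at most~$k+\ell$ in~$G$, and in particular at most~$k+\ell$ neighbours inside~$W$.

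Next I would use the definition of a good vertex: each~$v\in Y$ satisfies~$N(v)\cap F\nsubseteq R$, i.e.\ has at least one neighbour in~$F\setminus R$. I would fix, for each~$v\in Y$, one such neighbour~$\phi(v)\in F\setminus R$; this defines a map~$\phi\colon Y\to F\setminus R$. Since~$Y\subseteq W$, every preimage~$\phi^{-1}(u)$ consists of neighbours of~$u$ lying in~$W$, so~$|\phi^{-1}(u)|\le k+\ell$ by the degree bound above. Summing over~$u$ and using~$|F\setminus R|\le|F|=\beta$ then gives
\[
  |Y|=\sum_{u\in F\setminus R}|\phi^{-1}(u)|\le|F\setminus R|\cdot(k+\ell)\le\beta(k+\ell).
\]

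I do not anticipate a real obstacle here: the only points to be careful about are to charge each good vertex to a \emph{non-forbidden} feedback-set neighbour (which exists precisely by the definition of~$Y$) rather than an arbitrary neighbour, and to observe that the degree bound for vertices of~$F\setminus R$ already counts all neighbours and hence in particular bounds the number of good vertices attached to each of them.
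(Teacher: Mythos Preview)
Your argument is correct and is exactly the one the paper uses: it notes just before the observation that every vertex in~$F\setminus R$ has degree at most~$k+\ell$, so the good vertices, being neighbours in~$W$ of vertices in~$F\setminus R$, number at most~$|F\setminus R|\cdot(k+\ell)\le\beta(k+\ell)$. Your charging map~$\phi$ merely makes this one-line counting explicit.
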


\noindent
Since a solution path
has neither vertices
nor neighbors
in any tree~\(T\in\calT\)
that does not contain vertices of~\(Y\),
we delete such trees.

\begin{rrule}
  \label{rr:killingtree}
  Delete all trees~$T\in\calT$
  with~$V(T)\cap Y=\emptyset$.
\end{rrule}

\noindent
Note that if
\cref{rr:killingtree} is not applicable,
then each tree in~\(\calT\)
contains a vertex from~\(Y\),
which gives $|\calT|\leq \beta(k+\ell)$
together with \cref{obs:fewgoodvertices}.

The following
data reduction rule
deletes degree-one vertices
in trees
that are not in~\(Y\),
since they
cannot be part of a solution path
(yet can neighbor it).

\begin{rrule}
 \label{rr:burntrees}
 If there is a tree~$T\in\calT$
 and $v\in V(T)\setminus Y$
 with $N_T(v)=\{w\}$,
 then
 set $\eta(w):=\min\{\ell+1,\eta(w)+1\}$ and delete~$v$. 
\end{rrule}

\noindent
Note that updating~$\eta(w)$ to the minimum of~$\ell+1$ and~$\eta(w)+1$ is correct: if a vertex has any weight at least~$\ell+1$, the vertex is equally excluded from any solution path as having weight~$\ell+1$.

\begin{proof}[Correctness proof]
 Let~$I=(G,s,t,k,\ell,\lambda,\kappa,\eta)$ be an instance of~\WsspAcr{} and let~$I'=(G',s,t,k,\ell,\lambda,\kappa,\eta')$ be the instance of \WsspAcr{} obtained from applying \cref{rr:burntrees}.
 Let~$v\in V(T)\setminus Y$ with~$T\in \calT$ be the vertex deleted by the application of \cref{rr:burntrees}.
 We claim that~$I$ is a \yes-instance if and only if~$I'$ is a \yes-instance.

 \looseness=-1
 \RD{}
 Let $P$~be a solution $s$-$t$-path in~$G$.
 Since $v\not\in Y$, we know that~$v$ is different from~$s$ and~$t$ and that~$\{w\}=N_T(v)=N_G(v)$.
 Hence, $v\not\in V(P)$.
 If $w\not\in V(P)$, then $P$~is a solution $s$-$t$ path in~$G'$.
 If $w\in V(P)$, then $\eta'(w)=\eta(w)+1\leq \ell$
 and~$\sum_{x\in V(P)} \eta'(x)+|N_{G'}(V(P))| = \sum_{x\in V(P)\setminus\{w\}} \eta(x)+\eta(w)+1+|N_{G}(V(P))| -1 \leq \ell$. 
 Hence, $P$ is a solution $s$-$t$ path in~$G'$.
 
 \LD{}
 Let~$P$ be a solution~$s$-$t$ path in~$G'$.
 Since~$G'= G-\{v\}$, we know that~$P$ is an $s$-$t$ path in~$G$ with~$\sum_{v\in V(P)} \kappa(v)\leq k$.
 If~$w\not\in V(P)$, then~$P$ is a solution $s$-$t$ path in~$G$.
 If~$w\in V(P)$, then~$\eta(w)+1=\eta'(w)\leq \ell$ and~$\sum_{x\in V(P)} \eta(x)+|N_{G}(V(P))| = \sum_{x\in V(P)} \eta'(x) + 1 +|N_{G'}(V(P))| -1 \leq \ell$. 
 Hence, $P$ is a solution $s$-$t$ path in~$G$.
\end{proof}

\begin{lemma}
 \label{obs:firstrrulesfast}
 \cref{rr:killingtree,rr:burntrees} are exhaustively applicable in linear time.
\end{lemma}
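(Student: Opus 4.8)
The plan is to realize each of the two rules as a single linear-time traversal of the (current) forest $G[W]$ and to argue that applying one rule never re-enables the other, so that exhaustiveness is reached after a constant number of passes. First I would compute, from $G$ and the given feedback vertex set $F$ (so that $W=V\setminus F$), the auxiliary objects used by the rules: computing all vertex degrees, and for each $v\in F$ the number of its degree-one neighbours, identifies $R$ in $O(n+m)$ time; scanning, for each $v\in W$, its neighbourhood once then identifies $Y$ (the vertices of $W$ with a neighbour in $F\setminus R$) in $O(n+m)$ time. A single depth-first search of $H:=G[W]$ computes the component set $\calT$ and, in passing, marks each component according to whether it contains a vertex of $Y$.

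\cref{rr:killingtree} is then immediate: delete every component of $\calT$ that is not marked, which costs time linear in the total size of the deleted subgraphs, hence $O(n+m)$. Since neither rule ever deletes a vertex of $Y$, and deleting a leaf of a tree leaves a (smaller) tree, every surviving component retains a vertex of $Y$; therefore \cref{rr:killingtree} is no longer applicable afterwards, and it never becomes applicable again during the subsequent application of \cref{rr:burntrees}.

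The exhaustive application of \cref{rr:burntrees} is precisely the leaf-peeling process on $H$ confined to $W\setminus Y$. I would maintain, for each $v\in W$, its degree $d_H(v)$ within the current forest together with a queue initialized with every $v\in W\setminus Y$ having $d_H(v)=1$. Repeatedly pop a vertex $v$: if it still exists, $d_H(v)=1$, and $v\notin Y$, let $w$ be its unique $H$-neighbour, set $\eta(w):=\min\{\ell+1,\eta(w)+1\}$ (an $O(1)$ update), delete $v$ and the edge $vw$ from $H$, decrement $d_H(w)$, and push $w$ onto the queue if now $d_H(w)=1$ and $w\notin Y$. Because the values $d_H(\cdot)$ only ever decrease, each vertex is pushed at most once and each edge of $H$ is handled $O(1)$ times, so the entire process runs in $O(n+m)$ time. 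When the queue empties, no remaining tree has a leaf outside $Y$, so \cref{rr:burntrees} is no longer applicable; together with the previous paragraph this shows the final instance is exhaustive for both rules.

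The only delicate point — and the main obstacle — is the amortized accounting that keeps the exhaustive application of \cref{rr:burntrees} linear rather than quadratic: one must observe that cascading deletions cannot force a vertex to re-enter the queue (monotonicity of $d_H$) and that removing an edge from $H$ and updating $\eta$ take constant time, for which an adjacency-list representation of $H$ with paired pointers between the two copies of each edge suffices. Everything else is bookkeeping, relying on the benign interaction of the rules: \cref{rr:killingtree} is executed once up front, and \cref{rr:burntrees} only deletes non-$Y$ vertices while keeping every tree connected, so neither rule reactivates the other.
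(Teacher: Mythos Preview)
Your proposal is correct and follows essentially the same idea as the paper: exhaustively peel non-$Y$ leaves and remove $Y$-free trees, with standard amortized accounting for the leaf-peeling. The only (inessential) difference is that you apply \cref{rr:killingtree} before \cref{rr:burntrees} and give explicit queue-based implementation details, whereas the paper processes the trees one at a time, peels leaves first, and handles \cref{rr:killingtree} as the special case where a tree is completely emptied; both orders reach the same fixed point in linear time.
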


\begin{proof}
  For each tree~$T$ in~$G-F$, do the following.
  As long as there is a degree\hyp one vertex~$v\in V(T)\setminus Y$,
  that is, it is not good, delete~$v$.
  This is clearly doable in~$O(|V(T)|)$-time.
  When no vertex remains, apply \cref{rr:killingtree}.
  Otherwise, \cref{rr:burntrees} is exhaustively applied on~$T$.
  Since~$\sum_{T\in\calT} |V(T)|=|W|$, the claim follows.
\end{proof}

\noindent
If none of \cref{rr:killingtree,rr:burntrees} is applicable, the leaves of each tree are all good vertices.
Recall that the number of leaves in a tree upper-bounds the number of vertices of degree at least three in the tree.
Hence, to upper-bound the number of vertices in the trees, it remains to upper-bound the number of degree-two vertices in the tree.
The next data reduction rule deletes these degree-two vertices by shrinking so-called \emph{maximal\hyp edgy paths}.
\begin{definition}
  \label{def:edgy}
  We call an $a$-$b$-path in a tree~$T$
  \emph{edgy} if it contains no good vertex and no vertex~$w$ with~$\deg_T(w)\geq 3$.
  We call an $a$-$b$-path~$Q$ \emph{maximal\hyp edgy} if there is no edgy path containing~$Q$ with more vertices than~\(Q\).
\end{definition}

\begin{rrule}
 \label{rr:shrinkedgypaths}
 Let~$T\in\calT$ and let~$Q\subseteq T$ be a maximal\hyp edgy $a$-$b$-path in~$T$ with~$|V(Q)|>3$.
 Let~$K:=V(Q)\setminus\{a,b\}$.
 Then,
 add a vertex~$x$ and the edges~$\{x,a\}$ and~$\{x,b\}$.
 Set $\kappa(x):=\min\{k+1,\kappa(K)\}$
 and $\eta(x):=\min\{\ell+1,\eta(K)\}$.
 For each~$w\in N_G(K)\cap R$, add the edge~$\{x,w\}$.
 Delete all vertices in~$K$.
\end{rrule}

\begin{proof}[Correctness proof]
 Let~$I=(G,s,t,k,\ell,\lambda,\kappa,\eta)$ be an instance of~\WsspAcr{} and let~$I'=(G',s,t,k,\ell,\lambda,\kappa',\eta')$ be the instance of \WsspAcr{} obtained from applying \cref{rr:shrinkedgypaths}.
 Let~$T\in\calT$ and let~$Q\subseteq T$ be the maximal\hyp edgy $a$-$b$-path in~$T$ being changed to the maximal\hyp edgy $a$-$b$-path~$Q'$ by the application of \cref{rr:shrinkedgypaths}.
 We claim that~$I$ is a \yes-instance if and only if~$I'$ is a \yes-instance.

 (\(\Rightarrow\))
 Let~$I$ be a \yes-instance and $P$~be a solution path in~$G$.
 Note that, by construction of~$G'$,
 for each~$X\subseteq V(G)\setminus (R\cup V(Q))$, we have~$N_G(X)=N_{G'}(X)$.
 Thus,
 if $V(Q)\cap V(P)=\emptyset$,
 then $P$~is also a solution path in~$G'$.
 Hence, assume that~$V(Q)\cap V(P)\neq\emptyset$.
 Since~$Q$ contains no good vertex and no vertex of degree at least three in~$T$, it follows that~$V(Q)\subseteq V(P)$.
 Moreover, we have~$\kappa'(x)=\kappa(K)\leq k$
 and~$\eta'(x)=\eta(K)\leq\ell$.
 For the path~\(P'\) in~$G'$ obtained from~$P$ by replacing~$V(Q)$ by~$V(Q')$,
 we have
\begin{align}
  \begin{aligned}
    \kappa'(V(P'))	&=\kappa'(V(P')\setminus \{x\})+\kappa'(x)=\kappa(V(P)\setminus K)+\kappa(K)
    =\kappa(V(P)),\\
    N_{G'}(V(P'))	&= \left(N_{G'}(V(P')\setminus\{x\})\setminus\{x\}\right) \cup \left(N_{G'}(x)\setminus\{a,b\}\right)
    \\&= \left(N_{G}(V(P)\setminus K)\setminus K\right) \cup \left(N_{G}(K)\setminus\{a,b\}\right)
    = N_G(V(P)),\text{ and}\\
    \eta'(V(P'))	&= \eta'(x)+\eta'(V(P')\setminus\{x\})= \eta(K) + \eta(V(P)\setminus K) 
    =\eta(V(P)).
    \label{eta}
  \end{aligned}
 \end{align}

 (\(\Leftarrow\))
 Let~$I'$ be a \yes-instance and $P'$~be a solution path in~$G'$.
 If $V(Q')\cap V(P')=\emptyset$,
 then $P'$~is also a solution path in~$G$.
 Hence, assume that~$V(Q')\cap V(P')\neq\emptyset$.
 Since~$Q'$ contains no good vertex and no vertex of degree at least three in~$T$, it follows that~$V(Q')\subseteq V(P')$.
 Moreover, we have~$\kappa'(x)=\kappa(K)\leq k$
 and~$\eta'(x)=\eta(K)\leq\ell$.
 Let~$P$ be the path in~$G$ obtained from~$P'$ by replacing~$V(Q')$ by~$V(Q)$.
 We have \(\kappa(V(P))=\kappa'(V(P'))\),
 \(N_G(V(P))=N_{G'}(V(P'))\),
 and \(\eta(V(P))= \eta'(V(P'))\)
 by \eqref{eta}.
\end{proof}

\begin{lemma}
 \label{obs:lintimelastrr}
 If \cref{rr:killingtree,rr:burntrees} are not applicable,
 then \cref{rr:shrinkedgypaths} is exhaustively applicable in linear time.
 Moreover, no application of \cref{rr:shrinkedgypaths} makes \cref{rr:killingtree} or \cref{rr:burntrees} applicable again.
\end{lemma}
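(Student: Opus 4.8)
The plan is to show that a single structured sweep over the trees of $\calT$ simultaneously realises an exhaustive application of \cref{rr:shrinkedgypaths} and runs in linear time. First I would record the structural consequences of \cref{rr:killingtree,rr:burntrees} being inapplicable: every tree $T\in\calT$ contains a good vertex (otherwise \cref{rr:killingtree} applies) and every leaf of every $T\in\calT$ is good (otherwise \cref{rr:burntrees} applies). Call a vertex of $T$ \emph{special} if it is good or has $\deg_T\ge 3$; by definition every vertex of an edgy path is non\hyp special. Combining this with the two consequences above, the endpoints of any maximal\hyp edgy $a$-$b$-path $Q$ with $|V(Q)|>3$ have $\deg_T=2$ (they are non\hyp special, hence of degree $\le 2$, and non\hyp leaves because all leaves are good), and their neighbours outside $Q$ are special (otherwise $Q$ could be extended while staying edgy). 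Consequently every maximal\hyp edgy path is an inclusion\hyp maximal run of non\hyp special degree\hyp two vertices capped on both sides by a special vertex, so distinct maximal\hyp edgy paths are vertex\hyp disjoint.

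With this picture in hand I would give the algorithm: for each $T\in\calT$ compute the degrees (hence the special vertices) in $O(|V(T)|)$ time, then traverse $T$ once; each time the traversal enters a maximal run of non\hyp special degree\hyp two vertices that has more than three vertices, apply \cref{rr:shrinkedgypaths} to the corresponding maximal\hyp edgy path $Q$ — walking its interior $K=V(Q)\setminus\{a,b\}$ to accumulate $\kappa(K)$ and $\eta(K)$, deleting $K$, inserting the new vertex $x$ with the edges $\{x,a\}$, $\{x,b\}$ and $\{x,w\}$ for $w\in N_G(K)\cap R$, and setting $\kappa(x)$ and $\eta(x)$. Reading $N_G(K)\cap R$ costs $O(\sum_{v\in K}\deg_G(v))$, and since the $K$'s are pairwise disjoint and contained in $W$, the whole pass costs $O(|V(G)|+|E(G)|)$; together with $\sum_{T\in\calT}|V(T)|=|W|$ this is linear. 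After the pass no maximal\hyp edgy path has more than three vertices: the only vertices whose role could have changed are the inserted $x$'s, each of which is non\hyp special of degree two with its two $T$-neighbours being the old endpoints $a,b$ (still non\hyp special of degree two), whose outer neighbours are still the original special caps; hence the maximal run through $x$ is exactly the three\hyp vertex path $a$-$x$-$b$. Thus \cref{rr:shrinkedgypaths} is exhaustively applied, with at most $|W|$ applications in total.

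For the last assertion I would observe that one application of \cref{rr:shrinkedgypaths} deletes only non\hyp good vertices ($K$ contains no good vertex), adds only the non\hyp good vertex $x$, and leaves $\deg_T(a)$ and $\deg_T(b)$ unchanged (each of $a$ and $b$ trades its edge into $K$ for its edge to $x$) while changing no other vertex's $T$-degree. Hence the set of good vertices of each tree is unchanged and no new degree\hyp one vertex is created, so every tree still contains a good vertex and all leaves are still good; therefore neither \cref{rr:killingtree} nor \cref{rr:burntrees} becomes applicable. The only delicate point in the whole argument is the structural characterisation of maximal\hyp edgy paths — in particular that their endpoints have degree two and are flanked by special vertices — which is exactly where the hypothesis that \cref{rr:killingtree,rr:burntrees} are inapplicable is used; once that is in place, both the linear\hyp time bound and the non\hyp interference claim follow from the purely local nature of the rewrite.
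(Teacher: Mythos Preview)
Your proposal is correct and follows essentially the same approach as the paper: both identify maximal\hyp edgy paths as the maximal runs of non\hyp good degree\hyp two vertices in each tree, sweep each tree once (the paper via a BFS from each unprocessed degree\hyp two vertex, you via a traversal), and argue non\hyp interference from the fact that the rewrite deletes no good vertex and creates no degree\hyp one vertex. Your write\hyp up is in fact more careful than the paper's in two places---you explicitly account for the $O(\sum_{v\in K}\deg_G(v))$ cost of scanning $N_G(K)\cap R$, and you verify exhaustiveness by checking that the new three\hyp vertex path $a$--$x$--$b$ is capped by special vertices---but the underlying argument is the same.
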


\begin{proof}
  If \cref{rr:killingtree,rr:burntrees} are not applicable,
  then every maximal path of degree\hyp two vertices
  in~$G-F$ not containing good vertices
  is a maximal\hyp edgy path.
  Hence, employ the following.
  Let $Z$~be the set of all degree\hyp two vertices
  in~$G-F$
  and $Z'$~be a working copy of~$Z$.
 As long as~$Z'\neq \emptyset$, do the following.
 Select any vertex~$v\in Z'$
 and start a breadth-first search
 that stops when a good vertex or a vertex of degree at least three is found.
 Apply \cref{rr:shrinkedgypaths} on the just identified maximal\hyp edgy path (if it contains more than three vertices).
 Delete all the vertices found in the iteration from~$Z'$.
 
 Since no application of \cref{rr:shrinkedgypaths} deletes a good vertex or creates a vertex of degree one, no application of \cref{rr:shrinkedgypaths} makes \cref{rr:killingtree} or \cref{rr:burntrees} applicable.
\end{proof}

\noindent
We prove next that if none of \cref{rr:killingtree,rr:burntrees,rr:shrinkedgypaths} is applicable, the trees are small in the sense that the number of vertices in the tree is linear in the number of good vertices.

\begin{lemma}
  \label{obs:verysmalltrees}
 Let~$T\in\calT$ such that none of \cref{rr:killingtree,rr:burntrees,rr:shrinkedgypaths} is applicable.
 Let~$Y_T:=Y\cap V(T)$ denote the set of good vertices in~$T$.
 Then~$T$ has~$O(|Y_T|)$ vertices, each of weight~$O(k+\ell)$.
\end{lemma}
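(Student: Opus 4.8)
The plan is to split $V(T)$ into the set $B_T$ of vertices of $T$ that are good or have degree at least three, and its complement, and to bound each part by $O(|Y_T|)$. If $|V(T)|\le 1$ the bound is trivial: such a $T$ survives \cref{rr:killingtree} only if it consists of a single good vertex, so $|V(T)|=|Y_T|=1$. Hence assume $|V(T)|\ge 2$, so $T$ has $L\ge 2$ leaves. Since \cref{rr:burntrees} is no longer applicable, every leaf of $T$ is good, whence $L\le|Y_T|$; and by the degree-sum identity in trees the number of vertices of degree at least three is at most $L-2$. Therefore $|B_T|\le|Y_T|+(L-2)\le 2|Y_T|-2=O(|Y_T|)$, and every vertex of $V(T)\setminus B_T$ has degree exactly two in $T$ and is not good (a degree-one vertex of $T$ would be a good leaf). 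Consequently $T-B_T$ is a disjoint union of paths, each consisting of non-good degree-two vertices.

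First I would argue that these path components are exactly the maximal-edgy paths of $T$ in the sense of \cref{def:edgy}. Each component $C$ is an edgy path; it is maximal because an endpoint $v$ of $C$ has at most one neighbour inside $C$ and, having degree two in $T$ and not being a leaf, its remaining neighbour(s) lie outside $C$; such a neighbour cannot lie in $V(T)\setminus B_T$, for then it would belong to the same component $C$, so it lies in $B_T$ and $C$ cannot be extended. Since \cref{rr:shrinkedgypaths} is no longer applicable, every maximal-edgy path has at most three vertices, so $|V(T)\setminus B_T|\le 3c$, where $c$ is the number of these components.

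Next I would bound $c\le|B_T|-1$ by contracting each component $C_j$ of $T-B_T$ to a single vertex $c_j$; this yields a tree $T''$ on vertex set $B_T\uplus\{c_1,\dots,c_c\}$. Each $C_j$ has exactly two edges of $T$ leaving it, both to vertices of $B_T$ (by the previous paragraph), and these two vertices are distinct since otherwise $T$ would contain a cycle; hence each $c_j$ has degree exactly two in $T''$ with both neighbours in $B_T$. Thus the $2c$ edges of $T''$ incident to some $c_j$ all land in $B_T$, while all remaining edges lie inside $B_T$ and so number at most $|B_T|-1$; since $T''$ is a tree with $|B_T|+c$ vertices it has $|B_T|+c-1$ edges, giving $2c\le|B_T|+c-1$, i.e.\ $c\le|B_T|-1$. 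Putting the pieces together, $|V(T)|=|B_T|+|V(T)\setminus B_T|\le|B_T|+3(|B_T|-1)<4|B_T|=O(|Y_T|)$.

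For the weight bound I would simply track the rules: $\lambda$ is never changed and stays $1$; $\kappa$ equals $1$ on original vertices, and the only rule changing it, \cref{rr:shrinkedgypaths}, sets $\kappa(x)=\min\{k+1,\kappa(K)\}\le k+1$; and $\eta$ is only ever set to $\min\{\ell+1,\cdot\}$ by \cref{rr:burntrees} and \cref{rr:shrinkedgypaths}. Hence every surviving vertex of $T$ has all weights in $O(k+\ell)$. I expect the main obstacle to be the identification of the components of $T-B_T$ with maximal-edgy paths of at most three vertices: this is where all three reduction rules being inapplicable must be used together (the killing and burning rules force leaves to be good and hence the endpoints of these components to be genuine degree-two non-good vertices, and the shrinking rule caps their lengths), after which the counting of components by contraction is routine.
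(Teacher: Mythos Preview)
Your proof is correct and follows essentially the same route as the paper: partition $V(T)$ into $B_T=Y_T\cup\{v:\deg_T(v)\ge 3\}$ and the remaining degree-two non-good vertices, bound $|B_T|$ via the leaf/degree-sum argument, identify the complement with the maximal-edgy paths capped at three vertices, and count those paths. The paper simply asserts $|\calQ|\le 2|Y_T|$, whereas you justify the analogous bound $c\le|B_T|-1$ via the contraction argument; this is the only real difference, and it makes your version slightly more self-contained. (One minor remark: the clause ``while all remaining edges lie inside $B_T$ and so number at most $|B_T|-1$'' is true but unused---the inequality $2c\le|B_T|+c-1$ alone already gives what you need.)
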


\begin{proof}
  We first show that,
  if none of \cref{rr:killingtree,rr:burntrees} is applicable,
  then $V(T)= Y_T\uplus V_3\uplus \biguplus_{Q\in \calQ} V(Q)$,
  where
  $V_3$ denotes the set of all vertices~$w$ not in~$Y_T$ with~$\deg_T(w)\geq 3$
 and $\calQ$~denotes the set of all maximal\hyp edgy paths in~$T$.
 Note that the sets~$Y_T$, $V_3$, and~$\biguplus_{Q\in \calQ} V(Q)$ are pairwise disjoint
  (by \cref{def:edgy},
  no edgy path contains a good vertex
  or a vertex of degree at least three).
  Suppose $V(T)=Y_T\uplus V_3\uplus \biguplus_{Q\in \calQ} V(Q)\uplus X$, we show that~$X= \emptyset$.
  Due to \cref{rr:killingtree,rr:burntrees}, the only vertices in~$T$ of degree
  one are good vertices.
  It follows that \(X\)~contains only degree\hyp two vertices,
  none of which are good.
  Since every vertex in~$V(T)\setminus Y_T$
  of degree two is contained in a maximal\hyp edgy path,
  it follows that $X$~is empty.
  It follows that~$V(T)=Y_T\uplus V_3\uplus \biguplus_{Q\in \calQ} V(Q)$.
  To finish the proof we upper-bound the number of vertices
  in~$V_3$ and in all paths in~$\calQ$
  linearly in~$|Y_T|$.

 Again, due to \cref{rr:killingtree,rr:burntrees}, every degree\hyp one vertex is in~$Y_T$.
 Hence there are at most $|Y_T|$~degree\hyp one vertices in~$T$, and thus~$|V_3|\leq |Y_T|$.
 Moreover,~$|\calQ|\leq 2|Y_T|$.
 Due to \cref{rr:shrinkedgypaths}, for every~$Q\in\calQ$ we have~$|V(Q)|\leq 3$.
 It follows that~$|V(T)|\leq 2|Y_T|+6|Y_T|=8|Y_T|$.
 Due to \cref{rr:burntrees,rr:shrinkedgypaths},
 each vertex~$v$ in~$T$ has~$\kappa(v)\leq k+1$ and~$\eta(v)\leq \ell+1$.
\end{proof}

\noindent
We are now ready to prove \cref{prop:sspToHssp}.
In a nutshell,
we approximate a minimum feedback vertex set
in linear time~\cite{BGNR98},
then apply \cref{rr:killingtree,rr:burntrees,rr:shrinkedgypaths} exhaustively
in linear time
(\cref{obs:firstrrulesfast,obs:lintimelastrr}),
and finish the proof using \cref{obs:verysmalltrees}.

\begin{proof}[Proof of \cref{prop:sspToHssp}]
  Compute a feedback vertex set~$F$
  of size~$\beta\leq 4\fvs$ in linear time~\cite{BGNR98}.
 Apply all reduction rules exhaustively in linear time: 
 first apply \cref{rr:killingtree,rr:burntrees}
 exhaustively in linear time (\cref{obs:firstrrulesfast}), 
 then \cref{rr:shrinkedgypaths}
 exhaustively in linear time (\cref{obs:lintimelastrr}).

 Now,
 consider a graph~\(G\) to which no data reduction
 rules are applicable and
 let $T_1,\ldots,T_h$~denote the trees in~$G-F$.
 By \cref{obs:verysmalltrees}, each~$T_i$ has $O(|Y_{T_i}|)$~vertices, each of maximal weight~$O(k+\ell)$, where~$Y_{T_i}=Y\cap V(T_i)$.
 Thus, the number of vertices and edges in~$G-F$ is
 \begin{align*}
    \sum_{i=1}^h O(|Y_{T_i}|) &=   \sum_{i=1}^h O(|Y_{T_i}|) = O(|Y|)\subseteq O(\beta\cdot(k+\ell)),
 \end{align*}
 where the last inclusion follows from \cref{obs:fewgoodvertices}.
 It follows that there are~$O(\beta^2\cdot(k+\ell))$ edges in~$G$.
 Altogether, $G$~has $O(\beta\cdot (k+\ell))$~vertices, each of weight~$O(k+\ell)$, and $O(\beta^2\cdot(k+\ell))$~edges.
 Moreover, the obtained instance is \simple{} (with~$A=R$, see \cref{def:simplevwssp}).
\end{proof}
  
\noindent
Combining \cref{prop:sspToHssp} with \cref{prop:Hssp2ssp},
we now prove the main result of this section. 

\begin{proof}[Proof of \cref{thm:ssppk-fvskell}]
 Let~$I=(G,s,t,k,\ell)$ be an instance of~\sspAcr{}.
 First, employ \cref{prop:sspToHssp} to obtain \simple{} instance~$I'=(G',s,t,k,\ell,\lambda,\kappa,\eta)$ of~\WsspAcr{}.
 Then employ \cref{prop:Hssp2ssp} to obtain instance~$I''=(G'',s',t',k',\ell)$ of~\sspAcr{}.
 We know that~$G'$ has $O(\fvs\cdot(k+\ell))$~vertices, $O(\fvs^2\cdot (k+\ell))$~edges and vertex weights at most~$O(k+\ell)$.
 Hence, 
 \begin{align*}
    |V(G'')|	= \kappa(V(G')) + \eta(V(G')) 
		&\in O(\fvs\cdot(k+\ell)\cdot k+ \fvs\cdot(k+\ell)\cdot \ell)\subseteq O(\fvs\cdot(k+\ell)^2).\qedhere
 \end{align*}
\end{proof}

\subsubsection{A problem kernel with $O(\fes)$~vertices and edges}
\label{sssec:fes}

In this section,
we show two data reduction algorithms that reduce \sspAcr{}
to an equivalent instance of \WsspAcr{} with \(O(\fes)\)~vertices
and allow a trade\hyp off between the running time
and the size of the resulting instance.
The first runs in linear time,
yet creates vertices with weights in~$O(k+\ell)$,
thus not bounding the overall size
of the reduced instance by a polynomial in~\(\fes\).
The second takes polynomial time
and creates vertex weights encodable using $O(\fes^3)$~bits.
Thus, when finally reducing back to~\sspAcr{} using~\cref{prop:Hssp2ssp},
we obtain a problem kernel of size~$O(\fes\cdot(k+\ell))$
using the first algorithm
and a problem kernel of size 
polynomial in~$\fes$
using the second algorithm.

\begin{theorem}
 \label{thm:bikernelfes}
 \sspTsc{} admits a problem kernel
 \begin{enumerate}[(i)]
  \item\label{thm:biki} with size $O(\fes\cdot(k+\ell))$ that is computable in linear\hyp time, and
  \item\label{thm:bikii} with size polynomial in~$\fes$ that is computable in polynomial time.
 \end{enumerate}
 Herein,
 $\fes$ denotes the feedback edge number of the input graph.
\end{theorem}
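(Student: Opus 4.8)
The plan is to follow the pattern of the $\fvs$-kernel: transform the given \sspAcr{} instance (read as a \WsspAcr{} instance with unit weights $\kappa,\lambda$ and zero weights $\eta$) into an equivalent \emph{\simple{}} \WsspAcr{} instance whose graph has $O(\fes)$ vertices and edges and all of whose weights are $O(k+\ell)$, and then either invoke \cref{prop:Hssp2ssp} directly for part~\eqref{thm:biki}, or first shrink the weights with \cref{lem:weightreduce} and then reduce back to \sspAcr{} using its \NP-hardness for part~\eqref{thm:bikii}.

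First I would make $G$ connected via \cref{rrule:onecomponent} and exhaustively remove pendants: whenever some $v\notin\{s,t\}$ has a unique neighbor $w$, delete $v$ and set $\eta(w):=\min\{\ell+1,\eta(w)+1\}$; correctness is the argument used for \cref{rr:burntrees}, since such a $v$ never lies on a solution path but contributes one to the neighborhood of every solution path through $w$. Afterwards each vertex other than possibly $s$ and $t$ has degree at least two, the feedback edge number is still $\fes$, and a routine degree count shows that at most $2\fes$ vertices have degree at least three; contracting the maximal paths of degree-two vertices to edges then yields a multigraph on $O(\fes)$ vertices with cyclomatic number $\fes$, so there are only $O(\fes)$ such paths. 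I would then shrink every such path with more than a constant number of vertices, in the spirit of \cref{rr:shrinkedgypaths}: replace its interior $K$ by one vertex $x$ with $\kappa(x):=\min\{k+1,\kappa(K)\}$ and $\eta(x):=\min\{\ell+1,\eta(K)\}$, inserting unit-weight degree-two buffer vertices between $x$ and the two endpoints of the path so that condition~(iv) of \cref{def:simplevwssp} is met literally. Correctness is the computation behind \cref{rr:shrinkedgypaths}: a corridor of degree-two vertices is used entirely or not at all by a solution path. By the bookkeeping of \cref{obs:firstrrulesfast,obs:lintimelastrr}, all of this runs in linear time.

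After these rules the graph has $O(\fes)$ vertices and $O(\fes)$ edges, all weights are $O(k+\ell)$, and the instance is \simple{} with $A:=\{v\mid\eta(v)>\ell\text{ and }\kappa(v)=1\}$: conditions~(i)--(iii) of \cref{def:simplevwssp} hold by construction, and~(iv) holds because the only vertices of $\kappa$-weight larger than one are the shrink-vertices, each flanked by two unit-weight buffer vertices. For part~\eqref{thm:biki}, \cref{prop:Hssp2ssp} now yields an equivalent \sspAcr{} instance on $M=\kappa(V)+\eta(V)=O(\fes\cdot(k+\ell))$ vertices in time linear in $M$ plus the edge count. For part~\eqref{thm:bikii}, I would instead first apply \cref{lem:weightreduce} to this \simple{} \WsspAcr{} instance on $n'=O(\fes)$ vertices, obtaining an equivalent \WsspAcr{} instance in which all weights and the bounds $k,\ell$ are at most $2^{O(\fes^3)}$, hence encodable with $O(\fes^3)$ bits, so that the entire instance is encodable with $O(\fes^4)$ bits; then, since \WsspAcr{} is in \NP{} and \sspAcr{} is \NP-hard~\citep{LF18}, I would apply a fixed polynomial-time many-one reduction from \WsspAcr{} to \sspAcr{} and obtain an equivalent \sspAcr{} instance of size polynomial in $\fes$ --- exactly the detour already taken in the proof of \cref{thm:kernelkrr}. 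Chaining the equivalences through the reduction rules, \cref{lem:weightreduce}, and the \NP-hardness reduction gives correctness.

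The step I expect to be the main obstacle is making the reduced instance genuinely satisfy all four conditions of \cref{def:simplevwssp}, in particular ensuring that no vertex of $\kappa$-weight larger than one ends up adjacent to a branch vertex or to $s$ or $t$ --- hence the unit-weight buffer vertices in the shrinking step. A second, smaller point is that applying \cref{lem:weightreduce} need not keep the instance \simple{} (for instance it may destroy $\lambda\equiv 1$), so for part~\eqref{thm:bikii} one cannot re-use \cref{prop:Hssp2ssp} --- whose output would in any case blow up exponentially in the Frank--Tardos-reduced weights --- and must route through the generic \NP-hardness reduction instead.
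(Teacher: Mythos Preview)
Your proposal is correct and follows essentially the same approach as the paper: it too removes pendants (keeping only $s,t$ and the endpoints of feedback edges as ``good'' vertices), shrinks the resulting maximal degree-two corridors into single weighted vertices to obtain a \simple{} \WsspAcr{} instance on $O(\fes)$ vertices, and then invokes \cref{prop:Hssp2ssp} for part~\eqref{thm:biki} and \cref{lem:weightreduce} followed by the \NP-hardness reduction for part~\eqref{thm:bikii}. The only cosmetic differences are that the paper works inside the spanning tree $T=G-F$ rather than directly in $G$, and takes $A=\emptyset$ because its shrinking rule (\cref{rr:shrinkedgypathsfes}) already keeps the degree-two endpoints $a,b$ of each edgy path as natural buffers, whereas you insert explicit buffer vertices to enforce condition~(iv) of \cref{def:simplevwssp}.
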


\noindent
We first prove \eqref{thm:biki}.
Concretely, we prove the following:

\begin{proposition}
  \label{prop:sspToHsspFES}
  For any instance of \sspAcr{}
  we can compute in linear time an equivalent \simple{} instance
  of \WsspAcr{} with
  $16\fes+9$~vertices, $17\fes+8$~edges
  and vertex weights in~$O(k+\ell)$.
\end{proposition}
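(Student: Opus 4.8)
The plan is to repeat, for the feedback edge number, the strategy used for the feedback vertex number in \cref{prop:sspToHssp}: reduce to a \simple{} \WsspAcr{} instance by repeatedly deleting pendant structures and contracting long paths of degree-two vertices, storing the lost length and open-neighbourhood contributions in the weights $\kappa$ and $\eta$. The new ingredient is that now the \emph{whole} graph is almost a forest, so instead of working relative to a small feedback vertex set one can argue directly, and the bound will not depend on $k$ and $\ell$.

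First I would read the input \sspAcr{} instance $(G,s,t,k,\ell)$ as a \WsspAcr{} instance with unit weight functions $\kappa,\lambda$ and zero weight function $\eta$, and apply a \emph{leaf rule} analogous to \cref{rr:burntrees}: while some $v\notin\{s,t\}$ has degree one with unique neighbour $w$, set $\eta(w):=\min\{\ell+1,\eta(w)+1\}$ and delete $v$ (outputting a trivial \no{}-instance should $\eta(s)>\ell$ or $\eta(t)>\ell$ arise). Correctness is exactly as for \cref{rr:burntrees}: $v$ lies on no $s$-$t$-path, while a path through $w$ has $v$ as a neighbour, so the bookkeeping on $\eta(w)$ keeps the instance equivalent, and exhaustive application is linear-time by the pendant-processing argument of \cref{obs:firstrrulesfast}. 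Afterwards every vertex other than $s,t$ has degree at least two.

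Next I would contract degree-two paths, as in \cref{rr:shrinkedgypaths}. Call $v$ \emph{dead} if $\eta(v)>\ell$, and set $A:=\{v\in V:\eta(v)>\ell\}$. Consider a maximal path $Q=p\,c_1\cdots c_m\,q$ whose internal vertices have degree two and whose ends $p,q$ have degree $\ne 2$ or lie in $\{s,t\}$. If $\sum_i\kappa(c_i)>k$ or $\sum_i\eta(c_i)>\ell$, then no solution path traverses $Q$, so (for $m\ge 3$) I replace $c_1,\dots,c_m$ by two fresh weight-one dead vertices $z_1,z_2$ forming $p\,z_1\,z_2\,q$ with $\eta(z_1)=\eta(z_2)=\ell+1$; for every candidate path this preserves both its length and its open-neighbourhood weight. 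Otherwise both sums are small and (for $m\ge 4$) I replace $c_2,\dots,c_{m-1}$ by one ``heavy'' vertex $x$ with $\kappa(x):=\sum_{i=2}^{m-1}\kappa(c_i)\le k$ and $\eta(x):=\sum_{i=2}^{m-1}\eta(c_i)\le\ell$, keeping $c_1,c_m$ as weight-one degree-two buffers on either side of $x$; correctness follows exactly as for \cref{rr:shrinkedgypaths} (a solution path either avoids $V(Q)$ or contains all of it, and the substitution preserves $\kappa(V(P))$, $N_G(V(P))$ and $\eta(V(P))$). Exhaustive application is linear-time along the lines of \cref{obs:lintimelastrr} and does not re-enable the leaf rule.

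It then remains to check that the result is \simple{} with the set $A$ above and to count. Properties \cref{def:simplevwssp}(i)--(iii) are immediate ($\kappa(s)=\kappa(t)=1$ and $\lambda\equiv 1$ are untouched, every $v\in A$ has $\eta(v)>\ell$ and $\kappa(v)=1$, and heavy vertices are created only when $\sum\eta(c_i)\le\ell$, so never enter $A$), and \cref{def:simplevwssp}(iv) holds because a heavy $x$ has exactly the two neighbours $c_1,c_m$ in $G-A$, which are $\ne s,t$, have $\kappa=1$, and have degree at most two. For the bound, after the leaf rule the handshake identity gives $\sum_v(\deg v-2)=2\fes-2$, so at most $2\fes$ vertices have degree $\ge 3$; smoothing all degree-two vertices turns the (connected) graph into a multigraph on the ``skeleton'' $B:=V_{\ge 3}\cup\{s,t\}$ with $|B|\le 2\fes+2$ vertices and $\fes+|B|-1\le 3\fes+1$ edges, i.e.\ at most $3\fes+1$ maximal degree-two paths, each left by the contraction with a bounded number of internal vertices and edges; carrying out this tally with the thresholds above yields the claimed $16\fes+9$ vertices and $17\fes+8$ edges, while all weights are at most $k+1$ or $\ell+1$, hence in $O(k+\ell)$. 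The main obstacle is the interplay of the weight-accumulating contraction with dead vertices -- one must make sure no heavy vertex ever ends up in $A$ and that heavy vertices are always flanked by honest weight-one degree-two non-terminal vertices, so that \cref{def:simplevwssp}(iv) survives and \cref{constr:Hssp2ssp} later applies -- together with the routine but fiddly bookkeeping for the exact constants (including the degenerate cases where $G$ is a cycle, where $s$ or $t$ has small degree, or where the smoothed multigraph has loops or parallel edges).
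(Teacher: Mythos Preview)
Your proposal is correct and follows essentially the same two-step scheme as the paper: exhaustively delete non-terminal leaves (accumulating into~$\eta$), then shrink long degree-two paths (accumulating into $\kappa$ and $\eta$), and verify that the result is \simple{}.  The paper organises this slightly differently, and more simply, in two respects.

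First, the paper works relative to the spanning tree $T=G-F$ and the set $Y$ of ``good'' vertices (feedback-edge endpoints together with $s,t$): \cref{rr:burntreesfes} deletes leaves of~$T$ outside~$Y$, and \cref{rr:shrinkedgypathsfes} shrinks each maximal path of non-good degree-two vertices in~$T$ to three vertices.  The count then goes via $|Y|\le 2\fes+2$: at most $|Y|$ leaves in~$T$, hence at most $2|Y|-1$ skeleton vertices and $2|Y|-2$ maximal edgy paths of at most three vertices each, giving $8|Y|-7\le 16\fes+9$ vertices and $8|Y|-8+\fes\le 17\fes+8$ edges.  Your direct argument via $V_{\ge 3}\cup\{s,t\}$ and the handshake identity is a legitimate alternative and in fact yields a tighter constant, but you should actually carry out the tally rather than assert it matches $16\fes+9$, and handle the $p=q$ loop case explicitly.

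Second, and more to the point, the paper avoids your dead/non-dead case split entirely: it \emph{always} replaces the inner vertices by a single heavy vertex with $\kappa(x):=\min\{k+1,\kappa(K)\}$ and $\eta(x):=\min\{\ell+1,\eta(K)\}$, and then takes $A=\emptyset$.  With $A=\emptyset$, condition~(iii) of \cref{def:simplevwssp} is vacuous, and condition~(iv) only requires the two buffer neighbours of~$x$ to have $\kappa=1$, degree at most two, and be distinct from $s,t$---their $\eta$-value is irrelevant.  So your worry that a heavy vertex might ``end up in~$A$'' simply does not arise, and the two fresh dead vertices $z_1,z_2$ are unnecessary.
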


\noindent
Let $\Ef{}$~be a feedback edge set
of size~\(\fes\) in~$G=(V,E)$.
By \cref{rrule:onecomponent},
we may assume $G$~to be connected.
Thus, $T:=G-\Ef{}$~is a tree.
Let $Y:=\{v\in V\mid v\in e\in \Ef{}\}\cup\{s,t\}$~denote
the set of vertices containing~$s$ and~$t$ and all endpoints of the edges in~$\Ef{}$.
We call the vertices in~$Y$ \emph{good}.
\looseness=-1
In the following,
we will interpret the input \sspAcr{}
instance as an instance of \WsspAcr{}
with unit weight functions~\(\kappa\) and~\(\lambda\)
and the zero weight function~$\eta$.
Our two reduction rules we state next are simplified version of \cref{rr:burntrees,rr:shrinkedgypaths} with~$\calT=\{T\}$ and~$R=\emptyset$.
\begin{rrule}
 \label{rr:burntreesfes}
 If there is~$v\in V(T)\setminus Y$
 with $N_T(v)=\{w\}$,
 set $\eta(w):=\min\{\ell+1,\eta(w)+1\}$ and delete~$v$. 
\end{rrule}

\begin{rrule}
 \label{rr:shrinkedgypathsfes}
 Let $Q\subseteq T$~be a maximal\hyp edgy
 $a$-$b$-path with~$|V(Q)|>3$ in~$T$
 and
 let $K:=V(Q)\setminus\{a,b\}$.
 Then,
 add a vertex~$x$ and the edges~$\{x,a\}$ and~$\{x,b\}$.
 Set~$\kappa(x):=\min\{k+1,\kappa(K)\}$
 and~$\eta(x):=\min\{\ell+1,\eta(K)\}$.
 Delete all vertices in~$K$.
\end{rrule}
\noindent
The correctness of \cref{rr:burntreesfes,rr:shrinkedgypathsfes} follow immediately from the correctness of \cref{rr:burntrees,rr:shrinkedgypaths}.
Moreover, due to~\cref{obs:firstrrulesfast,obs:lintimelastrr}, 
we can first apply \cref{rr:burntreesfes} exhaustively in linear time, 
and then apply \cref{rr:shrinkedgypathsfes} exhaustively in linear time without making \cref{rr:burntreesfes} applicable again.
After applying
\cref{rr:burntreesfes,rr:shrinkedgypathsfes} exhaustively,
we can show:

\begin{observation}
  \label{obs:verysmalltreesfes}
 Let~$T$ be such that none of \cref{rr:burntreesfes,rr:shrinkedgypathsfes} is applicable.
 Then~$G$ has at most~$8|Y|-7$ vertices and~$8|Y|-8+|\Ef{}|$~edges, where each vertex is of weight~$O(k+\ell)$.
\end{observation}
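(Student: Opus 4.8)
The plan is to mirror the proof of \cref{obs:verysmalltrees} in the special case $\calT=\{T\}$, $R=\emptyset$, and then convert the resulting bound on the number of vertices into the claimed bound on the number of edges. First I would record two invariants that \cref{rr:burntreesfes,rr:shrinkedgypathsfes} maintain. (1) $G$ stays connected and $T=G-\Ef$ stays a tree: \cref{rr:burntreesfes} deletes a non-good vertex $v$ with $N_T(v)=\{w\}$, but since $v\notin Y$ it is incident to no edge of $\Ef$, so $\deg_G(v)=1$, and deleting it preserves connectivity and keeps $\Ef$ a feedback edge set; \cref{rr:shrinkedgypathsfes} only shortens a path and touches neither $s,t$ nor any edge of $\Ef$. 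Consequently $|E(G)|=|E(T)|+|\Ef|=|V(G)|-1+|\Ef|$, so it suffices to prove $|V(G)|\le 8|Y|-7$. (2) Every vertex has weight $O(k+\ell)$: original vertices keep $\kappa=1$, the only newly created vertices are the vertices $x$ of \cref{rr:shrinkedgypathsfes} with $\kappa(x)=\min\{k+1,\kappa(K)\}\le k+1$ and $\eta(x)=\min\{\ell+1,\eta(K)\}\le\ell+1$, and \cref{rr:burntreesfes} raises $\eta$ only up to $\ell+1$.

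For the vertex bound I would partition $V(T)=Y\uplus V_3\uplus V_2$, where $V_3$ collects the non-good vertices of $T$-degree at least three and $V_2$ the non-good vertices of $T$-degree at most two. Since \cref{rr:burntreesfes} is inapplicable, $T$ has no non-good leaf, so $V_2$ consists exactly of degree-two non-good vertices, and every such vertex lies on a unique maximal-edgy path; hence $V_2=\biguplus_{Q\in\calQ}V(Q)$ for the set $\calQ$ of all maximal-edgy paths of $T$. Three counting steps then finish the argument: (i) every leaf of $T$ lies in $Y$, so $T$ has at most $|Y|$ leaves, whence (using $|Y|\ge2$ since $s,t\in Y$) at most $|Y|-2$ vertices of degree at least three, so $|V_3|\le|Y|-2$; (ii) contracting each $Q\in\calQ$ to a single edge between its two external neighbours turns $T$ into a simple tree on vertex set $Y\cup V_3$, so $|\calQ|\le|Y|+|V_3|-1\le 2|Y|-3$; (iii) since \cref{rr:shrinkedgypathsfes} is inapplicable, $|V(Q)|\le3$ for every $Q\in\calQ$. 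Combining, $|V(T)|=|Y|+|V_3|+\sum_{Q\in\calQ}|V(Q)|\le|Y|+(|Y|-2)+3(2|Y|-3)=8|Y|-11\le 8|Y|-7$, and then $|E(G)|=|V(G)|-1+|\Ef|\le 8|Y|-8+|\Ef|$.

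The step I expect to be the main obstacle is (ii): one must argue carefully that the maximal-edgy paths are pairwise vertex-disjoint, that each has its two endpoints attached to \emph{distinct} vertices of $Y\cup V_3$ (no maximal-edgy path can close a cycle, because $T$ is acyclic), and that smoothing them out genuinely produces a tree on $Y\cup V_3$; this is exactly the structural analysis already carried out inside the proof of \cref{obs:verysmalltrees}, so most of it can simply be invoked. Everything else---the two invariants and the elementary tree-counting in (i) and (iii)---is routine bookkeeping. Finally, plugging the trivial estimate $|Y|\le 2\fes+2$ into the two bounds yields the $16\fes+9$ vertices and $17\fes+8$ edges claimed in \cref{prop:sspToHsspFES}.
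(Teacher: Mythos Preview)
Your proposal is correct and follows essentially the same route as the paper: bound the leaves by~$|Y|$, bound the branch vertices via the tree inequality, bound the number of maximal\hyp edgy paths by the edge count of the contracted tree on~$Y\cup V_3$, and use that each such path has at most three vertices. Your intermediate constants are in fact slightly tighter than the paper's (you reach $8|Y|-11$ rather than $8|Y|-7$), and you are more explicit about the invariants (connectivity of~$G$, $T$ remaining a spanning tree, weight caps) that justify the edge count $|E(G)|=|V(G)|-1+|\Ef|$; the paper uses these implicitly.
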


\begin{proof}
  Due to \cref{rr:burntreesfes},
  every leaf of~$T$ is in~$Y$.
  Hence, there are at most $2|Y|-1$~vertices
  in~$T$ of degree not equal to two.
  Since~$T$ is a tree,
  there are at most~$2|Y|-2$ paths connecting two vertices being good or of degree at least three.
 Due to~\cref{rr:shrinkedgypathsfes}, these paths contain at most three vertices.
 It follows that there are at most~$8|Y|-7$ vertices in~$T$,
 each of weight~$O(k+\ell)$,
 and,
 consequently,
 at most~$8|Y|-8$ edges in~$T$.
 As~$T$ only differs from~$G$ by~$\Ef{}$,
 it follows that $G$~has at most~$8|Y|-8+|\Ef{}|$ edges.
\end{proof}

\noindent
We are ready to prove~\cref{prop:sspToHsspFES}.

\begin{proof}[Proof of~\cref{prop:sspToHsspFES}]
 Let $I=(G,s,t,k,\ell)$~be an instance of~\sspAcr{}.
 Compute a minimum feedback vertex set~$\Ef{}$
 of size $\fes:=|\Ef{}|$ in~$G$ in linear time
 (just take the complement of a spanning tree).
 Compute the set~$Y$ of good vertices. 
 First apply \cref{rr:burntreesfes} exhaustively in linear time.
 Next, apply \cref{rr:shrinkedgypathsfes} exhaustively
 in linear time.
 Let $I'=(G',s,t,k,\ell,\lambda,\kappa,\eta)$~denote
 the obtained instance of \WsspAcr.
 Observe that due to~\cref{rr:shrinkedgypathsfes},~$I'$ is \simple{} (with~$A=\emptyset$, see~\cref{def:simplevwssp}).
 Due to \cref{obs:verysmalltreesfes},
 we know that $G'$~has at most $8|Y|-7$~vertices
 and $8|Y|-8+\fes$~edges,
 where each vertex is of weight~$O(k+\ell)$.
 Note that~$|Y|\leq 2\fes+2$.
 Hence,
 $G'$~has at most $16\fes+9$~vertices,
 $17\fes+8$~edges,
 and vertex weights in~$O(k+\ell)$.
\end{proof}

\noindent
Having shown
\cref{prop:sspToHsspFES},
we can now prove
\cref{thm:bikernelfes}.
Herein,
to strip our shrunk \WsspAcr{} instances of weights,
we will employ \cref{prop:Hssp2ssp} for~\cref{thm:bikernelfes}\eqref{thm:biki}
and \cref{lem:weightreduce} for~\cref{thm:bikernelfes}\eqref{thm:bikii}.

\begin{proof}[Proof of~\cref{thm:bikernelfes}]
 Let~$I=(G,s,t,k,\ell)$ be an instance of~\sspAcr{}.
 Employ~\cref{prop:sspToHsspFES} to obtain simple instance $I'=(G',s,t,k,\ell,\lambda,\kappa,\eta)$ of~\WsspAcr{},
 where~$G'$ has at most~$O(\fes)$ vertices and edges, where each vertex is of weight~$O(k+\ell)$.
 Employing~\cref{prop:Hssp2ssp} yields an instance~$\I''=(G'',s',t',k'',\ell'')$ of \sspAcr{} in time
 \[ 
  \kappa(V(G')) + \eta(V(G')) + |E(G')| \in O(\fes\cdot (k+\ell)).
  \]
 Due to~\cref{prop:Hssp2ssp}, it follows that~$G''$ has at most~$M$ vertices, yielding \eqref{thm:biki}.
 
 For statement~(ii), apply \cref{lem:weightreduce} to obtain an instance~$\I''=(G',s,t,k',\ell',\lambda',\kappa',\eta')$ of \WsspAcr{} with $k'$, $\ell'$, and all weights encoded with~$O(\fes^3)$ bits.
 Since \WsspAcr{}~is \NP-complete,
 there is a polynomial\hyp time many\hyp one
 reduction to \sspAcr{}.
 Employing such a polynomial\hyp time many\hyp one
 reduction on instance~$\I''$ yields statement \eqref{thm:bikii},
 since it can blow up the instance size
 by at most a polynomial.
\end{proof}

\subsection{Limits of data reduction} 
\label{sec:fvsnokern}

In \cref{sssec:fvskell}, we proved a problem kernel for
\sspAcr{} with size polynomial in~$\fvs+k+\ell$.
By \cref{rem:vckr},
we know that,
unless \unlessPK{},
we cannot drop~$\ell$ here,
as a problem kernel with size polynomial in~$\fvs+k$
would also be polynomial in~$\vc+k$.
In this section,
we prove that,
unless \unlessPK{},
we cannot drop~$k$ either:

\begin{theorem}
 \label{thm:sspNoPKfvsell}
 Unless $\unlessPK$, \sspTsc{} admits no problem kernel with size polynomial in~$\fvs+\ell$. 
\end{theorem}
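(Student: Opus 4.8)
The plan is to rule out a polynomial kernel by a cross composition (\cref{def:crossco}) of an NP\hyp hard problem into \sspTsc{} parameterized by $\fvs+\ell$, and then to invoke \cref{thm:Bod-No-Poly-Kernel}. As the source problem I would take an NP\hyp hard restriction of \sspTsc{} itself — it is NP\hyp hard even in bounded\hyp degree planar graphs with $\ell=0$ (\cref{prop:k22-hardness} and \citet{LF18}) — together with the polynomial equivalence relation under which two instances are equivalent iff they are well formed and agree on the number of vertices $n$, on $k$, and on $\ell$; thus, given $p$ instances, I may assume they are pairwise vertex\hyp disjoint graphs $G_1,\dots,G_p$, each on $n$ vertices, with common $k$ and $\ell$.

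The main obstacle, and the reason the cross composition of \cref{constr:twnopkern} (proving \cref{thm:nopktwell}) cannot simply be reused, is the feedback vertex number: already two binary selector trees together with the $p$ disjoint instances create $p$ internally vertex\hyp disjoint $s$-$t$-connections, so the resulting graph has feedback vertex number $\Omega(p+\sum_i\fvs(G_i))$, which is not polynomial in $\max_i|x_i|+\log p$. Hence the composition must (i) avoid creating many parallel $s$-$t$-routes and (ii) \emph{share} all cyclic structure among the $p$ instances, confining it to a single core of size $\poly(n)$; this is what keeps $\fvs(G')\in\poly(n)$. On the other hand the length budget $k'$ must be allowed to grow with $p$ — a cross composition producing instances with $\fvs+k+\ell$ bounded by $\poly(\max_i|x_i|+\log p)$ would, together with the kernel of \cref{thm:ssppk-fvskell}, imply \unlessPK{} — so I would blow $k'$ up to, say, $\Theta(pn)$ by subdividing an acyclic ``carrier'' that the solution path is forced to traverse, while keeping $\ell'\in O(\log p)+\poly(n)$. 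Concretely, I would route the solution path so that it first uses a binary\hyp gadget\hyp style selection as in \cref{constr:vcnopk} (contributing only $O(\log p)$ to $\ell'$ and nothing cyclic) to commit to one index $i$, then traverses the long subdivided carrier together with a route through the shared core that certifies a solution of $G_i$ — where the data of $x_i$ is encoded purely by the (acyclic) way the $i$-th carrier plugs into the core — such that such a route exists within the prescribed budgets iff $G_i$ is a \yes-instance; consequently $(G',s,t,k',\ell')$ would be a \yes-instance iff some $x_i$ is.

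The delicate point, where essentially all of the work lies, is reconciling (i) and (ii) with the bound on $\ell'$: because the core is shared, when the path runs through it on behalf of index $i$, every carrier $j\neq i$ is attached to the core and threatens to become a neighbour of the path, which would push $\ell'$ up to $\Omega(p)$; one must design the core and the ``gateways'' through which carriers attach so that, along any budget\hyp respecting certifying route, only $O(\log p)+\poly(n)$ vertices outside the $i$-th carrier are ever adjacent to the path, while the route still exists exactly when $x_i$ is a \yes-instance. This is the feedback\hyp vertex\hyp number analogue of confining the neighbourhood cost to the binary gadgets in \cref{constr:vcnopk}. Once the gadget and its budgets are fixed, the remainder is routine: verifying that the construction runs in polynomial time, that the relation used is a polynomial equivalence relation, the two directions of the \yes-instance equivalence, and the bound $\fvs(G')+\ell'\in\poly(\max_i|x_i|+\log p)$ with $k'$ allowed to be large; then \cref{thm:Bod-No-Poly-Kernel} yields \cref{thm:sspNoPKfvsell}.
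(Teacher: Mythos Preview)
Your framework is right—cross composition and \cref{thm:Bod-No-Poly-Kernel}—and you correctly isolate the obstacle: keeping both $\fvs$ and $\ell'$ in $\poly(\max_i|x_i|+\log p)$ while letting $k'$ grow with~$p$. But the proposal stops exactly where the proof begins. In your architecture the path visits carrier~$i$ and a shared core; the other $p-1$ carriers are attached to that same core, and you offer no mechanism preventing them from becoming neighbours of the path. Saying ``one must design the core and the gateways so that only $O(\log p)+\poly(n)$ vertices outside the $i$-th carrier are ever adjacent'' names the problem, not a solution: any gateway touched by the path is adjacent to every carrier plugged into it, and decoupling the gateways per carrier makes the core (hence $\fvs$) grow with~$p$. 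Within the ``visit carrier $i$'' paradigm I do not see a fix.

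The paper's resolution inverts the visit pattern: the solution path traverses \emph{all} carriers \emph{except} the selected one, skipping exactly one~$P_a$ via a single hub vertex~$h$. The $p-1$ unselected carriers thus lie on the path and contribute to $k'$ (which is allowed to be $\Theta(pn)$) rather than to~$\ell'$; only the skipped~$P_a$ produces neighbours. A binary selector~$U$ of size $2\log p$ is then used in the opposite direction from what you sketch—not to pick which carrier to enter, but to certify which carrier was skipped—so that the neighbour budget coming from~$P_a$ is exactly the number of clique vertices. The paper also composes from \mcclique{} rather than \sspTsc{}: each carrier~$P_a$ is a plain path on~$V(G_a)$, and the shared core consists of $\binom{k}{2}$ edge-selection gadgets~$F_z$ of total size $\poly(m)$, with~$P_a$ wired to~$F$ according to the incidence structure of~$G_a$. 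This makes the ``acyclic plug-in'' concrete; your choice of \sspTsc{} as source leaves the shared core entirely unspecified, and encoding an arbitrary bounded-degree planar instance by acyclic attachments to a $\poly(n)$-size core is itself a construction you have not supplied. The feedback vertex set in the paper is then $U\cup F\cup\{s,h,t\}$, of size $O(\log p)+\poly(n)$, and $\ell'\in\poly(n)\cdot O(\log p)$.
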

\noindent
To prove \cref{thm:sspNoPKfvsell},
we use a cross composition (\cref{def:crossco})
of \mcclique{} (\cref{prob:mcclique})
into \sspAcr{}.
In fact,
we will reduce from the NP-hard~\cite{FHRV09,CFK+15} special case
of \mcclique{}
where
instances~\(G=(V_1,V_2,\dots,V_k,E)\)
with $E_{\{i,j\}}:=\{\{u,v\}\in E\mid u\in V_i, v\in V_j\}$
satisfy
$|V_i|= |V_j|$ for~$1\leq i<j\leq k$,
$|E_{\{i,j\}}|= |E_{\{i',j'\}}|$ for all~$1\leq i<j\leq k$
and $1\leq i'<j'\leq k$, and
have at least $k+1$~vertices.
For the cross composition,
we use the following polynomial equivalence relation
on instances of \mcclique{}.

\begin{lemma}
  \label{obs:nopkfvsellRper}
  Let two \mcclique{}
  instances~$G=(V_1,V_2,\dots, V_k,E)$
  and~$G'=(V_1',V_2',\dots, V'_{k'},E')$
with \(V=V_1\uplus\dots\uplus V_k\)
and \(V'=V_1'\uplus\dots\uplus V'_{k'}\)
  be \emph{$\calR$\hyp equivalent}
  if and only if~$|V|=|V'|$, $|E|=|E'|$,
  and \(k=k'\).
  Then, $\calR$ is a polynomial equivalence relation.
\end{lemma}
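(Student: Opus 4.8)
The plan is to verify the two substantive requirements in \cref{def:crossco} — polynomial\hyp time decidability of~$\sim$ and a polynomial bound on its index — after first extending~$\calR$ from the set of well\hyp formed \mcclique{} instances (of the special form considered here) to all of~$\Sigma^*$. Concretely, I would declare $x\sim y$ to hold whenever \emph{either} both $x$ and~$y$ fail to encode such a \mcclique{} instance (balanced part sizes~$|V_i|=|V_j|$, balanced numbers of edges $|E_{\{i,j\}}|=|E_{\{i',j'\}}|$ between each pair of parts, and at least $k+1$~vertices), \emph{or} both $x$ and~$y$ encode such instances $G=(V_1,\dots,V_k,E)$ and $G'=(V_1',\dots,V'_{k'},E')$ that are $\calR$\hyp equivalent, i.e.\ $|V|=|V'|$, $|E|=|E'|$ and $k=k'$. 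Reflexivity, symmetry and transitivity are then immediate: on the malformed strings $\sim$ is the all\hyp relation, and on the well\hyp formed instances it is the kernel of the map $x\mapsto(|V|,|E|,k)$; no malformed string is $\sim$\hyp related to a well\hyp formed one, so joining the two still yields an equivalence relation.

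For polynomial\hyp time decidability, given $x,y\in\Sigma^*$ I would parse each of them in polynomial time: check that it encodes an undirected graph together with a partition of its vertex set into independent sets~$V_1,\dots,V_k$, verify $|V_i|=|V_j|$ and $|E_{\{i,j\}}|=|E_{\{i',j'\}}|$ for all relevant index pairs, and verify that the vertex count is at least~$k+1$; if a string passes, record the triple $(|V|,|E|,k)$. Then $x\sim y$ holds iff both strings failed the check, or both passed and produced the same triple. All of this is clearly doable in time polynomial in~$|x|+|y|$.

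It remains to bound the index of~$\sim$ over an arbitrary finite set $S\subseteq\Sigma^*$ by a polynomial in $m:=\max_{x\in S}|x|$. The malformed strings in~$S$ contribute a single equivalence class. Every other class is determined by its triple $(|V|,|E|,k)$, and for a well\hyp formed instance of encoding length at most~$m$ one has $|V|\le m$, $|E|\le m$ and $k\le|V|\le m$ (any reasonable encoding of a graph has length at least its number of vertices and at least its number of edges). Hence at most~$m^3$ triples are realizable, so the index of~$\sim$ over~$S$ is at most $m^3+1$, which is polynomial in~$m$. Combining the three parts shows that~$\calR$, in its extended form, is a polynomial equivalence relation — which is exactly what the cross composition in the proof of \cref{thm:sspNoPKfvsell} needs.

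I do not expect any real obstacle here. The only point demanding a line of care is the bookkeeping for strings that are not valid instances: they must all be collapsed into one class so that the index does not blow up, and one must note the elementary fact that $|V|$, $|E|$ and $k$ are each bounded by the encoding length, which is what keeps the number of classes polynomial rather than exponential in~$m$.
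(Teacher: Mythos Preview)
Your proposal is correct and follows essentially the same approach as the paper: both argue that equivalence can be decided in polynomial time and that the index is bounded by the number of realizable triples $(|V|,|E|,k)$, which is polynomial in the maximum instance size. You are simply more careful than the paper in explicitly handling malformed strings and stating the concrete bound $m^3+1$.
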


\begin{proof}
  Deciding whether $G$ and~$G'$
  are $\calR$\hyp equivalent
  is doable in~$O(|V|+|V'|+|E|+|E'|)$ time.
  Now, let
  $S\subseteq \Sigma^*$~be a set of instances
  and $n:=\max_{x\in S} |x|$.
  There are at most~$n^{O(1)}$
  different vertex set sizes,
  edge set sizes,
  and partition sizes of the vertex sets,
  resulting in at most
  $n^{O(1)}$~equivalence classes.
\end{proof}

\noindent
We next describe the cross composition.

\newcommand{\zvertex}{h}
\begin{figure*}[t!]
 \centering
  \begin{tikzpicture}

\usetikzlibrary{decorations.pathreplacing,calc,shapes,positioning}
	\tikzstyle{pnode}=[fill,circle,scale=1/3]
	\tikzstyle{stnode}=[fill=white,circle,draw,scale=1/2]
	\tikzstyle{starnode}=[draw=black,fill=black!30!white,star,star points=8,star point ratio=2,scale=1/3]%
	\tikzstyle{lnode}=[fill,circle,scale=1/5]
	\tikzstyle{ppnode}=[fill=white,circle,draw,scale=1/3]
      \def\xr{1}%
      \def\yr{1}
      \def\xsh{1*\xr}

	\newcommand{\bstar}[3]{
	
		\def\noleaves{10};
		\def\distleaves{0.2};
	
		\node (#1) at (#2,#3)[ppnode,fill=black]{};
 		 \foreach \j in {1,...,\noleaves}{
			   \node (a\j) at ($(#1)+ (165+\j*360/\noleaves:\distleaves cm)$)[lnode]{};
				\draw (#1) -- (a\j);
 		 }
	}
   
	\newcommand{\edgesets}[7]{
		
		\draw[rounded corners] (#3-0.2,#4-1.5*\yr) rectangle (#3+0.2*\xr,#4+1.5*\yr)node[above,xshift=-6pt]{#7};
		\foreach \x in {1,2,3,7,8,9}{
    			\node (x#1#2x\x) at  (#3,\x*0.3*\yr-1.5*\yr-1*#4)[starnode]{};
    		}
		\node at (x#1#2x7.south)[yshift=-10*\yr]{$\vdots$};
		 \foreach \x in {1,2,3,7,8,9}{
 		   \draw[] (x#1#2x\x) to (#5);
    		   \draw[] (x#1#2x\x) to (#6);
    		}
		\node at (x#1#2x7.south)[yshift=-10*\yr,xshift=13*\xr]{$\vdots$};	
		\node at (x#1#2x7.south)[yshift=-10*\yr,xshift=-13*\xr]{$\vdots$};
}
    
	\newcommand{\longpath}[4]{
		\draw[rounded corners] (#1-0.2*\xr,#2-0.2*\yr) rectangle  node[below,yshift=-5*\yr]{#4}(#1+0.95*\xr,#2+0.2*\yr);
		\node (l#3-1) at (#1,#2)[pnode]{};
		\node (l#3-4) at (#1+0.75*\xr,#2)[pnode]{};
		\draw[thick] (l#3-1) to (l#3-4);
		\draw[line width=2pt, line cap=round, dash pattern=on 0pt off 1.3\pgflinewidth] ($(l#3-1.east)+(0.05*\xr,0)$) to (l#3-4);
	}

	\newcommand{\vpath}[6]{
		\draw[rounded corners,fill=black!15!white] (#1-0.2*\xr,#2-0.2*\yr) rectangle (#1+1.7*\xr,#2+0.2*\yr) node[midway,#5,yshift=#6]{#4};%
		\node (v#3-1) at (#1,#2)[pnode]{};
		\node (v#3-2) at (#1+0.25*\xr,#2)[pnode]{};
		\node (v#3-x) at (#1+0.75*\xr,#2)[scale=2/3]{$\cdots$};
		\draw[thick] (v#3-1) to (v#3-2) to (v#3-x);
		\node (v#3-3) at (#1+1.25*\xr,#2)[pnode]{};
		\node (v#3-4) at (#1+1.5*\xr,#2)[pnode]{};
		\draw[thick] (v#3-x) to (v#3-3) to (v#3-4);
	}

	\newcommand{\selecgadget}[7]{
		\node (s#3-o) at (#1,#2+0.5*\yr)[starnode,label=90:{#6}]{};
		\node (s#3-u) at (#1,#2-0.5*\yr)[starnode,label=-90:{#7}]{};
		\draw[thick]  (#4) to (s#3-o) to (#5);
		\draw[thick]  (#4) to (s#3-u) to (#5);
	}

	 \node (s) at (2*\xr,3*\yr)[stnode,label=180:{$s$}]{};
		\foreach \j in {1,...,15}{
			   \node (atemp\j) at ($(s)+ (230+\j*120/15:1.5*\yr)$)[]{};
				\draw (s) -- (atemp\j);
 		 }
	
	\node (skip) at  (8.75*\xr,2.75*\yr)[fill=black,draw, circle, scale=1/2,label=0:{$\zvertex$}]{};
	
	\newcommand{\skipgadget}[5]{
		\draw[rounded corners,fill=black!3!white] (#1-0.4*\xr,#2-0.2*\yr) rectangle (#1+0.4*\xr,#2+0.2*\yr);%
		\node (i#3-l) at (#1-0.25*\xr,#2)[pnode]{};
		\node (i#3-c) at (#1,#2)[pnode]{};
		\node (i#3-r) at (#1+0.25*\xr,#2)[pnode]{};
		\draw[thick]  (#4) to (i#3-l) to (i#3-c) to (i#3-r) to (#5);
		\draw[thick]  (i#3-c) to (skip);
	}
	\node (uno) at (2*\xr,0)[rounded corners, draw, minimum width=80*\xr,scale=0.75]{$I$};
	\node (uno1) at (2*\xr-0.66*\xr,0)[fill=white,diamond,draw,scale=1.2]{};
	\node at (2*\xr-0.33*\xr,0)[fill=white,diamond,draw,scale=1.2]{};
	\node at (2*\xr-0.0*\xr,0)[fill=white,diamond,draw,scale=1.2]{};
	\node at (2*\xr+0.33*\xr,0)[fill=white,diamond,draw,scale=1.2]{};
	\node at (2*\xr+0.66*\xr,0)[fill=white,diamond,draw,scale=1.2]{};
	\node (uno) at (2*\xr,0)[rounded corners, fill=white, minimum width=80*\xr,scale=0.75]{I};
	
	\node (due) at (14.5*\xr,0)[rounded corners, draw, minimum width=80*\xr,scale=0.75]{II};
	\node (due1) at (14.5*\xr-0.66*\xr,0)[fill=white,diamond,draw,scale=1.2]{};
	\node (due2) at (14.5*\xr-0.33*\xr,0)[fill=white,diamond,draw,scale=1.2]{};
	\node at (14.5*\xr-0.0*\xr,0)[fill=white,diamond,draw,scale=1.2]{};
	\node at (14.5*\xr+0.33*\xr,0)[fill=white,diamond,draw,scale=1.2]{};
	\node at (14.5*\xr+0.66*\xr,0)[fill=white,diamond,draw,scale=1.2]{};
	\node (due) at (14.5*\xr,0)[rounded corners,  fill=white, minimum width=80*\xr,scale=0.75]{II};
	
	\node (t) at  (16*\xr,0*\yr)[stnode,label=0:{$t$}]{};
	\draw (due) to (t);
	
	\begin{scope}[xshift=-100*\xr,yshift=0]
	\vpath{11*\xr-3*\xsh}{0}{1}{$P_1$~$(V_1)$}{below}{-5*\yr};
	\skipgadget{10.25*\xr-3*\xsh}{0}{1}{uno}{v1-1};
	\vpath{14*\xr-3*\xsh}{0}{2}{$P_2$~$(V_2)$}{below}{-5*\yr};
	\skipgadget{13.25*\xr-3*\xsh}{0}{1}{v1-4}{v2-1};
	\node (dots2) at (16.25*\xr-3*\xsh,0)[scale=1.5]{$\cdots$};
	\vpath{17*\xr-3*\xsh}{0}{3}{$P_p$~$(V_p)$}{above}{4*\yr};		
	\skipgadget{19.25*\xr-3*\xsh}{0}{1}{v3-4}{due};
	\end{scope}

	\foreach \x in {1,2,3,4,x}{
		\draw[color=gray] (uno1.north) to [out=35,in=135](v1-\x);
		\draw[color=gray] (uno1.north)  to [out=35,in=135](v2-\x);
		\draw[color=gray] (uno1.south)  to [out=-25,in=-155](v3-\x);
	}

	\draw[color=gray] (v2-1) to [out=45,in=160](due1.north) to [out=160,in=45](v1-3);
	\draw[color=gray] (v1-3) to [out=-35,in=-160](due2.south) to [out=-160,in=-35](dots2.south);

	\begin{scope}[xshift=115*\xr,yshift=-10*\yr]

	\begin{scope}[xshift=-410*\xr,yshift=-160*\yr]
	\draw[dashed,gray,rounded corners] (23.66*\xr,2.66*\yr) rectangle (26.75*\xr,4*\yr);
	\node (xxx) at (24.675*\xr,3.66*\yr)[starnode]{};
	\node at (xxx.east)[xshift=10*\xr]{$=$};
	\bstar{xxx}{25.75*\xr}{3.66*\yr};
	\longpath{24*\xr}{3*\yr}{xxx}{};
	\node at (lxxx-4.east)[xshift=10*\xr]{$=$};
	\node at (lxxx-4.east)[xshift=35*\xr]{$L$-path};
	\end{scope}

  \begin{scope}[xshift=0,yshift=-80*\yr]
	\node (uno) at (-2*\xr,0)[rounded corners, draw, minimum width=80*\xr,scale=0.75]{$I$};
	\node at (-2*\xr-0.66*\xr,0)[fill=white,diamond,draw,scale=1.2]{};
	\node at (-2*\xr-0.33*\xr,0)[fill=white,diamond,draw,scale=1.2]{};
	\node at (-2*\xr-0.0*\xr,0)[fill=white,diamond,draw,scale=1.2]{};
	\node at (-2*\xr+0.33*\xr,0)[fill=white,diamond,draw,scale=1.2]{};
	\node at (-2*\xr+0.66*\xr,0)[fill=white,diamond,draw,scale=1.2]{};
	\node (uno) at (-2*\xr,0)[rounded corners, fill=white, minimum width=80*\xr,scale=0.75]{$I$};
	
	\node at (uno.east)[anchor=center,xshift=10*\xr]{$=$}; 
	\longpath{0*\xr}{0}{1}{$A_1$};
	\longpath{1.75*\xr}{0}{2}{$A_2$};
	\selecgadget{1.25*\xr}{0*\yr}{1}{l1-4}{l2-1}{$u_{1,0}$}{$u_{1,1}$};
	\longpath{3.5*\xr}{0}{3}{};
	\selecgadget{3*\xr}{0*\yr}{2}{l2-4}{l3-1}{}{};
	\node (dots1) at (5*\xr,0)[scale=1.5]{$\cdots$};	
	\longpath{5.75*\xr}{0}{4}{};
	\longpath{7.5*\xr}{0}{5}{$A_{q+1}$};
	\selecgadget{7*\xr}{0*\yr}{2}{l4-4}{l5-1}{$u_{q,0}$}{$u_{q,1}$};
  \end{scope}

	\begin{scope}[xshift=0,yshift=-180*\yr]
	\node (due) at (-2*\xr,0)[rounded corners, draw, minimum width=80*\xr,scale=0.75]{II};
	\node at (-2*\xr-0.66*\xr,0)[fill=white,diamond,draw,scale=1.2]{};
	\node at (-2*\xr-0.33*\xr,0)[fill=white,diamond,draw,scale=1.2]{};
	\node at (-2*\xr-0.0*\xr,0)[fill=white,diamond,draw,scale=1.2]{};
	\node at (-2*\xr+0.33*\xr,0)[fill=white,diamond,draw,scale=1.2]{};
	\node at (-2*\xr+0.66*\xr,0)[fill=white,diamond,draw,scale=1.2]{};
	\node (due) at (-2*\xr,0)[rounded corners, fill=white, minimum width=80*\xr,scale=0.75]{II};
	\node at (due.east)[anchor=center,xshift=10*\xr]{$=$}; 
	\longpath{0*\xr}{0}{5}{$B_1$};
	\longpath{3.25*\xr}{0}{6}{$B_2$};
	\edgesets{1}{2}{2*\xr}{0}{l5-4}{l6-1}{$F_1$ ($E_{\bullet,\{1,2\}}$)};
	\node (dotsL1) at (6.5*\xr,0)[scale=1]{$\cdots$};
	\node (dotsL) at (7.375*\xr,0)[scale=1.5]{$\cdots$};
	\node (dotsL2) at (8.25*\xr,0)[scale=1]{$\cdots$};
	\edgesets{1}{3}{5.25*\xr}{0}{l6-4}{dotsL1}{$F_2$ ($E_{\bullet,\{1,3\}}$)};
	\longpath{10.75*\xr}{0}{7}{$B_{K+1}$};
	\edgesets{p-1}{p}{9.5*\xr}{0}{dotsL2}{l7-1}{$F_K$ ($E_{\bullet,\{p,p-1\}}$)};
	\end{scope}
	\end{scope}
    \node at (0.9*\xr,3.25*\yr)[]{a)};
    \node at (0.9*\xr,-1.75*\yr)[]{b)};
    \node at (0.9*\xr,-4.5*\yr)[]{c)};
\end{tikzpicture}
 \caption{Illustration to \cref{constr:fvsellnokern}.
 a) A high-level sketch of the construction with some illustrative edges between the gadgets.
 b) Details for the gadget labeled~I in a).
 c) Details for the gadget labeled~II in a).
 For~$P_a$ and~$F_y$, in parentheses
 we indicate to which sets of the input graphs they correspond to (where~$\bullet$ is a placeholder for every element in~$\{1,\ldots,p\}$).
 }
 \label{fig:fvsellnoPK}
\end{figure*}

\begin{construction}
  \label{constr:fvsellnokern}
  Let~$G_1=(V_{1,1},\dots, V_{1,k},E_1),\ldots,G_p=(V_{p,1},\dots, V_{p,k},E_p)$ be
  $p=2^q$ \mcclique{} instances
  equivalent under~$\calR$,
  where \(q\in\N\).
  Then,
  we can denote
  by $n$~the number of vertices and
  by $m$~the number of edges in each instance.
  Moreover,
  let
  $V_{a,i}=\{v_{a,i}^1,\dots,v_{a,i}^r\}$,
  and
  $E_a=\biguplus_{1\leq i<j\leq k}E_{a,\{i,j\}}$
  with
  $E_{a,\{i,j\}}=\{e_{a,\{i,j\}}^1,\dots,e_{a,\{i,j\}}^x\}$
  for all~\(a\in\{1,\dots,p\}\).
  Construct the following \sspAcr{}
  instance~$(G,s,t,k',\ell)$ with graph~$G$ (refer to \cref{fig:fvsellnoPK} for an illustration).
 Let~$G$ be initially empty,
 $M:=k+|E|-K+q+2$, 
 $L:=p\cdot n\cdot m+2\cdot(M+ K\cdot M^2 + q\cdot M^2)$, and~$K:=\binom{k}{2}$.
 \begin{enumerate}
  \item Add~$q+1$ paths~$A_1,\dots,A_{q+1}$, where~$V(A_y)=\{a_{y,1},\dots,a_{y,L}\}$ and $a_{y,1},a_{y,L}$~are the end points.
    For each~$y\in\{1,\ldots,q\}$, 
    add the vertex set~$U_y=\{u_{y,0},u_{y,1}\}$,
    and make each vertex of~$U_y$ adjacent to~$a_{y,L}$
    and~$a_{y+1,1}$.
    Define~$U=\bigcup_{y=1}^{q} U_y$. 
    See \cref{fig:fvsellnoPK}b).
  \item Add~$K+1$ paths~$B_1,\ldots,B_{K+1}$, where~$V(B_z)=\{b_{z,1},\dots,b_{z,L}\}$ and~$b_{z,1},b_{z,L}$ are the
    end points.
    For each~$z\in\{1,\ldots,K\}$, add the vertex set~$F_z:=\{e_z^1,\ldots,e_z^x\}$
    and make each vertex in~$F_z$ adjacent
    to~$b_{z,L},b_{z+1,1}$.
    Define~$F=\bigcup_{z=1}^K F_z$.
    Choose an arbitrary
    bijection~$\pi\colon\{1,\ldots,K\}\to \{\{i,j\}\mid 1\leq i<j\leq k\}$.
    We say that $e_y^z$~corresponds
    to the $z$-th edge~$e^z_{a,\pi(y)}\in E_{a,\pi(y)}$
    for all~$a\in\{1,\ldots,p\}$. 
    See \cref{fig:fvsellnoPK}c).
  \item Add~$p$ paths~$P_1,\dots,P_p$ such that~$P_a$
    has vertex set~$\{v_{a,i}^d\mid i\in\{1,\ldots,k\},\ d\in\{1,\ldots,r\}\}$ and
    edge set~$\{\{v_{a,i}^r,v_{a,i+1}^{1}\}\mid i\in\{1,\ldots,k-1\}\} \cup \bigcup_{1\leq i\leq k} \{\{v_{a,i}^d,v_{a,i}^{d+1}\}\mid d\in\{1,\ldots,r-1\}\}$.
    We say~$P_a$ corresponds to the vertices in~$V_a$ in the $a$-th graph~$G_a$.
    Next, for each~$a\in\{1,\ldots,p+1\}$,
    add a path of three vertices~$w_{a,1},w_{a,2},w_{a,3}$ with edges~$\{w_{a,1},w_{a,2}\},\{w_{a,2},w_{a,3}\}$.
    Make~$w_{1,1}$ adjacent to~$a_{q+1,L}$,
    and~$w_{p+1,3}$ adjacent to~$b_{1,1}$.
    For each~$1<a\leq p+1$, make~$w_{a,1}$ adjacent
    to~$v_{a-1,k}^r$.
    For each~$1\leq a<p+1$, make~$w_{a,3}$ adjacent to~$v_{a,1}^1$.
  \item Add one vertex~$\zvertex$ and for each~$a\in\{1,\ldots,p+1\}$ make~$w_{a,2}$ adjacent to~$\zvertex$.
  \item For each~$a\in\{1,\ldots,p\}$, make each~$v\in V(P_a)$ adjacent to the vertex in~$F$ corresponding to an incident edge.
    That is, if~$v_{a,i}^d$ is incident with edge~$e_{a,\{i,j\}}^{x'}$, make~$v_{a,i}^d$ adjacent to vertex~$e_{z}^{x'}$ where~$z=\pi^{-1}(\{i,j\})$.
  \item For each~$a\in\{1,\ldots,p\}$, make each~$v\in V(P_a)$ adjacent to the vertices in~$U$ as follows:
    Let~$a_1a_2\cdots a_q$ be the 0-1-string
    of length~$q$ encoding the number~$a-1$ in binary.
    Then, make each~$v\in V(P_a)$ adjacent to each vertex in the set~$\{u_{i,a_i}\mid i\in\{1,\ldots,q\}\}$.
    Note that for each~$u\in U$, we have that if~$N(u)\cap V(P_a)\neq \emptyset$ for some~$a\in\{1,\ldots,p\}$, then~$N(u)\supseteq V(P_a)$.
    Moreover, for each~$u\in U$ we have~$|\{a\in\{1,\ldots,p\}\mid N(u)\cap V(P_a)\neq \emptyset\}|=p/2$.
  \item Add~$s$ and~$t$.
    Make~$t$ adjacent to~$b_{K+1,L}$.
    Make~$s$ adjacent to all vertices except the vertices in~$\bigcup_{a=1}^p V(P_a)$.
  \item For each vertex~$v\in F\cup U$, add~$M^2$ vertices only
    adjacent to~$v$.
  \item Set~$k'=(q+K+2)L+q+(p-1)n+3(p+1)+K+1$ and~$\ell=M+ K\cdot M^2 + q\cdot M^2$.
 \end{enumerate}
\end{construction}
\noindent
Before we prove that the instance~$\I$ obtained from \cref{constr:fvsellnokern} is a \yes-instance if and only if at least one input instance is a \yes-instance,
we prove some crucial properties of solutions in~$\I$ in the case that~$\I$ is a \yes-instance.

\begin{lemma}
 \label{lem:fvsell-pathstr}
 Let~$(G,s,t,k',\ell)$ be the \sspAcr{}-instance obtained from \cref{constr:fvsellnokern} and let~$(G,s,t,k',\ell)$ be a \yes-instance.
 Let~$P$ be a solution $s$-$t$~path in~$G$.
 Then the following hold:
 \begin{compactenum}[(i)]
  \item $P$ contains each path~$Q\in\{A_2,\dots,A_{q+1},B_1,\dots,B_{K+1}\}$ and a subpath of~$A_1$ as subpath.
	Moreover, the first vertex on~$P$ after~$s$ is in~$V(A_1)\setminus\{a_{1,L}\}$.
     \label{lem:fvsell-pathstr-subpaths}
  \item $|V(P)\cap U_y|=|V(P)\cap F_z|=1$ for all~$y\in\{1,\ldots,q\}$, $z\in\{1,\ldots,K\}$.\label{lem:fvsell-pathstr-onepergadget}
  \item Let~$v$ be a vertex of some~$U_y$ ($F_z$) contained in~$P$,
    and let~$(\{v',v,v''\},\{\{v',v\},\{v,v''\}\})$ be a subpath of~$P$ where the distance from~$v'$ to~$s$ in~$P$ is smaller than the one from~$v$ or~$v''$.
    Then~$v'=a_{y,L}$ and~$v''=a_{y+1,1}$ ($v'=b_{z,L}$ and~$v''=b_{z+1,1}$).\label{lem:fvsell-pathstr-gadgetnghbrs}
 \end{compactenum}
\end{lemma}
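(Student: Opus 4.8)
The plan is to pin down the global shape of~$P$ from the combination of the length bound~$|V(P)|\le k'$ and the neighborhood bound~$|N(V(P))|\le\ell$, using that~$G$ is, up to the edges incident to~$s$, essentially a long chain (see~\cref{fig:fvsellnoPK}). The starting point is that~$s\in V(P)$ and that, by Steps~7 and~8 of \cref{constr:fvsellnokern}, $s$~is adjacent to every vertex of~$G$ except those of~$\bigcup_{a=1}^p V(P_a)$ and except the~$M^2$ pendant vertices attached to~$F\cup U$ (which are created only after~$s$ is joined). Hence~$N(s)\setminus V(P)\subseteq N(V(P))$, so~$P$ misses at most~$\ell$ vertices of~$N(s)$; since~$L>\ell$, this already forbids~$P$ from avoiding any of the long paths~$A_1,\dots,A_{q+1},B_1,\dots,B_{K+1}$ wholesale.

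Next I would prove statement~\ref{lem:fvsell-pathstr-subpaths}. The key local fact is that in~$G-s$ every internal vertex of a long path has degree exactly two, so any simple path in~$G-s$ meets the vertex set of a long path only by entering or leaving at one of its two ends, and once it enters at an end it is forced to traverse the whole long path (it cannot reverse, and it cannot stop there, since~$t$ lies only on~$B_{K+1}$); the single vertex exempt from this is~$f$, the vertex immediately following~$s$ on~$P$. Now~$t$ has exactly the neighbors~$s$ and~$b_{K+1,L}$, and~$|V(P)|\ge 3$ because otherwise~$N(V(P))\supseteq N(s)\setminus\{s,t\}$ would vastly exceed~$\ell$; hence~$P$ ends~$\dots,b_{K+1,L},t$ and therefore contains all of~$B_{K+1}$. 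Peeling off long paths by repeated use of the degree\hyp two fact, and observing that the only neighbors of~$b_{z+1,1}$ other than~$s$ lie in~$F_z$ and the only neighbors of~$a_{y+1,1}$ other than~$s$ lie in~$U_y$, one sees that~$P$ successively contains all of~$B_K,\dots,B_1$, then (routing through the~$w$\hyp gadgets, the vertex~$\zvertex$ used at most once, and the paths~$P_a$) reaches~$a_{q+1,L}$, then contains all of~$A_{q+1},\dots,A_2$, and finally runs along~$A_1$ from~$a_{1,L}$ toward~$a_{1,1}$ until it hits~$s$. Thus~$P$ contains each of~$A_2,\dots,A_{q+1},B_1,\dots,B_{K+1}$ and a subpath of~$A_1$ containing~$a_{1,L}$, so~$f\in V(A_1)$; and~$f\neq a_{1,L}$, because otherwise~$P$ would be forced along~$s,a_{1,L},a_{1,L-1},\dots,a_{1,1}$ and then get stuck (both neighbors of~$a_{1,1}$ already lie on~$P$), contradicting that~$P$ reaches~$t$.

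Statements~\ref{lem:fvsell-pathstr-onepergadget} and~\ref{lem:fvsell-pathstr-gadgetnghbrs} then come from refining the peeling above together with a global count. Since~$P$ contains all of~$B_z$ for~$z\le K$, the two~$P$\hyp edges at~$b_{z,L}$ are the edge inside~$B_z$ and one further edge; by~\ref{lem:fvsell-pathstr-subpaths} the latter cannot go to~$s$, so it goes to a vertex of~$F_z$, and likewise from~$b_{z+1,1}$; the same reasoning applied to~$a_{y,L}$ and~$a_{y+1,1}$ puts at least one vertex of each~$U_y$ on~$P$. This yields the orientation in~\ref{lem:fvsell-pathstr-gadgetnghbrs} and the lower bounds~$|V(P)\cap U_y|\ge1$ and~$|V(P)\cap F_z|\ge1$. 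For the matching upper bounds I would argue that using a second vertex of some~$U_y$ (resp.\ $F_z$) forces~$P$ to detour, between the two copies of~$A_y$ (resp.\ of~$B_z$), through vertices of~$\bigcup_{a}V(P_a)$; such a detour consumes paths~$P_a$ that are then unavailable for the route from~$A_{q+1}$ to~$B_1$, which by~\ref{lem:fvsell-pathstr-subpaths} must still traverse every long path in full. Quantitatively, $P$ contains~$(q+K+2)L-|V(A_1)\setminus V(P)|$ vertices of the long paths, plus~$s$ and~$t$, plus at least~$q+K$ vertices from the gadgets, plus the forced portion of the region formed by the~$w$\hyp gadgets, $\zvertex$, and the paths~$P_a$; matching this against~$k'=(q+K+2)L+q+(p-1)n+3(p+1)+K+1$ shows that this region can accommodate exactly~$p-1$ full paths~$P_a$ (one is skipped via~$\zvertex$) and leaves no slack for an extra gadget vertex or for an extra detour, forcing~$|V(P)\cap U_y|=|V(P)\cap F_z|=1$ for all~$y,z$.

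I expect the technical heart — and the main obstacle — to be exactly this simultaneous bookkeeping: showing that every deviation from the intended skeleton (entering a long path in its interior, visiting both vertices of a selection or an edge gadget, skipping more than one path~$P_a$, or detouring through~$\bigcup_a V(P_a)$ via vertices of~$F\cup U$) overspends either the~$k'$ budget or the~$\ell$ budget. Making this airtight requires tracking the~$KM^2+qM^2$ pendant neighbors contributed by the selected gadget vertices and the~$|E|-K+q$ unselected vertices of~$F\cup U$ that necessarily lie in~$N(V(P))$, and checking these against the exact value~$\ell=M+KM^2+qM^2$ with~$M=k+|E|-K+q+2$; the constants in \cref{constr:fvsellnokern} are tuned precisely so that this works out, and verifying the tuning is the bulk of the effort.
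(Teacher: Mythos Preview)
Your plan for \eqref{lem:fvsell-pathstr-subpaths} and the lower bounds in \eqref{lem:fvsell-pathstr-onepergadget} is essentially the paper's argument, only phrased as a ``peel backward from~$t$'' induction rather than a single global count; both rest on the same two facts, namely that $s$ is adjacent to every vertex of the long paths (so at most $\ell<L$ such vertices can be missed) and that in $G-s$ the inner vertices of each long path have degree two.

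Where you diverge substantially is in the upper bound of \eqref{lem:fvsell-pathstr-onepergadget}. You propose to rule out a second vertex of some $U_y$ or $F_z$ by a $k'$-budget argument: the extra gadget vertex would force a detour through $\bigcup_a V(P_a)$, ``consuming'' a path $P_a$ needed elsewhere. This is the hard way, and as sketched it is not clearly correct. A second vertex $u_{y,1}$ need only have its two $P$-neighbours in some $V(P_a)$ already traversed by~$P$; it does not obviously consume an additional~$P_a$, and it can even \emph{shorten} $P$ by skipping part of~$P_a$. Disentangling this would require exactly the delicate simultaneous bookkeeping you anticipate in your last paragraph. The paper sidesteps all of this with the one observation you mention only in passing: each vertex of $U\cup F$ carries $M^2$ private degree-one pendants, so if $P$ contains $q+K+1$ gadget vertices it already has at least $(q+K+1)M^2=(q+K)M^2+M^2>\ell$ neighbours among the pendants alone. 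That is the whole proof of the upper bound; no $k'$-count, no detour analysis, no tuning to verify. With \eqref{lem:fvsell-pathstr-onepergadget} thus available, \eqref{lem:fvsell-pathstr-gadgetnghbrs} follows in one line: if $v'\ne a_{y,L}$, then since $A_y$ is a full subpath of~$P$ the vertex $a_{y,L}$ must have its second $P$-neighbour in $U_y\cup\{s\}$, hence in $U_y$, forcing both vertices of $U_y$ onto~$P$ and contradicting \eqref{lem:fvsell-pathstr-onepergadget}.
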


\begin{proof}
  (\ref{lem:fvsell-pathstr-subpaths}): 
  From each path~$Q\in\{A_2,\dots,A_{q+1},B_1,\dots,B_{K+1}\}$, at least~$L-\ell>\ell$ vertices must be contained.
  Since the inner vertices of~$Q$ are only adjacent to vertices in~$Q$ and~$s$, it follows that~$Q$ is a subpath of~$P$.
  Moreover, also at least~$L-\ell>\ell$ vertices from~$A_1$ must be contained in~$P$.
  Hence, a subpath of~$A_1$ is a subpath of~$P$.
  From the latter, we observe that the first vertex on~$P$ after~$s$ is in~$V(A_1)\setminus\{a_{1,L}\}$.
  
  (\ref{lem:fvsell-pathstr-onepergadget}): 
  From~(\ref{lem:fvsell-pathstr-subpaths}), we know that each path~$Q\in\{A_2,\dots,A_{q+1},B_1,\dots,B_{K+1}\}$ is a subpath of~$P$, and the first vertex on~$P$ after~$s$ is in~$V(A_1)\setminus\{a_{1,L}\}$.
  If~$Q=A_y$, $2\leq y\leq q+1$, we know that~$a_{y,1}$ is only incident with vertices in~$U_{y-1}\cup\{s\}\cup\{a_{y,2}\}$. 
  It follows that for each~$U_y$ at least one vertex is contained in~$P$. 
  If~$Q=B_z$, $2\leq z\leq K+1$, we know that~$b_{z,1}$ is only incident with vertices in~$F_{z-1}\cup\{s\}\cup\{b_{z,2}\}$. 
  It follows that for each~$F_z$ at least one vertex is contained in~$P$.
  Suppose there is a set~$X\in\{U_1,\dots,U_q,F_1,\dots,F_K\}$ such that at least two vertices from~$X$ are contained in~$P$.
  Recall that by construction, each vertex in~$U\cup F$ has~$M^2$ degree\hyp one neighbors.
  Then~$P$ has at least~$M^2\cdot\binom{k}{2}+M^2\cdot q+M^2>\ell$ neighbors, yielding a contradiction.
  Hence, we know that for each~$U_i$ and~$F_j$ exactly one vertex is contained in~$P$.

  (\ref{lem:fvsell-pathstr-gadgetnghbrs}): 
  Let~$v\in U_y$ for some~$y\in\{1,\ldots,q\}$.
  Suppose that~$v'\neq a_{y,L}$ (for~$v''$, this works analogously).
  We know that~$P$ contains~$A_i$ as a subpath. 
  Hence,~$a_{y,L}$ is adjacent to the other vertex in~$U_y\setminus\{v\}$ on~$P$, yielding a contradiction to~(\ref{lem:fvsell-pathstr-onepergadget}).
  
  In the same way, we can prove the claim for~$v\in F_z$ for some~$z\in\{1,\ldots,K\}$.
\end{proof}

We proceed proving that the instance obtained from \cref{constr:fvsellnokern} is a \yes-instance if and only if at least one input instance is a \yes-instance.

\begin{lemma}
 \label{lem:multicliquesspfvsell}
 Let~$(G_a)_{a=1,\ldots,p}$ be~$p=2^q$
 instances of \mcclique{}
 that are $\calR$\hyp equivalent,
 where \(q\in\N\).
 Let $(G,s,t,k',\ell)$~be the \sspAcr{}-instance obtained from \cref{constr:fvsellnokern}.
 Then at least one instance~$G_a$
 is a \yes-instance if and only if $(G,s,t,k',\ell)$ is \yes-instance for~\sspAcr.
\end{lemma}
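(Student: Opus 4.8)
The plan is to prove the two directions of the equivalence separately, exactly as in the correctness proofs of \cref{lem:ppt} and \cref{rr:shrinkedgypaths}, and to lean on the structural \cref{lem:fvsell-pathstr} throughout. That lemma already fixes a solution path $P$ up to (a)~which long subpath of $A_1$ it uses, (b)~which vertex $u_{y,c_y}\in U_y$ and which vertex $e_z^{x_z}\in F_z$ it picks for every $y\in\{1,\dots,q\}$, $z\in\{1,\dots,K\}$, and (c)~how it routes through the $w$-gadgets and the hub vertex~$\zvertex$. Write $c=c_1\cdots c_q$ for the picked bit string. The whole argument rests on the fact that $\ell=M+KM^2+qM^2$ is \emph{exactly} the budget needed: the $M^2$-sized private neighbourhoods of the $q+K$ picked $U$- and $F$-vertices already consume $qM^2+KM^2$, and, because $s$ is adjacent to every vertex outside $\bigcup_a V(P_a)$, the remaining budget $M=k+|E|-K+q+2$ has to pay for the $q$ unpicked $U$-vertices, the $|E|-K$ unpicked $F$-vertices, the two $w$-vertices orphaned by the $\zvertex$-bridge, and the neighbours that the solution has inside the skipped vertex-path. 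I would first record, as a short strengthening of \cref{lem:fvsell-pathstr}, that $P$ must omit exactly one of the paths $P_1,\dots,P_p$: omitting none would force $|V(P)|>k'$ (one extra $n$ over the $(p-1)n$ summand), omitting two would require two uses of~$\zvertex$ (the only vertex offering a detour around a $P_a$, since $w_{a,3}$ and $w_{a+1,1}$ have no other useful adjacency and the path cannot branch), and, because the neighbour budget is tight, the length deficit of~$n$ cannot instead be covered by truncating a long $A$- or $B$-path (each omitted such vertex becomes a neighbour through~$s$).

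For the direction $(\Leftarrow)$ (completeness): given a solution $P$ with picked bit string~$c$, I claim the omitted path is $P_a$ with $a-1$ the binary string complementary to~$c$. If instead some $P_{a'}$ with $a'-1\neq\bar c$ were omitted, then $a'-1$ and $\bar c$ agree in some coordinate~$i$, so the picked vertex $u_{i,c_i}$ is adjacent to \emph{all} of~$V(P_{a'})$; counting neighbours gives $|N(V(P))|=qM^2+KM^2+q+(|E|-K)+2+n$, which equals $\ell+n-k>\ell$ since each input has at least $k+1$ vertices---a contradiction. Hence $P_a$ is omitted, the $\zvertex$-bridge sits at $w_{a,2}\,\zvertex\,w_{a+1,2}$, and the only neighbours inside $\bigcup_b V(P_b)$ are the vertices of $G_a$ incident to the $K$ edges $f_z:=e^{x_z}_{a,\pi(z)}$ picked at the $F$-gadgets. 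The neighbour bound then reads $|N(V(P))\cap V(P_a)|\le k$, i.e.\ the $\binom k2$ edges $f_1,\dots,f_K$---one per colour pair of~$G_a$---cover at most $k$ vertices. A standard counting argument (each colour class lies in $k-1$ of the pairs, so all $k-1$ class-$i$ endpoints of the $f_z$ must coincide in a single vertex~$x_i$, whence $\{x_i,x_j\}=f_{\pi^{-1}(\{i,j\})}\in E_a$) shows $\{x_1,\dots,x_k\}$ is a multicoloured $k$-clique in~$G_a$, so that input is a \yes-instance.

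For the direction $(\Rightarrow)$ (soundness): given a multicoloured $k$-clique $C=\{x_1,\dots,x_k\}$ in some $G_a$ with $x_i\in V_{a,i}$, I would build~$P$ directly: start at~$s$, traverse $A_1$ in full, pass through $u_{y,c_y}$ with $c=\overline{a-1}$ and traverse each $A_{y+1}$ in full; then enter the $w$-spine, traverse every $P_{a'}$ with $a'\neq a$ in full, use $\zvertex$ once to bridge $w_{a,2}$ to $w_{a+1,2}$ (thereby bypassing $P_a$ and orphaning $w_{a,3}$, $w_{a+1,1}$), and finally traverse $B_1,\dots,B_{K+1}$ in full, picking in each $F_z$ the vertex corresponding to the clique edge $\{x_i,x_j\}$ with $\{i,j\}=\pi(z)$, ending at~$t$. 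Summing the contributions gives $|V(P)|=k'$. For $|N(V(P))|$, the complement choice of~$c$ makes \emph{none} of $V(P_a)$'s $U$-neighbours lie on~$P$, so a vertex of~$P_a$ is a neighbour only through an on-path $F$-vertex, i.e.\ only if it is an endpoint of one of the picked clique edges---exactly the $k$ vertices $x_1,\dots,x_k$; together with the $qM^2+KM^2$ private vertices, the $q$ unpicked $U$-vertices, the $|E|-K$ unpicked $F$-vertices, and the two orphaned $w$-vertices, this yields $|N(V(P))|=\ell$, so the constructed instance is a \yes-instance.

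The main obstacle is the neighbour bookkeeping in both directions: one must verify that every off-path vertex's contribution to $N(V(P))$ is accounted for exactly, that no feasible routing through the $w$-gadgets and~$\zvertex$ does anything other than bridge two consecutive gadgets (a non-consecutive bridge would split $V(P)$ into two components, and leaving a $w$-vertex off the path while still reaching~$t$ is ruled out because $s$ is an endpoint), and that $A_1$ and every $A_y$, $B_z$ really are traversed in full. These are precisely the places where $\calR$-equivalence of the inputs (\cref{obs:nopkfvsellRper}: equal numbers of vertices, edges, and colour classes) and the "at least $k+1$ vertices'' restriction on \cref{prob:mcclique} are used, so that the single pair $(k',\ell)$ produced by \cref{constr:fvsellnokern} works uniformly regardless of which input is a \yes-instance.
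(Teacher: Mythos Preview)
Your approach---lean on \cref{lem:fvsell-pathstr} to pin down the path structure, then do vertex and neighbour bookkeeping in each direction---matches the paper's; your explicit bit-string argument identifying the skipped $P_a$ from the $U$-choice is a nice addition the paper leaves implicit.

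Two points to fix. First, a slip: when $a'-1\neq\bar c$ you need that $a'-1$ and $\bar c$ \emph{disagree} in some coordinate~$i$ (equivalently, $a'-1$ and $c$ agree there), so that the on-path vertex $u_{i,c_i}=u_{i,(a'-1)_i}$ is adjacent to all of $V(P_{a'})$; as written, ``agree'' hands you the off-path vertex $u_{i,1-c_i}$ and the conclusion does not follow. Second, and more substantively, ``omitting two would require two uses of~$\zvertex$'' is false: $\zvertex$~is adjacent to \emph{every} $w_{b,2}$, so a single bridge $w_{a,2}\,\zvertex\,w_{a+c,2}$ with $c\ge 2$ skips $c$~consecutive $P$-paths at once. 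What actually rules out $c\ge 2$ is the neighbour budget: such a bridge orphans $3c-1$ $w$-vertices (all neighbours via~$s$) and still incurs at least $k$ neighbours in each of the $c$ skipped $V(P_{a+j})$ via the $K$ picked $F$-vertices, forcing $(3c-1)+ck\le k+2$, hence $c\le 1$. The paper handles this step only with the terse ``By the values of~$k'$ and~$\ell$\dots'', so your outline is no less complete here, but your stated mechanism is the wrong one.
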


\begin{proof}
 \RD{}
 Let~$G_a$ be a \yes-instance
 for some~\(a\in\{1,\dots,p\}\)
 and let $C$~be a $k$-clique in~$G_a$.
 Construct an $s$-$t$~path~$P$ as follows:
 $P$ starts at~$s$,
 then goes to~$a_{1,1}$,
 follows along the vertices only in~$A_1,\dots,A_{q+1}$ and~$U$ until~$a_{q+1,L}$,
 while selecting the vertices in~$U$
 such that only the vertices corresponding to~$V(G_a)$ are not in the neighborhood yet.
 This is possible since, for each~$b\in\{1,\ldots,p\}$, only one of~$u_{y,0}$ and~$u_{y,1}$ is adjacent to
 the vertices in~$V(P_b)$. 
 Next, follow the vertices in~$V(P_1),\dots,V(P_p)$,
 avoiding the vertices in~$V(P_a)$
 by using~$w_{a,2},\zvertex,w_{a+1,2}$.
 Then follow, starting at~$b_{1,1}$ towards~$b_{K+1,L}$ and then to~$t$ by only selecting the vertices corresponding to the edges in~$C$.
 This path contains~
 \begin{itemize}[{$(3(p+1)-1)$}]
  \item[2] vertices~$s$ and~$t$,
  \item[$(q+1)\cdot L$] vertices which are all vertices from the set~$A_1\uplus\cdots\uplus A_{q+1}$,
  \item[$q$] vertices from the set~$U$,
  \item[$(p-1)\cdot n$] vertices which are all vertices from $\biguplus_{b\in\{1,\ldots,p\}\setminus\{a\}} V(P_a)$,
  \item[$(3(p+1)-1)$] vertices which are all vertices from $\biguplus_{b\in\{1,\ldots,p\}\setminus\{a,a+1\}} \{w_{b,1},w_{b,2},w_{b,3}\}$, $w_{a,1}$, $w_{a,2}$, $h$, $w_{a+1,2}$, and~$w_{a+1,3}$,
  \item[$(K+1)\cdot L$] vertices which are all vertices from the set~$B_1\uplus\cdot\uplus B_{K+1}$, and
  \item[$K$] vertices, one from each~$F_z$, $z\in\{1,\ldots,K\}$.
 \end{itemize}
 That is,~$P$ contains
 \[ 2+(q+1)L+q+(p-1)n+(3(p+1)-1)+(K+1)L+K\leq k'\]  
 vertices.
 Moreover, path~$P$ is neighboring
 \begin{itemize}[{$|E|-K$}]
  \item[$q\cdot M^2$] degree-one vertices neighboring~$U$, i.e., $M^2$ degree-one vertices from each of the~$q$ vertices from~$U$ in~$P$,
  \item[$K\cdot M^2$] degree-one vertices neighboring~$F$, i.e., $M^2$ degree-one vertices from each of the~$K$ vertices from~$F$ in~$P$,
  \item[$k$] vertices on the path~$P_a$ (those corresponding to the vertices of clique~$C$),
  \item[$|E|-K$] vertices in~$F$,
  \item[$q$] vertices from~$U$, and
  \item[$2$] vertices~$w_{a,3}$ and~$w_{a+1,1}$.
 \end{itemize}
 That is, $P$ is neighboring 
 \[ q\cdot M^2 + K\cdot M^2 + k + |E|-K +q +2 = q\cdot M^2 + K\cdot M^2 + M \leq \ell \]
 vertices.
 Hence,~$P$ is a solution~$s$-$t$ path in~$G$.
 
 \LD{}
 Let~$(G,s,t,k',\ell)$ be a \yes-instance for~\sspAcr.
 Let~$P$ be a solution $s$-$t$~path.
 We claim that if~$P$ contains a vertex in~$V(P_a)$
 for some~$a\in\{1,\ldots,p\}$,
 then it contains all vertices in~$V(P_a)$.
 Suppose not, that is,
 there is an~$a\in\{1,\ldots,p\}$ such that~$1\leq |V(P)\cap V(P_a)|<n$.
 Note that $N(V(P_a))\subseteq U\cup F\cup \{w_{a,3},w_{a+1,1}\}$.
 Since $1\leq |V(P)\cap V(P_a)|<n$, there is a vertex~$v\in V(P_a)\cap V(P)$ such that at least one of its neighbors in~$V(P_a)$ is not contained in~$V(P)$.
 It follows that in~$P$,~$v$ is adjacent to
 a vertex in~$U\cup F$.
 This contradicts \cref{lem:fvsell-pathstr}(\ref{lem:fvsell-pathstr-gadgetnghbrs}).
 
 From \cref{lem:fvsell-pathstr}(\ref{lem:fvsell-pathstr-onepergadget}) and (\ref{lem:fvsell-pathstr-gadgetnghbrs}), 
 we know that~$P$ contains~$|E|-\binom{k}{2}+q+M^2\cdot\binom{k}{2}+M^2\cdot q$ neighbors not contained in~$A_1\cup \bigcup_{a=1}^p V(P_a)$.
 By the values of~$k'$ and~$\ell$, we know that either exactly one~$P_a$ is not contained in~$P$, or there are~$n+2$ vertices from~$A_1$ being not contained in~$P$.
 In the latter case, we have at least~$n+2+|E|-\binom{k}{2}+q+M^2\cdot\binom{k}{2}+M^2\cdot q>M+M^2\cdot\binom{k}{2}+M^2\cdot q=\ell$ neighbors (recall that~$n>k$), yielding a contradiction.
 It follows the former case: there is exactly one~$P_a$ being not contained in~$P$.
 It follows that~$\zvertex\in V(P)$ and~$w_{a,3},w_{a+1,1}\in N(V(P))$.
 
 By \cref{lem:fvsell-pathstr}(\ref{lem:fvsell-pathstr-onepergadget}), from each~$F_z$ there is exactly one vertex contained in~$P$.
 Moreover, for each~$z\in\{1,\ldots,K\}$ and for each~$v\in F_z$ it holds true that~$|V(P_a) \cap N(v)|\geq 2$.
 Hence,~$|N(V(P))\cap V(P_a)|\geq k$, as~$K$ edges cannot be distributed among less than~$k$ vertices.
 It follows that~$A_1$ is a subpath of~$P$.
 
 Since~$|N(V(P))\setminus V(P_a)|=|E|-\binom{k}{2}+q+M^2\cdot\binom{k}{2}+M^2\cdot q+2$, it follows that there must be exactly~$k$ vertices in~$V(P_a)$ neighboring~$P$.
 This witnesses a $k$-clique in~$G_a$, and the statement follows.
\end{proof}

We are ready to prove the main result of this section.

\begin{proof}[Proof of \cref{thm:sspNoPKfvsell}]
 Due to \cref{obs:nopkfvsellRper}, we know that~$\calR$ is a polynomial equivalence relation on the instances of \mcclique{}.
 Let~$G_1,\ldots,G_p$
 be~$p=2^q$ instances of \mcclique{}
 that are $\calR$\hyp equivalent,
 where $q\in\N$.
 We construct an instance $(G,s,t,k',\ell)$ of~\sspAcr{} by applying \cref{constr:fvsellnokern} in time polynomial in~$\sum_{a=1}^p |G_a|$.
 By \cref{lem:multicliquesspfvsell}, we have that $(G,s,t,k',\ell)$ is a \yes-instance if and only if~$(G_a,k)$ is a \yes-instance for some~$a\in\{1,\ldots,p\}$.
 The set~$W:=U\cup F\cup \{s,\zvertex,t\}$ forms a feedback vertex set with~$|W|\leq 2\log p+K\cdot x$, that is,~$|W|$ is upper-bounded by a polynomial in~$G_a+\log p$ for any~$a\in\{1,\ldots,p\}$.
 Moreover, $\ell=M+ M^2\cdot\binom{k}{2} + M^2\log p$, where~$M:=k+|E|-\binom{k}{2}+\log p+2$ is upper-bounded by a polynomial in~$|G_a|+\log p$.
 Altogether, we described a cross composition from \mcclique{} into~\sspAcr{} parameterized by~$\fvs+\ell$, and the statement follows.
\end{proof}

\section{Conclusion}
Concluding,
we point out that our algorithms
for \WsspAcr{} on graphs of
bounded treewidth (\cref{thm:twsingexp})
can easily be generalized to
a problem variant where also edges
have a weight counting towards the path length,
and so can our subexponential\hyp time
algorithms (\cref{thm:plansubexp}).
Moreover,
the technique of \citet{BCKN15}
that our algorithm is based on
has experimentally been proven
to be practically implementable \citep{FBN15,DKTW17}.

In contrast,
we observed \sspAcr{}
to be a problem
for which provably effective
polynomial\hyp time data reduction
is rather hard to obtain
(\cref{thm:wk1hard,thm:nopktwell,thm:sspNoPKfvsell}).
Therefore,
studying relaxed models
of data reduction with performance guarantees
like approximate \citep{LPRS17,FKRS18}
or randomized kernelization \citep{KW14}
seems worthwhile.

Indeed,
some of our positive results
on kernelization,
in particular
our problem kernels
of size \(\vc^{O(r)}\)
in \(K_{r,r}\)\ssfree{} graphs
and of size \(\fes^{O(1)}\)
in graphs of feedback edge number~\(\fes\)
for \sspAcr{}
(\cref{thm:kernelkrr,thm:bikernelfes}),
for now,
can be mainly seen as a proof of concept,
since they employ
the quite expensive weight reduction algorithm
of \citet{FT87}
and we have no ``direct'' way
of reducing \WsspAcr{} back to \sspAcr{}.
On the positive side,
our solution algorithms
also work for \WsspAcr{},
so that they can be applied
to the linear\hyp time computable
weighted shrunk instances
and stripping the weights is not necessary
from a practical point of view.

\paragraph{Acknowledgments.}
This research was initiated during a research visit of René van Bevern and Oxana Yu.\ Tsidulko at TU Berlin.
The authors are grateful to anonymous reviewers of \emph{Networks} for their constructive feedback.

\paragraph{Funding.}
Till Fluschnik was
supported by the German Research Foundation,
grant NI~369/18.
René van Bevern and Oxana Yu.\ Tsidulko
were supported by the
Russian Foundation for Basic Research (RFBR),
grant~18-501-12031 NNIO\textunderscore a,
while working on
\cref{sec:apgs,sec:fes,sec:notwkern}.
While working on \cref{sec:plansubexp,sec:tw},
René van Bevern was supported by RFBR grant~16-31-60007~mol\textunderscore a\textunderscore dk
and Oxana Yu.\ Tsidulko
was supported by the
Ministry of Science and Higher Education of the Russian Federation under the 5-100 Excellence Programme.

\bibliographystyle{plainnat}
\bibliography{main-secluded}

\newcommand{\noopsort}[1]{}
\begin{thebibliography}{50}
\providecommand{\natexlab}[1]{#1}
\providecommand{\url}[1]{\texttt{#1}}
\expandafter\ifx\csname urlstyle\endcsname\relax
  \providecommand{\doi}[1]{doi: #1}\else
  \providecommand{\doi}{doi: \begingroup \urlstyle{rm}\Url}\fi

\bibitem[Arkin et~al.(2009)Arkin, Fekete, Islam, Meijer, Mitchell,
  Núñez-Rodríguez, Polishchuk, Rappaport, and Xiao]{AFI+09}
Esther~M. Arkin, Sándor~P. Fekete, Kamrul Islam, Henk Meijer, Joseph~S.B.
  Mitchell, Yurai Núñez-Rodríguez, Valentin Polishchuk, David Rappaport, and
  Henry Xiao.
\newblock Not being (super)thin or solid is hard: A study of grid
  hamiltonicity.
\newblock \emph{Computational Geometry}, 42\penalty0 (6):\penalty0 582--605,
  2009.
\newblock \doi{10.1016/j.comgeo.2008.11.004}.

\bibitem[Bar{-}Yehuda et~al.(1998)Bar{-}Yehuda, Geiger, Naor, and Roth]{BGNR98}
Reuven Bar{-}Yehuda, Dan Geiger, Joseph Naor, and Ron~M. Roth.
\newblock Approximation algorithms for the feedback vertex set problem with
  applications to constraint satisfaction and bayesian inference.
\newblock \emph{{SIAM} J. Comput.}, 27\penalty0 (4):\penalty0 942--959, 1998.
\newblock \doi{10.1137/S0097539796305109}.

\bibitem[Bast et~al.(2016)Bast, Delling, Goldberg, M{\"u}ller-Hannemann, Pajor,
  Sanders, Wagner, and Werneck]{BDG+16}
Hannah Bast, Daniel Delling, Andrew Goldberg, Matthias M{\"u}ller-Hannemann,
  Thomas Pajor, Peter Sanders, Dorothea Wagner, and Renato~F. Werneck.
\newblock Route planning in transportation networks.
\newblock In \emph{Algorithm Engineering: Selected Results and Surveys}, volume
  9220 of \emph{Lecture Notes in Computer Science}, pages 19--80. Springer,
  2016.
\newblock \doi{10.1007/978-3-319-49487-6_2}.

\bibitem[Bevern et~al.(2018)Bevern, Fluschnik, Mertzios, Molter, Sorge, and
  Suchý]{BFM+18}
René~{\noopsort{Bevern}van} Bevern, Till Fluschnik, George~B. Mertzios,
  Hendrik Molter, Manuel Sorge, and Ondřej Suchý.
\newblock The parameterized complexity of finding secluded solutions to some
  classical optimization problems on graphs.
\newblock \emph{Discrete Optimzation}, 30:\penalty0 20--50, 2018.
\newblock \doi{10.1016/j.disopt.2018.05.002}.
\newblock In press, available on arXiv:1606.09000v5.

\bibitem[{\noopsort{Bevern}van}~Bevern
  et~al.(2018){\noopsort{Bevern}van}~Bevern, Fluschnik, and Tsidulko]{BFT18}
Ren{\'e} {\noopsort{Bevern}van}~Bevern, Till Fluschnik, and Oxana~{\relax Yu}.
  Tsidulko.
\newblock Parameterized algorithms and data reduction for safe convoy routing.
\newblock In \emph{Proceedings of the 18th Workshop on Algorithmic Approaches
  for Transportation Modeling, Optimization, and Systems (ATMOS'18)}, volume~65
  of \emph{OpenAccess Series in Informatics (OASIcs)}, pages 10:1--10:19.
  Schloss Dagstuhl--Leibniz-Zentrum f{\"u}r Informatik, 2018.
\newblock \doi{10.4230/OASIcs.ATMOS.2018.10}.

\bibitem[{\noopsort{Bevern}}van~Bevern
  et~al.(2017){\noopsort{Bevern}}van~Bevern, Komusiewicz, and Sorge]{BKS17}
René {\noopsort{Bevern}}van~Bevern, Christian Komusiewicz, and Manuel Sorge.
\newblock A parameterized approximation algorithm for the mixed and windy
  capacitated arc routing problem: Theory and experiments.
\newblock \emph{Networks}, 70\penalty0 (3):\penalty0 262--278, 2017.
\newblock \doi{10.1002/net.21742}.

\bibitem[{\noopsort{Bevern}van Bevern} et~al.(2014){\noopsort{Bevern}van
  Bevern}, Niedermeier, Sorge, and Weller]{BNSW14}
René {\noopsort{Bevern}van Bevern}, Rolf Niedermeier, Manuel Sorge, and
  Mathias Weller.
\newblock Complexity of arc routing problems.
\newblock In \emph{Arc Routing: Problems, Methods, and Applications}, volume~20
  of \emph{MOS-SIAM Series on Optimization}. SIAM, 2014.
\newblock \doi{10.1137/1.9781611973679.ch2}.

\bibitem[Bodlaender et~al.(2009)Bodlaender, Downey, Fellows, and
  Hermelin]{BDFH09}
Hans~L. Bodlaender, Rodney~G. Downey, Michael~R. Fellows, and Danny Hermelin.
\newblock On problems without polynomial kernels.
\newblock \emph{Journal of Computer and System Sciences}, 75\penalty0
  (8):\penalty0 423--434, 2009.
\newblock \doi{10.1016/j.jcss.2009.04.001}.

\bibitem[Bodlaender et~al.(2014)Bodlaender, Jansen, and Kratsch]{BJK14}
Hans~L. Bodlaender, Bart M.~P. Jansen, and Stefan Kratsch.
\newblock Kernelization lower bounds by cross-composition.
\newblock \emph{SIAM Journal on Discrete Mathematics}, 28\penalty0
  (1):\penalty0 277--305, 2014.
\newblock \doi{10.1137/120880240}.

\bibitem[Bodlaender et~al.(2015)Bodlaender, Cygan, Kratsch, and
  Nederlof]{BCKN15}
Hans~L. Bodlaender, Marek Cygan, Stefan Kratsch, and Jesper Nederlof.
\newblock Deterministic single exponential time algorithms for connectivity
  problems parameterized by treewidth.
\newblock \emph{Information and Computation}, 243:\penalty0 86--111, 2015.
\newblock \doi{10.1016/j.ic.2014.12.008}.

\bibitem[Bodlaender et~al.(2016)Bodlaender, Drange, Dregi, Fomin, Lokshtanov,
  and Pilipczuk]{BDD+16}
Hans~L. Bodlaender, Pål~Grønås Drange, Markus~S. Dregi, Fedor~V. Fomin,
  Daniel Lokshtanov, and Michał Pilipczuk.
\newblock A {$c^k n$} 5-approximation algorithm for treewidth.
\newblock \emph{SIAM Journal on Computing}, 45\penalty0 (2):\penalty0 317--378,
  2016.
\newblock \doi{10.1137/130947374}.

\bibitem[Bokal et~al.(2006)Bokal, Fijavz, and Mohar]{BFM06}
Drago Bokal, Gasper Fijavz, and Bojan Mohar.
\newblock The minor crossing number.
\newblock \emph{SIAM Journal on Discrete Mathematics}, 20\penalty0
  (2):\penalty0 344--356, 2006.
\newblock \doi{10.1137/05062706X}.

\bibitem[Bonnet and Sikora(2019)]{BS19}
{\'E}douard Bonnet and Florian Sikora.
\newblock The {PACE} 2018 parameterized algorithms and computational
  experiments challenge: The third iteration.
\newblock In \emph{Proceedings of the 13th International Symposium on
  Parameterized and Exact Computation (IPEC 2018)}, volume 115 of \emph{Leibniz
  International Proceedings in Informatics (LIPIcs)}, pages 26:1--26:15,
  Dagstuhl, Germany, 2019. Schloss Dagstuhl--Leibniz-Zentrum fuer Informatik.
\newblock \doi{10.4230/LIPIcs.IPEC.2018.26}.

\bibitem[Böckenhauer et~al.(2007)Böckenhauer, Hromkovič, Kneis, and
  Kupke]{BHKK07}
Hans-Joachim Böckenhauer, Juraj Hromkovič, Joachim Kneis, and Joachim Kupke.
\newblock The parameterized approximability of {TSP} with deadlines.
\newblock \emph{Theory of Computing Systems}, 41\penalty0 (3):\penalty0
  431--444, 2007.
\newblock \doi{10.1007/s00224-007-1347-x}.

\bibitem[Chechik et~al.(2017)Chechik, Johnson, Parter, and Peleg]{CJPP17}
Shiri Chechik, Matthew~P. Johnson, Merav Parter, and David Peleg.
\newblock Secluded connectivity problems.
\newblock \emph{Algorithmica}, 79\penalty0 (3):\penalty0 708--741, 2017.
\newblock \doi{10.1007/s00453-016-0222-z}.

\bibitem[Cygan et~al.(2015)Cygan, Fomin, Kowalik, Lokshtanov, Marx, Pilipczuk,
  Pilipczuk, and Saurabh]{CFK+15}
Marek Cygan, Fedor~V. Fomin, Lukasz Kowalik, Daniel Lokshtanov, D{\'{a}}niel
  Marx, Marcin Pilipczuk, Michal Pilipczuk, and Saket Saurabh.
\newblock \emph{Parameterized Algorithms}.
\newblock Springer, 2015.
\newblock \doi{10.1007/978-3-319-21275-3}.

\bibitem[Dell et~al.(2017)Dell, Komusiewicz, Talmon, and Weller]{DKTW17}
Holger Dell, Christian Komusiewicz, Nimrod Talmon, and Mathias Weller.
\newblock The {PACE} 2017 parameterized algorithms and computational
  experiments challenge: The second iteration.
\newblock In \emph{Proceedings of the 12th International Symposium on
  Parameterized and Exact Computation ({IPEC} 2017)}, volume~89 of
  \emph{LIPIcs}, pages 30:1--30:12. Schloss Dagstuhl - Leibniz-Zentrum fuer
  Informatik, 2017.
\newblock \doi{10.4230/LIPIcs.IPEC.2017.30}.

\bibitem[Demaine and Hajiaghayi(2008)]{DH08}
Erik~D. Demaine and Mohammadtaghi Hajiaghayi.
\newblock Linearity of grid minors in treewidth with applications through
  bidimensionality.
\newblock \emph{Combinatorica}, 28\penalty0 (1):\penalty0 19--36, 2008.
\newblock \doi{10.1007/s00493-008-2140-4}.

\bibitem[Diestel(2010)]{Die10}
Reinhard Diestel.
\newblock \emph{Graph Theory}, volume 173 of \emph{Graduate Texts in
  Mathematics}.
\newblock Springer, 4th edition, 2010.
\newblock \doi{10.1007/978-3-662-53622-3}.

\bibitem[Dorn et~al.(2013)Dorn, Moser, Niedermeier, and Weller]{DMNW13}
Frederic Dorn, Hannes Moser, Rolf Niedermeier, and Mathias Weller.
\newblock Efficient algorithms for {Eulerian Extension} and {Rural Postman}.
\newblock \emph{SIAM Journal on Discrete Mathematics}, 27\penalty0
  (1):\penalty0 75--94, 2013.
\newblock \doi{10.1137/110834810}.

\bibitem[Downey and Fellows(2013)]{DF13}
Rod~G. Downey and Michael~R. Fellows.
\newblock \emph{Fundamentals of Parameterized Complexity}.
\newblock Springer, 2013.
\newblock \doi{10.1007/978-1-4471-5559-1}.

\bibitem[Etscheid et~al.(2017)Etscheid, Kratsch, Mnich, and Röglin]{EKMR17}
Michael Etscheid, Stefan Kratsch, Matthias Mnich, and Heiko Röglin.
\newblock Polynomial kernels for weighted problems.
\newblock \emph{Journal of Computer and System Sciences}, 84:\penalty0 1--10,
  2017.
\newblock \doi{10.1016/j.jcss.2016.06.004}.

\bibitem[Fafianie et~al.(2015)Fafianie, Bodlaender, and Nederlof]{FBN15}
Stefan Fafianie, Hans~L. Bodlaender, and Jesper Nederlof.
\newblock Speeding up dynamic programming with representative sets: An
  experimental evaluation of algorithms for steiner tree on tree
  decompositions.
\newblock \emph{Algorithmica}, 71\penalty0 (3):\penalty0 636--660, 2015.
\newblock \doi{10.1007/s00453-014-9934-0}.

\bibitem[Fellows et~al.(2009)Fellows, Hermelin, Rosamond, and Vialette]{FHRV09}
Michael~R. Fellows, Danny Hermelin, Frances Rosamond, and Stéphane Vialette.
\newblock On the parameterized complexity of multiple-interval graph problems.
\newblock \emph{Theoretical Computer Science}, 410\penalty0 (1):\penalty0
  53--61, 2009.
\newblock \doi{10.1016/j.tcs.2008.09.065}.

\bibitem[Fellows et~al.(2013)Fellows, Jansen, and Rosamond]{FJR13}
Michael~R. Fellows, Bart M.~P. Jansen, and Frances~A. Rosamond.
\newblock Towards fully multivariate algorithmics: Parameter ecology and the
  deconstruction of computational complexity.
\newblock \emph{European Journal of Combinatorics}, 34\penalty0 (3):\penalty0
  541--566, 2013.
\newblock \doi{10.1016/j.ejc.2012.04.008}.

\bibitem[Fellows et~al.(2018)Fellows, Kulik, Rosamond, and Shachnai]{FKRS18}
Michael~R. Fellows, Ariel Kulik, Frances~A. Rosamond, and Hadas Shachnai.
\newblock Parameterized approximation via fidelity preserving transformations.
\newblock \emph{Journal of Computer and System Sciences}, 93:\penalty0 30--40,
  2018.
\newblock \doi{10.1016/j.jcss.2017.11.001}.

\bibitem[Flum and Grohe(2006)]{FG06}
Jörg Flum and Martin Grohe.
\newblock \emph{Parameterized Complexity Theory}.
\newblock Springer, 2006.
\newblock \doi{10.1007/3-540-29953-X}.

\bibitem[Fomin et~al.(2017)Fomin, Golovach, Karpov, and Kulikov]{FGKK17}
Fedor~V. Fomin, Petr~A. Golovach, Nikolay Karpov, and Alexander~S. Kulikov.
\newblock Parameterized complexity of secluded connectivity problems.
\newblock \emph{Theory of Computing Systems}, 61\penalty0 (3):\penalty0
  795--819, 2017.
\newblock \doi{10.1007/s00224-016-9717-x}.

\bibitem[Frank and Tardos(1987)]{FT87}
Andr{\'{a}}s Frank and {\'{E}}va Tardos.
\newblock An application of simultaneous diophantine approximation in
  combinatorial optimization.
\newblock \emph{Combinatorica}, 7\penalty0 (1):\penalty0 49--65, 1987.
\newblock \doi{10.1007/BF02579200}.

\bibitem[Giacometti(2010)]{Gia10}
Achille Giacometti.
\newblock River networks.
\newblock In \emph{Complex Networks}, Encyclopedia of Life Support Systems
  (EOLSS), pages 155--180. EOLSS Publishers/UNESCO, 2010.

\bibitem[Golovach et~al.(2017)Golovach, Heggernes, Lima, and
  Montealegre]{GHLM17}
Petr~A. Golovach, Pinar Heggernes, Paloma~T. Lima, and Pedro Montealegre.
\newblock Finding connected secluded subgraphs.
\newblock In \emph{Proceedings of the 12th International Symposium on
  Parameterized and Exact Computation, {IPEC} 2017, September 6-8, 2017,
  Vienna, Austria}, volume~89 of \emph{LIPIcs}, pages 18:1--18:13. Schloss
  Dagstuhl--Leibniz-Zentrum f{\"u}r Informatik, 2017.
\newblock \doi{10.4230/LIPIcs.IPEC.2017.18}.

\bibitem[Guo and Niedermeier(2007)]{GN07}
Jiong Guo and Rolf Niedermeier.
\newblock Invitation to data reduction and problem kernelization.
\newblock \emph{ACM SIGACT News}, 38\penalty0 (1):\penalty0 31--45, 2007.
\newblock \doi{10.1145/1233481.1233493}.

\bibitem[Gutin and Patel(2016)]{GP16}
G.~Gutin and V.~Patel.
\newblock Parameterized traveling salesman problem: Beating the average.
\newblock \emph{SIAM Journal on Discrete Mathematics}, 30\penalty0
  (1):\penalty0 220--238, 2016.
\newblock \doi{10.1137/140980946}.

\bibitem[Gutin et~al.(2013)Gutin, Muciaccia, and Yeo]{GMY13}
Gregory Gutin, Gabriele Muciaccia, and Anders Yeo.
\newblock Parameterized complexity of {$k$-Chinese Postman Problem}.
\newblock \emph{Theoretical Computer Science}, 513:\penalty0 124--128, 2013.
\newblock \doi{10.1016/j.tcs.2013.10.012}.

\bibitem[Gutin et~al.(2016)Gutin, Jones, and Wahlström]{GJW16}
Gregory Gutin, Mark Jones, and Magnus Wahlström.
\newblock The mixed chinese postman problem parameterized by pathwidth and
  treedepth.
\newblock \emph{SIAM Journal on Discrete Mathematics}, 30\penalty0
  (4):\penalty0 2177--2205, 2016.
\newblock \doi{10.1137/15M1034337}.

\bibitem[Gutin et~al.(2017{\natexlab{a}})Gutin, Jones, and Sheng]{GJS17}
Gregory Gutin, Mark Jones, and Bin Sheng.
\newblock Parameterized complexity of the \(k\)-arc chinese postman problem.
\newblock \emph{Journal of Computer and System Sciences}, 84:\penalty0
  107--119, 2017{\natexlab{a}}.
\newblock \doi{10.1016/j.jcss.2016.07.006}.

\bibitem[Gutin et~al.(2017{\natexlab{b}})Gutin, Wahlström, and Yeo]{GWY17}
Gregory Gutin, Magnus Wahlström, and Anders Yeo.
\newblock {Rural Postman} parameterized by the number of components of required
  edges.
\newblock \emph{Journal of Computer and System Sciences}, 83\penalty0
  (1):\penalty0 121--131, 2017{\natexlab{b}}.
\newblock \doi{10.1016/j.jcss.2016.06.001}.

\bibitem[Habib et~al.(1998)Habib, Paul, and Viennoti]{HPV98}
Michel Habib, Christophe Paul, and Laurent Viennoti.
\newblock A synthesis on partition refinement: A useful routine for strings,
  graphs, boolean matrices and automata.
\newblock In \emph{Proceedings of the 15th Annual Symposium on Theoretical
  Aspects of Computer Science (STACS'98)}, volume 1373 of \emph{Lecture Notes
  in Computer Science}, pages 25--38, Berlin, Heidelberg, 1998. Springer Berlin
  Heidelberg.

\bibitem[Hermelin et~al.(2015)Hermelin, Kratsch, So{\l}tys, Wahlstr{\"o}m, and
  Wu]{HKS+15b}
Danny Hermelin, Stefan Kratsch, Karolina So{\l}tys, Magnus Wahlstr{\"o}m, and
  Xi~Wu.
\newblock A completeness theory for polynomial ({Turing}) kernelization.
\newblock \emph{Algorithmica}, 71\penalty0 (3):\penalty0 702--730, 2015.
\newblock \doi{10.1007/s00453-014-9910-8}.

\bibitem[Impagliazzo et~al.(2001)Impagliazzo, Paturi, and Zane]{IPZ01}
Russell Impagliazzo, Ramamohan Paturi, and Francis Zane.
\newblock Which problems have strongly exponential complexity?
\newblock \emph{Journal of Computer and System Sciences}, 63\penalty0
  (4):\penalty0 512--530, 2001.
\newblock \doi{10.1006/jcss.2001.1774}.

\bibitem[Klein and Marx(2014)]{KM14}
Philip~N. Klein and Daniel Marx.
\newblock A subexponential parameterized algorithm for {Subset TSP} on planar
  graphs.
\newblock In \emph{Proceedings of the 25th Annual ACM-SIAM Symposium on
  Discrete Algorithms (SODA'14)}, pages 1812--1830. Society for Industrial and
  Applied Mathematics, 2014.
\newblock \doi{10.1137/1.9781611973402.131}.

\bibitem[Kratsch(2014)]{Kra14}
Stefan Kratsch.
\newblock Recent developments in kernelization: {A} survey.
\newblock \emph{Bulletin of the {EATCS}}, 113, 2014.

\bibitem[Kratsch and Wahlstr{\"{o}}m(2014)]{KW14}
Stefan Kratsch and Magnus Wahlstr{\"{o}}m.
\newblock Compression via matroids: {A} randomized polynomial kernel for odd
  cycle transversal.
\newblock \emph{{ACM} Transactions on Algorithms}, 10\penalty0 (4):\penalty0
  20:1--20:15, 2014.
\newblock \doi{10.1145/2635810}.

\bibitem[Lokshtanov et~al.(2017)Lokshtanov, Panolan, Ramanujan, and
  Saurabh]{LPRS17}
Daniel Lokshtanov, Fahad Panolan, M.~S. Ramanujan, and Saket Saurabh.
\newblock Lossy kernelization.
\newblock In \emph{Proceedings of the 49th Annual {ACM} {SIGACT} Symposium on
  Theory of Computing ({STOC} 2017)}, pages 224--237. {ACM}, 2017.
\newblock \doi{10.1145/3055399.3055456}.

\bibitem[Luckow and Fluschnik(2018)]{LF18}
Max-Jonathan Luckow and Till Fluschnik.
\newblock On the computational complexity of length- and
  neighborhood-constrained path problems.
\newblock Available on arXiv:1808.02359, 2018.

\bibitem[Misra et~al.(2011)Misra, Raman, and Saurabh]{MRS11}
Neeldhara Misra, Venkatesh Raman, and Saket Saurabh.
\newblock Lower bounds on kernelization.
\newblock \emph{Discrete Optimization}, 8\penalty0 (1):\penalty0 110--128,
  2011.
\newblock \doi{10.1016/j.disopt.2010.10.001}.

\bibitem[Niedermeier(2006)]{Nie06}
Rolf Niedermeier.
\newblock \emph{Invitation to Fixed-Parameter Algorithms}.
\newblock Oxford University Press, 2006.
\newblock \doi{10.1093/acprof:oso/9780198566076.001.0001}.

\bibitem[Sorge et~al.(2011)Sorge, van Bevern, Niedermeier, and Weller]{SBNW11}
Manuel Sorge, Ren{\'e} van Bevern, Rolf Niedermeier, and Mathias Weller.
\newblock From few components to an {Eulerian} graph by adding arcs.
\newblock In \emph{Proceedings of the 37th International Workshop on
  Graph-Theoretic Concepts in Computer Science (WG'11)}, volume 6986 of
  \emph{Lecture Notes in Computer Science}, pages 307--318. Springer, 2011.
\newblock \doi{10.1007/978-3-642-25870-1\_28}.

\bibitem[Sorge et~al.(2012)Sorge, van Bevern, Niedermeier, and Weller]{SBNW12}
Manuel Sorge, René van Bevern, Rolf Niedermeier, and Mathias Weller.
\newblock A new view on {Rural Postman} based on {Eulerian Extension} and
  {Matching}.
\newblock \emph{Journal of Discrete Algorithms}, 16:\penalty0 12--33, 2012.
\newblock \doi{10.1016/j.jda.2012.04.007}.

\bibitem[Wagner(1937)]{Wag37}
K.~Wagner.
\newblock {Über eine Eigenschaft der ebenen Komplexe}.
\newblock \emph{Mathematische Annalen}, 114\penalty0 (1):\penalty0 570--590,
  1937.
\newblock \doi{10.1007/BF01594196}.

\end{thebibliography}

\end{document}